\definecolor{gray75}{gray}{0.4}
\newcommand{\hsp}{\hspace{20pt}}
\titleformat{\chapter}[hang]{\Large\bfseries}{\thechapter\hsp\textcolor{gray75}{|}\hsp}{0pt}{\Large\bfseries}
\titlespacing{\chapter}{0pt}{0pt}{*1}
\renewcommand{\contentsname}{\sffamily\textcolor{red}{Another title
    for the table of contents}}
\definecolor{webgreen}{rgb}{0,.5,0}
\definecolor{webbrown}{rgb}{.6,0,0}
\newcommand{\Br}{\text{Br}}
\newcommand{\eq}[1]{(\ref{#1})}
\newcommand{\abs}[1]{\left|#1\right|}
\newcommand{\re}{\text{Re}}
\newcommand{\im}{\text{Im}}
\newcommand\cleartooddpage{\clearpage
  \ifodd\value{page}\else\null\thispagestyle{empty}\clearpage\fi}
\renewcommand{\textflush}{flushepinormal}
\theoremstyle{plain}\newtheorem{thm}{Theorem}[section]
\newtheorem{lem}[thm]{Lemma}
\newtheorem{prop}[thm]{Proposition}
\newtheorem{rmk}[thm]{Note}
\newtheorem*{rmk*}{Note}
\newtheorem*{cor}{Corollary}
\theoremstyle{definition}
\newtheorem{defn}[thm]{Definition}
\newtheorem*{defn*}{Definition}
\theoremstyle{remark}
\renewcommand{\@epitext}[1]{
\itshape \begin{minipage}{\epigraphwidth}\begin{\textflush} #1
\end{\textflush}\end{minipage}\vspace{1ex}}
\renewcommand\listoffigures{    \if@twocolumn
      \@restonecoltrue\onecolumn
    \else
      \@restonecolfalse
    \fi
    \section*{\listfigurename}      \@mkboth{\MakeUppercase\listfigurename}              {\MakeUppercase\listfigurename}    \@starttoc{lof}    \if@restonecol\twocolumn\fi
    }
\renewcommand\listoftables{    \if@twocolumn
      \@restonecoltrue\onecolumn
    \else
      \@restonecolfalse
    \fi
    \section*{\listtablename}      \@mkboth{\MakeUppercase\listtablename}              {\MakeUppercase\listtablename}    \@starttoc{lot}    \if@restonecol\twocolumn\fi
    }
\begin{document}

\pagestyle{plain}

\frontmatter
\clearpage{}
\thispagestyle {empty}

\hspace{-1.6cm}\includegraphics[bb=1.8600000000000001cm 0cm 10cm 3.1600000000000001cm,scale=0.3]{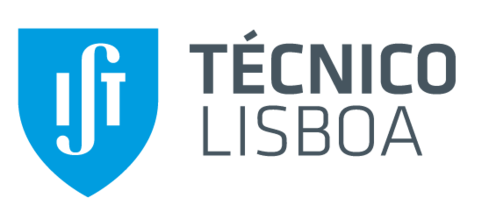}
~

~\vspace{0.8cm}

\begin{center}
\vspace{-2.4cm}\textbf{\Large \uppercase{Universidade de Lisboa}}
\par\end{center}{\Large \par}

\begin{center}
\textbf{\Large \uppercase{Instituto Superior T\'{e}cnico}}
\par\end{center}{\Large \par}

~

~

~

~

~

~

\begin{center}
\textbf{\huge Scalar Fields in Particle Physics}
\par\end{center}{\Large \par}

\textbf{\large ~}{\large \par}

\begin{center}
{\large \vspace{-0.5cm}}\textbf{\large Leonardo Antunes Pedro}
\par\end{center}{\large \par}

~

~

\begin{flushleft}
{\large \vspace{-0.5cm}Supervisor: Doctor Gustavo da Fonseca Castelo Branco}
\par\end{flushleft}{\large \par}

\begin{flushleft}
{\large \vspace{-0.5cm}~}
\par\end{flushleft}{\large \par}

{\large ~}{\large \par}

~

\begin{center}
{\large \vspace{-0.5cm}Thesis approved in public session to obtain
the PhD Degree in}
\par\end{center}{\large \par}

\begin{center}
{\large \vspace{-0.3cm}}\textbf{\large Physics}
\par\end{center}{\large \par}

\begin{center}
{\large Jury final classification: Pass With Merit}{\Large }
\par\end{center}{\Large \par}

\begin{center}
{\large ~}
\par\end{center}{\large \par}

\begin{center}
{\large ~}
\par\end{center}{\large \par}

\begin{center}
{\large \vspace{-0.5cm}}\textbf{\large Jury}
\par\end{center}{\large \par}

\begin{flushleft}
{\large \vspace{-0.2cm}Chairperson: Chairman of the IST Scientific
Board}
\par\end{flushleft}{\large \par}

\begin{flushleft}
{\large \vspace{-0.3cm}Members of the Committee:}
\par\end{flushleft}{\large \par}

\begin{flushleft}
{\large \vspace{-0.3cm}\hspace{0.8cm}Doctor Francisco Jos\'e Botella Olcina}
\par\end{flushleft}{\large \par}

\begin{flushleft}
{\large \vspace{-0.3cm}\hspace{0.8cm}Doctor Gustavo da Fonseca Castelo Branco}
\par\end{flushleft}{\large \par}

\begin{flushleft}
{\large \vspace{-0.3cm}\hspace{0.8cm}Doctor Jorge Manuel Rodrigues
Crispim Rom\~ao}
\par\end{flushleft}{\large \par}

\begin{flushleft}
{\large \vspace{-0.3cm}\hspace{0.8cm}Doctor Maria Margarida Nesbitt Rebelo da Silva}
\par\end{flushleft}{\large \par}

\begin{flushleft}
{\large \vspace{-0.3cm}\hspace{0.8cm}Doctor Paulo Andr\'e de Paiva Parada}
\par\end{flushleft}{\large \par}

\begin{flushleft}
{\large \vspace{-0.3cm}\hspace{0.8cm}Doctor David Emanuel da Costa}
\par\end{flushleft}{\large \par}

~

\begin{center}
\vspace{0.3cm}\textbf{\large 2015}\vspace{-2.0cm}
\par\end{center}

\clearpage{}
\cleartooddpage

\clearpage{}
\thispagestyle {empty}

\hspace{-1.6cm}\includegraphics[bb=1.8600000000000001cm 0cm 10cm 3.1600000000000001cm,scale=0.3]{Images/logo}
~

~\vspace{0.8cm}

\begin{center}
\vspace{-2.4cm}\textbf{\Large \uppercase{Universidade de Lisboa}}
\par\end{center}{\Large \par}

\begin{center}
\textbf{\Large \uppercase{Instituto Superior T\'{e}cnico}}
\par\end{center}{\Large \par}

~

~

~

~

\begin{center}
\textbf{\huge Scalar Fields in Particle Physics}
\par\end{center}{\Large \par}

\textbf{\large ~}{\large \par}

\begin{center}
{\large \vspace{-0.5cm}}\textbf{\large Leonardo Antunes Pedro}
\par\end{center}{\large \par}

~

~

\begin{flushleft}
{\large \vspace{-0.5cm}Supervisor: Doctor Gustavo da Fonseca Castelo Branco}
\par\end{flushleft}{\large \par}

\begin{center}
{\large \vspace{0.0cm}Thesis approved in public session to obtain
the PhD Degree in}
\par\end{center}{\large \par}

\begin{center}
{\large \vspace{-0.3cm}}\textbf{\large Physics}
\par\end{center}{\large \par}

\begin{center}
{\large Jury final classification: Pass With Merit}{\Large }
\par\end{center}{\Large \par}

\begin{center}
{\large \vspace{0.0cm}}\textbf{\large Jury}
\par\end{center}{\large \par}

\begin{flushleft}
{\large \vspace{-0.2cm}Chairperson: Chairman of the IST Scientific
Board}
\par\end{flushleft}{\large \par}

\begin{flushleft}
{\large \vspace{-0.3cm}Members of the Committee:}
\par\end{flushleft}{\large \par}

\begin{flushleft}
{\large \vspace{-0.3cm}\hspace{0.8cm}Doctor Francisco Jos\'e Botella Olcina, Full Professor,}
\par\end{flushleft}{\large \par}

\begin{flushleft}
{\large \vspace{-0.3cm}\hspace{0.8cm}Instituto de F\'isica Corpuscular, Universitat
de Val\`encia, Spain}
\par\end{flushleft}{\large \par}

\begin{flushleft}
{\large \vspace{-0.1cm}\hspace{0.8cm}Doctor Gustavo da Fonseca Castelo Branco,
Full Professor,}
\par\end{flushleft}{\large \par}

\begin{flushleft}
{\large \vspace{-0.3cm}\hspace{0.8cm}Instituto Superior T\'ecnico,
Universidade de Lisboa}
\par\end{flushleft}{\large \par}

\begin{flushleft}
{\large \vspace{-0.1cm}\hspace{0.8cm}Doctor Jorge Manuel Rodrigues
Crispim Rom\~ao, Full Professor,}
\par\end{flushleft}{\large \par}

\begin{flushleft}
{\large \vspace{-0.3cm}\hspace{0.8cm}Instituto Superior T\'ecnico,
Universidade de Lisboa}
\par\end{flushleft}{\large \par}

\begin{flushleft}
{\large \vspace{-0.1cm}\hspace{0.8cm}Doctor Maria Margarida Nesbitt Rebelo da Silva, Principal Researcher,}
\par\end{flushleft}{\large \par}

\begin{flushleft}
{\large \vspace{-0.3cm}\hspace{0.8cm}Instituto Superior T\'ecnico,
Universidade de Lisboa}
\par\end{flushleft}{\large \par}

\begin{flushleft}
{\large \vspace{-0.1cm}\hspace{0.8cm}Doctor Paulo Andr\'e de Paiva Parada, Assistant Professor,}
\par\end{flushleft}{\large \par}

\begin{flushleft}
{\large \vspace{-0.3cm}\hspace{0.8cm}Faculdade de Ci\^encias,
Universidade da Beira Interior}
\par\end{flushleft}{\large \par}

\begin{flushleft}
{\large \vspace{-0.1cm}\hspace{0.8cm}Doctor David Emanuel da Costa, Assistant  Researcher,}
\par\end{flushleft}{\large \par}

\begin{flushleft}
{\large \vspace{-0.3cm}\hspace{0.8cm}Instituto Superior T\'ecnico,
Universidade de Lisboa}
\par\end{flushleft}{\large \par}

\begin{center}
{\large \vspace{0.5cm}}\textbf{\large Funding Institutions}
\par\end{center}{\large \par}

\begin{center}
{\large \vspace{-0.2cm}Funda\c{c}\~ao para a Ci\^encia e a Tecnologia}
\par\end{center}{\large \par}

\begin{center}
\textbf{\large 2015}\vspace{-2.0cm}
\par\end{center}

 \selectlanguage{portuges}
\chapter*{Campos escalares em Física de Partículas}
\paragraph{Leonardo Antunes Pedro}
\paragraph{Doutoramento em Física}
\paragraph{Orientador} Doutor Gustavo da Fonseca Castelo Branco
\section*{Resumo}
Alargar o sector escalar ajuda a estudar o mecanismo de Higgs e alguns problemas do Modelo Padrão.

Implementamos a correspondência entre os estados elementares dependentes de gauge e os estados assimptóticos não-perturbativos invariantes de gauges não-abelianas, necessários para estudar a fenomenologia não-perturbativa de dois-dubletos-Higgs.

A violação de sabor e CP nos dados experimentais obedece a um padrão hierárquico, acomodado pelo Modelo Padrão.
Definimos a condição de Violação Mínima de Sabor com seis espuriões em teorias de campo efectivas, implicando violação de sabor e CP inteiramente dependente das matrizes de mistura dos fermiões mas independente da hierarquia das massas dos fermiões; é invariante sobre o grupo de renormalização.

Estudamos a fenomenologia de modelos de dois-dubletos-Higgs, que verificam a condição definida como consequência de uma simetria;
novas partículas escalares leves, mediando correntes neutras que violam o sabor, são permitidas pelos dados de sabor sem coeficientes de sabor extra; testámos os modelos com bibliotecas de C++ ligadas pela biblioteca simbólica GiNaC e propomos mais bibliotecas para uma procura por correntes neutras que violam o sabor.

Mapeamos as representações do grupo de Poincare complexas para as reais, derivamos a equação de Dirac livre requerendo localizabilidade covariante das representações e estudamos Localização e simetrias de gauge.

\paragraph{Palavras-chave}modelo de dois-dubletos-Higgs; estados assimptóticos; mecanismo de Higgs; Violação de Sabor Mínima; Correntes Violadoras de Sabor; cálculo simbólico; representação real; grupo de Poincare; Localização; spinor de Majorana.

\selectlanguage{english}
\chapter*{Scalar Fields in Particle Physics}
\section*{Abstract}
Extending the scalar sector helps in studying the Higgs mechanism and some Standard Model problems.

We implement the correspondence between the gauge-dependent elementary states
and the non-perturbative non-abelian gauge-invariant asymptotic states, necessary to study the non-perturbative phenomenology of two-Higgs-doublet models.

The Flavour and CP violation in experimental data follows a hierarchical pattern, accounted by the Standard Model.
We define the Minimal Flavour Violation condition with six spurions in effective field theories,
implying Flavour and CP violation entirely dependent on the fermion mixing matrices 
but independent of the fermion masses hierarchy; it is renormalization-group invariant.

We study the phenomenology of renormalizable two-Higgs-doublet models which verify the
defined condition as consequence of a symmetry; new light physical scalars, mediating Flavour Changing Neutral Currents, are allowed by flavour data without flavour coefficients beyond the Standard Model; we tested the models with C++ libraries linked by the symbolic skills of the GiNaC library and we propose more libraries supporting a systematic search for Flavour Changing Neutral Currents.

We also map the complex to the real Poincare group representations, derive the free Dirac equation requiring covariant localizability of the representations and study Localization and gauge symmetries in Quantum Field Theory.

\paragraph{Keywords} two-Higgs-doublet model; Asymptotic states; Higgs mechanism; Minimal Flavour Violation; Flavour Changing Neutral Currents; computer algebra system; real representation; Poincare group; localization; Majorana spinor.

\chapter*{Acknowledgments}
\epigraph{}{Da daaa da. Daaaa da da.--- \emph{Joana Pais Pedro (2014)}}

\small

When a person asks me to solve a problem which is time consuming and it is clear what is 
the purpose, sometimes I study if there is a solution to another problem which would serve the same purpose in a better way. 
My sister says ``tu \'{e}s um chato'' and my friends used to say ``m\'{o}, s\'{o} \'{e}s esperto para a escola!'', 
so the person usually don't like my alternative solution (``m\'{o}'' is an exclamation used in Algarve).

I don't make it easy for them but still, my family and friends, supervisor and collaborators, professors and colleagues
have provided me all the necessary guidance, autonomy and support to do this thesis and related work. 
Without their help, I would have other concerns and I couldn't focus in studying and solving physics problems.
I am thankful to:
\begin{itemize}\itemsep1pt \parskip0pt \parsep0pt
\item My family and friends;
\item My supervisor and collaborator Gustavo Branco; my collaborators Francisco Botella, Adrian Carmona, Miguel Nebot, Margarida Rebelo;
\item Members of the CFTP Lisboa and Physics and Mathematics Departments of T\'{e}cnico-Lisboa, IFIC Val\`{e}ncia, ETH Z\"{u}rich, CERN-TH and Institut f\"{u}r Physik UNI-GRAZ, where the work took place.
Including Renato Fonseca, Jos\'{e} Mour\~{a}o, David Emmanuel-Costa and Jos\'{e} Nat\'{a}rio for their help related to the real representations of the Poincare group;  Nuno Ribeiro, David Forero, Lu\'{i}s Lavoura  and Jorge Rom\~{a}o for their help related to the phenomenology of the BGL models; Axel Maas, Wolfgang Schweiger, Elmar Biernat and Ant\'{o}nio Figueiredo for their kind hospitality and help related to the non-perturbative phenomenology; Axel Maas also for his availability for an ongoing collaboration;
\item Current and former members of the LIP-Lisboa CMS group, including Pedro Martins, Pedro Silva, Jo\~{a}o Pela and Michele Gallinaro for their advices and availability for a collaboration which did not happened due to lack of time; also Joaquim Silva-Marcos from CFTP Lisboa and Palash Pal from Saha Institute of Nuclear Physics for the same reasons.
\end{itemize}

I acknowledge the support---provided through the grant
SFRH/BD/70688/2010 from the \textit{Fundação
para a Ciência e a Tecnologia}---of the Portuguese
State. This thesis is based in part in the following publications and citeable papers (co)authored by the present author:
\begin{itemize}\itemsep1pt \parskip0pt \parsep0pt
\item ``Physical constraints on a class of two-Higgs doublet models with FCNC at tree level'',\\ with F. J. Botella, G. C. Branco, A. Carmona, M. Nebot, and M. N. Rebelo,\\
\href{http://dx.doi.org/10.1007/JHEP07(2014)078}{JHEP 7 (2014) 78};
\item ``On the real representations of the Poincare group'', \href{http://arxiv.org/abs/1309.5280}{arXiv:1309.5280}, 2013;
\item ``The Majorana spinor representation of the Poincare group'', 
\href{http://arxiv.org/abs/1307.1853}{arXiv:1307.1853}, 2013.
\end{itemize}
The last two papers were not published yet because the author needs to discuss them further 
within the physics community (in particular mathematical physics), a process that requires some time and patience.
This is expected, for estabilished authors in HEP the process of publishing a paper in a good journal can easily take six months, so for someone not yet estabilished
it can easily go beyond one year.
\normalsize

\cleartooddpage

\tableofcontents{}

\listoftables
\listoffigures

\section*{Acronyms}\small
\begin{acronym}[TDMA]
\acro{2HDM}{two-Higgs-doublet model}
\acro{ATLAS}{A Toroidal LHC ApparatuS}
\acro{BR}{Branching Ratio}
\acro{BGL}{Branco\textendash{}Grimus\textendash{}Lavoura}
\acro{BSM}{Beyond the Standard Model}
\acro{CL}{Confidence Level}
\acro{cLFV}{charged Lepton Flavor Violation}
\acro{CLIC}{Compact Linear Collider}
\acro{CMS}{Compact Muon Solenoid}
\acro{CP}{Charge-Parity}
\acro{CPT}{Charge-Parity-Time reversal}
\acro{DM}{Darkmatter}
\acro{EDM}{Electric Dipole Moment}
\acro{EFT}{Effective Field Theory}
\acro{EW}{Electroweak}
\acro{EWSB}{Electroweak symmetry breaking}
\acro{FCNC}{Flavour Changing Neutral Current}
\acro{MET}{Missing Transverse Energy}
\acro{MFV2}{Minimal Flavor Violation with two spurions}
\acro{MFV6}{Minimal Flavor Violation with six spurions}
\acro{GIM}{Glashow\textendash{}Iliopoulos\textendash{}Maiani}
\acro{GNS}{Gelfand-Naimark-Segal}
\acro{GUT}{Grand unified theory}
\acro{ILC}{International linear collider}
\acro{LEP}{Large electron\textendash{}positron collider}
\acro{LFC}{Lepton flavor conservation}
\acro{LFV}{Lepton Flavor Violation}
\acro{LHC}{Large Hadron Collider}
\acro{MFV}{Minimal flavour violation}
\acro{MIA}{Mass insertion approximation}
\acro{MSSM}{Minimal Supersymmetry Standard Model}
\acro{nuMSM}[$\nu$MSM]{minimal extension of the Standard Model by three right-handed neutrinos}
\acro{PS}{Pati-Salam}
\acro{PT}[$\mathrm{p_T}$]{transverse momentum}
\acro{QCD}{Quantum chromodynamics}
\acro{RG}{Renormalization group}
\acro{RGE}{Renormalization group equation}
\acro{SM}{Standard Model}
\acro{SUSY}{Supersymmetry, Supersymmetric}
\acro{VEV}{Vacuum expectation value}
\acro{MEG}{Muon to electron and gamma}
\acro{NP}{New Physics}
\acro{NH}{Normal hierarchy}
\acro{IH}{Inverted hierarchy}
\acro{CKM}{Cabibbo\textendash{}Kobayashi\textendash{}Maskawa}
\acro{PMNS}{Pontecorvo-Maki-Nakagawa-Sakata}
\end{acronym}

\normalsize
\clearpage{}
\cleartooddpage
\mainmatter
\clearpage{}\chapter{Introduction}
\label{chap:Introduction}

\section{Particle Physics}
\begin{epigraphs}
\qitem{
To explain all nature is too difficult a task for any one man or even for any one age. 
It is much better to do a little with certainty, and leave the rest for others that come after you,
than to explain all things by conjecture without making sure of any thing.}
{--- \textup{Isaac Newton (1704)\cite{newtonbio}}}

\qitem{
It appears therefore that certain phenomena in electricity and magnetism lead to the same conclusion 
as those of optics, namely, that there is an \ae{thereal} medium pervading all bodies, and modified only
in degree by their presence; that the parts of this medium are capable of being set in motion by electric
currents and magnets; that this motion is communicated from one part of the medium to another by forces 
arising from the connexions of these parts; that under the action of these forces there is a certain yielding 
depending on the elasticity of these connections; and that therefore energy in two different forms may exist 
in the medium, the one form being the actual energy of motion of its parts, and the other
being the potential energy stored up in the connexions, in virtue of their elasticity.

Thus, then, we are led to the conception of a complicated mechanism capable of a vast variety of motion, 
but at the same time so connected that the motion of one part depends, according to definite relations, 
on the motion of the other parts, these motions being communicated by forces arising from the relative
displacement of the connected parts, in virtue of their elasticity. Such a mechanism must be subject to 
the general laws of Dynamics, and we ought to be able to work out all the consequences of its motion, 
provided we know the form of the relation between the motions of the parts.}
{--- \textup{James C. Maxwell (1865)\cite{maxwell}}}

\qitem{Radiation in free space as well as isolated 
material particles are abstractions, their properties in the quantum theory 
being definable and observable only through their interactions with other 
systems. Nevertheless, these abstractions are indispensable for a description
of experience in connection with our ordinary space-time view.}
{--- \textup{Niels Bohr (1928)\cite{bohr}}}
\end{epigraphs}

In Particle Physics, we do not know what the elementary particles are. 
Since any experimental
apparatus is built from elementary particles, we can only measure the 
effects of the particles we are studying on the particles from the apparatus
when they interact. For instance, in Astronomy or in Chemistry we may study 
the structure of the stars or the molecules, using the knowledge about the 
interaction properties of its components.
In Particle Physics we study the interaction properties of the elementary particles instead.

\pagebreak
In a simplified view, we can try to divide Particle
Physics into 3 main areas:
\begin{description}\itemsep1pt \parskip0pt \parsep0pt
\item[Mathematical/Computational Physics] mathematical/computational 
tools allow to derive many logical consequences and construct 
simulation tools from our knowledge and assumptions about particles.
\item[Theoretical Particle Physics] physics models are studied and constructed,
such that they are compatible with what we know, assume and its logical consequences.
The model's predictions to be compared with the
experimental data are calculated, often using computer simulations.
\item[Experimental Particle Physics] the experiments are built and
conducted. Using simulations, the expected data
compatible with the theoretical predictions is calculated and compared with
the experimental data, producing more knowledge about particles.
\end{description}

Today, an excellent particle physicist is likely to excel in one of
different subjects such as: algebra, geometry, computer science,
statistics or electronics. The result is that we
can explain, within the experimental uncertainty, an impressive range
of physical phenomena\cite{particle}. The recent discovery of a Higgs boson,
crucial for the logical consistency of the Standard Model\cite{lifeboson,*makingof,*glorious}, 
is the icing on the cake\cite{higgs1,*higgs2,*higgs3,*higgs4,ATLAS,*CMS}. 
In 2010, my master thesis was about the search for a charged
Higgs boson in the early data of the \acs{LHC} using the \acs{CMS} experiment, and I
could check myself in many high energy phenomena that the simulations
explained the experimental data\cite{CMS-PAS,*CMS-AN10,master}. 
In my doctorate studies I could check myself in many low energy phenomena that the
Standard Model explained the experimental results\cite{bglanalysis}.

The achievements of Particle Physics should not induce in us a blind
confidence in everything we think is true about particles.
The success of this field of science,
where we do not know the internal structure of its objects of study---the
elementary particles---, can only come from  critical thinking and
hard work, as it happens.

\clearpage

\section{Contributions from Social and Computer Sciences, Mathematics and Quantum Foundations}
\begin{epigraphs}
\qitem{Everyone is sure of this 
 \emph{[the hypothesis that errors are normally
   distributed]}, Mr. \emph{[Gabriel]} Lippman told me one day, since the
   experimentalists believe that it is a mathematical theorem, and the
   mathematicians that it is an experimentally
  determined fact.}
 {--- \textup{Henri Poincar\'{e}, \emph{Calcul des probabilit{\'e}s} (1912)}}
\qitem{The 1960s was a golden age for particle physics thanks to remarkable advances in accelerator physics-progress 
matched by the increased power and sophistication of particle detectors.\emph{[...]}
In vibrant fields of observational science, practitioners cannot be too dogmatic or doctrinaire for the simple reason
that their ideas will soon be put to the test.\emph{[...]}
And this difference feeds back into improved sociology throughout the entire scientific community.
On the other hand, when there is no fear factor, there is no penalty for dogmatism.
And so dogmatism often emerges.\emph{[...]} one should exhibit at least as much skepticism and doubt as certainty, 
and as much tolerance for other points of view as is the case in a strongly data-driven environment.
}{--- \textup{James Bjorken, \emph{Data Matters}, News from ICTP 112 (2004)}}
\qitem{I tried once in a talk to describe
the different approaches to progress in
physics like different religions. You have
prophets, you have followers --- each
prophet and his followers think that they
have the sole possession of the truth.\emph{[...]}
The problem with a lot of physicists is that they have a tendency to ``follow the leader'':
as soon as a new idea comes up, ten people write ten or more papers on it, and the effect is that
everything can move very fast in a technical direction.
But big progress may come from a different direction;
you do need people who are exploring different avenues.
}{--- \textup{M. Atiyah, Interview during the 
\emph{Abel Prize} celebrations (2004)}}
\end{epigraphs}

There are many good reasons for the scientific
community to be organized along a finite number of directions of research which reflect the 
progress achieved so far. But those reasons have very little to do with how to progress further.
In the same way that the knowledge about nutrition or hydrodynamics
contributes to the improvement of the swimmers' performances in the Olympics; also the knowledge about
Social and Computer Sciences and Mathematics contributes to the improvement of the scientists' performance.

From Social Sciences we know that
autonomy and demonstrations of respect may increase our creativity and productivity\cite{creativity}, 
but we are not capable of making rational judgments whenever we try. This knowledge is based on 
the people's tendency to use the same types of reasoning for both simple and complex problems, which often succeeds on the simple problems and fails on the complex ones\cite{failure};
the Nobel winning economic Prospect theory stating that people's decision making under risk is not based on the final outcome, but on the potential 
value of losses and gains evaluated using heuristics\cite{thinking};
recently,
a study suggested that social influence substantially biases rating dynamics in systems designed to harness collective intelligence\cite{socialbias}. 
Concerning scientists, it was argued that the research in the biosciences fits a tournament
economic structure, which induces not only high productivity but also to publish quickly with the postdoc and graduate 
students as the primary labor input\cite{biosciences}; 
the provisional results of an ongoing study about the \acs{LHC}
suggest that the traditional philosophical model---where the selection of 
rival theories is based on the merits of each theory---do not fit Particle Physics, 
for theorists the personal skills seem to be a major factor when choosing theories to work on\cite{empiricallhc}.

The modern information and communication technologies brought new tools which are changing how research is done, with increased transparency, collaboration and accessibility\cite{opening}.
Computational Physics is today one important branch of physics\cite{computationalphysics}, with contributions to General Relativity\cite{nr} and Particle Physics---such as the generation of the renormalization group equations for Gauge Theories\cite{rge,*rge2}, reduction of Feynman integrals to master integrals using a computer algebra system\cite{reduze,ginac}, 
or the implementation of on-shell methods for one-loop amplitudes\cite{blackhat}.
The experience accumulated and the innovation over the years on the statistical data analysis in Particle Physics is now crucial to inferring results from the huge amount of information collected by the experiments\cite{stathep}; which in turn can be compared with the predictions of the electroweak sector\cite{gfitter} and flavour structure\cite{ckmfitter2} of the Standard Model, using global fits;
the event generators\cite{nlo} and Lattice simulations\cite{latticeflavour,maas} are crucial for the calculation of many theoretical predictions,
taking advantage of the increasing power of parallel computing. 

From Mathematics the functional renormalization group unified the renormalization methods by expressing the Wilson's idea of effective action which is iteratively calculated by successive elimination of the high-energy degrees of freedom\cite{frg},
 it was discovered the Hopf algebra structure of renormalization in perturbative quantum field theory, which allowed to develop a new approach to Feynman diagrams calculation\cite{kreimer};
the non-commutative geometry generalizes geometry with Hilbert space operators\cite{noncommutativegeo}; the geometry of jet bundles generalizes the notion of tangent vectors\cite{jetbundle}; 
the algebra of generalized functions allows well defined multiplications of Dirac deltas\cite{generalized}.
From Quantum Foundations, individual quantum systems can now be measured and manipulated\cite{nobelcat};
the Consistent Histories approach to Quantum Mechanics\cite{consistent} is an example showing that the orthodox Quantum Mechanics can be improved.

\clearpage
\section{Beyond the Standard Model: a modular approach}
\label{sec:modular}
\begin{epigraphs}
\qitem{If we want things to stay as they are, things will have to change.}
{--- \textup{G. Tomasi di Lampedusa, \emph{Il Gattopardo} (1958)}}
\qitem{Wightman and others have questioned for approximately fifty years whether
mathematically well-defined examples of relativistic, non-linear quantum field theories exist.\emph{[...]}

The answers are partial, for in most of these field theories one replaces the
Minkowski space-time $\mathbb{M}^4$ by a lower-dimensional space-time $\mathbb{M}^2$ or $\mathbb{M}^3$, 
or by a compact approximation such as a torus. (Equivalently in the Euclidean formulation
one replaces Euclidean space-time $\mathbb{R}^4$ by $\mathbb{R}^2$ or $\mathbb{R}^3$.)
Some results are known for Yang-Mills theory on a four-torus $\mathbb{T}^4$ approximating $\mathbb{R}^4$,
and while the construction is not complete, there is ample indication that known methods could be extended
to construct Yang-Mills theory on $\mathbb{T}^4$.

In fact, at present one does not know any non-trivial relativistic field theory
that satisfies the Wightman (or any other reasonable) axioms in four-dimensions.
So even having a detailed mathematical construction of Yang-Mills theory on a
compact space would represent a major breakthrough.\emph{[...]}

One presumably needs to revisit known results at a deep level, simplify the methods,
and extend them. New ideas are needed to prove the existence of a mass gap that is uniform in
the volume of space-time. Such a result presumably would enable the study of the
limit as $\mathbb{T}^4\to \mathbb{R}^4$.\emph{[...]}

It is suspected that four-dimensional quantum gauge theory with gauge group SU(N)
(or SO(N), or Sp(N)) may be equivalent to a string theory with 1/N as the string
coupling constant. Such a description might give a clear-cut explanation of the
mass gap and confinement, and perhaps a good starting point for a rigorous proof
(for sufficiently large N).
}{---\textup{A. Jaffe \& E. Witten (2006)\cite{prize}}}
\qitem{You probably know Figure 2 of the Introduction
to the Review of Particle Physics (PDG)\emph{\cite{pdg}}, which shows the development
of several experimental quantities with time.
Every now and then, all those measured values show significant jumps, pointing either to a common systematic shift or to the effect of biased analyses.\emph{[...]}

If a measurement on a quantity has already been published, every new data analysis may
have two possible outcomes: either it agrees with the previous measurement or it does not.
In the first case, the physicist who performs the new measurement will probably be content
(usually he or she has achieved a smaller error), lean back, and finish the analysis without
thinking more deeply about it. In the case of a not too large disagreement
(about one to three standard deviations), however, the scenario becomes very different: the physicist would be somewhat worried and would have a closer look for potential problems.\emph{[...]}
In this way, the new measurement becomes heavily biased towards yielding a result
close to the original value.\emph{[...]}

What can we learn from these examples? The answer is quite simple: free yourself
from any prejudice in regard to the expected result! Do not care about previous
measurements and theory expectations. At best, you only compare your result to
others once the analysis is completely finished.
}{--- \textup{Rainer Wanke (2013)\cite{systs}}}
\end{epigraphs}

The quantum chromodynamics, electroweak and flavour sectors of the Standard Model have been supported by the experimental results.
The Standard Model (when general relativity is included) cannot account for the experimental results on 
neutrino masses and mixing, baryon asymmetry, dark matter, 
Cosmic Microwave Background fluctuations\cite{lhcsymposium}. 

Then there is a number of so-called ``problems'' of the Standard Model and general relativity,
that do not satisfy our criteria of what a theory should be, among others: quantum gravity; cosmological constant(dark energy)\cite{cosmological};
hierarchy; strong \acs{CP}; arbitrariness of the parameters of the Standard Model; meta-stability of the vaccuum;
accidental suppression of \aclp{FCNC}, \aclp{EDM} and proton decay;
the lack of a nonperturbate definition for a Quantum Field Theory with gauge interactions, such as the Standard Model. The mentioned accidental suppressions are particularly relevant in models trying to explain the remaining problems.

Examples of alternatives/extensions to the Standard Model include Inflaton, Supersymmetry, Seesaw, Grand Unified Theories, Strings, more (discrete) symmetries, Axion, vector-like quarks.
Attempts to define non-perturbatively a Quantum Field Theory with gauge interactions
involve string theory or space-times with dimensions lower than 4, Euclidean metric or toroidal topology\cite{prize}.

It is remarkable that the \ac{nuMSM} and one inflaton field can already account for all the experimental results which do not support the Standard Model and general relativity (with enough statistical significance), and it is admitted that this effective model may be valid up to the Planck mass scale, such that the solution to the hierarchy and the cosmological constant problems lies in quantum gravity\cite{Altarelli,nuMSM,*nuMSMinflaton,*scale}. Moreover, a nonperturbative approach to the Effective Field theory quantization of gravity seems promising\cite{frg_gravity}, despite the fact that either the perturbation theory is not pertubatively renormalizable or it lacks unitarity. 
So, what we are called for today when developing a better theory, is not so much to account for unexpected experimental results,
but mostly to improve our understanding of the Standard Model(and its simple extensions) and general relativity and the experimental results supporting these theories.

Then there are at least 3 strategies:
\begin{enumerate}[1)]\itemsep1pt \parskip0pt \parsep0pt
\item rewrite the (possibly whole) theory based on a partial solution for quantum gravity, hopefully accounting for all the experimental results\cite{strings,qg};
\item explore possible solutions based on simple extensions to the Standard Model which do not change the Standard Model principles (Quantum Field Theory, gauge symmetry, etc) and 
so are easier to support based on the existing experimental results;
\item take advantage of the modular structure of the Standard Model and general relativity to clarify, improve and unite some modules (Poincare representations, Yang-Mills-Higgs theory, etc.).
\end{enumerate}
The strategy 2) is most useful to the understanding of the interplay between experimental results and theory, but the progress is limited by the bounds allowed by the Standard Model principles.

The strategy 3) is too general to be useful by itself because we do not know, in general, what are the optimal boundaries of each module to achieve progress. We  have to evaluate case by case, based on the understanding of the experimental results and theory. Note that the Standard Model was the theory that emerged after the work of many people over the years who certainly had the motivation of developing a better theory, the result was a very modular theory.
With access to the most of the knowledge about a module, after a few years of study 
one might be able to develop that module, which will be then integrated by other people who know 
about other modules. 

So, in the second part of the thesis we will follow strategy 3), supported by the understanding of the 
experimental results and theory acquired in the first part of the thesis which follows the strategy 2).

In the first part of the thesis we will focus on the Higgs bosons.

In Chapter 2 we implement the correspondence between the gauge-dependent elementary states
and the non-perturbative non-abelian gauge-invariant asymptotic states, necessary to study the non-perturbative phenomenology of two-Higgs-doublet models.

In Chapter 3 we define the Minimal Flavour Violation condition with six spurions in effective field theories,
implying Flavour and CP violation entirely dependent on the fermion mixing matrices 
but independent of the fermion masses hierarchy; and show that it is one-loop renormalization-group invariant.
We tested the models of Chapter 4 with C++ libraries linked by the symbolic skills of the GiNaC library and we propose more libraries supporting a systematic search for Flavour Changing Neutral Currents.

In Chapter 4 we study the phenomenology of renormalizable two-Higgs-doublet models which verify the
defined condition as consequence of a symmetry; new light physical scalars, mediating Flavour Changing Neutral Currents, are allowed by flavour data without flavour coefficients beyond the Standard Model; we tested the models with C++ libraries linked by the symbolic skills of the GiNaC library and we propose more libraries supporting a systematic search for Flavour Changing Neutral Currents.

In the second part of the thesis we focus on mathematical scalar fields: real and complex numbers.
In Chapter 5 we map the complex to the real Poincare group representations and derive the free Dirac equation requiring covariant localizability of the representations. In Chapter 6 we study Localization and gauge symmetries in Quantum Field Theory.

Finally, chapter \ref{chap:Conclusion} presents some concluding remarks.

\clearpage{}\cleartooddpage
\cleartooddpage
\clearpage{}\phantomsection
\addcontentsline{toc}{part}{I\ \ Higgs mediated Flavour Violation}

\chapter{Non-perturbative phenomenology of the two-Higgs-doublet model}
\begin{epigraphs}
\qitem{A quasi-particle in a superconductor is a
mixture of bare electrons with opposite electric charges
(a particle and a hole) but with the same spin; correspondingly 
a massive Dirac particle is a mixture of bare
fermions with opposite chiralities, but with the same
charge or fermion number. Without the gap or the
mass, the respective particle would become an eigenstate 
of electric charge or chirality.}{--- \emph{
Y. Nambu \& G. Jona-Lasinio (1960)\cite{nambu}}}

\qitem{
The continuum formulation based on perturbation methods and the lattice
(Wilson) formulation of gauge quantum field theories seemingly lead to contradictory
results, in particular when applied to Higgs models, since in the Wilson
formulation all the gauge-dependent Green functions vanish and there cannot be
spontaneous symmetry breaking.\textup{[...]}

Thus, the role of the local order parameter $\overline{\varphi}$ in the standard picture
appears merely as a way of fixing a system of local coordinates, with the result that the
physical degrees of freedom are described by multiplets of fields which, since they depend
on such a coordinate system in field space, are gauge-dependent. That role of the
parameter $\overline{\varphi}$ (of the standard picture) is also in agreement with the
result that there is no phase transition between the confinement and the Higgs regime.}
{--- \textup{J. Frohlich \& G. Morchio \& F. Strocchi (1981)\cite{higgsphenomenon}}}

\qitem{
The construction of physical charged states is one of the basic problems of gauge
field theories. It is deeply related to the solution of the infrared problem in QED,
since a physical charged particle must be accompanied by its radiation field , i.e., by a
``cloud'' of soft photons. Moreover, the possibility of constructing color-charged states
is at the root of the confinement problem.}
{--- \textup{F. Strocchi (2013)\cite{nonperturbativefoundations}}}
\end{epigraphs}

The lattice simulations of the two-Higgs-doublet model with a $SU(2)_L$ gauge symmetry 
indicate that the non-perturbative effects
may be important in some regions of parameters\cite{lattice2HDM,*maas2HDM}. For instance, the lattice simulations reveal that the non-perturbative effects are important for one Higgs doublet when the mass of the Higgs boson is below the mass of the W boson\cite{maasplot,*maas2,*maas1} (corresponding to a QCD-like domain in the phase diagram); also for a top-bottom-Higgs system the non-perturbative effects may affect the (in)stability of the Higgs potential\cite{massbounds}, so conclusions about the (meta)stability of the vacuum of the Standard Model based on perturbative methods may be premature\cite{lhcsymposium}.

This should not be a surprise, as in \ac{QCD}
it is well known that for some parameter space the perturbative methods work very well, while in others they are simply of no use and people must use non-perturbative methods such as Lattice simulations.
In the case of Electroweak theory, we have been using mostly perturbative methods because they do apply in the parameter space where they have been tested. If the perturbative methods did not work, the experiments would have noticed it and people would be using non-perturbative methods just as it happens in Quantum Chromodynamics. Therefore, there is no paradox in the fact that the perturbative methods have produced good results so far for the Electroweak theory. Of course that the fact that the coupling constant in Electroweak theory is small increases a lot the chances that the perturbative methods will work in an arbitrary region of parameter space---when compared with Quantum Chromodynamics---but this is in no way a guarantee, specially in extensions of these  non-abelian gauge theories---not yet understood non-perturbatively---with more and different types of degrees of freedom.
  
Following the standard perturbative treatment of the two-Higgs-doublet model\cite{accidental,*accidental2,Branco:2011iw,O'Neil:2009nr} our goal in this chapter is to extend the non-perturbative 
formulation of the Electroweak model with one Higgs doublet\cite{higgsphenomenon} to the two-Higgs-doublet model,
allowing for additional studies of these non-perturbative effects (for a general $SU(2)$ two-Higgs-doublet model with or without 
$U(1)_Y$ gauge or fermions).
This chapter also serves the purpose of an introduction to the Electroweak theory and two-Higgs-doublet model,
used in the next chapters to study flavour violation. 
Note however that if the reader simply wants to do perturbation theory for phenomenological studies, we suggest instead the reader to follow the standard reference\cite{Branco:2011iw}.
We follow the convention used in the reference\cite{signs} for the signs and constants. 

\section{Custodial symmetry and the Higgs mechanism}

In this section we follow an argument of L. Susskind from the 1970's\cite{banks,*nobreaking2,*nobreaking}.
What follows is at the classical field theory level. Consider the Lagrangian, 
\begin{align*}
\mathcal{L}&\equiv ((D^\mu\phi)^\dagger(D_\mu\phi)-V(\phi)-\frac{1}{4}W_{\mu\nu}^aW^{a\mu\nu}\\
V(\phi^\dagger\phi)&\equiv-\frac{m_h^2}{2} \phi^\dagger\phi+\frac{g^2m_h^2}{8m_W^2}(\phi^\dagger\phi)^2\\
D_\mu&\equiv\partial_\mu+igW_\mu^a\frac{\sigma^a}{2}\\
W_{\mu\nu}^a&\equiv -\frac{i}{g}tr([D_\mu,D_\nu]\sigma^a)=\partial_\mu W_\nu^a-\partial_\nu W_\mu^a-g\epsilon^{abc}W^b_\mu W^c_\nu
\end{align*}
Where $V(\phi^\dagger\phi)$ is the Higgs Potential,
$\phi$ is the Higgs doublet , 
$D_\mu$ is the covariant derivative dependent on the gauge field $W_\mu^a$,  
$W_{\mu\nu}^a$ is the gauge field strength tensor and finally $g$ is the coupling constant, $m_h$ and $m_W$ are the masses of the 
higgs and W bosons and $v\equiv \frac{2}{g}m_W$ is the \ac{VEV} (at classical field theory level),
i.e. the potential is minimum for $\phi^\dagger\phi=\frac{v^2}{2}$. $\epsilon^{abc}$ is the Levi-Civita symbol and $\sigma^a$ 
are the Pauli matrices.

We define the custodial $SU(2)_R$ transposed doublet $\Phi\equiv [\widetilde{\phi}\ \phi]$,
where $\widetilde{\phi}_a\equiv \epsilon_{ab}\phi_b^*$, $(a,b=1,2)$. Note that
$\widetilde{\phi}$ transforms as $\phi$ under a local $SU(2)_L$ transformation.

We can check that the Lagrangian can be rewritten as:
\begin{align*}
\mathcal{L}=\frac{1}{2}tr((D^\mu\Phi)^\dagger(D_\mu\Phi)-V(\frac{1}{2}tr(\Phi^\dagger\Phi))-\frac{1}{4}W_{\mu\nu}^aW^{a\mu\nu}
\end{align*}

The Lagrangian is invariant under $SU(2)_L \times SU(2)_R$ where $SU(2)_L$ is the local gauge symmetry and 
$SU(2)_R$ is the global custodial symmetry. For $(L,R)\in SU(2)_L \times SU(2)_R$ then $\Phi\to L\Phi R$.

We can redefine $\Phi=\rho\Theta$, with $\rho$ a positive scalar field and $\Theta$ a $SU(2)$ matrix valued field.

Then we go to the unitary gauge, by using as $SU(2)_L$ transformation  $\Theta^\dagger\in SU(2)_L$, implying  $\Phi\to \Theta^\dagger\Phi=\rho$.
We can check that the condition $\Phi=\rho\ 1$ is invariant under the global custodial symmetry $(R^\dagger,R)\in SU(2)_L\times SU(2)_R$  $\Phi\to R^\dagger\Phi R$, 
therefore the unitary gauge only fixes the local gauge transformations,
any global custodial transformation $(R^\dagger,R)\in SU(2)_L\times SU(2)_R$ conserves the unitary gauge condition.

Applying the same transformation $\Theta^\dagger\in SU(2)_L$ to $D_\mu$:
\begin{align*}
\Theta^\dagger D_\mu\Theta&=\partial_\mu+igW_\mu^j\frac{\sigma^j}{2}\\
W_\mu^j&\equiv -\frac{i}{g}tr(\Theta^\dagger(\partial_\mu+igW^a_\mu\frac{\sigma^a}{2})\Theta \sigma^j)
\end{align*}
(note the $j$ index instead of the $a$ index).

Since the change of variables $D_\mu\to \Theta^\dagger D_\mu\Theta$ is a gauge transformation by $\Theta$,
then the Lagrangian in the unitary gauge is:

$\mathcal{L}=\frac{1}{2}tr(((\partial^\mu+igW^{\mu j}
\frac{\sigma^j}{2})\rho)^\dagger(\partial_\mu+igW_\mu^k\frac{\sigma^k}{2})\rho)-V(\rho^2)-\frac{1}{4}W_{\mu\nu}^jW^{j\mu\nu}$

The minimum of the potential is unique in terms of $\rho$, hence it seems that there is no symmetry breaking in the unitary gauge. 
This manipulation is not necessarily useful at the perturbative (quantum) level 
because the unitary gauge often increases the mathematical complexity of the calculations at the loop level due to renormalization related
issues\cite{unitarygauge}. 
Moreover, expanding around the vacuum is only valid for small perturbations, 
as it breaks once $\rho$ is allowed to be close to zero\cite{symmetrybreaking}, so it is not useful at the non-perturbative level.
However, it is useful as it shows us that the symmetry breaking is gauge dependent and hence not necessarily physical.

\section{Asymptotic states}

The Phase diagram of the $SU(2)_L$ Yang-Mills-Higgs lattice theory
is connected, which implies that it may be that there is no qualitative physical difference 
between the confinement mechanism and the (non-abelian) Higgs mechanism\cite{greensite,maasplot,*maas2,*maas1,higgsphenomenon}.
Moreover, after certain incomplete gauge fixings (e.g. Coulomb or Landau, see figure \ref{fig:pd}) some global subgroup of the 
local gauge symmetry does indeed break spontaneously, but the location of the breaking in the phase
diagram depends on the choice of gauge fixing\cite{ambiguity}.
\begin{figure}[htb!]
\centering
      \includegraphics[width=0.5\textwidth]{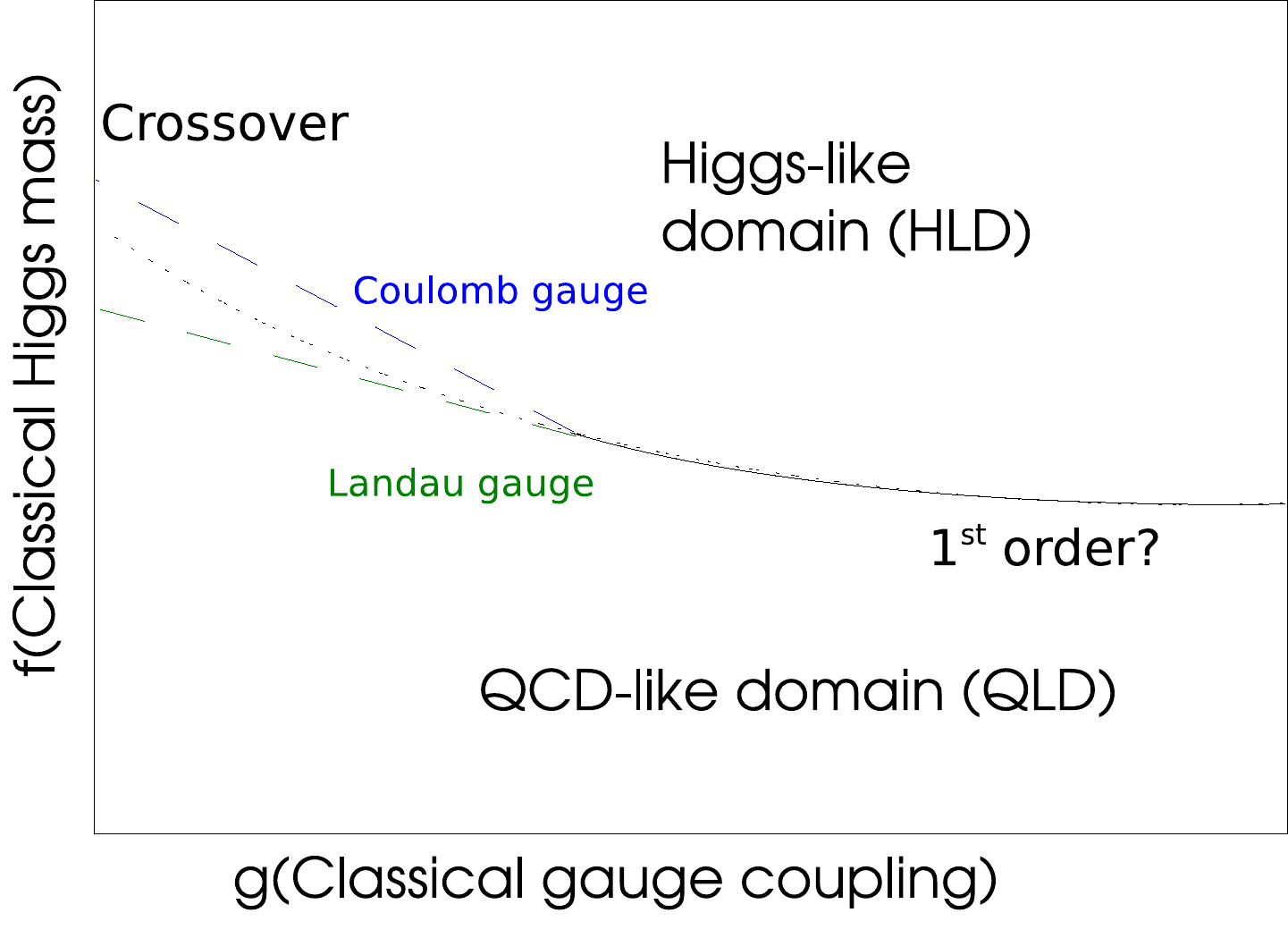}
\caption{A sketch of the phase diagram\cite{maasplot,*maas2,*maas1}, based on a quantitative version\cite{ambiguity}.
Roughly, the axis $g$ is inversely proportional to the gauge coupling, while $f$ is proportional to the vacuum expectation value (after an appropriate gauge fixing).
The solid line locates the phase transition, while the dashed lines locate the breaking of the corresponding global subgroup of the 
local gauge symmetry. The dashed lines do not coincide where there is no phase transition (the solid line is absent).}
\label{fig:pd}
\end{figure}

In a confinement region, only the bound states which are $SU(2)_L$ gauge singlets can be asymptotic states, with the Higgs doublet used to construct such singlets. 
Therefore, it may be that in the Higgs region also only the bound states which are $SU(2)_L$ singlets can be asymptotic states.
We can check that the classical fields in the unitary gauge are related with composite states which are $SU(2)_L$ gauge singlets.
In the Higgs region, we can fix a convenient gauge to do perturbation theory, expanding the Higgs doublet around a point $\Phi= v\Theta_0+\Phi_1$ that minimizes the potential, with $\Theta_0$ a $SU(2)_L$ matrix  
(in a gauge that allows it):
\begin{align*}
\Phi^\dagger\Phi&\approx v(v+\Theta_0^\dagger\Phi_1+\Phi_1^\dagger\Theta_0)+...\\
\Phi^\dagger D_\mu\Phi&\approx v^2\Theta_0^\dagger D_\mu\Theta_0+...
\end{align*}
The leading terms of the gauge singlets match the elementary fields in the unitary gauge.

It remains to be checked the contribution from the next-to leading terms of the singlets, since there are measured precision electroweak observables which must be accounted for.
Assuming that the next-to leading terms produce only scattering states, then the center-of-mass energy of such states starts at the sum of the masses of the elementary fields.
Since the Higgs is among the most heavy gauge-dependent elementary fields, the scattering state's energy spectrum starts far from the mass of the gauge-dependent elementary field.
Therefore, the contribution from the next-to-leading terms when considering center-of-mass energies close to the mass of the gauge-dependent elementary field is expected to be small.
Moreover, there are theoretical arguments indicating that the standard perturbative expansion assuming a gauge-dependent vacuum
expectation value cannot be asymptotic to gauge-dependent correlation functions\cite{higgsphenomenon} and so 
the standard perturbative expansion is not necessarily absent of problems with deviations.

A good analogy is the Kinoshita-Lee-Nauenberg theorem stating that any unitary theory is perturbatively infrared finite,
when all possible initial and final states are summed in a finite energy window then the infrared divergences cancel,
including those with soft photons\cite{kln1,*kln2,*kln}. In this case, as in the Higgs case, the correct procedure is to sum the scattering states, but unlike the Higgs case, the photon is massless and so the scattering states contribute at center-of-mass energies  near the energy of the mass of the elementary state and so it is crucial to take them into account to obtain physically meaningful results avoiding infrared divergences.
The point is that the elementary fields with the vacuum attached have the same quantum numbers as those with the Higgs bosons (i.e. the perturbations from the vacuum) attached, hence they cannot be distinguished (except in an approximate way by the energy spectrum) and the corresponding diagrams should be all summed.

The existence of bound state excitations is in principle possible and could change the predictions of perturbation theory just like in quantum electrodynamics, but there is no evidence so far from lattice simulations that such excitations are expected in the Higgs regime of $SU(2)_L$ Yang-Mills-Higgs theory\cite{maasplot}.

\section{Background symmetries}
\label{sec:bs}
A non-dynamical background field, simply background field or spurion, is a field entering in the definition of the Lagrangian but it is not a variable of the Lagrangian. It may be a non-trivial representation of a group of background symmetries of the Lagrangian, but it cannot be changed when minimizing the action and so there are no Noether's conserved currents associated with such background symmetries. It could be a non-trivial representation of the Poincare group (hence the name field), in such case the Poincare symmetry turns to a background symmetry, we are not interested in such case here.  When calculating the observables, the background fields are replaced by numerical values at each space-time point. The observables are invariant under the action of the group of the background symmetries. In the literature, the action of a group of background symmetries may be called a reparametrization\cite{group2HDM2}, a basis transformation that do not change the functional form of the Lagrangian or a spurion analysis\cite{group2HDM}, weak-basis transformations are also a group of background symmetries\cite{Botella:2012ab}. The background fields may also be considered as source fields\cite{georgi}, because a source field is an example of a background field, moreover if the background field is null, then the background symmetry is a symmetry of the Lagrangian with a conserved Noether's current associated with each continuous symmetry.

\section{two-Higgs-doublet model}

In this section we use the classification of the accidental symmetries of
the two-Higgs-doublet model\cite{accidental,*accidental2} and follow a similar notation.
What follows is at the classical field theory level.
A comment about the notation is in order, we use matrices with well defined commutation relation instead of the Higgs 
doublet indices for the same reason that people at some point started using Dirac 
gamma matrices instead of spinor indices: it may be advantageous; working with a real vector or a complex vector satisfying 
a ``Majorana condition'' is not only isomorphic, but it should lead to a similar notation once we completely avoid indices.
As we will see in the end of the chapter, we do not agree that such notation cannot be applied for studies of the full
theory with both scalars and fermions\cite{Branco:2011iw} and we find it useful for the non-perturbative formulation of the
full theory. One disadvantage is that the literature is mostly written with index notation due to historical reasons.

Let $\phi$ be a $8$ dimensional real vector.
Let $S_{k}$ and $A_k$ ($k=1,2,3$) be respectively symmetric and skew-symmetric $8\times 8$ real 
matrices which all anti-commute and are orthogonal.
The set of all possible products of $S_{k},A_k$ form a basis for the $8\times 8$ real matrices.

Let $\epsilon_{jkl} A_{k}A_{l}$ be the generators of the gauge $SU(2)_L$ transformations, that is, 
$D_\mu\equiv\partial_\mu+gW_\mu^j\epsilon_{jkl} A_{k}A_{l}$.
The matrices $1, \Sigma_a$ $a=1,...,5$ 
where $\Sigma_k\equiv S_k$ ($k=1,2,3$) $\Sigma_4\equiv A_1 A_{2}A_{3}$
and  $\Sigma_5\equiv  \Sigma_1 \Sigma_2\Sigma_3\Sigma_4$ form a basis for the symmetric matrices conserved by the generators of $SU(2)_L$.
Note that $\Sigma_a$ anti-commmute with each other.
The matrices $[\Sigma_a,\Sigma_b]$ form a basis for the skew-symmetric matrices conserved by the generators of $SU(2)_L$ and are the generators of a $Spin(5)$ group
(note that $Spin(5)$ is the double cover of the $SO(5)$ group and that in $U u_a \Sigma_a U^\dagger= v_b\Sigma_b$ with $U\equiv e^{\theta_{ab}[\Sigma_a,\Sigma_b]}$ $u_a$ and $v_b$ are related by a a $SO(5)$ transformation).

If we promote the parameters of the Higgs potential to background fields, 
the Lagrangian is invariant under the gauge group $SU(2)_L$ and the group of background symmetries $Spin(5)$. 
Therefore the physical observables are invariant under the action of the group $SU(2)_L\times Spin(5)$.
We can promote the $8$ dimensional real vector to a $8$ dimensional complex vector verifying a Majorana condition, which is the tensor product of 
a $4$ dimensional complex representation of $Spin(5)$ and a $2$ dimensional complex representation of $SU(2)_L$.
Note that $Spin(5)$ is the double cover of $SO(5)$.

The $SU(2)_L$ invariant operators for Lorentz scalars and vectors include: 
\begin{itemize}
\item $\phi^\dagger\phi$ (singlet under $SO(5)$ and Lorentz scalar);
\item $\phi^\dagger\Sigma_a\phi$ $(a=1,2,3,4,5)$ (5 representation of $SO(5)$ and Lorentz scalar);
\item $Tr([D_\mu,D_\nu][D^\mu,D^\nu])$ (singlet under $SO(5)$ and Lorentz scalar);
\item $\phi^\dagger D_\mu  [\Sigma_a,\Sigma_b]\phi$ $(a,b=1,2,3,4,5)$ (10 representation of $SO(5)$ and Lorentz vector).
\end{itemize}

Other $SU(2)_L$ invariant operators include compositions of the above mentioned operators, such as:
\begin{itemize}
\item $(\phi^\dagger D_\mu  [\Sigma_a,\Sigma_b]\phi)(\phi^\dagger D^\mu  [\Sigma_a,\Sigma_b]\phi)$;
\item $\phi^\dagger\Sigma_a\phi\phi^\dagger\Sigma_a\phi$;
\item $\partial_\mu\phi^\dagger\phi$;
\item $\partial_\mu\phi^\dagger\Sigma_a\phi$.
\end{itemize}

After gauge-fixing for a suitable gauge, 
we can expand $\sqrt{2}\phi=v \phi_0+\varphi$ around a reference point $\frac{v}{\sqrt{2}}\phi_0$ in a gauge orbit minimizing the potential, with $\phi_0^\dagger\phi_0=1$ 
and $v$ is the \ac{VEV} (at classical field theory level).
Without loss of generality we assume that the chosen orbit verifies $u_a\Sigma_a\phi_0=\phi_0$,
with the $SO(5)$ vector $u$ normalized $u_au_a=1$. Note that $1=\frac{1}{5!}\epsilon_{abcdf}\Sigma_a\Sigma_b\Sigma_c\Sigma_d\Sigma_{f}$ and so
 $\Sigma_a=\frac{1}{4!}\epsilon_{abcdf}\Sigma_b\Sigma_c\Sigma_d\Sigma_{f}$. Also $u_a=\phi_0^\dagger\Sigma_a\phi_0$.

We define $\phi_1\equiv\frac{1+u_a\Sigma_a}{2}\phi$ and $\phi_2\equiv\frac{1-u_a\Sigma_a}{2}\phi$ and we call them Higgs doublets---so to speak, as they are not 
really complex doublets but four dimensional real vectors.

There is a correspondence between the standard gauge dependent fields and the
gauge-invariant ones, the $SU(2)_L$ gauge-invariant states which describe the theory are:
\begin{itemize}\itemsep1pt \parskip0pt \parsep0pt
\item $\epsilon_{abcdf}u_f\phi^\dagger_1 D_\mu \Sigma_c\Sigma_d\phi_1$;
\item $\phi^\dagger_1\phi_1$
\item $\frac{1}{3!}\epsilon_{abcdf}u_f\phi^\dagger\Sigma_b\Sigma_c\Sigma_d\phi$ $(a,b,c,d,f=1,...,5)$;
\end{itemize}

Without loss of generality due to the background symmetry, by reparametrization of the Higgs potential
we assume that the chosen orbit verifies $u_a=v\delta_{5a}$ and so $\Sigma_5\phi_0=\phi_0$.
We now choose the reference point $v\phi_0$ to be constant in the fixed gauge and to verify the correspondence
between the custodial and gauge generators:
\begin{align*}
\epsilon_{jkl} A_{k}A_{l}\phi_0=\epsilon_{jkl} \Sigma_{k}\Sigma_{l}\phi_0\ (j,k,l=1,2,3)
\end{align*} 
which is equivalent to $A_3\Sigma_3\phi_0=\phi_0$ and $A_2\Sigma_2\phi_0=\phi_0$.
Then, $\phi^0$ conserves a $SO(3)\times Spin(3)$ background symmetry,
whose generators are $(\Sigma_4 \Sigma_j(1+\Sigma_5)/2-\epsilon_{jkl} A_{k}A_{l})$ and $\Sigma_4 \Sigma_j(1-\Sigma_5)$, respectively.

Keeping only the first non-constant terms in the expansion we get:
\begin{align*}
\phi^\dagger_1 D_\mu \epsilon_{jkl} \Sigma_{k}\Sigma_{l}\phi_1&\approx \frac{v^2}{2} W_\mu^j\\
\phi^\dagger_1\phi_1&\approx \frac{v^2}{2}+v\phi_0^\dagger\varphi\\
\phi^\dagger\Sigma_a\phi&\approx v\phi_0^\dagger\Sigma_a\varphi\ (a=1,...,4)
\end{align*}
Therefore $\phi_0^\dagger\Sigma_a\varphi$ $(a=1,...,4)$ selects the components of $\varphi$ correspondent to the second Higgs doublet;
$\phi_0^\dagger\Sigma_5\varphi=\phi_0^\dagger\varphi$ selects only the component of the first  Higgs doublet aligned with the reference point $v\phi_0$,
the remaining components of $\phi_1$ correspond to the would-be goldstone bosons and constitute the longitudinal degrees of
freedom of $W_\mu^j$.
A $Spin(5)$ transformation with generators $[\Sigma_a,\Sigma_b]$ will induce a $SO(5)$ transformation on
the states $\phi^\dagger_0 \Sigma_a\varphi$ $(a=1,...,5)$ usually identified as the Higgs boson fields---the vacuum vector $u_a$ will change accordingly.

The Higgs potential is:
\begin{align*}
V(\phi)=\mu_a\phi^\dagger\Sigma_a\phi
+\lambda_{ab}(\phi^\dagger\Sigma_a\phi\phi^\dagger\Sigma_b\phi)
\end{align*}
where $a,b=0,1,...5$ and $\Sigma_0\equiv 1$. Hence for $a,b\neq 0$, $\mu_0,\lambda_{00}$ are singlets, $\mu_a,\lambda_{0a}$ are $5$ dimensional representations of $SO(5)$ and $\lambda_{ab}$ is a tensor of $SO(5)$. 

For instance, the most general $Spin(4)$ symmetric potential is:
\begin{align*}
V(\phi)=\mu_0\phi^\dagger \phi+\mu_5\phi^\dagger\Sigma_5\phi+\frac{1}{2}\lambda_{00}(\phi^\dagger\phi)^2+\lambda_{05}(\phi^\dagger\phi)(\phi^\dagger\Sigma_5\phi)+\frac{1}{2}\lambda_{55}(\phi^\dagger\Sigma_5\phi)^2
\end{align*}
The terms in $\Sigma_5$ breaks the symmetry $Spin(5)\to Spin(4)$\cite{accidental}.

The most general minimum verifies $O_{5a}\Sigma_a\phi_0=\phi_0$, where $O\in SO(5)$. The minimum breaks the generators of 
$Spin(4)$ which  do not commute with $O_{5a}\Sigma_a$.
Without lost of generality, we can choose a basis such that $O_{51}=O_{52}=O_{53}=0$. 
Then for $O_{54}\neq 0$ the symmetry conserved by the minimum is $Spin(3)$ with generators $\epsilon_{jkl}\Sigma_k\Sigma_l$ 
and there are three broken generators of $Spin(4)$ namely $\Sigma_j\Sigma_4$, so we expect $3$ massless 
goldstone bosons.

Hence, to avoid goldstone bosons the minimum verifies $\pm \Sigma_5\phi_0=\phi_0$. We can choose a basis such that $\Sigma_5\phi_0=\phi_0$.

We simplify further, considering the Maximally-Symmetric 2HDM\cite{MS2HDM}. 
The potential is:
\begin{align*}
V(\phi)=\mu_0\phi^\dagger \phi+\mu_5\phi^\dagger\Sigma_5\phi+\frac{1}{2}\lambda_{00}(\phi^\dagger\phi)^2
\end{align*}
Then we get the stability condition $\mu_0+\mu_5=-\frac{1}{2}\lambda_{00}v^2$
and the minimum conditions $m_h^2=\lambda_{00}v^2>0$ and $m_H^2=-2\mu_5>0$.
The term in $\mu_5$ breaks softly the symmetry $Spin(5)\to Spin(4)$\cite{accidental}, giving the same mass $m_H$ to the 
Higgs states  $\phi_0^\dagger\Sigma_a\varphi$ ($a=1,2,3,4$) which are now mass eigenstates
\cite{O'Neil:2009nr}---these states are related to the states
$H^\pm$, $R$ and $I$ defined in the next section.

In the Higgs basis the potential is rewritten as:
\begin{align*}
V(H_1,H_2)=\mu_0(H_1^\dagger H_1+H_2^\dagger H_2)+\mu_5(H_1^\dagger H_1-H_2^\dagger H_2)+\frac{1}{2}\lambda_{00}(H_1^\dagger H_1+H_2^\dagger H_2)^2
\end{align*}

This Higgs potential will be used in lattice studies in future work.
\note{comment about why to mention such Higgs potential}

\section{Photons}

We now consider a Lagrangian invariant under the $U(1)_Y$ gauge symmetry with generator $\Sigma_1\Sigma_2$: 
\begin{align*}
\mathcal{L}\equiv ((D^\mu+\Sigma_1\Sigma_2\frac{g'}{2}B^\mu)\phi)^\dagger(D_\mu+\Sigma_1\Sigma_2\frac{g'}{2}B_\mu\phi)-V(\phi)-\frac{1}{4}W_{\mu\nu}^aW^{a\mu\nu}-\frac{1}{4}B_{\mu\nu}B^{\mu\nu}
\end{align*}

Where the $B_\mu$ is the $U(1)_Y$ gauge field,  
$B_{\mu\nu}$ is the gauge field strength tensor and finally $g'$ is the $U(1)_Y$ coupling constant.
All other symbols as in the previous sections, in particular 
$D_\mu\equiv\partial_\mu+gW_\mu^a\epsilon_{abc} A_{b}A_{c}$.
Then we are left with a background symmetry which is the semi-direct product $(U(1)_Y\times Spin(3))\rtimes Z_4$ of the
$Spin(3)$ group whose generators are $\Sigma_3\Sigma_4$, $\Sigma_3\Sigma_5$, $\Sigma_4\Sigma_5$
(the only ones that commute with $\Sigma_1\Sigma_2$) and the $Z_4$ group generated by the \acs{CP} background transformation
$\phi(x^0,\vec{x})\to \Sigma_2\Sigma_3\phi(x^0,-\vec{x})$. Note that $B_\mu$ transforms under \ac{CP} according to $B_\mu(x^0,\vec{x})\to -B^\mu(x^0,-\vec{x})$;
while $U(1)_Y\times Spin(3)$ is a normal subgroup, the \acs{CP} background transformation is not.
Any background transformation may be written as the product of an element of $U(1)_Y\times Spin(3)$ and an element of $Z_4$ 
(either the identity or the above defined \acs{CP} transformation).

The neutral vacuum condition is that the orbit minimizing the potential $v\phi_0$ must be aligned along a linear combination of 
$\Sigma_{3,4,5}$ which all commute with the $U(1)_Y$ generator $\Sigma_1\Sigma_2$. By reparametrization we choose $\Sigma_5\phi_0=\phi_0$.
We define $\phi_1\equiv\frac{1+\Sigma_5}{2}\phi$ and $\phi_2\equiv\frac{1-\Sigma_5}{2}\phi$.
There is a correspondence between the standard gauge dependent fields and the
$SU(2)_L$ gauge-invariant ones, the $SU(2)_L$ gauge-invariant states (but $U(1)_Y$ dependent) which describe the theory are:
\begin{itemize}
\item $\mathcal{W}^+_\mu\equiv \phi^\dagger_1 D_\mu  (\Sigma_2+i\Sigma_1)\Sigma_3\phi_1$;
\item $\mathcal{Z}_\mu\equiv\cos\theta_W\phi^\dagger_1 D_\mu \Sigma_1\Sigma_2\phi_1-\sin\theta_W \frac{v^2}{2}  B_\mu$;
\item $\mathcal{A}_\mu\equiv\sin\theta_W\phi^\dagger_1 D_\mu \Sigma_1 \Sigma_2\phi_1+\cos\theta_W\frac{v^2}{2} B_\mu$;
\item $\phi^\dagger_1\phi_1$;
\item $\phi^\dagger\Sigma_a\phi$ $(a=3,4)$;
\item $\mathcal{H}^+\equiv \phi^\dagger(\Sigma_1-i\Sigma_2)\phi$;
\end{itemize}
Where $\theta_W$ is the weak mixing angle with $\cos\theta_W\equiv \frac{g}{\sqrt{g^2+g^{'2}}}$ and $\sin\theta_W\equiv \frac{g'}{\sqrt{g^2+g^{'2}}}$.
Note that $U(1)_Y$ is an abelian gauge symmetry and so it is not related with the confinement effect,
unlike the non-abelian $SU(2)_L$ gauge symmetry.
We can check that under a gauge transformation  $U(1)_Y$ where $\phi\to e^{\Sigma_1\Sigma_2 \frac{\vartheta}{2}}\phi$, we get:
\begin{align*}
\mathcal{W}^+_\mu&\to e^{i\vartheta}\mathcal{W}^+_\mu\\
\mathcal{A}_\mu&\to \mathcal{A}_\mu-\frac{1}{g\sin\theta_W}\partial_\mu\vartheta\\
\mathcal{H}^+&\to e^{i\vartheta}\mathcal{H}^+
\end{align*}
The remaining states are invariant under $U(1)_Y$.

Under the \acs{CP} transformation we get:
\begin{align*}
\mathcal{W}^+_\mu(x^0,\vec{x})&\to (\mathcal{W}^{+\mu})^*(x^0,-\vec{x})\\
\mathcal{Z}_\mu(x^0,\vec{x})&\to -\mathcal{Z}^{\mu}(x^0,-\vec{x})\\
\mathcal{A}_\mu(x^0,\vec{x})&\to -\mathcal{A}^{\mu}(x^0,-\vec{x})\\
\mathcal{H}^+&\to (\mathcal{H}^+)^*\\
\phi^\dagger\Sigma_3\phi&\to -\phi^\dagger\Sigma_3\phi
\end{align*}

We now choose the reference point minimizing the potential $\frac{v}{\sqrt{2}}\phi_0$ ---used in the expansion $\sqrt{2}\phi=v\phi^0+\varphi$---
to be constant in the fixed gauge and to verify the correspondence between the custodial and gauge generators:
\begin{align*}
\epsilon_{jkl} A_{k}A_{l}\phi_0=\epsilon_{jkl} \Sigma_{k}\Sigma_{l}\phi_0\ (j=1,2,3)
\end{align*} 
Then the reference point conserves the electromagnetic charge with generator $(\Sigma_1\Sigma_2-A_1A_2)$, that is, $(\Sigma_1\Sigma_2-A_1A_2)\phi_0=0$.
Keeping only the first non-constant terms in the expansion we get:
\begin{align*}
\mathcal{W}^+_\mu&\approx \frac{v^2}{2} (W_\mu^1-iW_\mu^2)\\
\mathcal{Z}_\mu&\approx \frac{v^2}{2} (\cos\theta_W W_\mu^3-\sin\theta_W B_\mu)\\
\mathcal{A}_\mu&\approx \frac{v^2}{2} (\sin\theta_W W_\mu^3+\cos\theta_W B_\mu)\\
\phi^\dagger_1\phi_1&\approx \frac{v^2}{2}+v\phi_0^\dagger\varphi\\
\phi^\dagger\Sigma_a\phi&\approx v\phi_0^\dagger\Sigma_a\varphi\ (a=3,4)\\
\mathcal{H}^+&\approx v\phi_0^\dagger(\Sigma_1-i\Sigma_2)\varphi
\end{align*}
Now the standard gauge dependent fields are $W_\mu^+\equiv \frac{1}{\sqrt{2}}(W_\mu^1-iW_\mu^2)$, 
$Z_\mu\equiv (\cos\theta_W W_\mu^3-\sin\theta_W B_\mu)$, the photon field 
$A_\mu\equiv (\sin\theta_W W_\mu^3+\cos\theta_W B_\mu)$, the charged Higgs boson $H^+\equiv \frac{1}{\sqrt{2}}\phi_0^\dagger(\Sigma_1-i\Sigma_2)\varphi$,
the \ac{CP} pseudoscalar $I\equiv \phi^\dagger_0 \Sigma_3\varphi$ and finally the scalars
$R\equiv \phi^\dagger_0 \Sigma_4\varphi$ and the Higgs boson 
$H^0\equiv \phi^\dagger_0\Sigma_5 \varphi=\phi^\dagger_0\varphi$.

We can check that $(H^0,R,I)$ transforms as a $SO(3)$ vector under a background $Spin(3)$ transformation.
Also, the vacuum direction $u\equiv(\phi_0^\dagger\Sigma_5\phi_0,\phi_0^\dagger\Sigma_4\phi_0,\phi_0^\dagger\Sigma_3\phi_0)$
will transform in the same way and defines the Higgs basis.

In general the vector Higgs mass eigenstates $(h_1,h_2,h_3)$ will result from a $SO(3)$ rotation
of the Higgs basis states $(H^0,R,I)$, with angles determined by the Higgs potential.
Writing $h_j=n_{ja}\phi^\dagger_0 \Sigma_a\varphi$, with $n_{ja}n_{ja}=1$, the $SO(3)$ rotation $n$ relates 
the Higgs basis with the basis of mass eigenstates.

\section{Fermions}

In the previous section, by reparametrization we 
could choose a reference point verifying $\Sigma_5\phi_0=\phi_0$. In this section we will start by not doing it
due to the Higgs couplings to the fermions.

Consider a fermionic field $Q_L$ verifying $\Sigma_1\Sigma_2Q_L=iQ_L$ and $\Sigma_5Q_L=Q_L$,
therefore $Q_L$ is isomorphic to a complex doublet of $SU(2)_L$. 
It transforms under the gauge symmetry $SU(2)_L$, in the same way as $\phi$.
The bar $\overline{Q_{L}}\equiv (Q_{L})^\dagger\gamma^0$ stands for the usual Dirac spinor adjoint.
$Q_L$ already fixes $\Sigma_5$ and we do not want this choice to be reparametrized,
as a consequence the most general reference point does not yet verify $\Sigma_5\phi_0=\phi_0$.

Let $d_R$, $u_R$ be fermionic fields, singlets under $SU(2)_L$.
We set the hyper-charges of the gauge symmetry $U(1)_Y$ as $Q_L(1/6_Y)$, $d_R(-1/3_Y)$, $u_R(2/3_Y)$, i.e.
for $\phi\to e^{\Sigma_1\Sigma_2 \frac{\vartheta}{2}}\phi$ then $Q_L\to e^{i \frac{\vartheta}{6}}Q_L$.

The most general $SU(2)_L$ gauge invariant products of $\phi$ and $Q_L$ are complex linear combinations of
$\overline{Q_L}\phi$, $\overline{Q_L}i\Sigma_3\phi$, $\overline{Q_L}i\Sigma_2\phi$, $\overline{Q_L}i\Sigma_1\phi$ and its hermitian conjugates.
Note that as we have seen the basis of symmetric matrices commuting with the generators of $SU(2)_L$
is $\{1,\Sigma_a\}$, of skew-symmetric matrices is $\{[\Sigma_a,\Sigma_b]\}$ with $a,b,=1,...,5$, for a total of $16$ matrices.
Due to the two projectors in $Q_L$, we must divide the total by $4$ which leaves us with 4 
linearly independent products.

From the above discussion, the most general invariant form for the Yukawa couplings with the quarks is:
\begin{align*}
-{\mathcal{L}}_{Y_Q} &=\overline{Q_{L}}\ \Gamma_d \phi\ d_{R}+
\overline{Q_{L}}\ \Sigma_3\Sigma_1\Gamma_u \phi\ u_{R}+ \text{h.c.} \\
\Gamma_{d,u}&\equiv \frac{1+\Sigma_5}{2}(\Gamma_{d,u\,1r}+\Gamma_{d,u\,1i}\Sigma_3\Sigma_4+\Gamma_{d,u\,2r}\Sigma_4\Sigma_5
+\Gamma_{d,u\,2i}\Sigma_5\Sigma_3)
\end{align*}
with $\Gamma_{d,u 1,2 r,i}$ self-conjugate and acting as real scalars on $\phi$.

The background symmetry group $Spin(3)$
acts on $\phi$ and $\Gamma_{u,d}^\dagger$ in the same way with generators $\Sigma_3\Sigma_4$, $\Sigma_4\Sigma_5$ and $\Sigma_3\Sigma_5$.
Note that we could make $\Gamma_{d,u}$ complex by using the property $iQ_L=\Sigma_1\Sigma_2Q_L$ and rewriting $(\Gamma_{d\,1r},\Gamma_{u\,1i})$
as a complex term. By keeping $\Gamma_{d,u}$ self-conjugate, the $Spin(3)$ generators appear explicitly.

We can now assume without lost of generality by reparametrization of $\Gamma_{d,u}$, 
that the reference point minimizing the potential verifies $\Sigma_5\phi_0=\phi_0$.

We then define the two complex doublets in the Higgs basis as:
\begin{align*}
H_1&\equiv\frac{1-i\Sigma_1\Sigma_2}{2}\frac{1+\Sigma_5}{2}\phi\\
H_2&\equiv \Sigma_4\Sigma_5\frac{1-i\Sigma_1\Sigma_2}{2}\frac{1-\Sigma_5}{2}\phi
\end{align*} 
Also, $\widetilde{H}_j\equiv \Sigma_3\Sigma_1 H_j^*$.

The Yukawa couplings for the quarks are then rewritten as:
\begin{eqnarray*}
-\frac{v}{\sqrt 2}{\mathcal{L}}_{Y_Q} &=&\overline{Q_{L}}\ H_1 M_dd_{R}+
\overline{Q_{L}}\ H_2 N_d^0 d_R+\overline{Q_{L}}\
\widetilde{H}_{1}M_u u_{R}+\overline{Q_{L}}\ \widetilde{H}_{2} N_u^0 u_{R} + \text{h.c.} 
\end{eqnarray*}
With $M_{u,d}\equiv \Gamma_{u,d1r}+i\Gamma_{u,d1i}$, $N_{u,d}^0\equiv \Gamma_{u,d2r}+i\Gamma_{u,d2i}$.
The matrices $M_d\equiv U_L diag(m_d,m_s,m_b)U_R^{d\dagger}$, $M_u\equiv U_LV^\dagger diag(m_u,m_c,m_t)U_R^{u\dagger}$  are the quark mass matrices and $N_{d,u}^0$ are matrices not necessarily diagonal in the quark mass eigenstate basis
which may induce Higgs mediated \aclp{FCNC} at tree level. The \ac{CKM} matrix is $V$. 
 The lepton sector with three right handed neutrinos is analogous in the absence of Majorana masses to the quark sector, with the \ac{PMNS} matrix replacing the \ac{CKM}; since the Majorana mass terms in seesaw type I (as in the \ac{nuMSM}) are gauge singlets, the nonperturbative formalism can be extended to seesaw type I, however for the purposes of this chapter we do not need to enter into such detail. 

Promoting the $M_{u,d}$ and $N_{u,d}^0$ matrices to background fields (spurions), there is an additional background flavour symmetry for the quarks $SU(3)_Q\times SU(3)_U\times SU(3)_D$  and for the leptons in the absence of Majorana masses $SU(3)_\ell\times SU(3)_e\times SU(3)_\nu$. In such case, there is also a background CP(charge-parity) symmetry. The gauge group of the full Lagrangian is $SU(3)_C\times SU(2)_L\times U(1)_Y$, with the $SU(3)_C$ corresponding to the chromodinamics of the quarks as in the Standard Model and hence not discussed it here.

The fermion fields are the following representations of the groups 
(the numbers represent the dimension of the complex representation, hyper-charge $Y$ in the end, singlet representations by omission):
$Q_L(3_Q,3_C,2_L,1/6_Y)$, $u_R(3_U,3_C,2/3_Y)$ $d_R(3_D,3_C,-1/3_Y)$, $\ell_L(3_\ell,2_L,-1/2_Y)$,  $e_R(3_e,-1_Y)$, $\nu_R(3_\nu,0_Y)$.
Finally there is an abelian background symmetry $U(1)^3$ in addition to the global symmetry $U(1)_{nb}\times U(1)_{nl}$ related to the baryonic and leptonic (no Majorana masses) numbers---including $U(1)_Y$ that means one $U(1)$ for each of the 
6 above fermion fields.

There is a correspondence between the standard gauge dependent fields and the
$SU(2)_L$ gauge-invariant ones, the $SU(2)_L$ gauge-invariant states which describe the theory
transform under $e^{i \frac{\vartheta}{2}}\in U(1)_Y$ as:
\begin{align*}
H_1^\dagger Q&\to e^{-i\frac{1}{3}\vartheta}H_1^\dagger Q\\
\tilde{H}_1^\dagger Q&\to e^{i\frac{2}{3}\vartheta}\tilde{H}_1^\dagger Q\\
H_1^\dagger \ell&\to e^{-i\vartheta}H_1^\dagger \ell\\
\tilde{H}_1^\dagger \ell&\to \tilde{H}_1^\dagger \ell
\end{align*} 

The corresponding leading terms of the expansion after gauge fixing are proportional to: 
\begin{align*}
d_L&\equiv h_1^\dagger Q\\ 
u_L&\equiv \tilde{h}_1^\dagger Q\\
e_L&\equiv h_1^\dagger L\\
\nu_L&\equiv\tilde{h}_1^\dagger L
\end{align*}
where $h_1\equiv \frac{1-i\Sigma_1\Sigma_2}{2}\frac{1+\Sigma_5}{2}\phi_0$. Therefore, after gauge fixing in a suitable gauge, we can write the Lagrangian
for the Higgs-quark interactions which we will study in the next chapter, in the basis defined as in Ref.\cite{bglanalysis}, as:
\begin{eqnarray}
{\mathcal L}_Y (\mbox{quark, Higgs})& = & - \overline{d_L} \frac{1}{v}\,
[M_d H^0 + N_d^0 R + i N_d^0 I]\, d_R   \nonumber \\
&&- \overline{{u}_{L}} \frac{1}{v}\, [M_u H^0 + N_u^0 R + i N_u^0 I] \,
u_R   \label{rep}\\
& & - \frac{\sqrt{2} H^+}{v} (\overline{{u}_{L}} N_d^0  \,  d_R 
- \overline{{u}_{R}} {N_u^0}^\dagger \,    d_L ) + \text{h.c.} \nonumber 
\end{eqnarray}

\section{Higgs doublets in an arbitrary Higgs basis}

So far, whenever we wanted to define two doublets from $\phi$, we used a projection aligned with the vacuum.
However, we may be interested in imposing a symmetry, say a transformation that only acts on an arbitrary
second doublet. This symmetry is defined for an arbitrary basis, say the reference basis.
There are then two important $Spin(3)$ transformations:
the rotation relating the reference basis with the Higgs basis, where the vacuum is along $\Sigma_5$
and the \acs{CP} transformation is along $\Sigma_3$; 
the rotation relating the reference basis with the higgs mass eigenstates basis.

\cleartooddpage
\chapter{Higgs mediated Flavour Violation}
\begin{epigraphs}
\qitem{
One should always
keep in mind that every selection cut not only reduces the acceptance, but may also
lead to systematic problems if the acceptance as a function of the cut variable is not
well understood.\textup{[...]}

Sometimes a cut on a badly described quantity cannot be avoided. An example
is particle identification, where one normally has to apply hard selection criteria in
order to get the background under control. In these cases, it is necessary to not rely
on the simulation, but to use the data themselves to determine the acceptances, for
example by using similar, but well-known channels.
}{--- \textup{Rainer Wanke (2013)\cite{systs}}}
\qitem{In the bottom-up approach one constructs effective field theories involving only light degrees
of freedom including the top quark and Higgs boson in which the structure of the effective 
Lagrangians is governed by the symmetries of the SM and often other hypothetical symmetries.\emph{[...]}
 
On the other hand in the top-down approach one constructs first a specific model with heavy
degrees of freedom. For high energy processes, where the energy scales are of the order of the
masses of heavy particles one can directly use this ``full theory'' to calculate various processes in
terms of the fundamental parameters of a given theory. For low energy processes one again constructs
the low energy theory by integrating out heavy particles. The advantage over the bottom-up
approach is that now the Wilson coefficients of the resulting local operators are calculable in terms
of the fundamental parameters of this theory. In this manner correlations between various observables
belonging to different mesonic systems and correlations between low energy and high-energy
observables are possible. Such correlations are less sensitive to free parameters than individual
observables and represent patterns of flavour violation characteristic for a given theory. These 
correlations can in some models differ strikingly from the ones of the SM and of the MFV approach.
}{--- \emph{Andrzej J. Buras (2013)\cite{burastalk,*burascorrelations}}}
\qitem{Minimal flavor violation (MFV) is the assumption that there are two, and only two, spurions that break
the global $SU(3)_Q\times SU(3)_U\times SU(3)_D$ flavor symmetry\emph{[...]}
We emphasize that, while this definition of MFV implies that flavor changing couplings in the quark
sector depend on the CKM parameters, the converse is not true: It is not the case that any model where
flavor changing couplings are determined by the CKM parameters is MFV. Thus, the models proposed
in Ref. \cite{Botella:2009pq}\emph{[e.g. BGL models]} are not MFV as defined here.}{--- \emph{
Dery \& Efrati \& Hiller \& Hochberg \& Nir (2013)\cite{Dery:2013aba}}}
\end{epigraphs}

\section{Correlations are important in data analysis}

Systematic uncertainties are not only good guesses.
The experimental results, including calibrations, depend from each other and 
from the theoretical results. The systematic uncertainties quantify the uncertainty
of a given experimental result due to the uncertainty on those external inputs.

A final result presented as (mean $\pm$ errors) is ok to draw conclusions such as:\\ 
\emph{was a Higgs boson detected in the CMS experiment?}

However, to answer to questions such as: \emph{is a pattern of the \emph{(say)} BGL model present in the data
from many experiments?} Then we need to combine different data distributions which are function
of common parameters which do not depend on the experiment.
That is the reason why the combined results of the Higgs searches of ATLAS and CMS is not the
combination of the final results presented individually by ATLAS and CMS\cite{combination,*combination2,*combination3}.
The combination of different analysis within the same experiment corresponding to different decay channels is a related exercise\cite{cmscombination,*measurements}. 

In the remaining of this section we will see the example of charged Higgs searches at the LHC using the CMS experiment.
The message to retain is that it is by the correlation of different channels that we may attribute to new physics and
not to miscalculated systematic uncertainties the cause of a deviation in data with respect to the Standard Model expectation.

\subsection*{Example: charged Higgs searches at the LHC}
This study\cite{CMS-PAS,*CMS-AN10,master} focused on the search for the charged Higgs in the mass range $80 \leq M_{\mathrm{H}^\pm} \leq 160 \mathrm{GeV/c^2}$ assuming that the charged Higgs decays always to a tau lepton and neutrino, i.e. the Branching Ratio $\mathrm{BR(H^{+}->\tau^{+}\nu)=1}$.
Twenty exclusive categories of events were defined where the signal-enriched categories of events include hadronic and leptonically decaying taus~(diagram of fig. \ref{fig:data}), while background-dominated categories are used mainly to constraint the systematic errors. The systematic uncertainties are as usual included in the Likelihood function, but since the categories of events are sensible to most of these systematic errors, the final results are stable against changes (within the same order of magnitude) of most of the systematic uncertainties. The event yields observed in figure \ref{fig:data} are compared to Monte-Carlo simulations, 
the expected events from different processes are presented stacked summing up to the total expectation. In this sense
the line ttbar represents the total expectation from the Standard Model and the red line, Higgs, represents the total expectation from the two-Higgs-doublet model with $\mathrm{BR(H^{+}->\tau^{+}\nu)=1}$ for a charged Higgs mass of $\mathrm{120GeV/c^2}$.
The uncertainties of the simulation are represented by the vertical bars.
	
It is expected that $t\overline{t}$ events involving a charged Higgs to have large \ac{MET} and that one of the b-jets has softer \ac{PT} than in the Standard Model. The categories of the first plot in figure \ref{fig:data} are purposely indented to be enriched in such events and that is the main justification for the choice of a category with \acs{MET}>40 GeV and two jets with different \acs{PT} thresholds. The first required $e/\mu$ comes from the leptonic decay of the W boson (see the diagram of figure \ref{fig:data}).

The considered two-Higgs-doublet model distinguishes from the Standard Model as it is expected to yield a larger number of events
with taus at the cost of a decrease on the number of events with hard $e/\mu$. Note that the $e/\mu$ produced directly in the W decay are $\mathrm{p_T}$ harder than those that are produced in the decay of a tau coming from a charged Higgs or W bosons, due to the emission of two extra neutrinos in the decay of the tau.
	
When comparing the number of $e/\mu$ with the number of taus, the normalization error is restricted. A larger number of taus might be due to an unexpected increase of the tau fake rate or efficiency. The number of fake taus is therefore controlled
by the category $\tau_{fake}$. The distinction between $\tau_{hard}$ and $\tau_{soft}$ ensures further
that an eventual excess of true taus will correspond to taus coming from the charged Higgs. In order to restrict the errors on the tau fake rate and efficiency, specific categories are used and shown in the second and third plots. These are fake-enriched categories from W+jets events and real tau-enriched categories from $Z\to\tau\tau$.
The separation between Z and W events is achieved by the ZLike cut defined using the invariant mass of the leptons and missing transverse energy.

Finally, the categories of the fourth plot intend to select $t\overline{t}$ events where the W boson decays to quarks (see the diagram of figure \ref{fig:data}). Two b-tagged jets are required to reject W and QCD
events. Note that since we are looking to the relation between $e/\mu$ and taus, the exclusion limits
will be approximately independent of the assumed b-tag uncertainty. The
categories of the fifth plot where b-tag is not required are intended to control the mistag
efficiencies which contaminate the fourth category which depends on b tagging.
	
\begin{figure}[h!]
   \includegraphics[width=\textwidth]{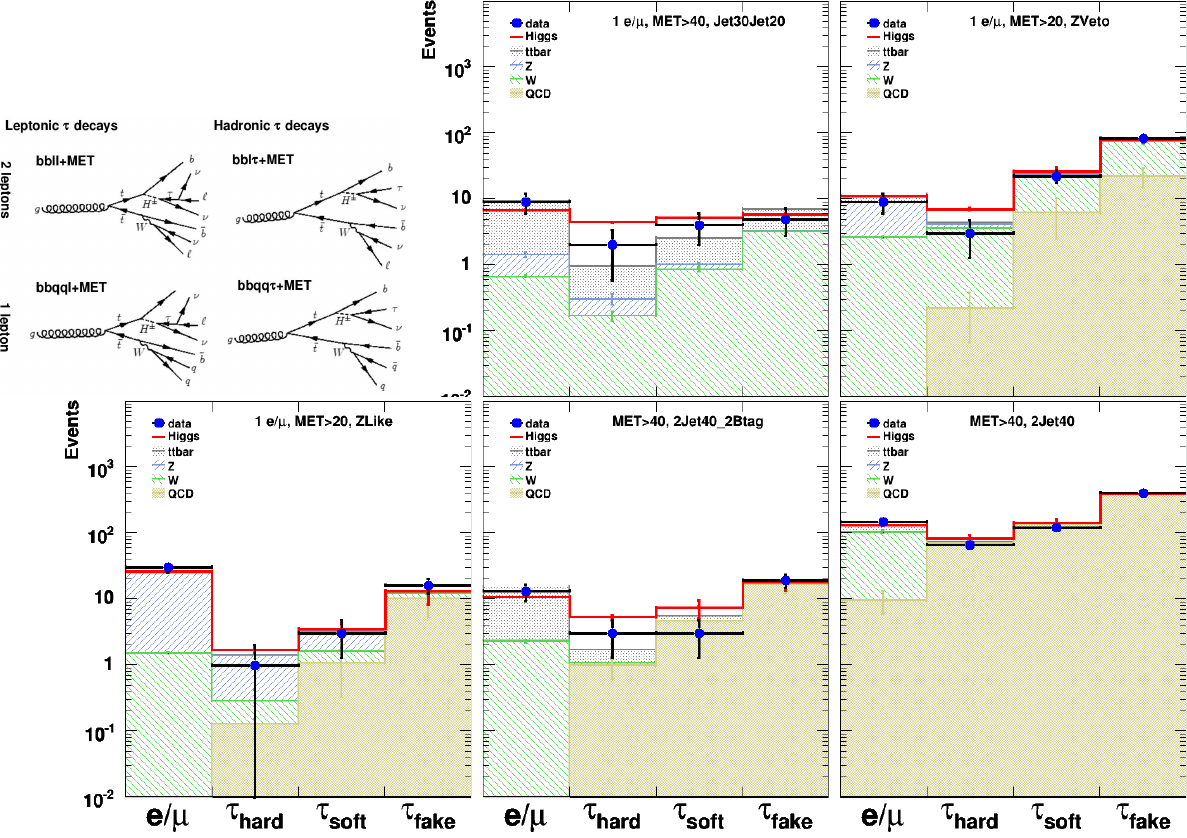}
\caption{
Diagrams of the $t\overline{t}$ decays involving a charged Higgs decaying to a tau lepton and neutrino.
Results for 2.2 $pb^{-1}$ of data at 7~TeV. The event selection used is written in each plot (for instance, the $e/\mu$ category of the first plot has the event selection two $e/\mu$, missing transverse energy MET>40 GeV and two jets, one with transverse momentum $\mathrm{p_T}$>30 GeV/c and another with $\mathrm{p_T}$>20 GeV/c).
Figure from the Master's thesis in reference\cite{master}.
   }
\label{fig:data}
\end{figure}
\clearpage
\section{Top-down and Bottom-up approaches}

We may find solutions to the problems of the Standard Model (see Section~\ref{sec:modular})
by extending it, for instance considering Grand Unified Theories or Supersymmetry.
Then we are many times confronted with the problem of the suppression of the 
Flavour Changing Neutral Currents, which in the Standard Model are accidentally suppressed 
by the \acl{GIM} mechanism\cite{gim}.
This motivates the question, how much does the experimental data constrain the 
Flavour Changing Neutral Currents which would signal New Physics?

In this section we mainly follow Buras' ideas\cite{burastalk,*burascorrelations}.
In the search for Flavor Changing Neutral Currents there are bottom-up and
top-down approaches. As we have seen in the previous section, due to the systematic
uncertainties, correlations between different channels are important to retrieve meaningful
results from experimental data. On the other hand, there is not a new physics model
which can solve all the problems of the Standard Model, specially one predictive enough to
produce useful correlations. Therefore, no useful approach is
completely bottom-up or top-down. 

In this sense, by bottom-up approach we mean 
effective field theory involving degrees of freedom up to the electroweak scale in which 
the effective Lagrangian is defined by the symmetries of the Standard Model and other 
hypotheses (e.g. Minimal Flavour Violation to be defined in the next section, or simplified models\cite{sms,*sms2,*smsatlas,*smscms}).
In short, it is the Standard Model interpreted as an Effective Field Theory extended with a few extra
local operators. With some exceptions such as transitions by two quantum numbers of flavour\cite{DF21,*DF22},
the fewer extra operators the better and so the hypotheses which reduce the number of extra
operators such as Minimal Flavour Violation play an important role.

By top-down approach we mean the study of a renormalizable model defined by hypotheses
(e.g. symmetries, hopefully with physical meaning) which remain valid at both low and high energy scales.
These allow correlations between most observables (low and high energy, involving different flavours, hadronic and leptonic, etc.).
These correlations will depend on the parameters of the model and so, the fewer extra parameters with
respect to the Standard Model the better. Of course, the known models which are predictive enough to produce 
useful correlations can only explain by themselves few, if any, problems of the Standard Model.
Therefore the main goal of these models is to help us in the search for 
Flavour Changing Neutral Currents, by predicting patterns of flavour violation
which may be different from the Standard Model and Minimal Flavour Violation hypothesis (see Figure~\ref{fig:corr}).

\begin{figure}[h!]
   \center
   \includegraphics[width=.32\textwidth]{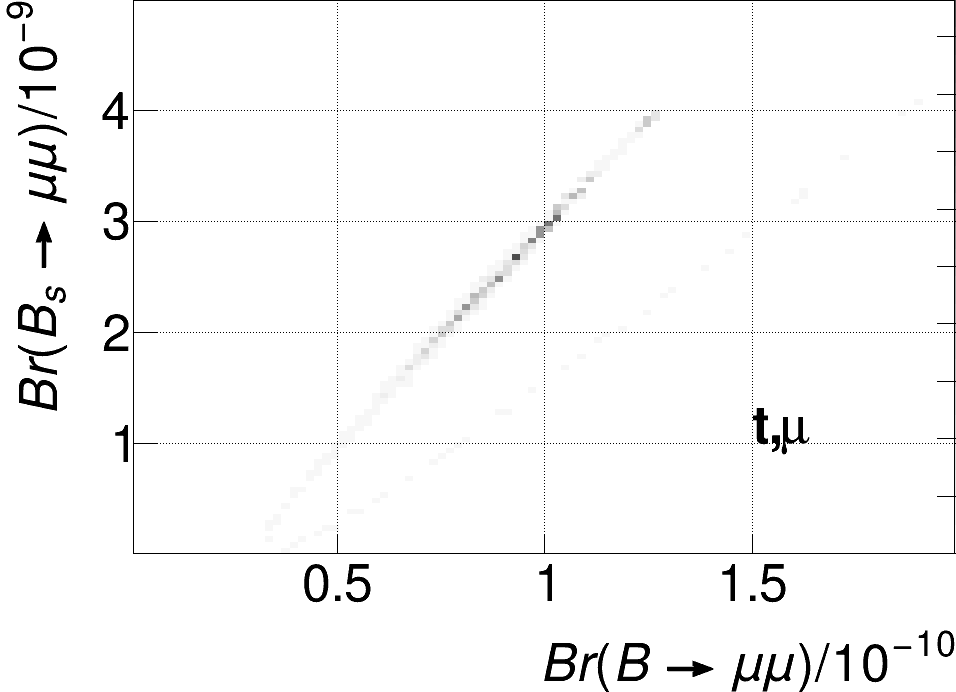} 
   \includegraphics[width=.32\textwidth]{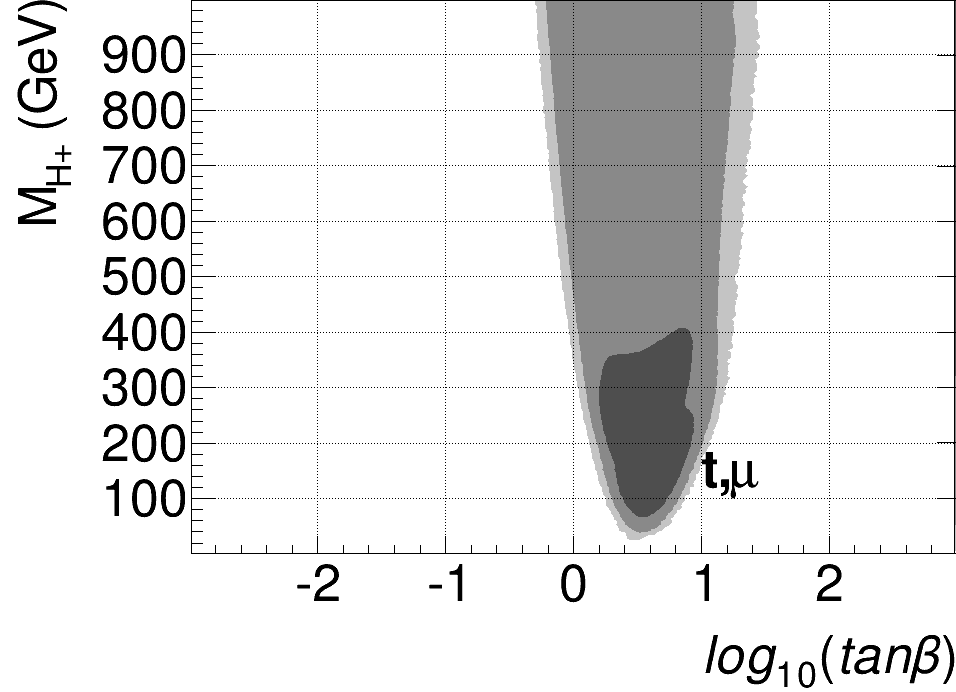} 
   \includegraphics[width=.32\textwidth]{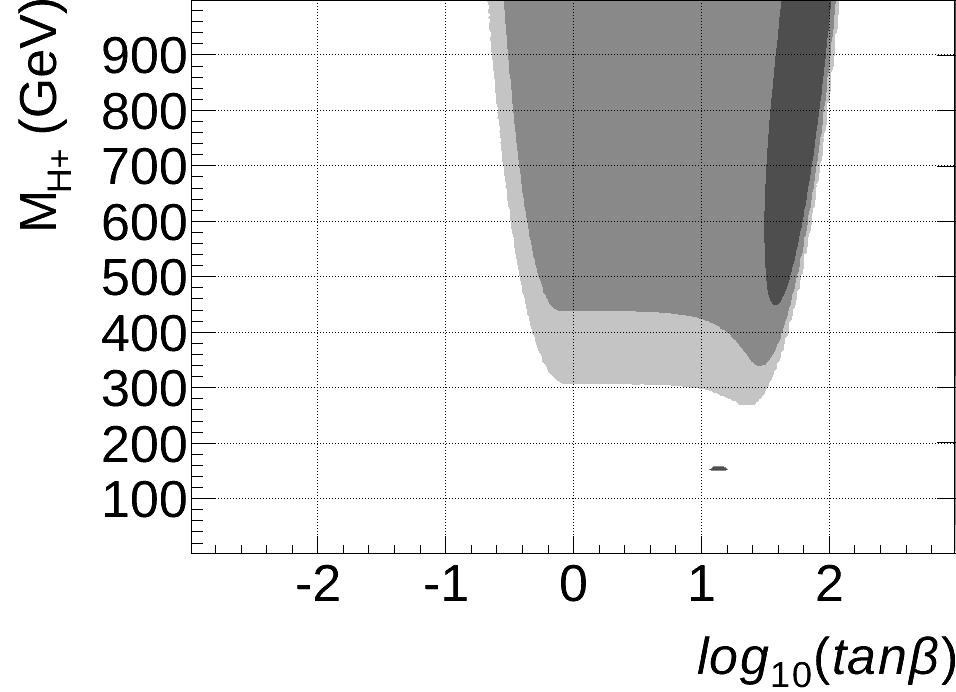}
   \caption{Left: example of a correlation for the $(t,\mu)$ BGL model, defined in Sec.\ref{sec:bgl};\\
Center: Parameter space of the $(t,\mu)$ BGL model compatible with flavour data;\\
Right: Parameter space of the type II two-Higgs-doublet model compatible with the same flavour data.
The dark regions in all figures above represent the regions of the
parameter space non-excluded by the considered data with 68\%, 95\% and 99\% \acs{CL}.
}
\label{fig:corr}
\end{figure}

We should not neglect the fact that the approaches are complementary,
for instance, the Yukawa aligned two-Higgs-doublet models may be used as the structure of 
a ultraviolet completion of a SM-like Higgs sector with free couplings\cite{smfree}.

\section{Minimal Flavour Violation}
Following the recent discovery by \acs{ATLAS} and \acs{CMS}\cite{ATLAS,*CMS} of
a particle which may be consistently interpreted as
a Standard-Model-like Higgs boson, comes the question whether the
scalar sector is larger than in the Standard Model and in particular whether there
are more Higgs doublets. There are at least two Higgs doublets
in many extensions of the Standard Model, in particular in models
with spontaneous CP violation\cite{Lee:1973iz} and in supersymmetric models.
The two-Higgs-doublet models\cite{Branco:2011iw,Djouadi:2005gj,*Gunion:1989we}
without extra symmetries, have in general flavour changing neutral currents 
which unless suppressed are not supported by the experimental results. 
The introduction of a discrete symmetry leading to natural flavour conservation\cite{Glashow:1976nt},
or the hypothesis of aligned Yukawa couplings in flavour space\cite{Pich:2009sp} suppress the Flavour Changing Neutral Currents
by avoiding them at tree-level. There are many phenomenological studies in the literature of Flavour Changing Neutral Currents in the context of the two-Higgs-doublet models\cite{Hadeed:1985xn,*Luke:1993cy,*Cvetic:1998uw,*Mohapatra:2013cia,Crivellin:2013wna}.

An alternative to suppress the Flavour Changing Neutral Currents is the principle
of Minimal Flavour Violation---either with two spurions\cite{Buras:2000dm,D'Ambrosio:2002ex,*Bobeth:2005ck,Dery:2013aba}
or with six spurions\cite{Botella:2009pq}). 
A consequence of the Minimal Flavour Violation principle is that there are 
non-vanishing Flavour Changing Neutral Currents at tree level, 
but they are only dependent on the \ac{CKM} matrix.
This is in contrast with the general two-Higgs-doublet model where there
is a large number of parameters which can be expressed in terms of
various unitary matrices arising from the misalignment in flavour
space between pairs of Hermitian flavour matrices\cite{Botella:2012ab}.
The search for the allowed parameter space in two Higgs doublet models 
for a variety of scenarios can be found in the literature\cite{Basso:2012st,*Cheon:2012rh,*Altmannshofer:2012ar,*Celis:2013rcs,*Barroso:2013zxa,*Grinstein:2013npa,*Eberhardt:2013uba,*Craig:2013hca,*Ferreira:2013qua,*Chang:2013ona,*Celis:2013ixa,*Harlander:2013qxa}. 

The two-Higgs-doublet Lagrangian describing the Yukawa couplings of the quarks was defined in Eq.~\ref{rep}.

In the following we give a definition of Minimal Flavour Violation with six spurions
which was previously defined in a different way\cite{Botella:2009pq}.
This definition generalizes the definition of Minimal Flavour Violation with two spurions,
using the same mathematical formalism\cite{Dery:2013aba}.

\subsection{Minimal Flavour Violation with six spurions}
The necessary and sufficient 
condition for Minimal Flavour Violation with six spurions in an Efective Field Theory
is that there are only six linearly independent background fields(see section~\ref{sec:bs})
breaking the flavour $SU(3)_Q\times SU(3)_U\times SU(3)_D$
and \acs{CP} global symmetries of the effective Lagrangian, 
with three of them transforming as $(3,\overline{3},1)$ under the flavour group
and admitting a simultaneous singular
value decomposition for all the three, while
the other three transform as $(3,1,\overline{3})$ under the flavour symmetry and also 
admitting a simultaneous singular value decomposition for all the three.

Any complex matrix $Y$ admits a singular value decomposition\cite{svd} $Y=U_L D U_R^\dagger$, with $U_{L,R}$ unitary and 
$D$ diagonal with non-negative real entries. The set of entries of $D$ is unique (the position of the numbers in the diagonal is not).
A set of $n$ complex matrices $\{Y_j\}$ with $j=1,...,n$ admits 
a simultaneous singular value decomposition, i.e. there are $U_{L,R}$ unitary matrices and a set of $n$ diagonal matrices 
$\{D_j\}$ with non-negative real entries such that $Y_j=U_L D_j U_R^\dagger$, if and only if the matrices within the sets
$\{Y_jY_k^\dagger\}$ and $\{Y_j^\dagger Y_k\}$ commute with each other\cite{svd}.

If the Minimal Flavour Violation is verified, then we can check that $M_d$ must be a background field transforming as 
$(3,1,\overline{3})$ under the flavour group, its unique singular value decomposition is $M_d=U_L D_d U_R^{d\dagger}$ with the diagonal quark matrix $D_d=diag(m_d,m_sm_b)$. 
There are two more linearly independent background fields with simultaneous singular value decompositions,
say $M_d'$ and $M_d''$. Then the set $\{D_d, U_L^\dagger M_d' U_R, U_L^\dagger M_d'' U_R\}$ is a basis of the
3 dimensional space of diagonal matrices. 

Note that the linear dependence condition implies that the spurions are non-zero and so are normalizable.
After a change of basis, we can consider instead the basis
$\{P_j\}$ with the entries $(P_j)_{kl}\equiv\delta_{kj}\delta_{jl}$ and so the linearly independent background fields can be chosen as 
$Y_{dj}\equiv U_L P_j U_R^{d\dagger}$ and $Y_{uj}\equiv U_L V^\dagger P_j U_R^{u\dagger}$ with $V$ the \ac{CKM} matrix and 
$U_L$, $U_R^u$, $U_R^d$ unitary matrices whose vectors are triplet representations 
of $SU(3)_Q$, $SU(3)_U$ and $SU(3)_D$ respectively.
Then, $M_{u,d}=\sum_j m_{u,d j} Y_{u,d\ j}$.

Also $N_{u,d}^0$ must be a function of the background fields 
and transform under the action of the flavour group 
in the correct way. Therefore $N_{u,d}^0=\sum_i p_i(Y_{u,d j}Y_{u,d j}^\dagger) Y_{u,d\ i}$, where $p_i$ are generic matrix polinomials 
of the terms $Y_{dj}Y_{dj}^\dagger$ and $Y_{uj}Y_{uj}^\dagger$ for $j=1,2,3$.
Up to second order in $V$, we can write them as:
\begin{align}
\label{eq:6expansion}
\begin{split}
N_d/v &= a_{2i} P_i  + a_{3ijk} P_i V^\dagger P_j V P_k+...\\
N_u/v &= b_{2i} P_i  + b_{3ijk} P_i V P_j V^\dagger P_k +...
\end{split}
\end{align}
Where $N_d\equiv U_L^\dagger N_d^0 U_R^{d}$ and $N_u\equiv U_L^\dagger V N_u^0 U_R^{u}$,  $D_d\equiv U_L^\dagger M_d U_R^{d}$ and $D_u\equiv U_L^\dagger V M_u U_R^{u}$ 
are the matrices in the basis of quark mass eigenstate.

We can rearrange the terms to obtain an equivalent expansion\cite{Botella:2009pq} to the one above~\ref{eq:6expansion}:
\begin{align*}
N_d &= \lambda_1 D_d + \lambda_{2i} P_i D_d + \lambda_{3ijk} P_i V^\dagger P_j V P_kD_d+...\\
N_u &= \tau_1 D_u + \tau_{2i} P_i D_u + \tau_{3ijk} P_i V P_j V^\dagger P_k D_u +...
\end{align*}
Note that $P_iD_d=m_{di}P_i$ and the mass can be absorbed by a redefinition of the coefficients of the expansion using the \ac{VEV} $v$ to make them dimensionless.

As a consequence the flavour changing couplings of the quarks are only dependent
on the Cabibbo-Kobayashi-Maskawa matrix $V$ and in the limit $V\to 1$
there are no Higgs mediated Flavour Changing Neutral Currents at tree level\cite{Botella:2009pq}.
The coefficients $a,b,\lambda,\tau$ in the expansion are real so that not only the flavour symmetry but also 
CP symmetry is a background symmetry of the Lagrangian. If the coefficients would have complex phases then 
the \acs{CP} symmetry would no longer be a background symmetry and the \ac{CKM} matrix would 
be the only source of \acs{CP} violation, these flavour blind phases might lead to interesting phenomenology\cite{nfcmfv}.

The necessary and sufficient 
condition for Minimal Flavour Violation with two spurions
is that there are only two linearly independent background fields
breaking the global flavour symmetry $SU(3)_Q\times SU(3)_U\times SU(3)_D$ 
and CP symmetry, with one of them transforming as $(3,\overline{3},1)$ under the flavour symmetry,
while the other transforms as $(3,1,\overline{3})$ under the flavour symmetry\cite{Dery:2013aba}.
From the previous expansion, it is straightforward to check that
up to second order in $\frac{D_{u,d}}{v}$, we can write $N_{d,u}^0$ in the basis of quark mass eigenstate 
as\cite{Botella:2012ab}:
\begin{align*}
N_d &= \lambda_1 D_d + \lambda_{2}\frac{D_d^2}{v^2}  D_d + \lambda_{3} V^\dagger \frac{D_u^2}{v^2} V D_d + ...\\
N_u &= \tau_1 D_u + \tau_{2} \frac{D_u^2}{v^2} D_u + \tau_{3} V\frac{D_d^2}{v^2} V^\dagger D_u + ...
\end{align*}
Therefore, the condition of flavour changing couplings dependent on \ac{CKM} is 
necessary but not sufficient for Minimal Flavour Violation with two spurions.

\subsection{Renormalization group evolution of Minimal Flavour Violation}

The extension of Minimal Flavour Violation to the leptonic sector is essential to study 
the renormalization group evolution of the two-Higgs-doublet model with Minimal Flavour Violation.
We consider the case of Dirac neutrinos (no Majorana masses)
with lepton flavour structure analogous to the quark flavour structure, 
hence the number of spurions is multiplied by two.

The one-loop renormalization group equations for the case of Dirac neutrinos (no Majorana masses) can be found in the literature\cite{Botella:2011ne}. These equations preserve the Minimal Flavour Violation conditions with either six or two spurions
(i.e. twelve or four spurions counting with the lepton sector).

Explicitly for twelve spurions, both $N_{d}$ and $M_d$ are given by $\sum_i p_{d i}'(Y_{d j}Y_{d j}^\dagger,Y_{u j}Y_{u j}^\dagger)Y_{d i}$ 
and the products such as $N_{d}M_d^\dagger$ are given by $q_{d}'(Y_{d j}Y_{d j}^\dagger,Y_{u j}Y_{u j}^\dagger)$ where $p_{di}'$ and $q_d'$ are matrix polynomials in 
$Y_{d j}Y_{d j}^\dagger,Y_{u j}Y_{u j}^\dagger$. The same is valid for  $N_{u}$ and $M_u$  and the corresponding matrices 
of the lepton sector. Then we can write the one-loop renormalization group equations as:
\begin{align*}
\mu\frac{d}{d\mu} M_d=\sum_i p_{d i}(Y_{d j}Y_{d j}^\dagger,Y_{u j}Y_{u j}^\dagger)Y_{d i}\\
\mu\frac{d}{d\mu} N_d=\sum_i q_{d i}(Y_{d j}Y_{d j}^\dagger,Y_{u j}Y_{u j}^\dagger)Y_{d i}
\end{align*}
Where $p_{di}$ and $q_{di}$ are matrix polynomials in $Y_{d j}Y_{d j}^\dagger$ and $Y_{u j}Y_{u j}^\dagger$.
Note that the coefficients of the polynomials are function of $tr(Y_{l j}Y_{l j}^\dagger)$ and 
$tr(Y_{\nu j}Y_{\nu j}^\dagger)$ where $Y_{l,\nu j}$ are the spurions of the lepton sector analogous to $Y_{d,u j}$,
therefore as already mentioned we cannot study 
the renormalization group evolution of the either the quark or lepton sectors alone. 

The renormalization group equations for all the matrices $M_d$, $N_d$, $M_u$, $N_u$ and the corresponding ones from the lepton sector
are analogous to the above equations, for different coefficients of the polynomials.
Therefore, the one-loop renormalization equations preserve the form  
$\sum_i p'_{d i}(Y_{d j}Y_{d j}^\dagger,Y_{u j}Y_{u j}^\dagger)Y_{d i}$ for $N_{d}$ and $M_d$ and the same applies 
for  $N_{u}$ and $M_u$  and the corresponding matrices of the lepton sector.
We conclude that the condition for Minimal Flavour Violation is preserved,
i.e. only twelve spurions break the flavour symmetry and \acs{CP} symmetry(for real expansion coefficients). 

Following the same reasoning we can check that the condition for Minimal Flavour Violation with four spurions is also preserved.
This result is consistent with the claim that Minimal Flavour Violation is renormalization group invariant\cite{nfcmfv}, 
based on one-loop and numerical two-loop studies in the context of Supersymmetry\cite{mfvrge1,*mfvrge2}. 
For different lepton sectors the number of versions of what is Minimal Flavour Violation is multiplied, but we do not expect
significant surprises\cite{mlfvrge,Botella:2011ne}.
The ideal situation would be to derive the form of the equations and consequent 
invariance under the renormalization group from the background symmetry,
hopefully for all orders of perturbation theory and for all possible models of Minimal Flavour Violation. 

Using the analogous of the argument used in the expansion of $N_d$,
the symmetry imposes $\mu\frac{d\mu}{d\mu} M_d$ and $\mu\frac{d\mu}{d\mu} N_d$ to be of the form given in the above equations at all orders of perturbation theory, but this is only valid at the classical level as anomalies may appear at the quantum level.
Also note that we need to assume that the full theory including all the high degrees of freedom respect the Minimal Flavour Violation condition\cite{nfcmfv}.

Concluding, we have shown that the condition of Minimal Flavour Violation 
with Dirac Neutrinos is renormalization group invariant at one-loop in the two-Higgs-doublet model,
a question left open by the previous studies of Minimal Flavour Violation with twelve spurions\cite[end of sec.2]{Botella:2011ne}.
\subsection{Comparison of Minimal Flavour Violation definitions}
Comparing the two definitions, we can say that Minimal Flavour Violation with two spurions
has less degrees of freedom but at the cost of assuming that not only the flavour changing couplings 
are ruled by the \ac{CKM} hierarchy but also the flavour conserving couplings are ruled by the 
quark mass hierarchy. Perhaps the \ac{CKM} and quark masses hierarchies are related\cite{Botella:2015yfa},
but the precise relation is far from clear (just look to the analogous lepton masses and \acs{PMNS} matrix)
and so to assume a particular relation may not be advantageous.

The definition of Minimal Flavour Violation with six spurions 
respects the same mathematical formalism---and as a consequence it is as renormalization group invariant---as 
the one with two spurions, 
fulfilling the goal of flavour changing couplings and \acs{CP} violation
determined by the \acs{CKM} matrix with greater generality.

There are examples of models with interesting phenomenological applications 
which have flavour changing couplings and \acs{CP} violation
determined by the \ac{CKM} matrix which do not verify the condition for 
Minimal Flavour Violation with two spurions\cite{gmfv1,*gmfv2,*gmfv3}.
In the next section we will study one explicit example of a class of models 
which do not verify the condition for Minimal Flavour Violation with two spurions
but verify the condition with six spurions, the \acs{BGL} models.

Note that the \ac{CKM} matrix $V$ is assumed arbitrary (a spurion)
in particular $V$ can be replaced by the identity 1. If $V=V_{CKM}$ is fixed to its experimental values
then the expansion has nine terms and can reproduce any matrix\cite{Ellis:2009}.
An alternative application of Minimal Flavour Violation is to parametrize any model with the fixed $V_{CKM}$.
Therefore, only the first few terms of the expansion in $V$ are physically relevant.

\section{BGL models}
\label{sec:bgl}

An interesting alternative to Natural Flavour Conservation is provided by the \ac{BGL} 
models \cite{Branco:1996bq,Botella:2009pq,Botella:2011ne}, where there are 
non-vanishing Flavour Changing Neutral Currents at tree level, 
but they are naturally suppressed as a result of a continuous global symmetry of all the terms of the Lagrangian 
except the Higgs potential, where the symmetry is softly broken. 

The extension of \ac{BGL} models to the leptonic sector is essential to study 
the renormalization group evolution\cite{Botella:2011ne} and their phenomenology\cite{bglanalysis}.
We consider the extension of the two-Higgs-doublet models with three right-handed neutrinos.
For simplicity, in this section we only consider Dirac type neutrinos,
where no Majorana mass terms are added to the Lagrangian, the general case
can be found in the literature\cite{Botella:2011ne}. 

The \ac{BGL} models with Dirac neutrinos verify the Minimal Flavour Violation condition with twelve
spurions\cite{Botella:2009pq} but not with only four spurions\cite{Dery:2013aba}.
Therefore, due to one-loop renormalization group invariance of Minimal Flavour Violation in the
two-Higgs-doublet model with Dirac neutrinos, in the \ac{BGL} models the fact that the Flavour Changing 
Neutral Currents are only dependent on the \ac{CKM} and \ac{PMNS} matrices is stable under one-loop renormalization group,
a result which was specifically shown before for the \ac{BGL} models\cite{Botella:2011ne}.

The neutral and the charged Higgs interactions obtained from the quark
sector are of the form given by Eq.~\ref{rep}. 
In terms of the quark mass eigenstates $u, d$, the Yukawa couplings
are:
\begin{multline}
{\mathcal L}_Y (\mbox{quark, Higgs} ) =\\  - \frac{\sqrt{2} H^+}{v} \bar{u} \left(
V N_d \gamma_R - N^\dagger_u \ V \gamma_L \right) d +  \mbox{h.c.} - 
\frac{H^0}{v} \left(  \bar{u} D_u u + \bar{d} D_d \ d \right) -  \\
 -  \frac{R}{v} \left[\bar{u}(N_u \gamma_R + N^\dagger_u \gamma_L)u+
\bar{d}(N_d \gamma_R + N^\dagger_d \gamma_L)\ d \right] + \\
 +  i  \frac{I}{v}  \left[\bar{u}(N_u \gamma_R - N^\dagger_u \gamma_L)u-
\bar{d}(N_d \gamma_R - N^\dagger_d \gamma_L)\ d \right]
\end{multline}
where $\gamma_{L}$ and $\gamma_{R}$ are the left-handed and right-handed chirality projectors, respectively.

The flavour structure of the quark sector of two Higgs doublet models
is characterized by the four matrices $M_d$, $M_u$, $N_d^0$,
$N_u^0$. For the leptonic sector we have the corresponding matrices
which we denote by $M_\ell$, $M_\nu$, $N_\ell^0$, $N_\nu^0$.

To obtain a structure for the $M,N$ matrices such
that the Flavour Changing Neutral Currents are completely controlled
by the \ac{CKM} mixing matrix $V$, \acl{BGL} imposed the following
symmetry on the quark and scalar sector of the
Lagrangian\cite{Branco:1996bq}:
\begin{equation}
Q_{Lj}^{0}\rightarrow \exp {(i\tau)}\ Q_{Lj}^{0}\ ,\qquad
u_{Rj}^{0}\rightarrow \exp {(i2\tau)}u_{Rj}^{0}\ ,\qquad \Phi_2\rightarrow \exp {(i\tau )}\Phi_{2}\ ,  \label{S symetry up quarks}
\end{equation}

where $e^{i2\tau}\neq 1$ , with all other quark fields transforming 
trivially under the symmetry. The index $j$ can be fixed as either 1,
2 or 3. The Higgs doublets $\Phi_1$ and $\Phi_2$ are defined in an arbitrary basis,
not necessarily the Higgs basis defined by the vacuum. Alternatively the symmetry may be chosen as:
\begin{equation}
Q_{Lj}^{0}\rightarrow \exp {(i\tau )}\ Q_{Lj}^{0}\ ,\qquad
d_{Rj}^{0}\rightarrow \exp {(i2\tau )}d_{Rj}^{0}\ ,\quad \Phi
_{2}\rightarrow \exp {(- i \tau)}\Phi_{2}\ .  \label{S symetry down quarks}
\end{equation} 
The symmetry given by Eq.~(\ref{S symetry up quarks}) leads to Higgs mediated
Flavour Changing Neutral Currents in the down sector, whereas the symmetry specified by 
Eq.~(\ref{S symetry down  quarks}) leads to Flavour Changing Neutral Currents in the up
sector at tree-level. These two alternative choices of symmetry combined with the
three possible ways of fixing  the index $j$ give rise to six
different realizations of two-Higgs-doublet models with the flavour structure, in the
quark sector, controlled by the \ac{CKM} matrix. 

In the leptonic sector, with Dirac neutrinos, there is  perfect
analogy with the quark sector. The Flavour Changing Neutral Currents 
completely controlled by the \ac{PMNS} matrix $U$ are
enforced by one of the following symmetries. Either 
\begin{equation}
L_{Lk}^{0}\rightarrow \exp {(i\tau )}\ L_{Lk}^{0}\ ,\qquad \nu
_{Rk}^{0}\rightarrow \exp {(i2\tau )}\nu _{Rk}^{0}\ ,\qquad \Phi
_{2}\rightarrow \exp {(i\tau )}\Phi _{2} \ , \label{S symetry neutrinos}
\end{equation}
or
\begin{equation}
L_{Lk}^{0}\rightarrow \exp {(i\tau )}\ L_{Lk}^{0}\ ,\qquad
\ell_{Rk}^{0}\rightarrow \exp {(i2\tau )}\ell_{Rk}^{0}\ ,\qquad \Phi
_{2}\rightarrow \exp {(-i \tau )}\Phi _{2} \ , \label{S symetry charged leptons}
\end{equation}
with all other leptonic fields
transforming trivially under the symmetry. The index $k$ can be fixed
as either 1, 2 or 3. 

This defines the \ac{BGL} models that we analyse in the next chapter.
There are thirty six different  models corresponding to the
combinations of the six possible different implementations in each
sector. To combine  the symmetry given by
Eq.~(\ref{S symetry up quarks}) with the one given by 
Eq.~(\ref{S symetry charged leptons}) an overall change of sign is
required, in one set of transformations. 

The symmetry given by Eq.~(\ref{S symetry up quarks}) for $k=3$ 
imposes that the matrices $N_d$, $N_u$ are of the
form\cite{Branco:1996bq}:
\begin{equation}
(N_d)_{ij} = \tan\beta (D_d)_{ij} - 
\left( \tan\beta +  \cot\beta\right) 
(V^\dagger)_{i3} (V)_{3j} (D_d)_{jj}\,, \label{24}
\end{equation}
whereas
\begin{equation}
N_u = \tan\beta
\mbox{diag} \ (m_u, m_c, 0)-\cot\beta \mbox{diag} \ (0, 0, m_t)\,. \label{25}
\end{equation}
In these equations only one new parameter not present in the Standard Model
appears, $\tan\beta$ defined by the Higgs potential. It is the presence of the above
symmetry, which prevents the appearance of additional free parameters.
As a result, \ac{BGL} models are very constrained but their phenomenology
crucially depends on the variant of the \ac{BGL} model considered. For
example with the choice $j=3$ leading to Eqs.~(\ref{24}), (\ref{25}), Higgs mediated \ac{FCNC}
are controlled by the elements of the third row of $V$. This leads, in a natural way,
to a very strong suppression in the neutral currents entering in the ``dangerous''
$\Delta S = 2$ strangeness violating processes contributing to $K^0 - \bar K^0$
transitions. Indeed, in this variant of \ac{BGL} models, the couplings entering in
the tree level $\Delta S = 2$ transition are proportional to
$|V_{td}V^\ast_{ts}|$ leading to a $\lambda ^{10}$ suppression in the
Higgs mediated $\Delta S = 2$ transition, where $\lambda\approx 0.2$
denotes the Cabibbo parameter. With this strong suppression even 
light neutral Higgs, with masses of the order $10^2$ GeV are
allowed. This strong natural suppression makes this variant of \ac{BGL}
models specially attractive. Figure \ref{fig:brs} shows the profile of the decays of a particular \ac{BGL} model with $j=3$;
in general the $CP$ even neutral mass eigenstates are linear
combinations of the fields $H^0$ and $R$ with the mixing
parameters determined by the Higgs potential, in the figure it was assumed no mixing. We can see that the decays involving muons dominate over the ones involving taus which may be advantageous in searches at the LHC.

\begin{figure}[h!]
   \center
   \includegraphics[width=.42\textwidth]{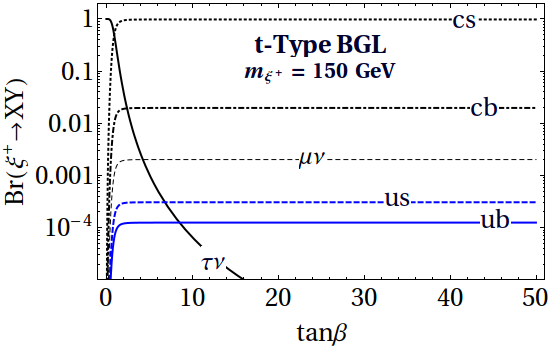} 
   \includegraphics[width=.42\textwidth]{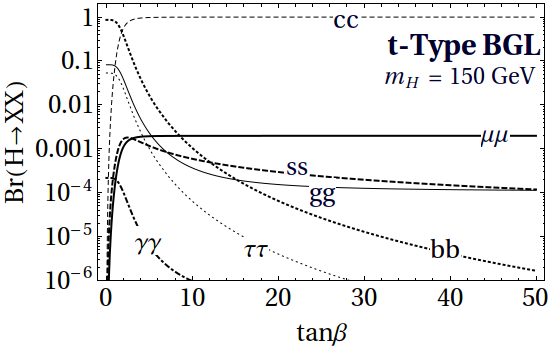}
   \caption{The charged Higgs (left) and neutral Higgs boson R (right) branching ratios to two-body final states;
for the $(t,\tau)$ BGL model (notation defined soon in this section). The plots are from reference\cite{Bhattacharyya:2014nja}.
}
\label{fig:brs}
\end{figure}

The six different \ac{BGL} models can be fully
defined\cite{Botella:2009pq} by:
 \begin{eqnarray}
N_d = \tan{\beta} D_d - \left( \tan\beta +  
\cot\beta\right)   P_{j}^{\gamma }\ D_d \,,
\label{bgl1} \\
N_u = \tan{\beta} D_u - 
\left( \tan{\beta} +  \cot{\beta}\right)  
 V P_{j}^{\gamma }  V^\dagger\ D_u\,, \label{bgl2}
\end{eqnarray}

where $\gamma $ stands for $u$ (up) or $d$ (down) quarks, and 
$P_{j}^{\gamma }$ are the projection operators
defined \cite{Botella:2004ks} by:
\begin{equation}
P_{j}^{u}\equiv V^\dagger P_{j}^d V\ ,\qquad \left( P_{j}^d \right) _{lk}\equiv\delta _{jl}\delta _{jk}\ ,  \label{projectors1}
\end{equation}

With this notation the index $\gamma$ refers to the sector that has no
Flavour Changing Neutral Currents and $j$ refers to the row/column  of $V$ defined by the symmetry.
Note that for $\gamma$ denoting  ``up'' the index $j$
singles a row of $V$, while for $\gamma$ denoting  ``down'' the
index $j$ singles a column of $V$. A characteristic feature of
\ac{BGL} models is the fact that both matrices $N_d$,
$N_u$ involve the same projection operator.

The \acs{BGL} models are a class of models with Minimal Flavour Violation (with twelve spurions)
as a result of an abelian symmetry in this sense they are special. It was shown that
\acs{BGL} models are the only
models satisfying a set of conditions sufficient for Minimal Flavour Violation
that can be enforced by abelian symmetries\cite{Ferreira:2010ir}.

In the leptonic sector for Dirac neutrinos we have\cite{Botella:2011ne}:
 \begin{eqnarray}
N_\ell = \tan{\beta} D_\ell - \left( \tan\beta +  
\cot\beta\right)   P_{m}^{\eta }\ D_\ell \,,
\label{bgll1} \\
N_\nu = \tan{\beta} D_\nu - 
\left( \tan{\beta} +  \cot{\beta}\right)  
 U^\dagger P_{m}^{\eta }  U\ D_\nu\,, \label{bgll2}
\end{eqnarray}

where $\eta$ stands for $\ell$ (charged leptons) or $\nu$ (neutrinos), and 
$P_{m}^{\eta }$ are the projection operators
defined by:
\begin{equation}
P_{m}^{\nu}\equiv U P_{m}^\ell U^\dagger\ ,\qquad \left( P_{m}^\ell \right) _{lk}\equiv\delta _{ml}\delta _{mk}\ ,  \label{projectors1}
\end{equation}
 
In the leptonic sector, the \ac{PMNS} mixing matrix
$U\equiv  U^\dagger_{\ell L}U_{\nu L}$, has large
mixings, unlike the \ac{CKM} matrix $V$. Therefore, the Higgs mediated
Flavour Changing Neutral Currents are not strongly suppressed. However, models where the Higgs
mediated leptonic Flavour Changing Neutral Currents are present only in the neutrino sector can be
easily accommodated experimentally due to the smallness of the neutrino masses. 

We label each of the thirty six different models by the pair ($\gamma_j$, $\eta_m$): 
the generation numbers $j,m$ refer to the projectors $P_{j,m}$
involved in each sector $\gamma,\eta$. For example, the model $(\text{up}_3, \ell_2)=(t,\mu)$ 
will have no tree level neutral flavour changing couplings in the up quark
and the charged lepton sectors while the neutral flavour changing couplings
in the down quark and neutrino sectors will be controlled, respectively, by $V_{td_i}^{\phantom{\ast}}V_{td_j}^\ast$ and $U_{\mu \nu_a}^{\phantom{\ast}}U_{\mu \nu_b}^\ast$.

In BGL models the Higgs potential is constrained by the
symmetry to be of the form:
\begin{eqnarray}
V_\Phi&=&\mu_1 \Phi_1^{\dagger}\Phi_1+\mu_2\Phi_2^{\dagger}\Phi_2-\left(m_{12}\Phi_1^{\dagger}\Phi_2+\text{ h.c. }\right)+
2\lambda_3\left(\Phi^{\dagger}_1\Phi_1\right)\left(\Phi_2^{\dagger}\Phi_2\right)\nonumber\\
&+&2\lambda_4\left(\Phi_1^{\dagger}\Phi_2\right)\left(\Phi_2^{\dagger}\Phi_1\right)+
\lambda_1\left(\Phi_1^{\dagger}\Phi_1\right)^2+
\lambda_2\left(\Phi_2^{\dagger}\Phi_2\right)^2,
\end{eqnarray}
the term in $m_{12}$ is a soft symmetry breaking term. Its
introduction prevents the appearence of an would-be Goldstone boson
due to an accidental continuous global symmetry of the  potential,
which arises when the \ac{BGL} symmetry is exact. Namely, in the limit
$m_{12} \rightarrow 0$  the pseudo scalar neutral field $I$ remains
massless. 

It was shown in the literature that a potential with a sofly broken $U(1)$
symmetry does not violate \acs{CP}, neither explicitly nor spontaneously\cite{Branco:2011iw}.
Hence all the parameters of the potential can be made real.
In the absence of \acs{CP} violation the
scalar field $I$ does not mix with the fields $R$ and $H^0$, therefore
$I$ is already a physical Higgs and the mixing of $R$ and $H^0$ is
parametrized by a single  angle.  There are  two important rotations 
that define the two parameters, $\tan \beta$ and $\alpha$, widely used in the literature:
\begin{eqnarray}
 \left( \begin{array}{c} H^0 \\ R  \end{array} \right) 
 =  \frac{1}{v}\left( \begin{array}{rr}  
v_1  & v_2 \\
- v_2 & v_1
\end{array} \right)
\left( \begin{array}{c} \rho_1\\ \rho_2  \end{array} \right)
 =  \left( \begin{array}{cc}  
\cos \beta  & \sin \beta \\
- \sin \beta & \cos \beta 
\end{array} \right)
\left( \begin{array}{c} \rho_1\\ \rho_2  \end{array} \right)
\label{beta}
\end{eqnarray}
where $\rho_{1,2}$ are the $CP$ even states of the $\Phi_{1,2}$ doublets where the \acs{BGL}
symmetry is imposed.
This rotation ensures that the field $H^0$ has flavour conserving couplings to the quarks 
with strength equal to the standard model Higgs  couplings. The other rotation is:  
\begin{eqnarray}
 \left( \begin{array}{c} H \\ h  \end{array} \right) 
 =  \left( \begin{array}{cc}  
\cos \alpha  & \sin \alpha \\
- \sin \alpha & \cos \alpha
\end{array} \right)
\left( \begin{array}{c} \rho_1\\ \rho_2  \end{array} \right)
 \end{eqnarray}
relating  $\rho_1$ and $\rho_2$ to two of the neutral physical Higgs  fields.
The seven  independent real parameters of the Higgs potential $V_\Phi$ will
fix the seven observable quantities, comprising the masses of the
three neutral Higgs, the mass of the charged Higgs,  the combination
$v \equiv \sqrt{v_1^2 + v_2^2} $, $\tan \beta \equiv v_2/v_1$, and $\alpha$.

\section{A contribution for a systematic search for FCNCs}

While some authors develop different software tools to help in the search for new 
physics\cite{flavourkit,database,*MadAnalysis,*superiso,*higgsbounds,*higgssignals,*2HDMC,*checkmate,*DELPHES,*fastlim,*SModelS,*ScannerS,*Lilith,*vevacious}, others prefer at this stage to develop a more transparent approach based on 
transparent formulae to monitor the future improvements on
experimental data and lattice calculations\cite{burastalk}.
In some of the mentioned tools there is duplicated work assumed by the authors
which in some way agree that a more transparent approach is preferable. Since the study of the physics involved 
is our priority, the transparency gained often pays the duplicated work.
That was the main reason why for the study of the BGL models 
described in the next chapter we developed our own software(see the software documentation\cite{bglanalysis}).
Another reason was that when we started the study in 2011, the maturity of the 
available software was not the same it is today.

The results for the type II 2HDM presented in Figure~\ref{fig:corr} are consistent with the ones found in the literature\cite{flavourtype2} 
and so we believe our results are qualitatively correct and roughly quantitatively correct. That is,
the errors which certainly exist should not modify our conclusions which are that 
the CKM's hierarchy makes some BGL models competitive against NFC models(e.g. MSSM) 
when facing flavour physics data. There is no a priori reason to expect a worse performance 
then NFC against LHC data, this must be checked.
We will also study the correlations among the observables, to find interesting patterns.

However, we can and should be more ambitious---but with some care,
see the cartoon~\ref{fig:std}.
Suppose that you want to buy one of two apparently similar cars;
the owner of the first car shows you a complete manual about the car,
but he doesn't let you drive it before you buy it.
The owner of second car does let you drive it, but he doesn't show you
the manual of the car. Assuming that both the manual and the test drive
seemed ok for each car, which car will you pick?

When we are comparing experimental data with theory predictions--- say searching for FCNCs ---,
transparency is crucial. There are too many variables, we have to know what we are doing.
A numerical computation like $y=\sin(x)$ is transparent to us,
despite we do not know how the computer's math library calculates the function
$\sin$ at a generic point---in fact, the implementation is system dependent---
or how to calculate it without a computer.
This is because if we want we can play with the $\sin$ easily---
the input and output are easy to understand as they are related with the well known 
trigonometric function and the program runs fast---so e.g. we make a plot
and check how it looks.
So, transparency of a program for us is all about input and output easy to
understand and the program to be easy to test.

Following FlavorKit\cite{flavourkit}, to test a model against flavour data we need:

\begin{itemize}\itemsep1pt \parskip0pt \parsep0pt
	\item expressions for the masses and couplings of the fields as a function of the parameters of the model
	\item renormalization group equations for the running of the masses and couplings of the fields
        \item expressions for the Wilson coefficients corresponding to the operators of Effective Field Theory
              as a function of the masses and couplings of the fields
        \item expressions for the observables as a function of the Wilson coefficients
        \item simulation covering the model parameter space, comparing the predictions
              and measurements for the observables
\end{itemize}

In FlavorKit these different task are implemented by different modules (see the diagram~\ref{fig:std}).
\begin{figure}[h!]
   \center
   \includegraphics[width=.45\textwidth]{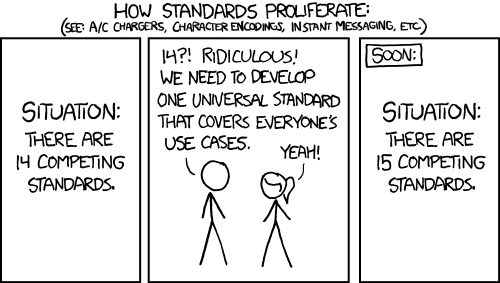}
   \includegraphics[width=.45\textwidth]{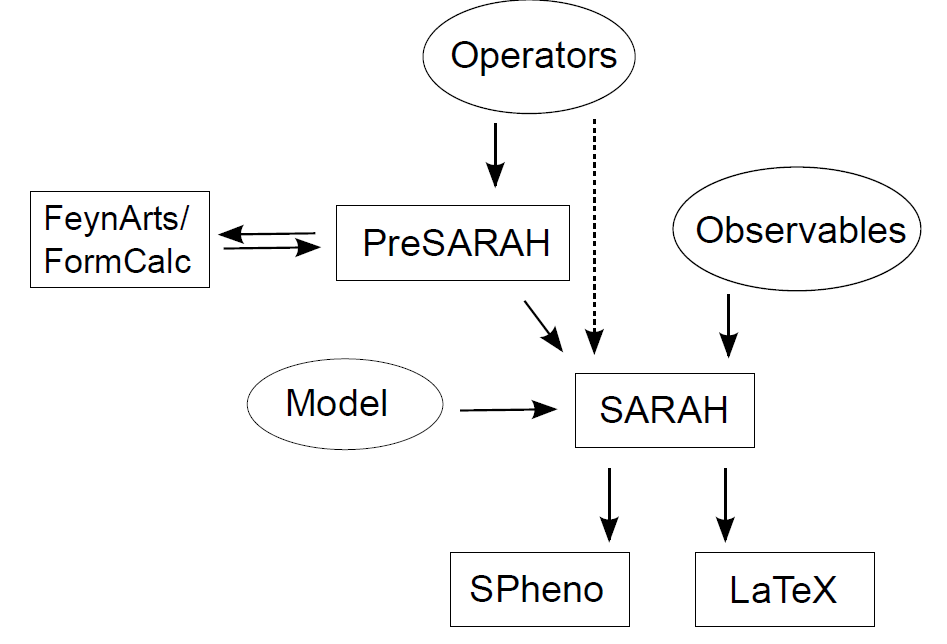}
\caption{Cartoon\cite{standards}. Diagram with Flavourkit components
\cite{flavourkit}.
}
\label{fig:std}
   \end{figure}

All the above mentioned tools are an excellent
starting point and FlavorKit goes in the good direction of a modular solution.
The contribution we do in this section is to try to design a path towards
increased transparency of the software tools. The main obstacle to transparency in some of
these programs is that the input---a model---and the output---expressions or plots which
are function of many parameters---are not easy to understand and often the running time can
be large. Are we proposing to build a whole program which tests models against data whose input/output are not
models/plots? No, of course that if we change the input/output, what the program does cannot not be the same.

Most users of these tools are capable of building their own program to test their models
and often they have to because the available tools despite helpful are not enough for all their needs
(even when the tools are extensible). So the main goal of these tools is to \emph{assist} the users in the task of
building their own programs to test their models. 
Since these tools constitute a collection of resources used to build programs, these tools are 
in fact a library. When seen as a library, these tools are a poorly designed library, which is expected
since they were not designed as a library.

Then a function whose input is a set of Effective Field Theory operators and 
its output is an expression for an observable as a function of the corresponding Wilson coefficients, 
can be made transparent. That is its input/output can be easy to handle
if the library includes the capability to manipulate symbolic expressions.
We can decompose some of the mentioned tools into many such transparent functions.
The user can then include these functions in his program, if guided by a good manual hopefully
with a lot of physical content which will give him the physical insight of what is going on.

As an example, Figure~\ref{fig:caller} illustrates that tools such as the \ac{BGL} analysis tools\cite{bglanalysis}
can be decomposed in several functions with input/output which is easy to handle.
\begin{figure}[h!]
   \center
   \includegraphics[width=\textwidth]{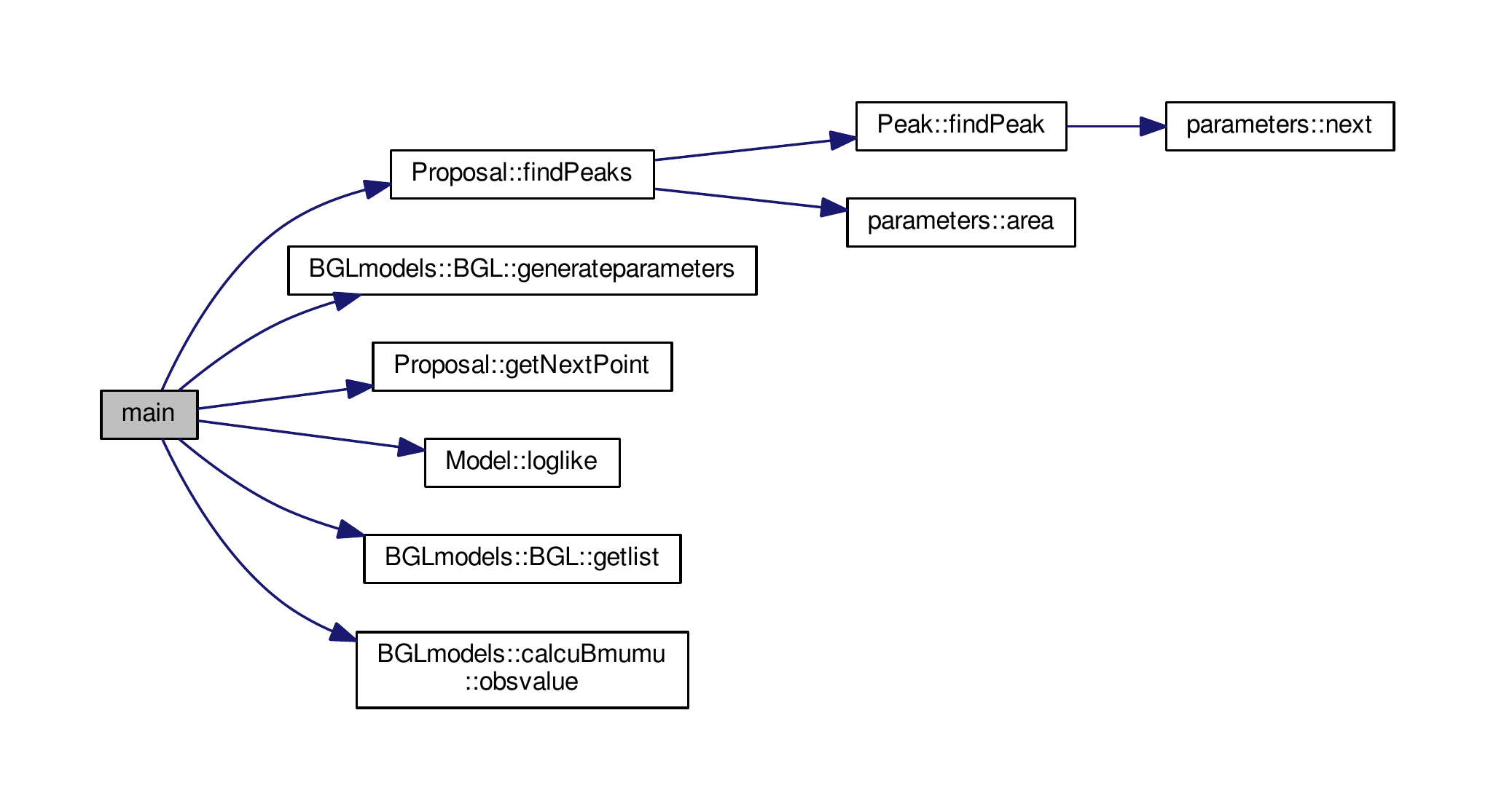}
\caption{Caller graph for the \emph{main} function of the \ac{BGL} analysis tools\cite{bglanalysis}, generated by Doxygen.}
\label{fig:caller}
      \end{figure}

The capability to manipulate symbolic expressions must be part of the library.
While there are many programs with symbolic capabilities, there are not so many libraries
that can be useful in this context but still there are---Ginac\cite{ginac} and Giac\cite{giac} (C++ algebra systems)
are the best options I am aware of. 
Moreover, there is a lot of room for improvement of GiNaC, by combining it with Giac and also
using LLVM\cite{llvm,*llvm2} the code generation of GiNaC and some symbolic capabilities
can be much improved. LLVM stands for Low Level Virtual Machine and it is basically a C++ library
providing a representation of a symbolic language well suited for computer-like operations such as logic
and numerical computations. This symbolic language can be extended with the GiNaC and Giac symbolic
capabilities resulting in an efficient and flexible library.
As an example, we show below the function which in the \ac{BGL} analysis\cite{bglanalysis} adds a new symbolic
expression to the list of constraints, with machine code generation for efficiency:

\begin{scriptsize}
\lstinputlisting[language=C++]{Images/BGL.h}
\end{scriptsize}

The function above takes a symbolic expression for the prediction of an observable as an input (ex pred), optionally compiles it to machine code, and adds it to the list of observables to be calculated in the simulation.

We also need a library containing many known formulas for decays important for FCNC
as a function of the Wilson coefficients, these libraries are appearing with FlavourKit an example;
libraries making global fits can be found e.g. in ROOSTATS\cite{root,*roostats}, 
We need also libraries describing models, formulas relating the model parameters and Wilson coefficients,
library containing the experimental data distributions, these libraries are also appearing\cite{database}.

Due to the high investment of CERN in the C++ language, with the recent development of a LLVM-based
efficient interpreter Cling for the ROOT framework\cite{root}, also the availability of C++ symbolic libraries, 
it makes sense that the language connecting these libraries can be C++, that is, all the libraries should
have a C++ interface.
Note that a Mathematica package can be used as a C++ library (if Mathematica is installed).

What I am proposing is not an utopia, as the ROOT and ROOSTATS framework examples show it is possible that a collection
of libraries to be useful for many different experimental physicists, why not also for theoretical physicists?
What is clear is that it helps that an important institution backs the project, but with CERN investing in 
the LHC's open data project\cite{opening} and projecting more experiments\cite{futureexperiments} certainly there will be interest in a framework which can help
to give physical meaning to all this data.

In the following we present the classes of the \ac{BGL} analysis tools with brief descriptions, illustrating what kind of
libraries may be developed to aid in these kind of analysis. The complete source code presented as a Doxygen manual is online\cite{bglanalysis}. 

\begin{description}\itemsep1pt \parskip0pt \parsep0pt
\item[BGL] Implementation of the BGL model
\item[Boson] Gauge/Higgs boson properties
\item[Fermion] Fermion properties
\item[Meson] Meson properties
\item[calcu] Base class to do the calculus of a constraint to the model
\item[calcuba] Class to do the calculus of a constraint based on a GiNaC compiled expression
\item[calcuBmumu] Calculus of the constraints coming from the B- > mu mu decay
\item[calcubtosgamma2] Calculus of the constraints coming from the b- > s gamma decay
\item[calcuex] Class to do the calculus of a constraint based on a GiNaC symbolic expression
\item[calcuOblique] Calculus of the constraints coming from the oblique parameters
\item[discreteparameter] A parameter which will be fitted in the simulation
\item[freeparameter] A parameter which will be fitted in the simulation
\item[gauss2obs] Same as gaussobs but with a different initializer, such that the uncertainty sigma is absolute
\item[gaussobs] An experimental measure of a parameter which is a mean value and a standard deviation
\item[limitedobs] An experimental measure which is an upper limit on a parameter with a given Confidence Level
\item[Matrixx] Class to represent the mixing matrices VCKM and VPMNS
\item[measure] A class containing the value and uncertainty of an experimental measure
\item[Mixes] Definition of the couplings for the different BGL models
\item[Model] Abstract class for a model
\item[multivector < T, N >] A vector of vectors of vectors of... (N times) of class T objects
\item[multivector < T, 1 >] Specialization template class of multivector < T,N > for N=1
\item[observable] A base class representing an experimental measure
\item[parameters] Vector of parameters
\item[Peak] A class containing the parameters of a maximum of the likelihood function
\item[prediction] Theoretical expression for an experimental measure
\item[Proposal] A class containing the parameters of a proposal for the next step in the Markov Chain
\item[widthcalc] This class calculates decay widths of one lepton to 3 leptons
\end{description}

Note that the purpose of this section is to convince other people interested in doing phenomenological studies in Flavour Physics that a collaborative approach based on a library using a C++ symbolic algebra system as a glue for the different modules is possible and needed, with part of the work needed for some modules already done in the library developed for the BGL analysis. If the reader is simply looking for a ready-to-use package for phenomenological studies then there are many alternatives available\cite{flavourkit,database,*MadAnalysis,*superiso,*higgsbounds,*higgssignals,*2HDMC,*checkmate,*DELPHES,*fastlim,*SModelS,*ScannerS,*Lilith,*vevacious}
\clearpage{}
\cleartooddpage
\clearpage{}\chapter{Physical constraints on the BGL models}

\begin{epigraphs}
\qitem{This gives a unique character to the work of Branco, Grimus, and Lavoura.
They have developed the only
possible implementation of a relation between FCNSI\emph{[Flavour Changing Neutral Scalar Interactions]} 
and the CKM matrix which uses abelian symmetries and is
consistent with the sufficient conditions above.\emph{[...]}
In light of our analysis, that a BGL case was found by inspection in the 
THDM\emph{[Two Higgs Doublet Model]} is truly remarkable.}
{---\emph{P. M. Ferreira \& Joao P. Silva (2010)\cite{abelian}}}

\qitem{The so-called BGL models, proposed in \emph{\cite{Branco:1996bq}}, is a class
of two-Higgs doublet models where the strength of FCNCs in the up- or down-type sector
is unambiguously related to the off-diagonal elements of the CKM matrix. While all the
six BGL models are interesting, only one of them is compatible with the MFV principle.
 \emph{[...]} only the BGL model where  $d_i\to d_j$ FCNC transitions are proportional to 
$V_{3i}^* V_{3j}$ is an explicit example of MFV.\emph{[...]}

More precisely, this
framework coincides with the MFV construction in the limit $m^2_{c,u}/m^2_t \to 0$, which is
an excellent approximation.}{---\emph{A. Buras, M. Carlucci, S. Gori \& G. Isidori (2010)\cite{nfcmfv}}}

\qitem{
Two noteworthy features which
distinguish the t-type BGL model from others are: (i) the $\mu\nu$ final state dominates over $\tau \nu$ for $\tan\beta > 5$,
which is a distinctive characteristic of t-type BGL model unlike any of the Type I, II, X or Y models (due to family
nonuniversal BGL Yukawa couplings); (ii) for $\tan\beta> 10$, the branching ratio into $cs$ significantly dominates
over other channels including $tb$, again a unique feature of t-type BGL.\emph{[...]}

In other types of 2HDM, the bb and $\tau\tau$ final states dominate over cc and
$\mu\mu$ channels, respectively. Here,the hierarchy is reversed,which transpires from the expressions of $N_d$ and
$N_u$\emph{[...]}

The feature that makes our scenario unique is
the possibility of their \emph{[the Higgs bosons]} relative lightness as well as unconventional decay signatures.}
{---\emph{G. Bhattacharyya, D. Das \& A. Kundu (2014)\cite{Bhattacharyya:2014nja}}}
\end{epigraphs}

In this chapter, we analyse the experimental constraints on BGL type models defined in the previous chapter and discuss 
some of their phenomenological implications. 
In the next section, we explain the profile likelihood method
used in our analysis, the input data and settle the notation. In the second
section, we analyse the constraints on BGL models, derived from
experiment. Finally, in section \ref{SEC:Results} we present our results and discuss them.
 
\clearpage  
\section{Analysis details\label{AP:Analysis}\label{AP:Input}}
In this work we only consider explicitly scenarios with Dirac type neutrinos,
where no Majorana mass terms are added to the Lagrangian. However, our 
analysis of the experimental implications does not depend on the
nature of the neutrinos, i.e., Majorana or Dirac. Therefore, our
conclusions can be extended to the case of neutrinos being Majorana
fermions provided that deviations from unitarity of the $3 \times 3$
low energy leptonic mixing matrix are negligible, as it is the case in
most seesaw models. 

In our
analysis we use the current limits on Higgs masses, identifying one of 
the Higgs with the one that was discovered by ATLAS and CMS. We make
the approximation of no mixing between $R$ and $H^0$ identifying $H^0$ with 
the recently discovered Higgs and $R$ and $I$ with the additional physical 
neutral Higgs fields. This limit corresponds to $\beta - \alpha = \pi/2 $ and
with this notation $H^0$ coincides with $h$, which is the usual choice in the
literature. This approximation is justified by the fact that
the observed Higgs boson seems to behave as a standard-like Higgs
particle. The quantity $v$ is of course already fixed by experiment.
Electroweak precision tests and, in particular the $T$ and $S$
parameters, lead to constraints relating  the masses of the new Higgs
fields among themselves.  Therefore the bounds on $T$ and $S$, together 
with  direct mass limits, significantly restrict the masses of the new Higgs particles, 
once  the mass of $H^\pm$ is fixed. In our analysis we study BGL type models by combining 
the six possible implementations of the quark sector with the six
implementations of the leptonic sector. It is illustrative to plot our
results in terms of $m_{H^{\pm}}$ versus $\tan \beta$, since, as explained above
in the context of our approximation of no mixing between $R$ and
$H^0$, there is not much freedom left. Therefore with these two
parameters we may approximately scan the whole region of parameter
space. In our analysis, the presentation of our results will reflect that fact despite we scan over all R, I, $H^+$ masses.
We impose present constraints from several
relevant flavour observables, as specified in the next section.

In tables \ref{TAB:AP:MixingMatrices}, \ref{TAB:AP:TreeCharged}, \ref{TAB:AP:TreeNeutral}, \ref{TAB:AP:Loop} and \ref{TAB:AP:Misc} we collect relevant input used in the analysis,
the notation is explained in next section.

In figures \ref{fig:onea}, \ref{FIG:Results01} and \ref{FIG:Results02} we have presented 68\%, 95\% and 99\% CL allowed regions in parameter space. To wit, we represent regions where the specific BGL model is able to fit the imposed experimental information at least as well as the corresponding goodness levels. Some comments are in order. This procedure corresponds to the profile likelihood method \cite{asymptotic}. In brief, for a model with parameters $\vec p$, we compute the predictions for the considered set of observables $\vec O_{\mathrm{Th}}(\vec p)$. Then, using the experimental information $\vec O_{\mathrm{Exp}}$ available for those observables, we build a likelihood function $\mathcal L(\vec O_{\mathrm{Exp}}|\vec O_{\mathrm{Th}}(\vec p))$ which gives the probability of obtaining the experimental results $\vec O_{\mathrm{\mathrm{Exp}}}$ assuming that the model is correct. The likelihood function $\mathcal L(\vec O_{\mathrm{Exp}}|\vec O_{\mathrm{Th}}(\vec p))$ encodes all the information on how the model is able to reproduce the observed data all over parameter space. Nevertheless, the knowledge of $\mathcal L(\vec O_{\mathrm{Exp}}|\vec O_{\mathrm{Th}}(\vec p))$ in a multidimensional parameter space can be hardly represented and one is led to the problem of reducing that information to one or two-dimensional subspaces. In the profile likelihood method, for each point in the chosen subspace, the highest likelihood over the complementary, marginalized space, is retained. Let us clarify that likelihood -- or chi-squared $\chi^2\equiv -2\log \mathcal L$ -- profiles and derived regions such as the ones we represent, are thus insensitive to the size of the space over which one marginalizes; this would not be the case in a Bayesian analysis, where an integration over the marginalized space is involved. The profile likelihood method seems adequate to our purpose, which is none other than exploring where in parameter space are the different BGL models able to satisfy experimental constraints, without weighting in eventual fine tunings of the models or parameter space volumes. For the numerical computations the libraries GiNaC \cite{ginac} and ROOT \cite{root} were used.

There are two types of experimental results: the measures and the
upper limits. The contribution of the measures to $\chi^2$ is
$(\frac{r-p}{\sigma})^2$, where $r$ and $\sigma$ are the mean value
and uncertainty of the measure and $p$ is the prediction of the
model. $\sigma$ also includes the part of the
uncertainty of the prediction which is assumed to be uncorrelated with
the other predictions such as the truncation errors. The contribution of the
upper limits to $\chi^2$ is $(\frac{p-\sqrt{\pi}\rho}{2\rho})^2$, where $\rho$ is
such that the correspondent Gaussian cumulative distribution function at 
the upper limit equals the Confidence Level.

Through the generation of a large
enough set of pseudo-experiments we could construct numerically the
statistical distribution of $\chi^2$ if needed.
According to Wilks theorem, $\chi^2$
asymptotically follows a chi squared distribution, with the degrees of
freedom equal to the number of observables. Since our
purpose is only to take the qualitative conclusion on whether the BGL
models can describe better the data than the Standard Model,
considering that $\chi^2$ follows a chi squared distribution is a good
enough approximation.
We are considering as free parameters: the $tan(\beta)$, and the masses of
the Higgs bosons $H^+$, $R$, $I$.

Since the $cos(\theta_W)$ measurement is done with the muon decay width, we are not using 
it's value, we are using the direct measurement of the W mass to calculate it. 
Then we compare the predicted muon decay width with the measured one, like we do with the other decay widths.

We are considering that the corrections
introduced by the BGL model when compared with the SM, are small
enough so that the values of the mixing matrices (\ac{CKM} and \ac{PMNS}) in
some region of parameters of the BGL models are not significantly different from the ones of the mixing
matrices obtained when using the Standard Model. This assumption is a
posteriori justified by the results we obtain, since we can describe
the experimental data for a large region of the parameter space of the BGL models using such hypothesis,
and no significant deviations in that region are found with respect with the Standard Model.
This approach would not be
necessarily valid if we obtained allowed regions in BGL models
making significantly different predictions than the Standard Model, for instance
if we could explain the anomalous B decays measured in BABAR experiment with some BGL model.
\begin{table}
\begin{center}
\begin{tabular}{|c|c||c|c|}
\hline    $\lambda$ & $0.22535(65)$ & $A$ & $0.811(22)$\\ 
\hline    $\bar{\rho}$ & $0.131(26)$ & $\bar{\eta}$ & $0.345(14)$\\ 
\hline\hline    $\sin^2 \theta_{12}$& $0.320(16)$ & $\sin^2 \theta_{23}$& $0.613(22)$\\ 
\hline    $\sin^2 \theta_{13}$& $0.0246(29)$\\ 
\cline{1-2}
\end{tabular}
\caption{Input for the CKM and PMNS mixing matrices \cite{pdg}.\label{TAB:AP:MixingMatrices}}
\end{center}
\end{table}

\begin{table}
\begin{center}
\begin{tabular}{|c|c||c|c|}
\hline
    $\abs{g_\mu/g_e}^2$ & $1.0018(14)$ & $|g_{RR,\tau \mu}^{S}|$ & $<0.72$\\ 
    $|g_{RR,\tau e}^{S}|$ & $<0.70$ & $|g_{RR,\mu e}^{S}|$ & $<0.035$\\
\hline   
    $\Br(B^+ \to e^+ \nu)$ & $<9.8\cdot 10^{-7}$ & $\Br(D^+_s \to e^+ \nu)$ & $<1.2\cdot 10^{-4}$ \\
    $\Br(B^+ \to \mu^+ \nu)$ & $< 1.0\cdot 10^{-6}$ &  $\Br(D^+_s \to \mu^+ \nu)$ & $5.90(33)\cdot 10^{-3}$\\
    $\Br(B^+ \to \tau^+ \nu)$ & $1.15(23)\cdot 10^{-4}$ &  $\Br(D^+_s \to \tau^+ \nu)$ & $5.43 (31)\cdot 10^{-2}$\\
\hline
    $\Br(D^+ \to e^+ \nu)$ & $<8.8\cdot 10^{-6}$ \\
    $\Br(D^+ \to \mu^+ \nu)$ &  $3.82(33)\cdot 10^{-4}$ \\
    $\Br(D^+ \to \tau^+ \nu)$ & $<1.2\cdot 10^{-3}$\\ 
\hline
    $\frac{\Gamma(\pi^+\to e^+\nu)}{\Gamma(\pi^+\to \mu^+\nu)}$ & $1.230(4)\cdot 10^{-4}$ & $\frac{\Gamma(\tau^-\to \pi^-\nu)}{\Gamma(\pi^+\to \mu^+\nu)}$ & $9703(54)$\\
    $\frac{\Gamma(K^+\to e^+\nu)}{\Gamma(K^+\to \mu^+\nu)}$ & $2.488(12)\cdot 10^{-5}$ & $\frac{\Gamma(\tau^-\to K^-\nu)}{\Gamma(K^+\to \mu^+\nu)}$ &
    $469(7)$\\
\hline
$\frac{\Gamma(B\to D\tau\nu)_{\mathrm{NP}}}{\Gamma(B\to D\tau\nu)_{\mathrm{SM}}}$ & & $\log C$ ($K\to\pi\ell\nu$) & $0.194(11)$\\
\cline{3-4}
$\frac{\Gamma(B\to D^\ast\tau\nu)_{\mathrm{NP}}}{\Gamma(B\to D^\ast\tau\nu)_{\mathrm{SM}}}$ & \\
\cline{1-2}
\end{tabular}
\caption{Constraints on processes mediated at tree level by $H^\pm$ -- section \ref{SEC:ExpConst-sSEC:TreeCharged} --, bounds are given at 90\% CL, except the first set of four which is at 90\%CL.
\label{TAB:AP:TreeCharged}}
\end{center}
\end{table}

\begin{table}
\begin{center}
\begin{tabular}{|c|c||c|c|}
\hline
    $\Br(\tau^- \to e^-e^-e^+)$ & $<2.7\cdot 10^{-8}$ & $\Br(\tau^- \to \mu^-\mu^-\mu^+)$ & $<2.1\cdot 10^{-8}$\\ 
    $\Br(\tau^- \to e^-e^-\mu^+)$ & $<1.5\cdot 10^{-8}$ & $\Br(\tau^- \to e^-\mu^-e^+)$ & $<1.8\cdot 10^{-8}$\\ 
    $\Br(\tau^- \to \mu^-\mu^-e^+)$ & $<1.7\cdot 10^{-8}$ & $\Br(\tau^- \to \mu^-e^-\mu^+)$ & $<2.7\cdot 10^{-8}$\\ 
\cline{3-4}
    $\Br(\mu^- \to e^-e^-e^+)$ & $<1\cdot 10^{-12}$\\ 
\hline
    $2|M_{12}^K|$ & $<3.5 \cdot 10^{-15}$ \GeV & $2|M_{12}^D|$ & $<9.47\cdot 10^{-15}$ \GeV \\
\cline{3-4}
    $|\epsilon_K|_{NP}\Delta m_K$ & $<7.8 \cdot 10^{-18}$ \GeV \\
\hline
    $\re(\Delta_d)$ & $0.823(143)$ & $\re(\Delta_s)$ & $0.965(133)$ \\
    $\im(\Delta_d)$ & $-0.199(62)$ & $\im(\Delta_s)$ & $0.00(10)$\\
\hline
    $\Br(K_L \to \mu^\pm e^\mp)$ & $<4.7\cdot 10^{-12}$ & $\Br(\pi^0 \to \mu^\pm e^\mp)$ & $<3.6\cdot 10^{-10}$\\
\cline{3-4} 
    $\Br(K_L \to e^-e^+)$ & $<9\cdot 10^{-12}$ \\ 
    $\Br(K_L \to \mu^- \mu^+)$ & $<6.84\cdot 10^{-9}$\\ 
\hline
    $\Br(D^0 \to e^- e^+)$ & $< 7.9\cdot 10^{-8}$ & $\Br(B^0 \to e^+ e^-)$ & $<8.3\cdot 10^{-8}$ \\
    $\Br(D^0 \to \mu^\pm e^\mp)$ & $<2.6\cdot 10^{-7}$ & $\Br(B^0 \to \tau^\pm e^\mp)$ & $<2.8\cdot 10^{-5}$ \\
    $\Br(D^0 \to \mu^- \mu^+)$ & $<1.4\cdot 10^{-7}$ & $\Br(B^0 \to \mu^- \mu^+)$ & $3.6(1.6)\cdot 10^{-10}$ \\
\cline{1-2}
    $\Br(B_s^0 \to e^+ e^-)$ & $<2.8\cdot 10^{-7}$ & $\Br(B^0 \to \tau^\pm \mu^\mp)$ & $<2.2\cdot 10^{-5}$ \\
    $\Br(B_s^0 \to \mu^\pm e^\mp)$ & $<2\cdot 10^{-7}$ & $\Br(B^0 \to \tau^+ \tau^-)$ & $<4.1\cdot 10^{-3}$ \\
\cline{3-4}
    $\Br(B_s^0 \to \mu^- \mu^+)$ & $2.9(0.7)\cdot 10^{-9}$\\
\cline{1-2}
\end{tabular}
\caption{Constraints on processes mediated at tree level by $R$, $I$ -- section \ref{SEC:ExpConst-sSEC:TreeNeutral} --, bounds are given at 90\% CL.\label{TAB:AP:TreeNeutral}}
\end{center}
\end{table}

\begin{table}
\begin{center}
\begin{tabular}{|c|c||c|c|}
\hline    
$\Br(\mu \to e\gamma)$ & $<2.4\cdot 10^{-12}$ & $\Br(B \to X_s \gamma)_{\mathrm{SM}}^{\mathrm{NNLO}}$ & $3.15(23)\cdot 10^{-4}$\\ 
    $\Br(\tau \to e\gamma)$& $<3.3\cdot 10^{-8}$ & $\Br(B \to X_s \gamma)$ & $3.55(35)\cdot 10^{-4}$\\
\cline{3-4}
    $\Br(\tau \to \mu\gamma)$ & $<4.4\cdot 10^{-8}$ \\ 
\hline
    $\Delta T$ & $0.02(11)$ & $F_{Zb\bar b}$ & $<0.0024$ \GeV$^{-1}$ \\
\cline{3-4} 
    $\Delta S$ & $0.00(12)$ \\
\cline{1-2}
\end{tabular}
\caption{Constraints on processes mediated by $R$, $I$, $H^\pm$ at loop level -- section \ref{SEC:ExpConst-sSEC:Loop} --, bounds are given at 90\% CL.\label{TAB:AP:Loop}}
\end{center}
\end{table}

\begin{table}
\begin{center}
\begin{tabular}{|c|c||c|c||c|c|}
\hline $f_{\pi}$ & $0.132(2)$ \GeV & $f_{K}$ & $0.159(2)$ GeV & $f_{D}$ & $0.208(3)$ \GeV\\ 
\hline $f_{D_s}$ & $0.248(3)$ \GeV & $f_{B}$ & $0.189(4)$ GeV & $f_{B_s}$ & $0.225(4)$ \GeV\\ 
\hline $\delta_{\pi^+}$ & $-0.036419(78)$ & $\delta_{K^+}$ & $-0.03580(39)$ & $\delta_{\tau\pi}$ & $0.0016(14)$\\
\hline $\delta_{\tau K}$ & $0.0090(22)$ & $\Delta_{\chi PT}$ & $-3.5(8)\cdot 10^{-3}$ & $f_+^{K\pi}$ & $0.965(10)$\\
\hline
\end{tabular}
\caption{Additional theoretical input -- lattice, radiative corrections -- \cite{lattice,kaon,taucorrections,Pich:2010,Massesrunning}.\label{TAB:AP:Misc}}
\end{center}
\end{table}

\section{Confronting experimental results\label{SEC:ExpConst}}

\subsection{Generalities\label{SEC:ExpConst-sSEC:Gen}}
In the class of 2HDM considered in this chapter, the Yukawa interactions of the new scalars may produce new contributions, at tree and at loop level, that modify the SM predictions for many processes for which experimental information is available. As is customary, this will allow us to study the viability and interest of the different cases within this class of models. In terms of the New Physics (NP) and the SM leading contributions, one can organize the processes to be considered as follows.
\begin{itemize}
\item Processes with tree level NP contributions mediated by $H^\pm$ and SM tree level contributions $W^\pm$-mediated, as, for example, universality in lepton decays, leptonic and semileptonic decays of mesons like $\pi\to e \nu$, $B\to\tau\nu$ and $B\to D\tau\nu$, or $\tau$ decays of type $\tau\to M\nu$.
\item Processes with tree level NP contributions mediated by the neutral scalars $R$, $I$, and
\begin{itemize}
\item loop level SM contributions as in, for example, $K_L\to\mu^+\mu^-$, $B_s\to\mu^+\mu^-$, and $B^0\rightleftarrows\bar B^0$ oscillations,
\item highly suppressed (because of the smallness of the neutrino masses) loop level SM contributions as in, for example, $\tau^-\to\mu^-\mu^-\mu^+$ or $\mu^-\to e^-e^-e^+$.
\end{itemize}
\item Processes with loop level NP contributions and
\begin{itemize}
\item loop level SM contributions as in, for example, $B\to X_s\gamma$,
\item highly suppressed (here too because of the smallness of the neutrino masses) loop level SM contributions as in, for example, $\tau\to\mu\gamma$ or $\mu\to e\gamma$.
\end{itemize}
\end{itemize}

Besides those observables, electroweak precision information -- $Z\to b\bar b$ and the oblique parameters $S$, $T$ -- are also relevant; they involve loop level contributions from the new scalars.

Table \ref{TAB:summary} summarizes this classification of the potentially relevant observables. Notice however that the table signals the possible new contributions but for each specific model type, some of them will be absent. More detailed descriptions of each type of constraint are addressed in the following subsections. Since we focus in the flavour sector, we exclude from the analysis of the experimental implications of the BGL models processes that probe additional couplings related to the scalar potential, such as $H^0\to\gamma\gamma$, central in the Higgs discovery at the LHC, and refer the interested reader to \cite{Bhattacharyya:2013rya}.

\newcommand{\nchck}{{\checkmark}}
\newcommand{\grchck}{{\footnotesize\color{gray}\nchck}}
\begin{table}[h] 
\begin{center}
\begin{tabular}{c|cc|cc|cc|}
\cline{2-7} & \multicolumn{4}{|c|}{BGL - 2HDM} & \multicolumn{2}{||c|}{SM}\\ 
\cline{2-7} & \multicolumn{2}{|c|}{Charged $H^\pm$} & \multicolumn{2}{|c|}{Neutral $R$, $I$} & \multicolumn{1}{||c|}{\multirow{2}{*}{Tree}} & \multicolumn{1}{|c|}{\multirow{2}{*}{Loop}}\\
\cline{2-5} & \multicolumn{1}{|c|}{Tree} & \multicolumn{1}{|c|}{Loop} & \multicolumn{1}{|c|}{Tree} & \multicolumn{1}{|c|}{Loop} & \multicolumn{1}{||c|}{} & \multicolumn{1}{|c|}{} \\
\hline\multicolumn{1}{|c|}{$M\to\ell\bar\nu,M^\prime\ell\bar\nu$} & \multicolumn{1}{|c|}{\nchck} & \grchck & \multicolumn{1}{|c|}{} & \multicolumn{1}{|c|}{\grchck} & \multicolumn{1}{||c|}{\nchck} & \multicolumn{1}{|c|}{\grchck}\\
\hline\multicolumn{1}{|c|}{Universality} & \multicolumn{1}{|c|}{\nchck} & \grchck & \multicolumn{1}{|c|}{} & \multicolumn{1}{|c|}{\grchck} & \multicolumn{1}{||c|}{\nchck} & \multicolumn{1}{|c|}{\grchck}\\
\hline\hline\multicolumn{1}{|c|}{$M^0\to\ell_1^+\ell_2^-$} & \multicolumn{1}{|c|}{} & \grchck & \multicolumn{1}{|c|}{\nchck} & \multicolumn{1}{|c|}{\grchck} & \multicolumn{1}{||c|}{} & \multicolumn{1}{|c|}{\nchck}\\
\hline\multicolumn{1}{|c|}{$M^0\rightleftarrows \bar M^0$} & \multicolumn{1}{|c|}{} & \grchck & \multicolumn{1}{|c|}{\nchck} & \multicolumn{1}{|c|}{\grchck} & \multicolumn{1}{||c|}{} & \multicolumn{1}{|c|}{\nchck}\\
\hline\multicolumn{1}{|c|}{$\ell_1^-\to\ell_2^-\ell_3^+\ell_4^-$} & \multicolumn{1}{|c|}{} & \grchck & \multicolumn{1}{|c|}{\nchck} & \multicolumn{1}{|c|}{\grchck} & \multicolumn{1}{||c|}{} & \multicolumn{1}{|c|}{\grchck}\\
\hline\hline\multicolumn{1}{|c|}{$B\to X_{s}\gamma$} & \multicolumn{1}{|c|}{} & \nchck & \multicolumn{1}{|c|}{} & \multicolumn{1}{|c|}{\nchck} & \multicolumn{1}{||c|}{} & \multicolumn{1}{|c|}{\nchck}\\
\hline\multicolumn{1}{|c|}{$\ell_j\to \ell_i\gamma$}  & \multicolumn{1}{|c|}{} & \nchck  & \multicolumn{1}{|c|}{} & \multicolumn{1}{|c|}{\nchck} & \multicolumn{1}{||c|}{} & \multicolumn{1}{|c|}{\grchck}\\
\hline\hline\multicolumn{1}{|c|}{EW Precision} & \multicolumn{1}{|c|}{} & \nchck & \multicolumn{1}{|c|}{} & \multicolumn{1}{|c|}{\nchck} & \multicolumn{1}{||c|}{} & \multicolumn{1}{|c|}{\nchck}\\
\hline
\end{tabular}
\end{center}
\caption{Summary table of the different types of relevant observables; leading contributions are tagged $\nchck$ while subleading or negligible ones are tagged $\grchck$.\label{TAB:summary}}
\end{table}

The set of observables that we consider is sufficient to obtain significant constraints for the masses of the new scalars and $\tan\beta$. Notice that, since the new contributions will be typically controlled by these masses, $\tan\beta$ and the mixing matrices, with no additional parameters, we need fewer observables than would be necessary in the analysis of a more general 2HDM such as the one presented in \cite{Crivellin:2013wna}. 

Apart from the previous flavour related  observables, direct searches at colliders may be relevant. For instance, a charged Higgs decaying to $\tau^+\nu$ or $c\bar{s}$ with a mass lighter than $80$ \GeV ~was excluded\footnote{For all BGL models, in the parameter space not excluded by the previous observables, the branching ratio for the decays $H^\pm \to \tau^+\nu$ or $H^\pm \to c\bar{s}$ is $>96\%$ and thus the bound applies.}, in the context of 2HDM, at LEP \cite{LEP}. However, we do not include direct searches at colliders since the kind of analysis required goes beyond the scope of this work. As a side benefit, we are then able to explore which BGL models may be probed at colliders, in particular at the LHC, and check if flavour constraints allow light charged Higgs masses.

In the next subsections we describe in detail the different types of observables introduced above.

\subsection{Processes mediated by charged scalars at tree level\label{SEC:ExpConst-sSEC:TreeCharged}}
Since transitions mediated within the SM by a $W$ boson may receive new $H^\pm$ mediated contributions, one has to pay attention to:
\begin{itemize}
\item universality tests in pure leptonic decays $\ell_1\to\ell_2\nu\bar\nu$,\item leptonic decays of pseudoscalar mesons $M\to \ell\nu$,
\item semileptonic decays of pseudoscalar mesons $M\to M^\prime\ell\nu$,
\item $\tau$ decays of the form $\tau\to M\nu$.
\end{itemize}

\subsubsection{Universality}
 Pure leptonic decays $\ell_1\to\ell_2\nu\bar\nu$ are described by the following effective Lagrangian
\begin{multline}
{\mathcal L}_{\rm eff}=-\frac{4 G_F}{\sqrt{2}}\times \\ \sum_{\ell_\alpha,\ell_\beta=e,\mu,\tau}\sum_{i,j=1}^3 U^\ast_{\ell_\alpha\nu_i}U_{\ell_\beta \nu_j}
\left\{\left[\bar \nu_i\gamma^\mu \gamma_L \ell_\alpha\right] 
\left[\bar \ell_\beta\gamma_\mu\gamma_L\nu_j\right]+  g^{\nu_i\ell_\alpha \nu_j\ell_\beta}\left[\bar \nu_i\gamma_R \ell_\alpha\right] \left[\bar \ell_\beta\gamma_L\nu_j\right]\right\}.\label{eq:L:univ:lept}
\end{multline}
The second operator in \eq{eq:L:univ:lept} is the new contribution mediated by $H^\pm$. The coefficient $g^{\nu_i\ell_\alpha \nu_j\ell_\beta}$ depends on the specific BGL model:
\begin{equation}
g^{\nu_i\ell_\alpha \nu_j\ell_\beta}=-\frac{m_{\ell_\alpha} m_{\ell_\beta}}{m_{H^+}^2}C^{\nu_i\ell_\alpha}C^{ \nu_j\ell_\beta}\,,
\end{equation}
where, $C^{\nu_i\ell_\alpha}=-1/\tan\beta$ for models of types $\nu_i$ and $\ell_\alpha$ and $C^{\nu_i\ell_\alpha}=\tan\beta$ otherwise -- this concerns the lepton label of the model, the quark one is irrelevant here.
Following the notation in \cite{Pich:1995,Pich:2010}, we then have
\begin{align}
&\abs{g_{RR,\ell_\alpha \ell_\beta}^{S}}^2\equiv\sum_{i,j=1}^3|U_{\ell_\alpha \nu_i}|^2|U_{\ell_\beta \nu_j}|^2(g^{\nu_i\ell_\alpha \nu_j\ell_\beta})^2\,,\\
&\abs{g_{LL,\ell_\alpha \ell_\beta}^{V}}^2\equiv 1\,,\\
&\left(g_{RR,\ell_\alpha \ell_\beta}^{S}\right) \left(g_{LL,\ell_\alpha \ell_\beta}^{V}\right)^*\equiv\sum_{i,j=1}^3|U_{\ell_\alpha \nu_i}|^2|U_{\ell_\beta \nu_j}|^2 g^{\nu_i\ell_\alpha \nu_j\ell_\beta}\,.
\end{align}
We consider for example universality in $\tau$ decays,
\begin{equation}
\abs{\frac{g_\mu}{g_e}}^2\equiv \frac{\text{Br}\left(\tau\to \mu\nu\bar\nu\right)}{\text{Br}\left(\tau \to e\nu\bar\nu\right)}
\frac{f\big(\frac{m^2_e}{m^2_\tau}\big)}{f\big(\frac{m^2_\mu}{m^2_\tau}\big)}\,,
\end{equation}
where
\begin{equation}
\frac{\text{Br}(\tau\to \mu\nu\bar\nu)}{\text{Br}(\tau \to e\nu\bar\nu)}=\frac{\left(\big|{g_{LL,\tau \mu}^{V}}\big|^2+\frac{1}{4}\big|{g_{RR,\tau \mu}^{S}}\big|^2\right)
f\big(\frac{m^2_\mu}{m^2_\tau}\big)+
2{\rm Re}\left(g_{RR,\tau \mu}^{S} \left(g_{LL,\tau \mu}^{V}\right)^\ast\right)
\frac{m^2_\mu}{m^2_\tau} g\big(\frac{m^2_\mu}{m^2_\tau}\big)}{
\left(\big|{g_{LL,\tau e}^{V}}\big|^2+\frac{1}{4}\big|{g_{RR,\tau e }^{S}}\big|^2\right)
f\big(\frac{m^2_e}{m^2_\tau}\big)+2{\rm Re}\left(g_{RR,\tau e }^{S}
\left(g_{LL,\tau e }^{V}\right)^\ast\right)\frac{m^2_e}{m^2_\tau} g\big(\frac{m^2_e}{m^2_\tau}\big)}\,,\label{eq:univ:BRs}
\end{equation}
with $f(x)$ and $g(x)$ phase space functions\footnote{$f(x)=1-8x+8x^3-x^4-12x^2 \log(x)$ and $g(x)=1+9x-9x^2-x^3+6x(1+x)\log(x)$.}. One loop radiative corrections for the individual branching ratios cancel out in the ratio \eq{eq:univ:BRs}. The experimental limits on $\abs{g_{RR,\ell_\alpha\ell_\beta}^{S}}$ are collected in Section~\ref{AP:Input}.

\subsubsection{Semileptonic processes}
 Semileptonic processes may also receive tree level contributions from virtual $H^\pm$; the relevant effective Lagrangian for these processes is:
\begin{multline}
{\mathcal L}_{\rm eff} = -\frac{4 G_F}{\sqrt{2}}\ \sum_{u_i=u,c,t}\ \sum_{d_j=d,s,b}\ \sum_{\ell_a=e,\mu,\tau}\,\sum_{\nu_b=\nu_1,\nu_2,\nu_3}\ V_{u_id_j}\ U_{\ell_a \nu_b}\\
\left\{\left[\bar u_i\gamma^\mu \gamma_L d_j\right]\left[\bar \ell_a\gamma_\mu\gamma_L\nu_{b}\right] + \left[\bar u_i \left( g_L^{u_id_j \nu_b\ell_a}\,\gamma_L + g_R^{u_id_j \nu_b\ell_a}\, 
\gamma_R\right) d_j\right] \left[\bar \ell_a\gamma_L\nu_b\right]\right\}+ \mbox{h.c.}\,,\label{eq:L:semilept}
\end{multline}
where
\begin{equation}
g_L^{u_id_j \nu_b\ell_a}=\frac{ m_{u_i} m_{\ell_a}}{m_{H^+}^2}C^{u_id_j}C^{\ell_a \nu_b}\,,\qquad g_R^{u_id_j \nu_b\ell_a}=-\frac{ m_{d_j} m_{\ell_a}}{m_{H^+}^2}C^{u_id_j}C^{\nu_b\ell_a},
\end{equation}
and, $C^{u_id_j}=-1/\tan\beta$ for models of types $u_i$ and $d_j$, $C^{u_id_j}=\tan\beta$ otherwise, while $C^{ \nu_b\ell_a}=-1/\tan\beta$ for models of types $\ell_a$ and $\nu_b$, $C^{\nu_b\ell_a }=\tan\beta$ otherwise.

\begin{figure}[h]
\begin{center}
\begin{subfigure}{0.32\textwidth}
\includegraphics[width=\textwidth]{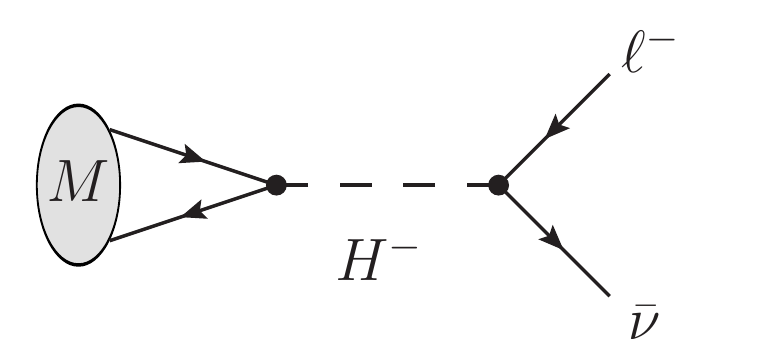}
\caption{$M\to\ell\nu$}
\end{subfigure}
\begin{subfigure}{0.32\textwidth}
\includegraphics[width=\textwidth]{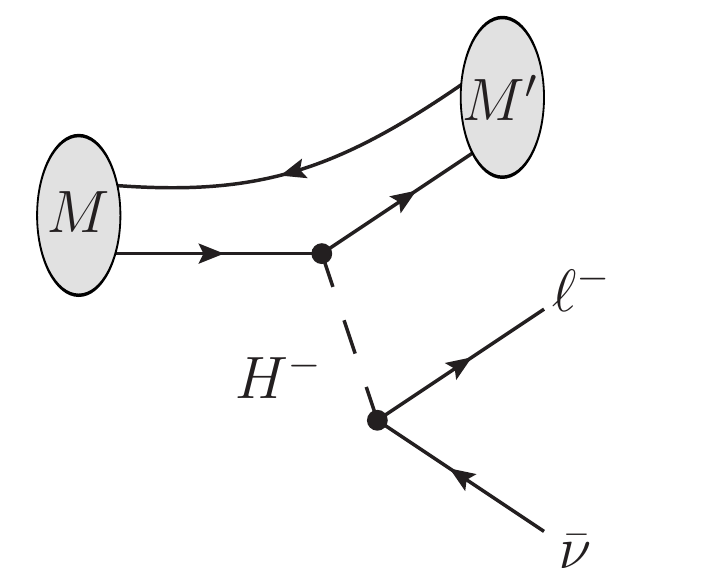}
\caption{$M\to M^\prime\ell\nu$}
\end{subfigure}
\begin{subfigure}{0.32\textwidth}
\includegraphics[width=\textwidth]{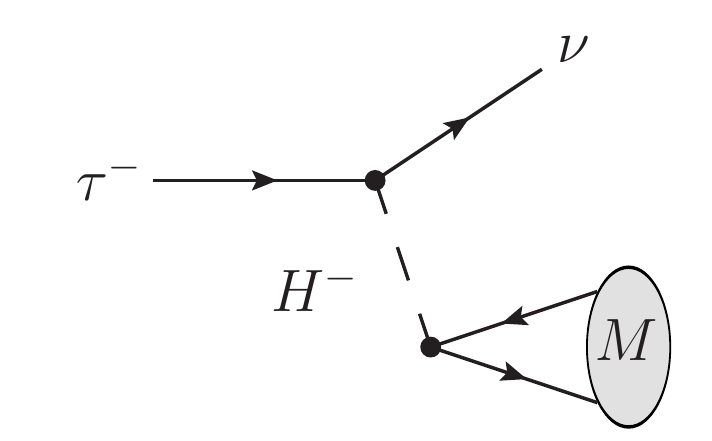}
\caption{$\tau\to M\nu$}
\end{subfigure}

\caption{Tree level $H^\pm$ mediated NP contributions to semileptonic process.\label{fig:Semileptonic}}
\end{center}
\end{figure}

The rate of the leptonic decay $M\to\ell\bar\nu$ of a pseudoscalar meson $M$, with quark content $\bar u_i d_j$, obtained from the effective Lagrangian in Eq.~\eq{eq:L:semilept}, is given by\footnote{Including electromagnetic radiative corrections 
\cite{kaons}, $\Gamma(M\to \ell\bar\nu)=(1+\delta_{\rm em})\,\Gamma_0(M\to \ell\bar\nu)$.}
\begin{equation}
\Gamma_0(M\to \ell\bar\nu)\, =\, G_F^2m_\ell^2 f_M^2 |V_{u_id_j}|^2 \,\frac{m_{M}}{8\pi} \left( 1- \frac{m_\ell^2}{m_{M}^2}\right)^2\; 
\sum_{n=1,2,3}|U_{\ell \nu_n}|^2|1-\Delta^{ \nu_n\ell}_{u_id_j}|^2\,. \label{eq:semileptonic:rate0:Mln}
\end{equation}
The scalar mediated new contribution is given by,
\begin{equation}
\Delta^{ \nu_n\ell}_{u_id_j}=C^{u_id_j}C^{\nu_n\ell}\frac{m_M^2}{m_{H^\pm}^2}\,.
\end{equation}
Since the process is helicity suppressed and receives NP contributions proportional to $m_M^2/m_{H^{\pm}}^2$, interesting channels are expected to involve heavy mesons and the $\tau$ lepton, as for example in $B^+\to\tau^+\nu$, $D_s^+\to\tau^+\nu$.
Taking into account the different possible values of $C^{u_id_j}$ and $C^{\nu_n\ell}$, we must have
\[
C^{u_id_j}C^{\nu_n\ell}\in\left\{-1,\ \tan^2\beta,\ \frac{1}{\tan^2\beta}\right\}\,.
\]
Therefore, for $m_{H^+}^2\gg m^2_M$, if $\Delta_{u_id_j}^{\nu_n\ell}$ is negative, then the NP contribution is negligible; otherwise, if the NP contribution is enhanced by $(\tan\beta)^{\pm 2}$, it will typically interfere destructively with the SM contribution. An increase with respect to SM predictions, which would be interesting for example to account for some $B^+\to\tau^+\nu$ measurements, would require a NP contribution more than twice larger than the SM one, leading to tensions in other observables. The different channels considered in the analysis are collected in Section~\ref{AP:Input} and radiative corrections are included according to \cite{kaons}.

 In the case of $\tau$ decays of type $\tau\to M\nu$, the analogue of Eq.~(\ref{eq:semileptonic:rate0:Mln}) is\footnote{Radiative corrections to $\Gamma_0(\tau\to M\nu)$ are included in the analysis \cite{taucorrections}.} 
\begin{equation}
\Gamma_0(\tau\to M\nu)\, =\, G_F^2m_\tau^3 f_M^2 |V_{u_id_j}|^2 \,\frac{3}{16\pi} \left( 1- \frac{m_{M}^2}{m_\tau^2}\right)^2\; 
\sum_{n=1,2,3}|U_{\tau \nu_n}|^2|1-\Delta^{\nu_n\tau}_{u_id_j}|^2\,. \label{eq:semileptonic:rate0:tMn}
\end{equation}
The analysis uses experimental $\tau\to\pi\nu$ and $\tau\to K\nu$ results -- see table \ref{TAB:AP:TreeCharged}.

 While $M\to\ell\bar\nu$ transitions are helicity suppressed two body decays, this is not the case anymore for $M\to M^\prime \ell\bar\nu$ decays. The corresponding decay amplitude is described by two form factors, $F_+(q^2)$ and $F_0(q^2)$ -- with $q$ the momentum transfer to the $\ell\bar\nu$ pair --, associated to the P wave and the S wave components of the amplitude $\langle 0 | \bar u_i\gamma^\mu d_j| M \bar M^\prime \rangle$. The $H^\pm$ mediated amplitude can only contribute to the S wave component. Considering for example a specific case like $B\to D\tau\nu$, where the quark level weak transition is $b\to c\tau\nu$, we have
\begin{equation}
	\frac{F_0^{(\mathrm{BGL})}(q^2,n)}{F_0^{(\mathrm{SM})}(q^2)}= 1-C^{cb}C^{\nu_n\tau}\frac{q^2}{m_{H^+}^2}\,,
\end{equation}
giving then
\begin{multline}
	\frac{\Gamma_{(\mathrm{BGL})}(B\to D \tau\nu)}{\Gamma_{(\mathrm{SM})}(B\to D \tau\nu)}=1+\\
\sum_{n=1}^3|U_{\tau\nu_n}|^2 \left(-{C_1} C^{cb}C^{\nu_n\tau}\frac{m_\tau(m_b-m_c)}{m_{H^+}^{2}}
+C_2(C^{cb}C^{\nu_n\tau})^2\frac{ m_\tau^{2}(m_b-m_c)^{2}}{ m_{H^+}^{4}}\right)\,,
\end{multline}
with coefficients $C_1\sim 1.5$ and $C_2\sim 1.0$. For $B\to D^\ast \tau\nu$, we have instead 
\begin{multline}
	\frac{\Gamma_{(\mathrm{BGL})}(B\to D^\ast \tau\nu)}{\Gamma_{(\mathrm{SM})}(B\to D^\ast \tau\nu)}=1+\\
\sum_{n=1}^3|U_{\tau\nu_n}|^2 \left(-{C_1} C^{cb}C^{\nu_n\tau}\frac{m_\tau(m_b+m_c)}{M_{H^+}^{2}}
+C_2(C^{cb}C^{\nu_n\tau})^2\frac{ m_\tau^{2}(m_b+m_c)^{2}}{ M_{H^+}^{4}}\right)\,,
\end{multline}
and $C_1\sim 0.12$ and $C_2\sim 0.05$. 
Notice that, even though BGL models still remain compatible with the present data
for the decays $B \to \tau \nu$, $B\to D \tau \nu$ and $B\to D^\ast \tau\nu$,
if the experimental anomalies observed in these processes, pointing towards
physics beyond the SM, are confirmed no two such anomalies could be simultaneously
accommodated in the BGL framework.

For $K\to\pi\ell\nu$ decays, rather than resorting to the rate or the branching fraction to constrain the NP contributions, the Callan-Treiman relation is used to relate the scalar form factor at the kinematic point $q^2_{\rm CT}=m_K^2-m_\pi^2$ to the decay constants of $K$ and $\pi$:
\begin{equation}
\frac{F_0^{(BGL)}(q^2_{\rm CT})}{F_+(0)}=\frac{f_K}{f_\pi}\frac{1}{F_+(0)}+\Delta_{\chi {\rm PT}}\equiv C\,.\label{eq:CT:KPln}
\end{equation}
$\Delta_{\chi {\rm PT}}$ is a Chiral Perturbation Theory correction. The right-hand side of Eq.~(\ref{eq:CT:KPln}), $C$, is extracted from experiment,  thus leading
to a constraint on $F_0^{(\mathrm{BGL})}(q^2_{\rm CT})$.

\subsection{Processes mediated by neutral scalars at tree level\label{SEC:ExpConst-sSEC:TreeNeutral}}
While the $H^\pm$ mediated NP contributions of the previous section compete with tree level SM amplitudes -- including suppressed ones, as in $M\to\ell\nu$ decays --, the neutral scalars $R$ and $I$ produce tree level contributions that compete with loop level SM contributions. We consider three different types of processes.
\begin{itemize}
\item Lepton flavour violating decays $\ell_1^-\to\ell_2^-\ell_3^+\ell_4^-$: in this case the SM loop contribution, proportional to neutrino masses is completely negligible and thus NP provides the only relevant one.
\item Mixings of neutral mesons, $M^0\rightleftarrows \bar M^0$, where $M^0$ could be a down-type meson $K^0$, $B^0_d$ or $B^0_s$ or the up-type meson $D^0$. The distinction among down and up-type mesons is relevant since depending on the BGL model the tree level NP contributions will appear in one or the other sector.
\item Rare decays $M^0\to\ell_1^+\ell_2^-$ (including lepton flavour violating modes $\ell_1\neq\ell_2$): again depending on the BGL model and $M^0$ being one of the previous down or up-type pseudoscalar mesons, the tree level NP contributions will be present or not. 
\end{itemize}
 
\subsubsection{Lepton flavour violating decays }

Lepton flavour violating decays of the form $\ell_1^-\to\ell_2^-\ell_3^+\ell_4^-$,
such as $\mu^-\to e^-e^+e^-$, $\tau^-\to e^-\mu^+\mu^-$ or $\tau^-\to \mu^- e^+ \mu^-$
are completely negligible in the SM, since the corresponding penguin and/or box amplitudes
are proportional to neutrino masses. In BGL models of type $(X,\nu_j)$, tree level
NP contributions mediate these decays. For muons, there is only one possible decay
of this type, while for taus there are two interesting cases: either $\ell_3^+$
belongs to the same family as one of the negatively charged leptons or not.
In the latter case the two vertices in the diagrams of figure \ref{fig:Mtoll} are flavour changing
and the SM contributes dominantly via a box diagram. Otherwise, the dominant BGL
contribution only requires one flavour changing vertex and SM penguin diagrams are
possible. In this case a connection can be established with the lepton flavour
violating processes of the type $\ell_j \to \ell_i \gamma$ considered in section \ref{SEC:ExpConst-sSEC:Loop}.

\begin{figure}[h]
\begin{center}
\includegraphics[width=0.5\textwidth]{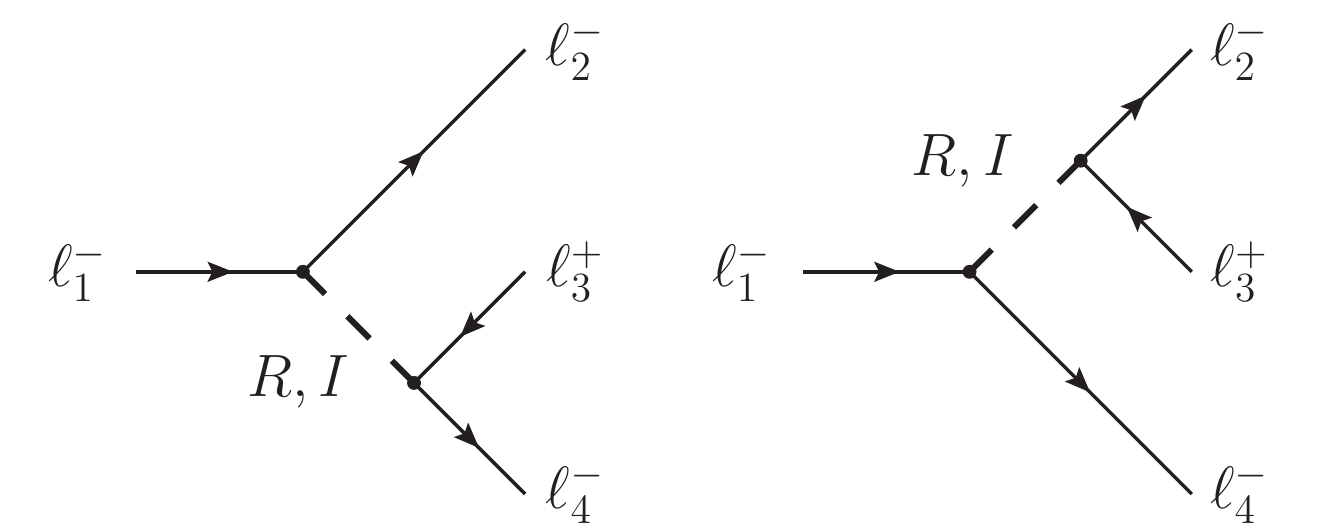}
\caption{Tree level $R,I$ mediated NP contributions to $\ell_1^-\to \ell_2^-\ell_3^+\ell_4^-$.\label{fig:Mtoll}}
\end{center}
\end{figure}
The corresponding effective Lagrangian is
\begin{multline}
\mathcal L_{\rm eff} = 
-\frac{2 G_F}{\sqrt{2}}\sum_{\chi_1,\chi_2=L,R}
\Big\{
g_{\chi_1\chi_2}^{12,34} \left[\bar\ell_2\gamma_{\chi_1} \ell_1\right]\left[\bar\ell_4\gamma_{\chi_2} \ell_3\right]+
g_{\chi_1\chi_2}^{14,32} \left[\bar\ell_4\gamma_{\chi_1} \ell_1\right]\left[\bar\ell_2\gamma_{\chi_2} \ell_3\right]
\Big\}\,,
\end{multline}
with
\begin{align*}
&g_{LL}^{ij,kl}=\frac{(N_\ell^\dagger)_{\ell_j \ell_i}(N_\ell^\dagger)_{\ell_l \ell_k}}{m_R^2}-\frac{(N_\ell^\dagger)_{\ell_j \ell_i}(N_\ell^\dagger)_{\ell_l \ell_k}}{m_I^2}\,,\ &
g_{RL}^{ij,kl}=\frac{(N_\ell)_{\ell_j \ell_i}(N_\ell^\dagger)_{\ell_l \ell_k}}{m_R^2}+\frac{(N_\ell)_{\ell_j \ell_i}(N_\ell^\dagger)_{\ell_l \ell_k}}{m_I^2}\,,\\
&g_{LR}^{ij,kl}=\frac{(N_\ell^\dagger)_{\ell_j \ell_i}(N_\ell)_{\ell_l \ell_k}}{m_R^2}+\frac{(N_\ell^\dagger)_{\ell_j \ell_i}(N_\ell)_{\ell_l \ell_k}}{m_I^2}\,,\ &
g_{RR}^{ij,kl}=\frac{(N_\ell)_{\ell_j \ell_i}(N_\ell)_{\ell_l \ell_k}}{m_R^2}-\frac{(N_\ell)_{\ell_j \ell_i}(N_\ell)_{\ell_l \ell_k}}{m_I^2}\,,
\end{align*}
and $N_\ell$ is the analogue, in the lepton sector, of $N_d$, i.e. the analogue of Eq.~(\ref{bgl1}) in the basis where $M_\ell$ is diagonal. Neglecting all masses except $m_{\ell_1}$, the width of the process is derived to be\footnote{The factor $(1+\delta_{\ell_2\ell_4})^{-1}$ takes into account the case of two identical particles in the final state.}
\begin{multline}
\Gamma(\ell_1^-\to\ell_2^-\ell_3^+\ell_4^-)=\frac{1}{1+\delta_{\ell_2\ell_4}}\frac{G_F^2 m_{\ell_1}^5}{3\cdot 2^{10}\pi^3}\times\\
\left\{
\abs{g_{LL}^{12,34}}^2+\abs{g_{LL}^{14,32}}^2+\abs{g_{RR}^{12,34}}^2+\abs{g_{RR}^{14,32}}^2+\abs{g_{LR}^{12,34}}^2+\abs{g_{LR}^{14,32}}^2\right.\\
\left.+\abs{g_{RL}^{12,34}}^2+\abs{g_{RL}^{14,32}}^2-\text{Re}\left[g_{LL}^{12,34}{g_{LL}^{14,32}}^\ast+g_{RR}^{12,34}{g_{RR}^{14,32}}^\ast\right]
\right\}\,.
\end{multline}
Experimental bounds on the corresponding branching ratios are collected in Section~\ref{AP:Input}.

\subsubsection{Neutral Meson mixings}
The NP short distance tree level contribution to the meson-antimeson transition amplitude\footnote{$M$ is the hermitian part of the effective hamiltonian describing the evolution of the two-level, meson-antimeson, system; $M_{12}$ is the dispersive transition amplitude.} $M_{12}^{NP}$ is \cite{Lavoura}
\begin{multline}
M_{12}^{NP}=\\
\sum_{H=R,I}
\frac{f_M^2m_M}{96v^2m^2_H}\left(\left(1+\left(\frac{m_M}{m_{q_1}+m_{q_2}}\right)^2\right)C_1(H)-\left(1+11\left(\frac{m_M}{m_{q_1}+m_{q_2}}\right)^2\right)C_2(H)\right)
\end{multline}
where $C_1(R)=(N_{q_2q_1}^*+N_{q_1q_2})^2$, $C_2(R)=(N_{q_2q_1}^*-N_{q_1q_2})^2$,
$C_1(I)=-(N_{q_2q_1}^*-N_{q_1q_2})^2$ and $C_2(I)=-(N_{q_2q_1}^*+N_{q_1q_2})^2$.
$q_1$ and $q_2$ refer to the valence quarks of the corresponding meson and $N$ is $N_u$ or $N_d$ for up-type or down-type quarks (and thus mesons).
\begin{figure}[h]
\begin{center}
\includegraphics[width=0.325\textwidth]{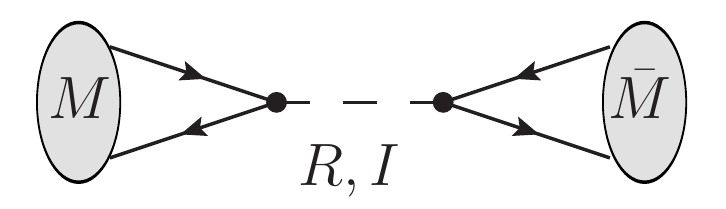}
\caption{Tree level $R,I$ mediated NP contributions to $M\to \bar M$.\label{fig:ljtoli_g}}
\end{center}
\end{figure}
For both $B^0_d$--$\bar B^0_d$ and $B^0_s$--$\bar B^0_s$ systems, the mass differences $\Delta M_{B_d}$ and $\Delta M_{B_s}$ are, to a very good approximation (namely $M_{12}^{B_q}\gg \Gamma_{12}^{B_q}$ with $\Gamma_{12}^{B_q}$ the absorptive transition amplitude), 
\[
\Delta M_{B_d}=2\abs{M_{12}^{B_d}}\,,\qquad \Delta M_{B_s}=2\abs{M_{12}^{B_s}}\,.
\]
In addition, time dependent CP violating asymmetries in $B^0_d\to J/\Psi K_S$ and $B^0_s\to J/\Psi \Phi$ decays constrain the phase of $M_{12}^{B_d}$ and $M_{12}^{B_s}$, respectively. We incorporate neutral B meson mixing constraints through the quantities
\[
\Delta_d=\frac{M_{12}^{B_d}}{[M_{12}^{B_d}]_{\rm SM}}\,,\qquad \Delta_s=\frac{M_{12}^{B_s}}{[M_{12}^{B_s}]_{\rm SM}}\,,
\]
according to \cite{CKMfitter}.

In $K^0$--$\bar K^0$, both $M_{12}^K$ and $\Gamma_{12}^K$ are relevant for the mass difference and thus we require that the NP contribution to $M_{12}^K$ 
does not exceed the experimental value of $\Delta M_K$.  In addition we take into
account the CP violating observable $\epsilon_K$,
\[
|\epsilon_K| = \frac{\text{Im}(M_{12}^K)}{\sqrt{2}\Delta M_K}\,,
\]
where the new contribution cannot exceed the experimental value.

For $D^0$--$\bar D^0$ long distance effects also prevent a direct connection between $M_{12}^D$ and $\Delta M_D$; as in $K^0$--$\bar K^0$, we then require  that the short distance NP contribution to $M_{12}^D$ does not give, alone, too large a contribution to
 $\Delta M_D$. Since this is the only existing up-type neutral meson system, the constraints on flavour changing neutral couplings arising from neutral meson mixings are tighter for neutral couplings to down quarks than they are for up quarks. The values used in the analysis are collected in Section~\ref{AP:Input}.

\subsubsection{Rare decays $M^0\to\ell_1^+\ell_2^-$}
Let us now consider mesons $M^0$ with valence quark composition $\bar q_2 q_1$.
In BGL models, the tree level induced NP terms in the effective Lagrangian relevant for the rare decays $M^0\to\ell_1^+\ell_2^-$ are:
\begin{equation}
\mathcal L_{\rm eff}^{NP} = 
-\frac{2 G_F}{\sqrt{2}}\sum_{\chi_1,\chi_2=L,R}\ c_{\chi_1\chi_2}^{12,12} \big[\bar q_2\gamma_{\chi_1} q_1\big]\left[\bar\ell_2\gamma_{\chi_2} \ell_1\right]
\end{equation}
with
\begin{align*}
&c_{LL}^{ij,kl}=\frac{(N_q^\dagger)_{q_j q_i}(N_\ell^\dagger)_{\ell_l \ell_k}}{m_R^2}-\frac{(N_q^\dagger)_{q_j q_i}(N_\ell^\dagger)_{\ell_l \ell_k}}{m_I^2}\,,\ &
c_{RL}^{ij,kl}=\frac{(N_q)_{q_j q_i}(N_\ell^\dagger)_{\ell_l \ell_k}}{m_R^2}+\frac{(N_q)_{q_j q_i}(N_\ell^\dagger)_{\ell_l \ell_k}}{m_I^2}\,,\\
&c_{LR}^{ij,kl}=\frac{(N_q^\dagger)_{q_j q_i}(N_\ell)_{\ell_l \ell_k}}{m_R^2}+\frac{(N_q^\dagger)_{q_j q_i}(N_\ell)_{\ell_l \ell_k}}{m_I^2}\,,\ &
c_{RR}^{ij,kl}=\frac{(N_q)_{q_j q_i}(N_\ell)_{\ell_l \ell_k}}{m_R^2}-\frac{(N_q)_{q_j q_i}(N_\ell)_{\ell_l \ell_k}}{m_I^2}\,.
\end{align*}
Notice that for the lepton flavour violating modes $M^0\to\ell_1^+\ell_2^-$ with $\ell_1\neq\ell_2$, the SM contribution to the effective Lagrangian is absent, this is no longer
true in $\ell_2=\ell_1$ case.
\begin{figure}[h]
\begin{center}
\includegraphics[width=0.35\textwidth]{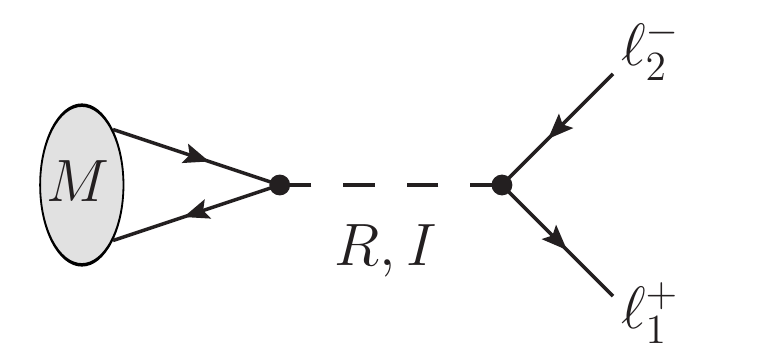}
\caption{Tree level $R,I$ mediated NP contributions to $M\to \ell_1^+\ell_2^-$.\label{fig:Mtoll2}}
\end{center}
\end{figure}

In the notation of appendix 7 of reference \cite{Crivellin:2013wna}, the Wilson coefficients read
\begin{align*}
&C_S^{q_2q_1} = -\frac{\sqrt{2}\pi^2}{G_FM_W^2}\left(c_{LR}^{12,12}+c_{LL}^{12,12}\right) \,, 
&C_P^{q_2q_1} = -\frac{\sqrt{2}\pi^2}{G_FM_W^2}\left(c_{LR}^{12,12}-c_{LL}^{12,12}\right)\,,\\
&C_S^{\prime\,q_2q_1} = -\frac{\sqrt{2}\pi^2}{G_FM_W^2}\left(c_{RR}^{12,12}+c_{RL}^{12,12}\right)\,,
&C_P^{\prime\,q_2q_1} = -\frac{\sqrt{2}\pi^2}{G_FM_W^2}\left(c_{RR}^{12,12}-c_{RL}^{12,12}\right)\,.
\end{align*}

The different modes and measurements used in the analysis are collected in Section~\ref{AP:Input}. It should be noted that while the previous type of short distance 
contributions dominate the rate for $B_s$ and $B_d$ decays, the situation is more involved in other cases. For example, for $K_L\to\mu^+\mu^-$ decays, the rate is dominated by the intermediate $\gamma\gamma$ state \cite{kaons} and NP is constrained through the bounds on the short distance SM+NP contributions.

\subsection{Loop level processes\label{SEC:ExpConst-sSEC:Loop}}
In the previous subsections we have listed observables useful to constrain the flavour changing couplings of the BGL models; their common characteristic is the possibility of having
NP contributions at tree level. In this subsection we address  two important rare decays where NP only contributes at loop level: $\ell_j\to\ell_i\gamma$ and $B\to X_s\gamma$.

\subsubsection{$\ell_j \to \ell_i \gamma$}
Lepton flavour violating (LFV) processes like $\mu\to e\gamma$ or $\tau\to \mu\gamma$ are in general a source of severe constraints for models with FCNC, like the BGL models we are considering in this work. The reason, anticipated for $\ell_1\to \ell_2\bar\ell_3\ell_4$ decays, is that these processes are negligible in the SM (their amplitudes are proportional to $m_{\nu_k}^2/m_W^2\ll 1$), while in the BGL case we expect loop contributions from neutral Higgs  flavour changing couplings proportional to
$m_{\ell_k}^2/m_{R,I}^2$. Moreover, and contrary to other 2HDM, the charged Higgs can also be relevant here, as the non-unitarity of the matrices controlling the couplings $H^{-}\bar{\ell}_j\nu_k$ and $H^{+}\bar{\nu}_k\ell_i$ leads to contributions 
proportional to $ m_{\ell_j}m_{\ell_i}/m_{H^{\pm}}^2$ (which would otherwise cancel out when summing over all generations of neutrinos running in the loop).
\begin{figure}[h]
\begin{center}
\begin{subfigure}{0.35\textwidth}
\includegraphics[width=\textwidth]{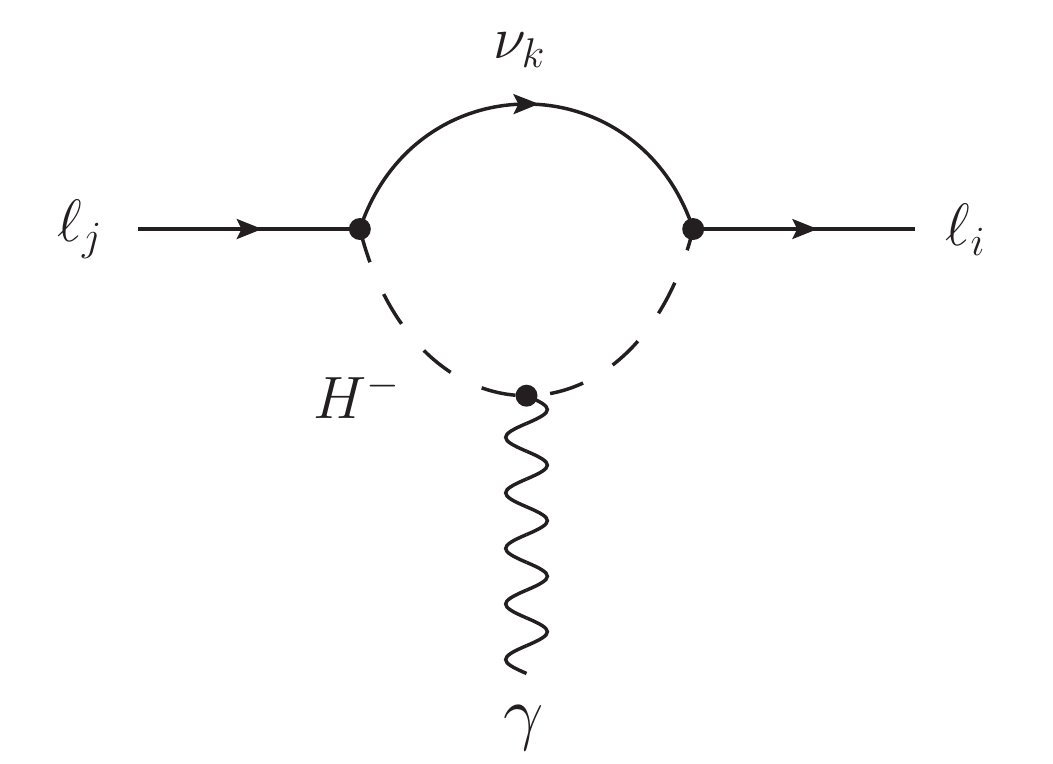}
\caption{$H^\pm$mediated.\label{fig:a}}
\end{subfigure}
\qquad
\begin{subfigure}{0.35\textwidth}
\includegraphics[width=\textwidth]{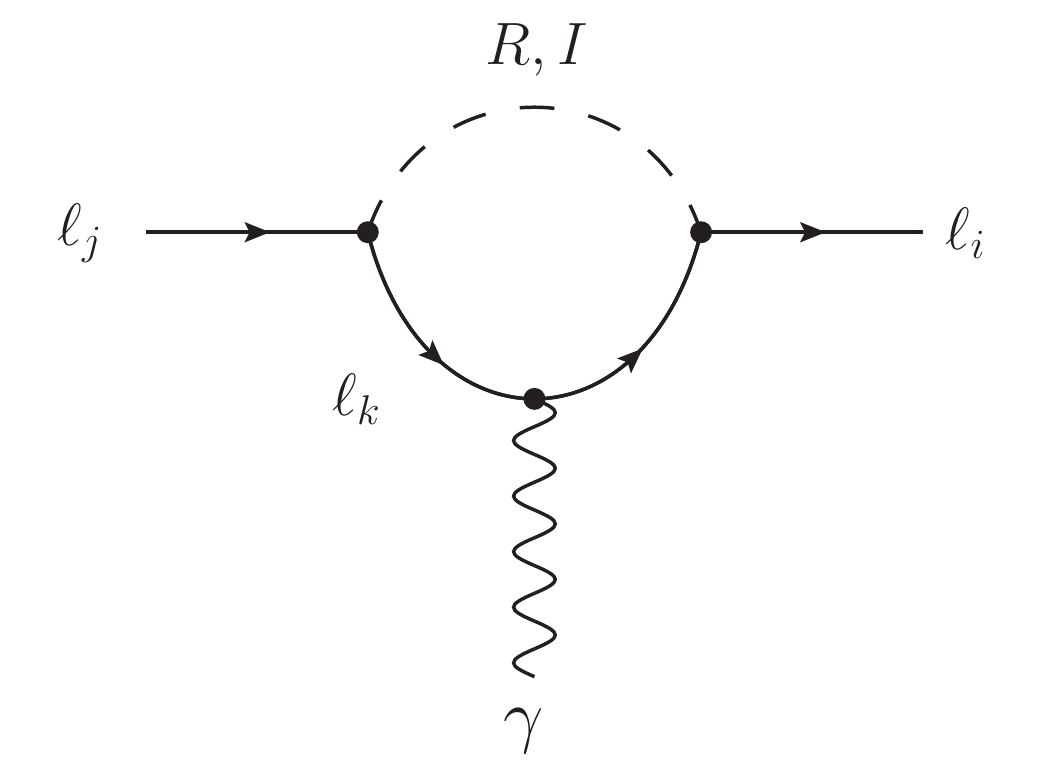}
\caption{$R,I$ mediated.\label{fig:b}}
\end{subfigure}
\caption{NP contributions to $\ell_j\to\ell_i\gamma$.\label{fig:ljtoli_g}}
\end{center}
\end{figure}
For on-shell photon and external fermions, the $\ell_j\to \ell_i\gamma$ amplitude is completely described by a dipole transition, see e.g. \cite{delAguila:2008zu},
\begin{equation}
i\mathcal{M}=ie\left[\mathcal{A}_R \gamma_R+ \mathcal{A}_L\gamma_L\right]\sigma^{\mu\nu}q_{\nu}\epsilon_{\mu},
\end{equation}
with $q^{\mu}$ the incoming photon momentum. The corresponding decay width is
\begin{equation}
\Gamma(\ell_j \to \ell_i\gamma)=\frac{\alpha m_{\ell_j}^5 G_F^2}{128\pi^4}\left[\left|\mathcal{A}_L\right|^2+\left|\mathcal{A}_R\right|^2\right]\,.
\end{equation}
Up to terms of $\mathcal{O}(m_{\ell_i}/m_{\ell_j})$ -- note that
$N_\ell^{ik}$ is proportional to $m_{\ell_k}$ --, the coefficients $\mathcal{A}_R$ and $\mathcal{A}_L$ are derived to be:
\begin{multline}
\mathcal{A}_R=
\sum_{k}\left\{\frac{1}{12 m_{R}^2}N_\ell^{ik}N_\ell^{jk\ast}
-\frac{1}{2m_{R}^2}N_\ell^{ik}N_\ell^{kj}\frac{m_{\ell_k}}{m_{\ell_j}}
\left[\frac{3}{2}+\ln\left(\frac{m_{\ell_k}^2}{m_{R}^2}\right)\right]\right.\\
+\left.\frac{1}{12 m_I^2}N_\ell^{ik}N_\ell^{jk\ast}
+\frac{1}{2m_{I}^2}N_\ell^{ik}N_\ell^{k j}\frac{m_{\ell_k}}{m_{\ell_j}}\left[\frac{3}{2}+\ln\left(\frac{m_{\ell_k}^2}{m_{I}^2}\right)\right]\right\},
\end{multline}
\begin{multline}
\mathcal{A}_L=
\sum_{k}\left\{-\frac{1}{12m_{H^{\pm}}^2}(N_\ell^{\dagger}U)^{i k}(N_{\ell}^{\dagger}U)^{j k\ast}
+\frac{1}{12m_{R}^2}N_{\ell}^{ki\ast}N_{\ell}^{kj}
+\frac{1}{12m_{I}^2}N_{\ell}^{k i\ast}N_{\ell}^{k j}\right\},
\end{multline}
where we have neglected contributions proportional to the neutrino masses $m_{\nu_k}\approx 0$ as well as subleading terms in $m_{\ell_k}^2/m_{R,I}^2$.  

In some cases, two-loop contributions for $\ell_{j}\to \ell_{i}\gamma$ can dominate over the one-loop ones \cite{Bjorken:1977vt,Chang:1993kw}. This is related to the fact that, due to the required chirality flip, we need three mass insertions at one loop level. However, there are two-loop contributions with only one chirality flip in the $\ell_j-\ell_i$ fermion line. Therefore, in some cases they can compensate the extra loop factor
by avoiding two small Yukawa couplings. We roughly estimate the two-loop contribution as 
\begin{eqnarray}
\Gamma(\ell_j\to \ell_i\gamma)_{\mathrm{2-loop}}\approx \frac{\alpha m_{\ell_j}^5 G_F^2}{128 \pi^4}\left(\frac{\alpha}{\pi}\right)^2[\left|\mathcal{C}\right|^2+\left|\mathcal{D}\right|^2],
\end{eqnarray}
where
\begin{equation}
\mathcal{C}=\frac{2}{m_{R}^2}
(N_{\ell})_{ij}(N_{u})_{tt}\frac{m_t}{m_{\ell_j}}\ln^2\left(\frac{m_t^2}{m_{R}^2}\right)\quad
\text{and}\quad
\mathcal{D}=\frac{2}{m_{I}^2}
(N_{\ell})_{ij}(N_{u})_{tt}\frac{m_t}{m_{\ell_j}}\ln^2\left(\frac{m_t^2}{m_{I}^2}\right).
\end{equation}

\subsubsection{$\bar B\to X_s\gamma$}
 The other important rare decay, now in the quark sector, is $\bar B\to X_s\gamma$, induced by the quark level transition $b\to s\gamma$.
Similarly to the LFV processes $\ell_j\to\ell_i\gamma$ considered before, NP contributions due to the exchange of both neutral and charged Higgs are present. Although the contributions coming from the latter case are naively expected to be dominant, due to the relative enhancement coming from  the top mass insertion -- i.e. proportional to
$ m_{t}^2/m_{H^{\pm}}^2$ versus $ m_{b}^2/m_{R,I}^2$ --, we cannot neglect diagrams with FCNC because this effect can be compensated by $\tan\beta$ enhancements. The effective Hamiltonian describing this transition is\cite{Misiak:2006ab, *Blanke:2011ry, *Blanke:2012tv,Buras:2011zb}:
\begin{equation}
\mathcal{H}_{\rm eff}(b\to s\gamma)=-\frac{4 G_F}{\sqrt{2}} V_{tb}V_{ts}^{\ast}\left[C_7(\mu_b)\mathcal{O}_7+C_7^{\prime}(\mu_b)\mathcal{O}_7^{\prime}+C_8(\mu_b)\mathcal{O}_8+C_8^{\prime}(\mu_b)\mathcal{O}_8^{\prime}\right],
\end{equation}
with new effective operators $\mathcal{O}_{7}^{\prime}$ and $\mathcal{O}_8^{\prime}$, which are absent in the SM besides terms $\mathcal{O}(m_s/m_b)$. $C_{7,8}(\mu)$ and $C_{7,8}^{\prime}(\mu)$ are the Wilson coefficients of the dipole operators
\begin{eqnarray}
\mathcal{O}_7&=&\frac{e}{16\pi^2}m_b \bar{s}_{L,\alpha}\sigma^{\mu\nu}b_{R,\alpha}F_{\mu\nu},\qquad	\mathcal{O}_8=\frac{g_s}{16\pi^2}m_b \bar{s}_{L,\alpha}\left(\frac{\lambda^a}{2}\right)_{\alpha\beta}\sigma^{\mu\nu}b_{R,\beta}G_{\mu\nu}^a,\\
\mathcal{O}_7^{\prime}&=&\frac{e}{16\pi^2}m_b \bar{s}_{R,\alpha}\sigma^{\mu\nu}b_{L,\alpha}F_{\mu\nu},\qquad  \mathcal{O}_8^{\prime}=\frac{g_s}{16\pi^2}m_b \bar{s}_{R,\alpha}\left(\frac{\lambda^a}{2}\right)_{\alpha\beta}\sigma^{\mu\nu}b_{L,\beta}G_{\mu\nu}^a,\qquad~
\end{eqnarray}
evaluated at the scale $\mu_b=\mathcal{O}(m_b)$, with $F_{\mu\nu}$ and $G_{\mu\nu}^a$ 
denoting the electromagnetic and gluon field strength tensors, and $\lambda^{a}$, $a=1,
\ldots,8$, standing for the Gell-Mann matrices. 
\begin{figure}[h]
\begin{center}
\begin{subfigure}{0.6\textwidth}
\includegraphics[width=0.5\textwidth]{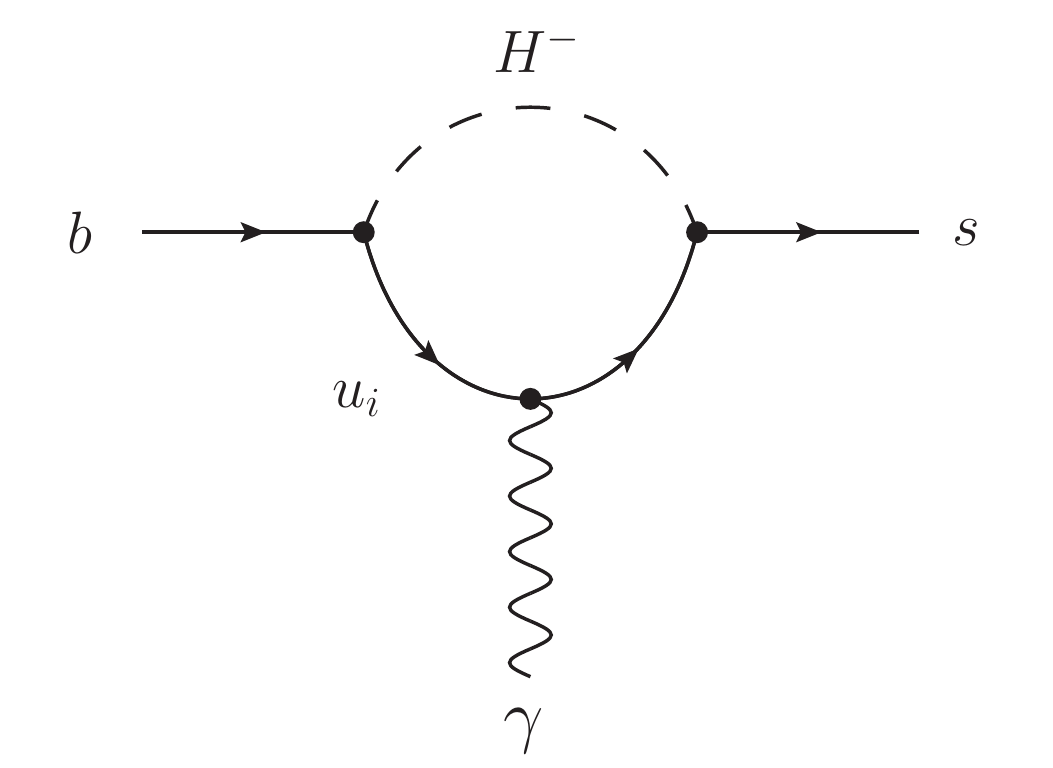}
\quad\includegraphics[width=0.5\textwidth]{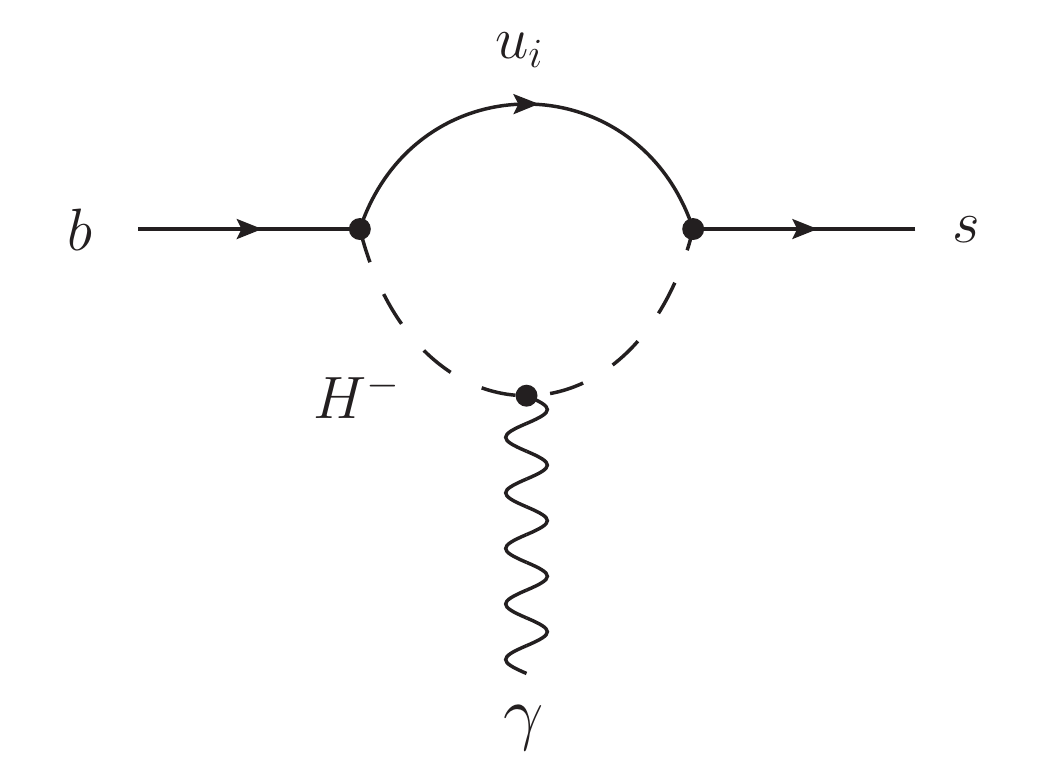}
\caption{$H^\pm$ mediated.\label{fig:1}}
\end{subfigure}
\quad
\begin{subfigure}{0.3\textwidth}
\includegraphics[width=\textwidth]{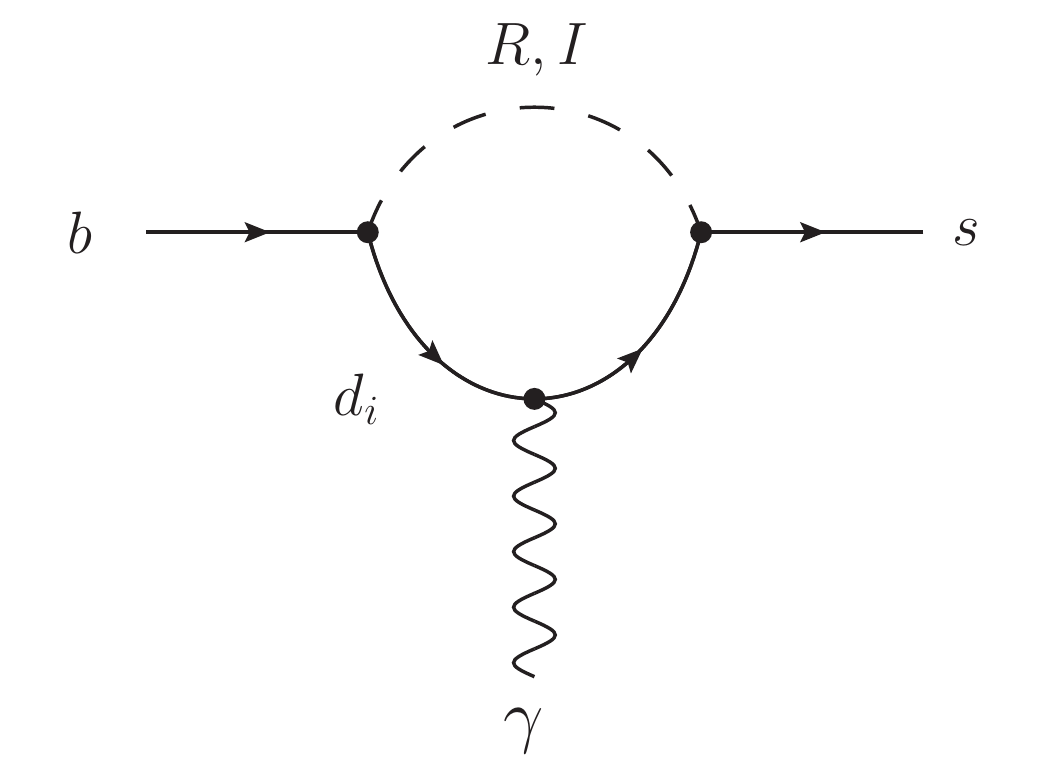}
\caption{$R,I$ mediated.\label{fig:2}}
\end{subfigure}
\caption{NP contributions to $b\to s\gamma$.\label{fig:btosg}}
\end{center}
\end{figure}

We then constrain the BGL contribution to $b\to s\gamma$ using the master formula \cite{Misiak:2006ab, *Blanke:2011ry, *Blanke:2012tv,Buras:2011zb}
\begin{equation}
\mathrm{Br}\left(\bar{B}\to X_s\gamma\right)=\mathrm{Br}_{\rm SM}+0.00247\left[|\Delta C_7(\mu_b)|^2+|\Delta C_7^{\prime}(\mu_b)|^2-0.706\mathrm{Re} \left(\Delta C_7(\mu_b)\right)\right],\label{eq:bsgamma}
\end{equation}
where $\mathrm{Br}_{\rm SM}=\mathrm{Br}(\bar{B}\to X_s\gamma)_{\rm SM}=(3.15\pm 0.23)\times 10^{-4}$ is the SM prediction at NNLO \cite{Gambino:2001ew, *Misiak:2006zs,Misiak:2006ab} and we have split the SM and the NP contributions to the relevant Wilson coefficients
\begin{equation}
C_{7}^{(\prime)}(\mu)=C_{7,\mathrm{SM}}^{(\prime)}(\mu)+\Delta C_7^{(\prime)}(\mu), \qquad 	C_{8}^{(\prime)}(\mu)=C_{8,\mathrm{SM}}^{(\prime)}(\mu)+\Delta C_8^{(\prime)}(\mu).
\end{equation}
The value obtained from equation (\ref{eq:bsgamma}) has to be compared with the experimental measurement \cite{Asner:2010qj}
\begin{equation}
\mathrm{Br}(B\to X_s \gamma)_{\rm exp}= \left(3.55 \pm 0.27\right) \times 10^{-4}.
\end{equation}
The Wilson coefficients  $\Delta C_{7,8}^{}(\mu)$ and $\Delta C_{7,8}^{\prime}(\mu)$ are computed at the high energy scale $\tilde{\mu}=\mathcal{O}(m_{H^{\pm}})\sim\mathcal{O}(m_{R,I})$ at one-loop in perturbation theory, and then run down to $\mu_b$ using RGE \cite{Grinstein:1990tj,*Buras:1993xp}:
\begin{equation}
\Delta C_7^{(\prime)}(\mu_b)\approx \eta^{\frac{16}{23}}\Delta C_7^{(\prime)}(\tilde{\mu})+\frac{8}{3}\left(\eta^{\frac{14}{13}}-\eta^{\frac{16}{23}}\right)\Delta C_8^{(\prime)}(\tilde{\mu}),
\end{equation}
where $\eta=\alpha_s(\tilde{\mu})/\alpha_s(\mu_b)$. FCNC might also affect the running of these Wilson coefficients through new operators which are not present in the SM, similarly to what happens in the case of flavour changing neutral gauge bosons \cite{Buras:2011zb}. However, the impact of this effect is expected to be subleading, and its study is well beyond the scope of this chapter. 

The relevant Wilson coefficients are derived to be:
\begin{multline}
\Delta C_7(\tilde{\mu})=\frac{1}{2}\frac{1}{V_{ts}^\ast V_{tb}}
\sum_k\left\{\frac{1}{m_{H^{\pm}}^2}(V^{\dagger}N_{u})_{sk}
\left((V^{\dagger}N_{u})^{\ast}_{bk}A^{(2)}_H(x_{H^{\pm}}^k)
+(N^{\dagger}_d V^{\dagger})^{\ast}_{bk}\frac{m_{u_k}}{m_{b}}A_H^{(3)}(x_{H^{\pm}}^k)\right)\right.\\
-(N_d)_{sk}(N_d)^{\ast}_{bk}\left(\frac{Q_{d}}{m_{R}^2}A_H^{(0)}(y_R^{k})+\frac{Q_{d}}{m_{I}^2}A^{(0)}_H(y_I^{k})\right)\\
-\left.(N_d)_{sk}(N_d)_{kb}\frac{m_{d_k}}{m_{b}}\left(\frac{Q_{d}}{m_{R}^2}A^{(1)}_H(y_R^k)-\frac{Q_{d}}{m_{I}^2}A^{(1)}_H(y_I^k)\right)\right\},
\end{multline}
\begin{multline}
\Delta C_7^{\prime}(\tilde{\mu})=\frac{1}{2}\frac{1}{V_{ts}^\ast V_{tb}}
\sum_k\left\{\frac{1}{m_{H^{\pm}}^2}(N_d^{\dagger}V^{\dagger})_{sk}(N_d^{\dagger}V^{\dagger})^{\ast}_{bk}A_{H}^{(2)}(x_{H^{\pm}}^k)\right.\\
-(N_d)^{\ast}_{ks}(N_d)_{kb}\left.\left(\frac{Q_d}{m_{R}^2}A_H^{(0)}(y_R^{k})+\frac{Q_d}{m_{I}^2}A^{(0)}_H(y_I^{k})\right)\right\},
\end{multline}
\begin{multline}
\Delta C_8(\tilde{\mu})=\frac{1}{2}\frac{1}{V_{ts}^\ast V_{tb}}
\sum_k\left\{2\frac{1}{m_{H^{\pm}}^2}(V^{\dagger}N_{u})^{sk}\left(-(V^{\dagger}N_{u})^{\ast}_{bk}A^{(0)}_H(x_{H^{\pm}}^k)+(N^{\dagger}_d V^{\dagger})^{\ast}_{bk}\frac{m_{u_k}}{m_{b}}A^{(1)}_H(x_{H^{\pm}}^k)\right)\right.\\
-(N_d)_{sk}(N_d)^{\ast}_{bk}\left(\frac{1}{m_{R}^2}A_H^{(0)}(y_R^{k})+\frac{1}{m_{I}^2}A^{(0)}_H(y_I^{k})\right)\\
-\left.(N_d)_{sk}(N_d)_{kb}\frac{m_{d_k}}{m_{b}}\left(\frac{1}{m_{R}^2}A_H^{(1)}(y_R^{k})-\frac{1}{m_{I}^2}A^{(1)}_H(y_I^{k})\right)\right\},
\end{multline}
\begin{multline}
\Delta C_8^{\prime}(\tilde{\mu})=\frac{1}{2}\frac{1}{V_{ts}^\ast V_{tb}}
\sum_k\left\{-2\frac{1}{m_{H^{\pm}}^2}(N_d^{\dagger}V^{\dagger})_{sk}(N_d^{\dagger}V^{\dagger})^{\ast}_{bk}A_{H}^{(0)}(x_{H^{\pm}}^k)\right.\\
-(N_d)^{\ast}_{ks} (N_d)_{kb}\left.\left(\frac{1}{m_R^2}A_{H}^{(0)}(y_R^k)+\frac{1}{m_I^2}A_{H}^{(0)}(y_I^k)\right)\right\},
\end{multline}
where $Q_d=-1/3$ and $x_{H^{\pm}}^k=m_{u_k}^2/m_{H^{\pm}}^2$, $y_{R,I}^k=m_{d_k}^2/m_{R,I}^2$. The loop functions $A_H^{(i)}$ are:

\begin{align*}
&A_{H}^{(0)}(x)=\frac{2+3x-6x^2+x^3+6x\ln x}{24(1-x)^4},\quad  A_H^{(2)}(x)=\frac{-7+5x+8x^2}{36(1-x)^3}+\frac{x(-2+3x)\ln x}{6(1-x)^4},\\
&A_{H}^{(1)}(x)=\frac{-3+4x-x^2-2\ln x}{4(1-x)^3},\quad A_H^{(3)}(x)=\frac{-3+8x-5 x^2-(4-6 x)\ln x}{6 (1-x)^3}.
\end{align*}

\subsubsection{Electric dipole moments and anomalous magnetic moments}
 NP induced one loop contributions to the electric dipole moments (EDM) of leptons and quarks are absent in BGL models. For example, in the leptonic case, the contribution to the flavor conserving $\ell_i\to\ell_i\gamma$ dipole transition is real while the EDM is proportional to the imaginary part.
For the anomalous magnetic moments, we checked that the NP induced one loop contributions appearing in BGL models are too small to have significant impact on the results -- once other constraints are used --, in agreement with \cite{Crivellin:2013wna}, so we are not considering their constraints to save processing time during the simulation despite the fact that the results do not change noticeably whether we include them or not.
Addressing two loop contributions to electric and magnetic dipole moments is beyond the scope of this chapter -- see for example \cite{Jung:2013hka}.
\subsubsection{Precision Electroweak Data}
 The previous subsections have covered representative flavour related low energy processes that are able to constrain the masses of the new scalar together with $\tan\beta$.
Electroweak precision data also play an important  role. The observables included in the analysis for that purpose are the $Z\bar bb$ effective vertex and the oblique parameters $S$, $T$ and $U$.

For the $Z\bar bb$ vertex probed at LEP, BGL models introduce new contributions mediated by the charged and by the neutral scalars. The effects mediated by $H^\pm$  are typically the most relevant ones, see e.g. \cite{HernandezSanchez:2012eg}.  In our case, similarly to what happens in $b\to s\gamma$, neutral contributions can also be relevant but, as a first estimate, we just consider the charged ones \cite{Pich:2010}
\begin{equation}
F_{Zb\bar b}=\frac{|C^{tb}|-0.72}{m_H^\pm}< 0.0024\text{ \GeV}^{-1}\,,
\end{equation}
where once again $C^{tb}=-1/\tan\beta$ for BGL models of quark types $t$ and $b$, and $C^{tb}=\tan\beta$ otherwise.

For the oblique parameters, as discussed in \cite{O'Neil:2009nr}, the contributions to $S$ and $U$ in 2HDM tend to be small. This is not the case for the $T$ parameter which receives corrections that can be sizable. In BGL models, the NP contribution $\Delta T$ to $T=T_{{\rm SM}}+\Delta T$  \cite{Grimus:2007if, *Grimus:2008nb} is
\begin{equation}
\Delta T=\frac{1}{16 \pi m_W^2s_W^2}\left\{F(m_{H^{\pm}}^2,m_{R}^2)-F(m_I^2,m_{R}^2)+F(m_{H^{\pm}}^2,m_{I}^2)\right\}
\label{drho2}
\end{equation}
with
\[
F(x,y)=\frac{x+y}{2}-\frac{xy}{x-y}\ln\frac{x}{y}\,,
\]
so that $F(x,x)= 0$, while for $\Delta S$
\begin{equation}
\Delta S=\frac{1}{24\pi}\left\{\left(2s_W^2-1\right)^2G(m_{H^{\pm}}^2,m_{H^{\pm}}^2,m_Z^2)+G(m_{R}^2,m_{I}^2,m_Z^2)+\ln\left[\frac{m_{R}^2m_{I}^2}{m_{H^{\pm}}^4}\right]\right\},\qquad 
\end{equation}
where
\begin{multline}
G(x,y,z)=-\frac{16}{3}+5\frac{x+y}{z}-2\frac{(x-y)^2}{z^2}+\frac{r}{z^3}f(t,r)\\
    +\frac{3}{z}\left[\frac{x^2+y^2}{x-y}-\frac{x^2-y^2}{z}+\frac{(x-y)^3}{3z^2}\right]\ln\frac{x}{y},\end{multline}
with $r=z^2-2z(x+y)+(x-y)^2$, $t=x+y-z$ and
\begin{equation}
f(t,r)=
\begin{cases}
\sqrt{r}\ln\left|\frac{t-\sqrt{r}}{t+\sqrt{r}}\right|&r>0,\\
2 \sqrt{-r}\arctan\frac{\sqrt{-r}}{t}&r<0.
\end{cases}
\end{equation}

\clearpage
\section{Results\label{SEC:Results}}
In the previous section we have presented a large set of relevant observables that can constrain the different BGL models, excluding regions of the parameter space $\{\tan\beta, m_{H^\pm},$ $m_R, m_I\}$ where the NP contributions are not compatible with the available experimental information. Following the methodology described in Section~\ref{AP:Analysis}, we apply those constraints to each one of the 36 BGL models: the main aim of this general study is to understand where could the masses of the new scalars lie and how does this depend on $\tan\beta$. However, before addressing the main results for the complete set of BGL models, an important aspect has to be settled: since we have three different scalars, we should in principle obtain allowed regions in the $\{\tan\beta, m_{H^\pm}, m_R, m_I\}$ parameter space, and then project them to the different subspaces for each BGL model, e.g. $m_{H^\pm}$ vs. $\tan\beta$, $m_{R}$ vs. $\tan\beta$, etc. The oblique parameters, in particular $\Delta T$, help us to simplify the picture. For degenerate $H^\pm$, $R$ and $I$, according to Eq.~(\ref{drho2}), $\Delta T=0$; in general, for almost degenerate $H^\pm$, $R$ and $I$, the oblique parameters are in agreement with experimental data\footnote{$\Delta T$ alone is not sufficient; considering only $\Delta T$, for $m_{H^\pm}=m_I$, $m_R$ would be free to vary but $\Delta S$ prevents it.  Analogously, for $m_{H^\pm}=m_R$, $\Delta T=0$ irrespective of
$m_I$.
In addition, in the experimental constraint, $\Delta T$ and $\Delta S$ are correlated.}. This is explored and illustrated in figure \ref{fig:onea} for one particular model: $m_R$ vs. $m_{H^\pm}$ and $m_R$ vs. $m_I$ allowed regions are displayed when the oblique parameters constraints are used.  Therefore, even though we treated all three scalar masses 
independently and on equal basis, we only present results in terms of $m_{H^\pm}$ for simplicity.

\begin{figure}[!htb]
\centering
\includegraphics[width=0.45\textwidth]{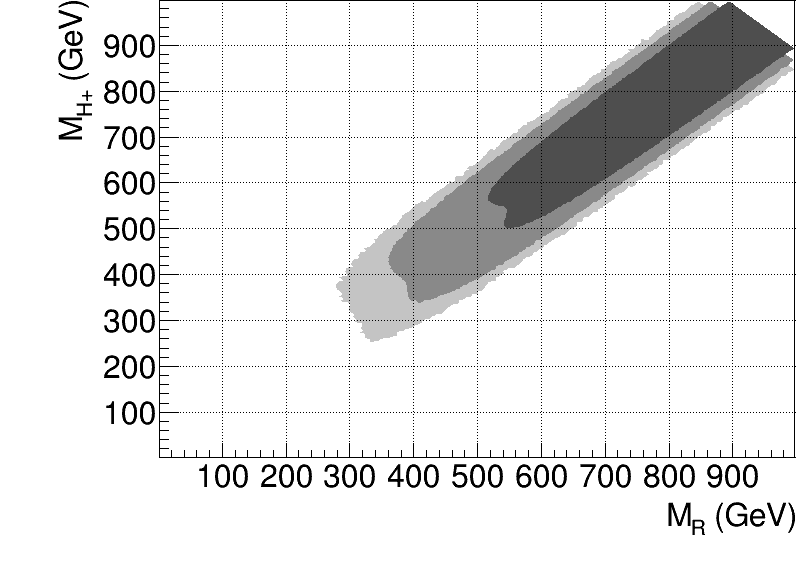}
\includegraphics[width=0.45\textwidth]{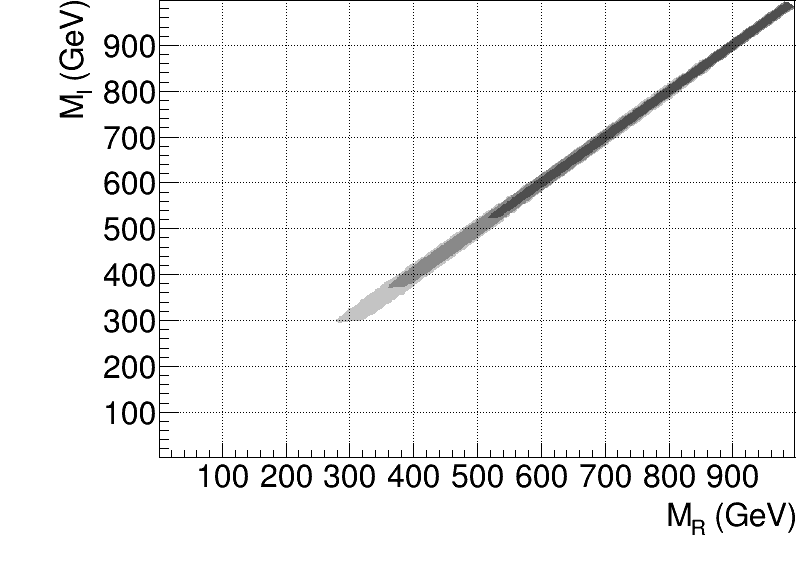}
\caption{Effect of the oblique parameters constraints in model $(t,\tau)$, all other constraints are also applied. For all other BGL models the width of the strips is the same.}
\label{fig:onea}
\end{figure}
In figures \ref{FIG:Results01} and \ref{FIG:Results02} we present the allowed regions -- corresponding to 68\%, 95\% and 99\% confidence levels (CL) -- in the $(m_{H^\pm},\tan\beta)$ plane for the 36 different BGL models. They deserve several comments.

\begin{itemize}
\item The experimental bounds for FCNC in the up sector are more relaxed than for the down sector, but the models with tree level FCNC in the up sector are not less constrained than the ones with tree level FCNC in the down sector, due to the $b\to s\gamma$ constraints on the charged Higgs mass. 

\item It should be emphasized that among the BGL models, the ones of types $t$ and $b$ guarantee a stronger suppression of the FCNC due to the hierarchical nature of the CKM matrix, so one would expect them to be less constrained. However, $b\to s\gamma$ frustrates this expectation. In fact, the models of type $d$ are less constrained than the $s$ and $b$ ones, while for up type models there is no clear trend.

\item For the leptonic part, since the experimental bounds on tree level FCNC in the neutrino sector are irrelevant -- due to the smallness of neutrino masses --, $e$, $\mu$ and $\tau$ models are typically less constrained than $\nu_i$ models. This can be seen in figure \ref{FIG:Results01}, whereas in figure \ref{FIG:Results02} differences are minute, signifying then that leptonic constraints are secondary once other constraints are imposed.

\item Lower bounds on the scalar masses lie in between 100 and 400 GeV for many models, which put them within range of direct searches at the LHC. Nevertheless some exceptions deserve attention: for models of types $s$ and $b$, the lightest masses are instead in the 400-500 GeV range. Notice in addition that in models of types $s$ and $b$ the allowed values of $\tan\beta$ span a wider range than in the rest of models.

\item One aspect that is interesting on its own but would require specific attention beyond the scope of the present work, is the following: in many models isolated allowed regions for light masses appear. That is, for the considered set of observables, the scalar masses and $\tan\beta$ can still be tuned to agree with experimental data within these reduced regions. Higher order contributions than the ones used in section \ref{SEC:ExpConst}, additional observables and direct searches may then be used to further constrain these parameter regions.

\item As a final comment it should be noticed that some of the $t$ type models, the ones that correspond to the MFV framework as defined 
in \cite{Buras:2000dm} or \cite{D'Ambrosio:2002ex}, can be very promising. However
this is not a unique feature of these implementations since, as can be seen
from our figures, there are several others that allow for light scalars.

There is an independent
analysis of one of the BGL scenarios discussed here, including
in addition the decay signatures of the new scalers\cite{Bhattacharyya:2014nja};
this analysis agrees with our conclusion concerning the feasibility of more than one light Higgs boson.

\end{itemize}

\clearpage
\begin{figure}[!htb]
\centering
\includegraphics[width=\textwidth]{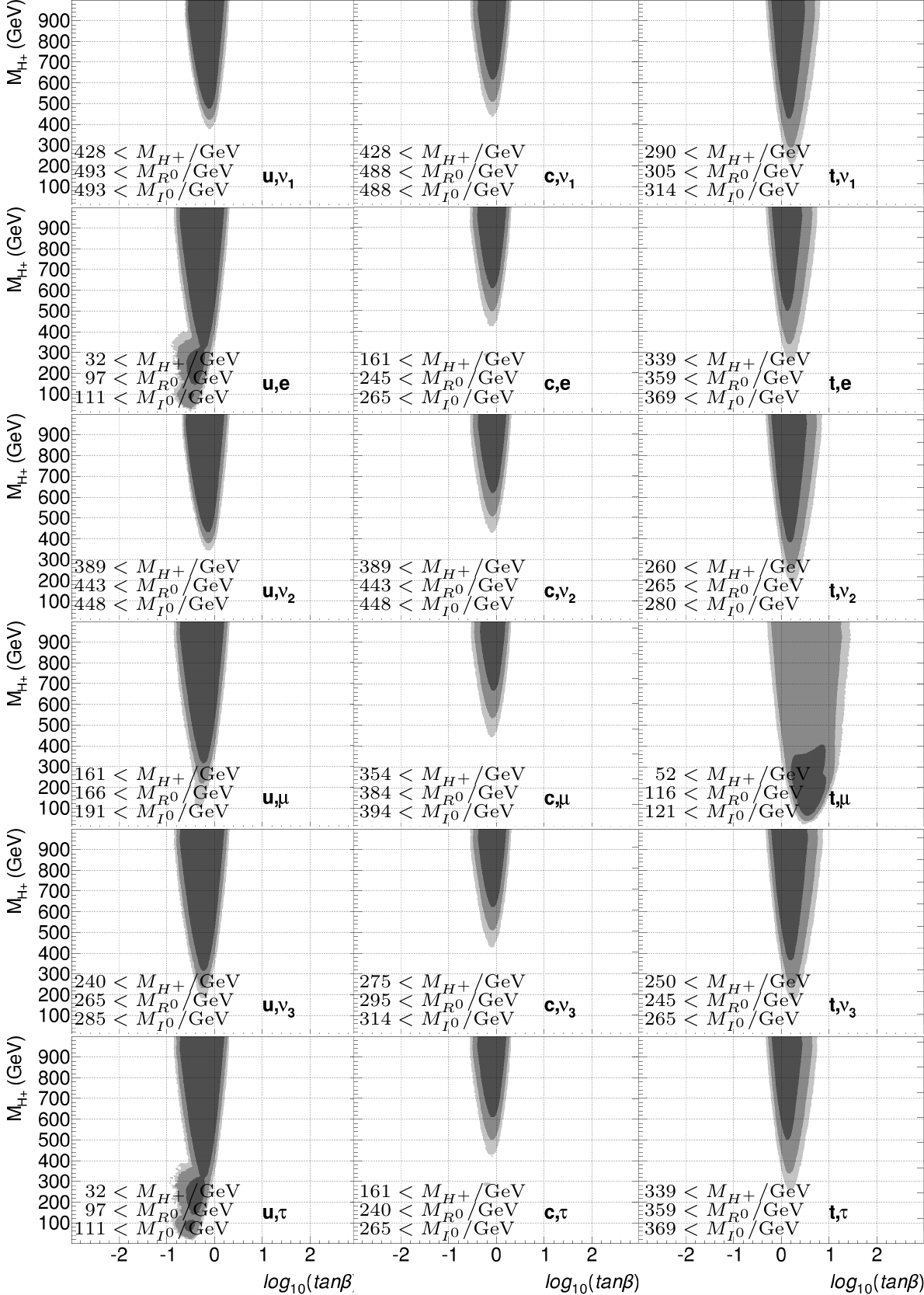}
\caption{Allowed 68\% (black), 95\% (gray) and 99\% (light gray) CL regions in $m_{H^\pm}$ vs. $\tan\beta$ for BGL models of types $(u_i,\nu_j)$ and $(u_i,\ell_j)$, i.e. for models with FCNC in the down quark sector and in the charged lepton or neutrino sector (respectively). Lower mass values corresponding to 95\% CL regions are shown in each case.\label{FIG:Results01}}
\end{figure}

\clearpage
\begin{figure}[!htb]
\centering
\includegraphics[width=\textwidth]{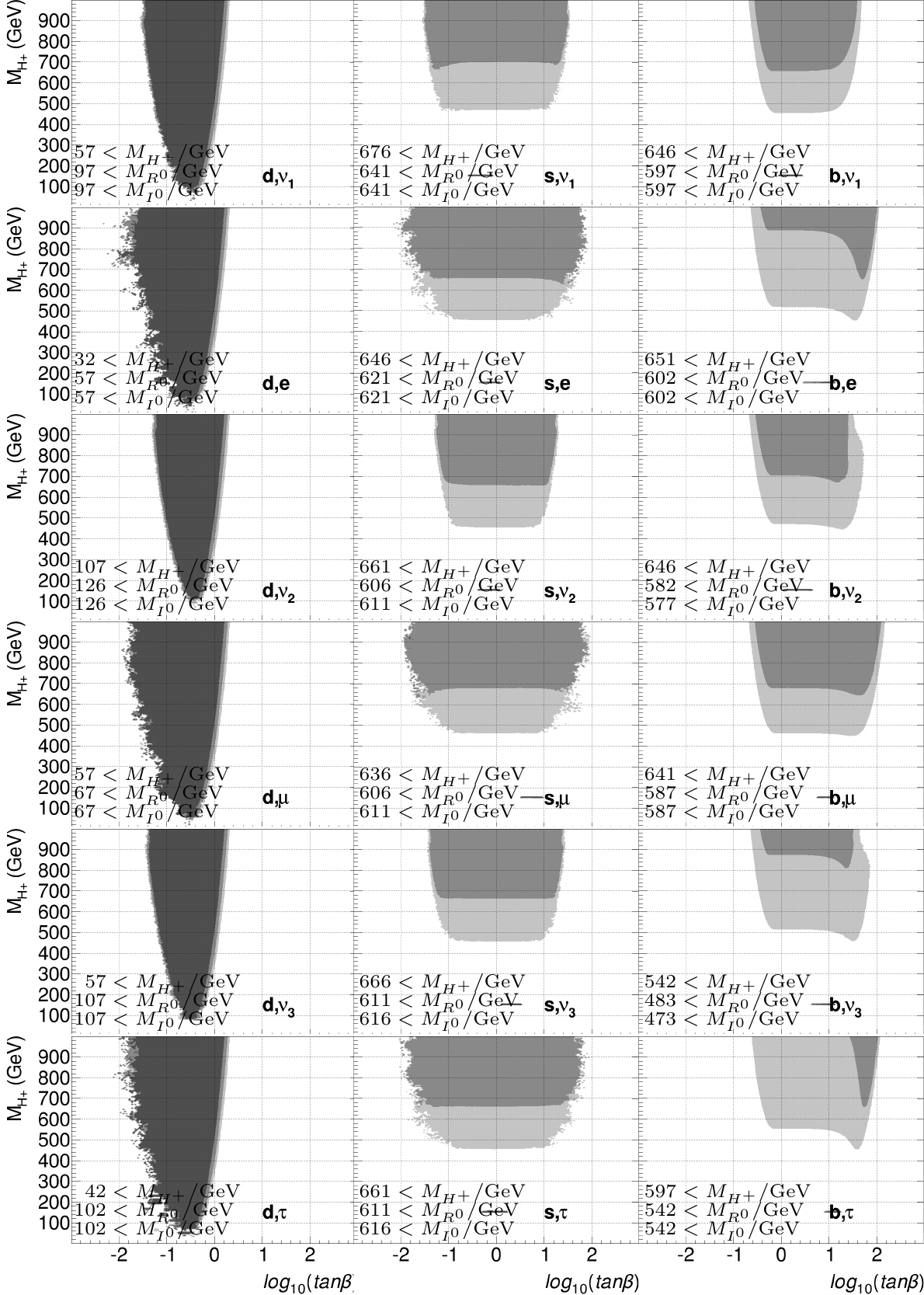}
\caption{Allowed 68\% (black), 95\% (gray) and 99\% (light gray) CL regions in $m_{H^\pm}$ vs. $\tan\beta$ for BGL models of types $(d_i,\nu_j)$ and $(u_i,\ell_j)$, i.e. for models with FCNC in the up quark sector and in the charged lepton or neutrino sector (respectively). Lower mass values corresponding to 95\% CL regions are shown in each case.\label{FIG:Results02}}
\end{figure}
\clearpage

\clearpage{}

\cleartooddpage

\clearpage{}\phantomsection
\addcontentsline{toc}{part}{II\ \ On the real representations of the Poincare
  group}
\chapter{On the real representations of the Poincare group}

\begin{epigraphs}
\qitem{A state \emph{[of a spin-0 elementary system]} which is localized
at the origin in one coordinate system, is not localized in a moving coordinate system, 
even if the origins coincide at t=0.
Hence our \emph{[position]} operators have no simple
covariant meaning under relativistic transformations.[...]

For higher but finite \emph{[spin of a massless representation]} s, beginning with s=1
(i.e. Maxwell's equations) we found that no localized states in the
above sense exist. This is an unsatisfactory, if not unexpected,
feature of our work.}
{---\textup{E.P.Wigner \& T.D.Newton (1949)\cite{newton}}}
\qitem{The concepts
of mathematics are not chosen for their conceptual simplicity---even
sequences of pairs of numbers \emph{[i.e. the real numbers]} are far from being
the simplest concepts---but for their amenability to clever
manipulations and to striking, brilliant arguments. Let us not forget
that the Hilbert space of quantum mechanics is the complex Hilbert
space, with a Hermitean scalar product. Surely
to the unpreoccupied mind, complex numbers are far from natural or simple
and they cannot be suggested by physical observations. Furthermore, the
use of complex numbers is in this case not a calculational trick of applied
mathematics but comes close to being a necessity in the formulation of
the laws of quantum mechanics.}
{---\textup{E.P.Wigner (1959)\cite{wignerquote}}}

\qitem{
Historically, confusion reigned in the relativistic
case, because situations requiring a description in
terms of many particles were squeezed into a formalism built to describe a single particle.\emph{[...]}
 
The essential result of Newton and Wigner is that
for single particles a notion of localizability and a
corresponding position observable are uniquely determined by relativistic kinematics when 
they exist at all. Whether, in fact, the position of such a particle
is observable in the sense of the quantum theory of
measurement is, of course, a much deeper problem;
that probably can only be decided within the context
of a specific consequent dynamical theory of particles.\emph{[...]}

I. For every Borel set, $S$, of three-dimensional Euclidean space, $\mathbb{R}^3$,
there is a projection operator $E(S)$ whose expectation value is the probability of finding
the system in $S$.\\
II. $E(S_1\cap S_2)=E(S_1)E(S_2)$.\\
III. $E(S_1\cup S_2)=E(S_1)+E(S_2)-E(S_1\cap S_2)$. If $S_i$, $i=1,2,...$ are disjoint sets then
$E(\cup S_i)=\sum_{i=1} E(S_i)$.\\
IV. $E(\mathbb{R}^3)=1$.\\
V'. $E(R\,S+\mathbf{a})=U(\mathbf{a},R)E(S)U(\mathbf{a},R)^{-1}$, where $R\,S+\mathbf{a}$ is the set obtained from $S$ 
by carrying out the rotation $R$ followed by the translation $\mathbf{a}$, and $U(\mathbf{a},R)$ is the unitary operator
whose application yields the wave function rotated by $R$ and translated by  $\mathbf{a}$.\\
\emph{[...]} I venture to say that any notion of localizability in three-dimensional space which does not satisfy I...V'
will represent a radical departure from present physical ideas.
}{---\textup{A.S. Whightman (1962)\cite{wightman}}}
\end{epigraphs}

\section{Introduction}
\label{intro}

\subsection{Motivation}
The Poincare
group was first introduced as the set of transformations
that leave invariant the Maxwell equations for the classical
electromagnetic field. The complex representations of the Poincare group were systematically
studied\cite{wigner,*ohnuki,*poincare,*knapp} 
and used in the definition of quantum fields\cite{feynmanrules,*symmetry}.

The formulation of quantum mechanics 
with a complex Hilbert space is equivalent to a formulation with a real Hilbert space and 
particular density matrix and observables\cite{realQM}. Moreover, for time-independent Hamiltonian, quantum mechanics 
can be defined as the eigenvalue problem $(H(\vec{x})-E)\Psi(\vec{x})=0$\cite{timeqm1}, 
in the relativistic version, the energy may be replaced by the mass squared in the equation $((\eta^{\mu\nu}\partial_\mu\partial_\nu)-m^2)\Psi(x)=0$.
Quantum Theory on real Hilbert spaces
was investigated before\cite{realqft,*realqftII,*realqftIII,*quantumstatistics,*hestenes_old},
the main conclusion was that the formulation of non-relativistic Quantum Mechanics with a 
real Hilbert space is necessarily equivalent to the complex formulation. 
We could not find in the literature a systematic study on the real
representations of the Poincare group, as it seems to be 
common assumptions:\\
1) since non-relativistic Quantum Mechanics is necessarily 
complex then the relativistic version must also be---it is hard to accept this
as relativistic causality requires the existence of anti-particles\cite{weinberg};\\
2) the energy positivity implies complex Poincare representations---it is not necessarily true as
only in a many-particles description the energy positivity is well defined, 
remember the Feynman--Stueckelberg or the Dirac sea interpretations of anti-particles\cite{diracsea,*feynmanstuckelberg};\\
3) Quantum Field Theory based on the Wightman axioms (which assume complex Poincare representations)
is the most general framework incorporating the physics principles of Quantum Mechanics and Poincare covariance---
the quantization of gauge fields does not respect Wightman axioms\cite{nonperturbativefoundations},
attempts to define non-perturbatively a Quantum Field Theory with gauge interactions involve string theory or 
space-times with dimensions lower than 4, Euclidean metric or toroidal topology, in this context studying the real 
Poincare representations cannot be considered a departure from physics principles.

The reasons motivating the study of the real representations of the Poincare group are:

1) The real representations of the Poincare group play an
important role in the classical electromagnetism and general
relativity\cite{classicalfields,etg} and in Quantum Theory---
e.g. the Higgs boson, Majorana fermion or quantum electromagnetic fields transform as real
representations under the action of the Poincare group.

2) The parity---included in the full Poincare group---and
charge-parity transformations are not symmetries of the Electroweak
interactions\cite{brancocp}. It is not clear why the charge-parity is
an apparent symmetry of the Strong interactions\cite{strongcp} or how
to explain the matter-antimatter asymmetry\cite{imbalance} through the
charge-parity violation. Since the self-conjugate finite-dimensional 
representations of the identity component of the Lorentz group are also 
representations of the parity, this work may be useful in future studies
of the parity and charge-parity violations.

3) The localization of complex irreducible
unitary representations of the Poincare
group is incompatible with causality, Poincare covariance and
energy positivity\cite{localization,causality,*hegerfeldt,thaller,*thaller2,stringfields}, while the
complex representation corresponding to the photon is not
localizable\cite{newton,vara}. 
In contrast to the classical theory, in Quantum Field Theory with gauge interactions
it impossible to define the electric charge localization of a large family of
charged states in a meaningful way\cite{nonperturbativefoundations}.
Since the free Dirac equation is 
self-conjugate in the Majorana basis, this study may be useful to the
definition of a Poincare covariant position operator as a
projection-valued measure (which Wightman considered to express the 
physical idea of localizability\cite{wightman}).

\subsection{Systems on real and complex Hilbert spaces}
The position operator in Quantum Mechanics is mathematically expressed using
a system of imprimitivity: a set of projection operators---
associated with the coordinate space---acting on a Hilbert space;
a group of symmetries acting both on the Hilbert space and on the coordinate space 
in a consistent way\cite{vara,mathQM}.

Many representations of a group---such as the finite-dimensional
representations of semisimple Lie  groups\cite{Hall} or the unitary
representations of separable locally compact
groups\cite{locallycompact}---are direct sums (or integrals) of
irreducible representations, hence the study of these representations 
reduces to the study of the irreducible representations.

If the set of normal operators commuting with an irreducible real unitary 
representation of the Poincare group is isomorphic to the quaternions or to the complex numbers,
then the most general position operator that the representation space admits is not complex linear,
but real linear. Therefore, in this case, the real irreducible representations 
generalize the complex ones and these in turn generalize the quaternionic ones.

The study of irreducible representations on complex Hilbert
spaces is in general easier than on real Hilbert spaces, because the
field of complex numbers is the algebraic closure ---where any
polynomial equation has a root--- of the field of real numbers. 
There is a well studied map, one-to-one or two-to-one and surjective
up to equivalence, from the complex to the real linear
finite-dimensional irreducible representations of a real Lie
algebra\cite{realalgebras,*realirrep}.

Section \ref{section:Systems} reviews and extends that map from the complex
to the real irreducible representations---finite-dimensional or
unitary---of a Lie group on a Hilbert space. 
Using Mackey's imprimitivity theorem, we extend the map further to systems of 
imprimitivity. This section follows closely the reference\cite{realalgebras}, with the 
addition that we will also use the Schur's lemma for unitary representations
on a complex Hilbert space\cite{schur}.

Related studies can be found in the references\cite{compactlie,spinorsrealhilbert}.

\subsection{Finite-dimensional representations of the Lorentz group}

The Poincare group, also called inhomogeneous Lorentz group, is the
semi-direct product of the translations and Lorentz Lie
groups\cite{Hall}. Whether or not the Lorentz and Poincare groups
include the parity and time reversal transformations depends on the
context and authors. To be clear, we use the prefixes full/restricted
when including/excluding parity and time reversal transformations. The
Pin(3,1)/SL(2,C) groups are double covers of the full/restricted
Lorentz group\cite{pin}. The semi-direct product of the translations with the
Pin(3,1)/SL(2,C) groups is called IPin(3,1)/ISL(2,C) Lie group --- the
letter (I) stands for inhomogeneous.

A projective representation of the Poincare group on a complex/real
Hilbert space is an homomorphism, defined up to a complex phase/sign,
from the group to the automorphisms of the Hilbert space. Since the
IPin(3,1) group is a double cover of the full Poincare group, their
projective representations are the same\cite{pin}. 
All finite-dimensional projective representations of a simply
connected group, such as SL(2,C), are usual
representations\cite{weinberg}.
Both SL(2,C) and Pin(3,1) are semi-simple Lie groups, and so all its
finite-dimensional representations are direct sums of irreducible
representations\cite{Hall}. Therefore, the study of the
finite-dimensional projective representations of the restricted
Lorentz group reduces to the study of the finite-dimensional
irreducible representations of SL(2,C).

The Dirac spinor is an element of a 4 dimensional complex vector
space, while the Majorana spinor is an element of a 4 dimensional real
vector space\cite{todorov,*irreducible,*pal,*dreiner}. The complex
finite-dimensional irreducible representations of SL(2,C) can be
written as linear combinations of tensor products of Dirac spinors.

In Section \ref{section:Finite} we will review the Pin(3,1) and
SL(2,C) semi-simple Lie groups and its relation with the Majorana,
Dirac and Pauli matrices. We will obtain all the real
finite-dimensional irreducible representations of SL(2,C) as linear
combinations of tensor products of Majorana spinors, using the map
from Section \ref{section:Systems}.
Then we will check that all these real representations are also
projective representations of the full Lorentz group, in contrast with
the complex representations which are not all projective
representations of the full Lorentz group. We could not find these results 
explicitly in the literature but they are straightforward to derive and so 
probably known by some people, the results are derived here for completeness
and explicitness.

\subsection{Unitary representations of the Poincare group}

According to Wigner's theorem, the most general transformations,
leaving invariant the modulus of the internal product of a 
Hilbert space, are: unitary or anti-unitary operators, defined up to a
complex phase, for a complex Hilbert space; unitary, defined
up to a signal, for a real Hilbert space\cite{wignertheorem,vara}. This motivates
the study of the (anti-)unitary projective representations of the full
Poincare group.

All (anti-)unitary projective representations of ISL(2,C) are, up to
isomorphisms, well defined unitary representations, because ISL(2,C)
is simply connected\cite{weinberg}. Both ISL(2,C) and IPin(3,1) are
separable locally compact groups and so all its (anti-)unitary
projective representations are direct integrals of irreducible
representations\cite{locallycompact}. Therefore, the study of the
(anti-)unitary projective representations of the restricted Poincare
group reduces to the study of the unitary irreducible representations
of ISL(2,C).

The spinor fields, space-time dependent spinors, are solutions of the
free Dirac equation\cite{Dirac}. The real/complex Bargmann-Wigner
fields\cite{BW,*allspins}, space-time dependent linear combinations of
tensor products of Majorana/Dirac spinors, are solutions of the free
Dirac equation in each tensor index. The complex unitary irreducible
projective representations of the Poincare group with discrete spin or
helicity can be written as complex Bargmann-Wigner fields.

In Section \ref{section:Unitary}, we will obtain all the real
unitary irreducible projective representations of the Poincare
group, with discrete spin or helicity, as real Bargmann-Wigner fields, using the
map from Section~\ref{section:Systems}. For each pair of complex representations (of ISL(2,C))
with positive/negative energy, there is one real representation.
We will define the Majorana-Fourier and Majorana-Hankel unitary
transforms of the real Bargmann-Wigner fields, 
relating the coordinate space with the linear and angular momenta
spaces.
We show that any localizable unitary representation of the Poincare group (ISL(2,C)), 
compatible with Poincare covariance, verifies: 
1) it is a direct sum of irreducible representations which are massive or massless with discrete helicity.
2) it respects causality;
3) if it is complex it contains necessarily both positive and negative energy subrepresentations
4) it is an irreducible representation of the Poincare group (including parity) if and only if it is: 
a)real and b)massive with spin 1/2 or massless with helicity 1/2. If a) and b) are verified the position operator 
matches the coordinates of the Dirac equation.

The free Dirac equation is diagonal in the Newton-Wigner
representation\cite{newton}, related to the Dirac representation
through a Foldy-Wouthuysen transformation\cite{revfoldy,foldy} of
Dirac spinor fields. The Majorana-Fourier transform, when applied on
Dirac spinor fields, is related with the Newton-Wigner representation
and the Foldy-Wouthuysen transformation. In the context of Clifford
Algebras, there are studies on the geometric square roots of -1 
\cite{squareroot} and on the
generalizations of the Fourier transform\cite{clifford}, with
applications to image processing\cite{image}.
It was showed before that point localized local quantum fields---operator valued distributions
satisfying the Wightman axioms--- cannot be a massless infinite spin representation
\cite{zeromass}.

The current literature related with the position operator of 
representations of the Poincare group include:
modular and string-like localization in the context of local quantum field theory 
\cite{modular0,*modular,*modular2,*modular3};
non-commutative coordinates
\cite{noncommutative,*noncommutative2};
coordinates based on equations
\cite{equations1,*equations2};
unsharp (fuzzy) localization using positive operator valued measures
\cite{povm1,*povm2,*povm3,*povm4,*povm5};
localization of the energy density
\cite{energydensity0,*energydensity1,*energydensity2,*energydensity3,*energydensity4};
two dimensional\cite{2D} or axial symmetric\cite{axial} or space-time\cite{spacetime} localization of photons; 
pseudo-hermitian representations of Quantum Mechanics\cite{pseudohermitian}. 
All the above mentioned approaches departure, in one way or another, 
from using the system of imprimitivity to implement the position operator for a unitary representation of the Poincare group.
Finally taking the Newton-Wigner position seriously has the problem that for spin one-half the position 
does not coincide with the coordinates appearing in the Dirac equation, 
with all the phenomenological consequences that it implies\cite{seriousNW,*covariance}.
The results presented in Section \ref{section:Unitary} are a motivation to not  
departure from the systems of imprimitivity to describe the position of relativistic systems.

\subsection{Energy Positivity}

While it seems that the localization of particles in either relativistic quantum mechanics\cite{modular0} or relativistic quantum field theory\cite{localizedparticles} is not possible (at least with the properties we would expect), it is not clear whether this is a limitation of the relativistic Quantum framework itself, or due to some properties of our definition of particles which are incompatible with a proper definition of localization. For instance, it is expected that by specifying the energy-momentum properties of the vacuum we then may have troubles to define the localization of all the states related with the vacuum---a consequence of the Reeh-Schlieder theorem\cite{schlieder}---, since momentum and position do not commute; that does not imply that the localization cannot be defined at all within the relativistic Quantum framework once we relax the energy-momentum properties of the vacuum.

In non-relativistic Quantum Mechanics the time is invariant under the
Galilean transformations ---excluding the time reversal transformation---and so 
the generator of translations in time is also invariant. 
Therefore, the positivity of the Energy and the localization in space of a state
can be defined simultaneously.
In relativistic Quantum Mechanics, the time is not invariant under Lorentz transformations, as a consequence 
the positivity of the Energy and the localization in space of a state cannot be defined simultaneously---
the corresponding projection operators do not commute.

In the framework of Algebraic QFT(related with the Wightman axioms), there is a definition for Quantum Field Theories with interactions in terms of formal power series in the coupling constants\cite{modular0}, which is based on the canonical quantization where the positivity of Energy is well defined
by construction and the localization problem is handled by introducing anti-particles---causality implies the
existence of anti-particles\cite{weinberg}, a related approach led Dirac to predict the positron\cite{diracsea}.
Yet, it is also possible to build a description of a many particles system where the localization in space of a state 
is well defined by construction and the Energy positivity problem can be handled with the Feynman--Stueckelberg interpretation for anti-particles, as Energy positivity and localization are complementary. Dirac himself was the first to consider an approach which do not assume the positivity of Energy by construction\cite{negativeprobs} 
and quantization in de Sitter space-time may be achieved in a related approach\cite{desitter}.

If we need to revisit known results at a deep level to non-perturbatively define a Quantum Field Theory with interactions\cite{prize}, defining the position operator with a projection-valued measure(which Wightman considered to represent the physical idea of localizability) in a many-particles system seems to be useful. In the known formulations of Algebraic QFT there are no pure states (potentially preferring an ensemble interpretation of Quantum Mechanics)\cite{modular0} and so the position operator is not defined with a projection-valued measure. 

The description of a many-particles system based on the definition of the position operator with a projection-valued measure will be discussed in the section\ref{section:Energy}.

\section{Systems on real and complex Hilbert spaces}
\label{section:Systems}

\begin{defn}[System]
A system $(M,V)$ is defined by:\\
1) the (real or complex) Hilbert space $V$;\\
2) a set $M$ of bounded endomorphisms on $V$.
\end{defn}

The representation of a symmetry is an example of a system: a representation space plus a set of operators 
representing the action of the symmetry group in the representation space\cite{representations}. 

\begin{defn}[Complexification]
Consider a system $(M,W)$ on a
real Hilbert space. The system $(M,W^c)$ is the complexification of the
system $(M,W)$, defined as 
$W^c\equiv  \mathbb{C}\otimes W$, with the multiplication by scalars
such that $a(b w)\equiv (ab)w$
for $a,b\in \mathbb{C}$ and $w\in W$.
The internal product of $W^c$ is defined---for $u_r,u_i,v_r,v_i\in W$ and $<v_r,u_r>$ the internal product of $W$---as:
\begin{align*}
<v_r+i v_i,u_r+i u_i>_c\equiv
<v_r,u_r>+<v_i,u_i>+i<v_r,u_i>-i<v_i,u_r>
\end{align*}
\end{defn}

\begin{defn}[Realification]
Consider a system $(M,V)$ on a
complex Hilbert space. 
The system$(M,V^r)$ is the realification of the
system $(M,V)$, defined as 
$V^r\equiv V$ is a real Hilbert space 
with the multiplication by scalars restricted to reals
such that $a(v)\equiv (a+i0)v$ 
for $a\in \mathbb{R}$ and $v\in V$. 
The internal product of $V^r$ is defined---for $u,v\in V$ and $<v,u>$ is the internal product of $V$---as:
\begin{align*}
<v,u>_r\equiv \frac{<v,u>+<u,v>}{2}
\end{align*}
\end{defn}

\begin{rmk}
Let $H_n$, with $n\in\{1,2\}$, be two Hilbert spaces with internal
products $<,>:H_n\times H_n\to
\mathbb{F}$,($\mathbb{F}=\mathbb{R},\mathbb{C}$).
A (anti-)linear operator $U:H_1\to H_2$ is (anti-)unitary iff:\\
1) it is surjective;\\
2) for all $x\in H_1$, $<U(x) , U(x)>=<x, x>$.
\end{rmk}

\begin{prop}
\label{prop:unitary}
Let $H_n$, with $n\in\{1,2\}$, be two complex Hilbert spaces and
$H^r_n$ its complexification.
The following two statements are equivalent:

1) The operator $U:H_1\to H_2$ is (anti-)unitary;

2) The operator $U^r:H_1^r\to H_2^r$ is (anti-)unitary, where 
$U^r(h)\equiv U(h)$, for $h\in H_1$.
\end{prop}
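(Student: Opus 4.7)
The plan is to verify the three constituents of (anti-)unitarity---surjectivity, (anti-)linearity, and inner-product preservation---separately. Surjectivity is immediate because $H_n$ and $H_n^r$ share the same underlying set, so $U$ and $U^r$ are identical as set-maps and one is onto if and only if the other is. Similarly, any complex-linear or complex-anti-linear map automatically restricts to a real-linear map on the realifications, since in both cases multiplication by real scalars commutes with $U$; thus the (anti-)linearity clause is a trivial compatibility check.

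The substantive content is the preservation of the inner product. For the direction (1) $\Rightarrow$ (2), if $U$ is complex-unitary then $\langle U(x),U(y)\rangle = \langle x,y\rangle$, and taking real parts immediately gives $\langle U(x),U(y)\rangle_r = \langle x,y\rangle_r$; if $U$ is complex-anti-unitary then $\langle U(x),U(y)\rangle = \overline{\langle x,y\rangle}$, whose real part still coincides with $\re\langle x,y\rangle$, so the same conclusion holds.

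For the converse (2) $\Rightarrow$ (1), the key observation is that the imaginary part of the complex inner product can be recovered from the real part through the polarisation-type identity $\im\langle v,w\rangle = -\re\langle v, i w\rangle$. Assuming only $\re\langle U(x),U(y)\rangle = \re\langle x,y\rangle$ for all $x,y \in H_1$, I would substitute $y \mapsto i y$ and exploit the relation $U(iy) = \pm i\, U(y)$, with the plus sign when $U$ is complex-linear and the minus sign when $U$ is complex-anti-linear. In the linear case one extracts $\im\langle U(x),U(y)\rangle = \im\langle x,y\rangle$, yielding $\langle U(x),U(y)\rangle = \langle x,y\rangle$; in the anti-linear case one obtains $\im\langle U(x),U(y)\rangle = -\im\langle x,y\rangle$, equivalent to $\langle U(x),U(y)\rangle = \overline{\langle x,y\rangle}$.

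I do not expect any serious obstacle: the argument is purely algebraic and hinges on the identity relating the real and complex inner products. The one point meriting attention is that the equivalence genuinely uses the hypothesis---already built into the phrase ``(anti-)unitary''---that $U$ is either complex-linear or complex-anti-linear. A bare real-unitary map on $H^r$ need not descend from a complex (anti-)linear map, since the real orthogonal group on $\mathbb{R}^{2n}$ is strictly larger than the union of the unitary and anti-unitary groups on $\mathbb{C}^n$; so the equivalence lives at the level of maps already known to be complex (anti-)linear, and this must be explicitly kept in mind when applying the proposition.
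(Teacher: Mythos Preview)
Your argument is correct, but the paper takes a shorter path. In the remark immediately preceding the proposition, (anti-)unitarity is characterised by surjectivity together with \emph{norm} preservation, $\langle U(x),U(x)\rangle=\langle x,x\rangle$, rather than full inner-product preservation. Since the diagonal of a Hermitian form is always real, one has $\langle h,h\rangle=\langle h,h\rangle_r$ for every $h$, and because $U^r(h)=U(h)$ as set maps, the norm-preservation condition is literally the same equation in both pictures. That is the entire proof in the paper: one line.

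Your approach instead verifies preservation of the off-diagonal inner product directly, recovering the imaginary part via $\im\langle v,w\rangle=-\re\langle v,iw\rangle$ and the (anti-)linearity of $U$. This is essentially the polarisation argument that underlies the norm characterisation, so the two routes are equivalent in content; yours just unpacks what the paper has already packaged into the preceding remark. A benefit of your version is that it makes explicit the point you flag at the end---that the equivalence genuinely relies on $U$ being complex-linear or complex-anti-linear, not merely real-linear---whereas the paper's one-line proof leaves that hypothesis implicit in the invocation of the remark.
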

\begin{proof}
Since $<h,h>=<h,h>_r$ and 
$U^r(h)=U(h)$, for $h\in H_1$, we get the result.
\end{proof}

\begin{defn}[Equivalence]
Consider the systems $(M,V)$ and $(N,W)$:\\
1) A normal endomorphism of $(M,V)$ is a bounded endomorphism $S:V\to V$
commuting with $S^\dagger$ and $m$, for all $m\in M$; an anti-endomorphism in a 
complex Hilbert space is an anti-linear endomorphism;\\
2) An isometry of $(M,V)$ is a unitary operator $S:V\to V$
commuting with $m$, for all $m\in M$;\\
3) The systems $(M,V)$ and $(N,W)$ are unitary equivalent iff there is a isometry
$\alpha:V\to W$ such that $N=\{\alpha m\alpha^{\dagger}: m\in M\}$.\\
\end{defn}

We use the trivial extension of the definition of irreducibility from representations to systems.

\begin{defn}[Irreducibility]
Consider the system $(M,V)$ and let $W$ be a linear subspace of $V$:\\
1) $(M,W)$ is a (topological) subsystem of $(M,V)$
iff $W$ is closed and invariant under the system action, that is, for all $w\in W$:$(m w)\in W$, for all $m\in M$;\\
2) A system $(M,V)$ is (topologically) irreducible iff their only 
sub-systems are the non-proper $(M,V)$ or trivial  $(M,\{0\})$ sub-systems, 
where $\{0\}$ is the null space.
\end{defn}

\begin{defn}[Structures]
1) Consider a system $(M,V)$ on a
complex Hilbert space. 
A C-conjugation operator of $(M,V)$ is an 
anti-unitary involution of $V$ commuting with $m$,
for all $m\in M$;\\
2) Consider a system $(M,W)$ on a
real Hilbert space. 
A R-imaginary operator of $(M,W)$, $J$, is an isometry of
$(M,W)$ verifying $J^2=-1$. 
\end{defn}

\subsection{The map from the complex to the real systems}

\begin{defn}
Consider an irreducible system $(M,V)$ on a
complex Hilbert space:\\
1) The system is C-real iff there is a C-conjugation
operator;\\
2) The system is C-pseudoreal iff 
there is no C-conjugation operator but there is an
anti-unitary operator of $(M,V)$;\\
3) The system is C-complex iff 
there is no anti-unitary operator of $(M,V)$.
\end{defn}

\begin{defn}
\label{defn:Rsystem}
Consider the system $(M,W)$ on a
real Hilbert space and let $(M,W^c)$ be its complexification:
1) $(M,W)$ is R-real iff $(M,W^c)$ is C-real irreducible;\\
2) $(M,W)$ is R-pseudoreal iff $(M,V)$ is C-pseudoreal irreducible,
with $W^c=V\oplus \bar V$;
3) $(M,W)$ is R-complex iff $(M,V)$ is C-complex irreducible, with
$W^c=V\oplus \bar V$.
\end{defn}

\begin{prop}
Any irreducible real system is R-real or R-pseudoreal or R-complex.
\end{prop}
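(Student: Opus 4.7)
The plan is to exploit the canonical antilinear involution $\kappa : W^c \to W^c$ defined by $\kappa(x+iy)=x-iy$ for $x,y\in W$. This $\kappa$ is antiunitary, squares to the identity, and commutes with every element of $M$, since elements of $M$ act on $W^c$ as the complex-linear extensions of real-linear maps on $W$. The key observation I would establish first is a Galois-type correspondence: a closed complex subspace $U\subseteq W^c$ is $\kappa$-invariant if and only if $U=U_0^c$ for the closed real subspace $U_0=\{u\in U:\kappa u = u\}\subseteq W$; and under this correspondence, $M$-invariance of $U$ is equivalent to $M$-invariance of $U_0$.

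With this in hand I would split into two cases. If $(M,W^c)$ is irreducible, then $\kappa$ itself is a C-conjugation operator, so $(M,W^c)$ is C-real and $(M,W)$ is R-real by Definition \ref{defn:Rsystem}. If $(M,W^c)$ is reducible, choose a minimal closed nonzero $M$-invariant subspace $V\subset W^c$ (existence via a Zorn-type argument on the lattice of closed $M$-invariant subspaces); then $(M,V)$ is irreducible, and because $\kappa$ commutes with $M$, so is $(M,\bar V)$ with $\bar V:=\kappa V$. The subspace $V\cap \bar V$ is both $\kappa$- and $M$-invariant, hence the complexification of an $M$-invariant real subspace of $W$; irreducibility of $(M,W)$ forces this real part to be $0$ or $W$, and the latter would give $V=\bar V=W^c$, contradicting our choice of $V$ as proper. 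So $V\cap\bar V=\{0\}$. The same reasoning applied to the closure of $V+\bar V$ yields $\overline{V+\bar V}=W^c$, and this is upgraded to the topological direct sum $W^c=V\oplus\bar V$ by showing that the projection onto $V$ along $\bar V$ is bounded, again using that it commutes with $M$ and with $\kappa$. Finally, I would rule out $V$ being C-real: a putative C-conjugation $j$ on $V$ would make $\kappa j:V\to\bar V$ an $M$-intertwining antilinear isomorphism, whose fixed-point set $\{v+\kappa jv:v\in V\}$ is a proper nonzero closed $M$-invariant real subspace of $W$, contradicting the irreducibility of $(M,W)$. Therefore $V$ must be either C-pseudoreal or C-complex, placing $(M,W)$ in the R-pseudoreal or R-complex category respectively.

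The main obstacle I anticipate is the topological step passing from $\overline{V+\bar V}=W^c$ together with $V\cap\bar V=\{0\}$ to the honest bounded decomposition $W^c=V\oplus\bar V$, since in the generality of systems with arbitrary bounded $M$ the algebraic sum of two closed subspaces with trivial intersection need not be closed. In the unitary settings that govern the applications later in the paper, this is automatic because invariant subspaces admit invariant orthogonal complements; in general one would appeal to a Schur-type lemma on normal operators commuting with $M$ to conclude that the orthogonal projection onto $V$ itself lies in the commutant, whence $\bar V = V^{\perp}$ and the decomposition is automatic.
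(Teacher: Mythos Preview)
Your overall strategy---use the canonical conjugation $\kappa$ on $W^c$, invoke irreducibility of $(M,W)$ on the $\kappa$-invariant pieces $V\cap\bar V$ and $V+\bar V$, then rule out $V$ being C-real---is the same as the paper's. The paper calls $\kappa$ by $\theta$, and for a proper nontrivial invariant $X^c\subset W^c$ it forms $Y^c=X^c+\theta X^c$ and $Z^c=X^c\cap\theta X^c$, which are $\theta$-stable and hence complexifications of real invariant subspaces $Y,Z\subseteq W$; irreducibility of $W$ forces $Z=\{0\}$ and $Y=W$, giving $W^c=X^c\oplus\theta X^c$ and an explicit bijection $\alpha:(X^c)^r\to W$, $\alpha(u)=u+\theta u$. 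The exclusion of a C-conjugation on $X^c$ is then argued exactly as you do.

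There is one genuine gap in your version that the paper's route avoids. You appeal to a ``Zorn-type argument'' to produce a \emph{minimal} nonzero closed $M$-invariant subspace $V\subset W^c$. Zorn's lemma yields maximal elements; to get a minimal one you would need every descending chain of nonzero closed invariant subspaces to have nonzero intersection, and in infinite dimensions this fails in general. The paper sidesteps this entirely: it takes an \emph{arbitrary} proper nontrivial invariant $X^c$, and the argument itself forces $W^c=X^c\oplus\theta X^c$. Irreducibility of $X^c$ is then a consequence rather than an input: any proper nontrivial subsystem of $X^c$ would, via $\alpha$, produce a proper nontrivial subsystem of $W$. So drop the minimality step and argue as the paper does.

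Your worry about upgrading $\overline{V+\bar V}=W^c$ with $V\cap\bar V=\{0\}$ to a bounded decomposition is legitimate, and the paper is equally informal here (it asserts $Y$ is closed without justification). In the finite-dimensional and unitary cases treated later this is harmless, for the reasons you give; but it is a fair point that the general ``system'' statement leaves this unaddressed in both proofs.
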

\begin{proof}
Consider an irreducible system $(M,W)$
on a real Hilbert space.
There is a C-conjugation operator of $(M,W^c)$, 
$\theta$, defined by $\theta(u+iv)\equiv (u-iv)$ for $u,v\in W$,
verifying $(W^c)_\theta=W$.

Let $(M,X^c)$ be a proper non-trivial subsystem of
$(M,W^c)$. Then $\theta$ is a C-conjugation operator of the 
subsystems $(M,Y^c)$ and $(M,Z^c)$, where 
$Y^c\equiv \{u+\theta v: u,v\in X^c\}$ and 
$Z^c\equiv \{u: u,\theta u\in X^c\}$. Therefore,
$Y^c=\{u+iv: u,v\in Y\}$ and $Z^c=\{u+iv: u,v\in Z\}$, 
where $Y\equiv\{\frac{1+\theta}{2}u: u\in Y^c\}$ and
$Z\equiv\{\frac{1+\theta}{2}u: u\in Z^c\}$, are invariant closed subspaces of
$W$. 
If $Y=\{0\}$ then $Z=\{0\}$ and $Y^c=X^c=\{0\}$, in contradiction
with $X^c$ being non-trivial. 
If $Z=W$ then $Y=W$ and $Z^c=X^c=W^c$, in contradiction 
with $X^c$ being proper.
Therefore $Z=\{0\}$ and $Y=W$, which implies 
$Z^c=\{0\}$ and $Y^c=W^c$.

So, $(M,W)$ is equivalent to $(M,(X^c)^r)$, 
due to the existence of the bijective linear 
map $\alpha:(X^c)^r\to W$, $\alpha(u)=u+\theta u$,
$\alpha^{-1}(u+\theta u)=u$, for $u\in(X^c)^r$.
Suppose that there is a C-conjugation operator of $(M,X^c)$,
$\theta'$. Then $(M,W_{\pm})$ is a proper non-trivial
subsystem of $(M,W)$, where
$W_\pm\equiv\{\frac{1\pm\theta'}{2}w: w\in W\}$, in contradiction with
$(M,W)$ being irreducible.
\end{proof}

\begin{prop}
\label{prop:Rirreducible}
Any real system which is R-real or R-pseudoreal or
R-complex is irreducible.
\end{prop}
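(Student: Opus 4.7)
The plan is to argue by contradiction. Suppose $(M,W)$ has a closed $M$-invariant subspace $U$ with $U \neq \{0\}, W$; then its complexification $U^c = U + iU$ is a proper, nonzero, closed $M$-invariant subspace of $W^c$ which is in addition preserved by the canonical conjugation $\theta$ of $W^c$ defined by $\theta(u + iv) := u - iv$. It therefore suffices to show that, in each of the three cases of Definition~\ref{defn:Rsystem}, the only $\theta$-invariant closed $M$-invariant subspaces of $W^c$ are $\{0\}$ and $W^c$.

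The R-real case is immediate since $(M, W^c)$ is C-irreducible by hypothesis. For the other two cases I will first argue that, in the decomposition $W^c = V \oplus \bar V$, one may take $\bar V = \theta(V)$: one checks $V \cap \theta(V) = \{0\}$ (otherwise $\theta|_V$ would be a C-conjugation on $V$, making $V$ C-real and contradicting C-complex or C-pseudoreal), and then $V + \theta(V) = W^c$ by observing that the image of $\theta(V)$ in the C-irreducible quotient $W^c / V \cong \bar V$ is $M$-invariant and nonzero. In the R-complex case $V \not\cong \bar V$, so by Schur the only proper nontrivial $M$-invariant closed subspaces of $W^c$ are $V$ and $\bar V = \theta(V)$; since $\theta$ swaps them, neither is $\theta$-invariant.

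The R-pseudoreal case is the main obstacle, because now $V \cong \bar V$ and so $W^c = V \oplus \theta(V)$ admits a whole family of additional $M$-invariant ``graph'' subspaces beyond $V$ and $\theta(V)$. Let $U^c$ be a candidate proper nonzero $\theta$-invariant $M$-invariant closed subspace. Since $V$ is C-irreducible, $U^c \cap V$ is either $V$ or $\{0\}$; if $V \subset U^c$ then $\theta(V) \subset U^c$ as well and $U^c = W^c$. Otherwise $U^c \cap V = U^c \cap \theta(V) = \{0\}$, and restricting the projections onto $V$ and $\theta(V)$ exhibits $U^c$ as the graph of a nonzero complex-linear $M$-equivariant $\sigma : V \to \theta(V)$. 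Translating the $\theta$-invariance of this graph yields the condition that $\sigma' := \theta \circ \sigma : V \to V$ is an anti-linear $M$-equivariant involution, i.e.\ a C-conjugation on $V$, contradicting C-pseudoreality. The delicate step is exactly this last one: it uses the Schur-type fact that every anti-linear $M$-equivariant operator on a C-pseudoreal $(M,V)$ is of the form $c\tau$ for a fixed $\tau$ with $\tau^2 = -1$, so its square equals $-|c|^2$ and can never equal $+1$.
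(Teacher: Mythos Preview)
Your argument is correct, but it takes a genuinely different route from the paper's. You work in the complexification $W^c$ equipped with the canonical conjugation $\theta$, and reduce the problem to classifying the $\theta$-stable closed $M$-invariant subspaces of $V\oplus\theta(V)$; this forces the graph-subspace analysis and, as you honestly flag, a Schur-type description of the (anti-)commutant to close the R-pseudoreal case (and implicitly the R-complex case, where ``$V\not\cong\bar V$'' really means ``no $M$-equivariant anti-linear automorphism of $V$'', which again needs commutant control to deduce from ``no anti-\emph{unitary} one'').

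The paper instead passes to the realification $V^r$ of the complex irreducible $(M,V)$ and uses the R-imaginary operator $J(u)=iu$. Given a proper nontrivial real $M$-invariant $X^r\subset V^r$, it forms the $J$-invariant (hence complex) subspaces $Z^r=X^r\cap JX^r$ and $Y^r=X^r+JX^r$; irreducibility of $V$ forces $Z^r=\{0\}$ and $Y^r=V$, so $V=(X^r)^c$ and $u+iv\mapsto u-iv$ is an $M$-equivariant anti-linear involution on $V$, directly contradicting the C-pseudoreal/C-complex hypothesis. This argument is uniform across the three cases and avoids the bookkeeping of invariant subspaces in $V\oplus\bar V$, so Schur is never invoked explicitly. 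Your route, by contrast, makes the structure of the commutant visible and separates the three cases cleanly, at the price of a longer case analysis and the explicit Schur input.
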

\begin{proof}
Consider an irreducible system
on a complex Hilbert space $(M,V)$.
There is a R-imaginary operator $J$ of the system
$(M ,V^{r})$, defined by $J(u)\equiv i u$, for $u\in V^r$.

Let $(M,X^r)$ be a proper non-trivial subsystem of
$(M,V^{r})$. Then $J$ is an R-imaginary operator of
$(M,Y^r)$ and $(M^r,Z^r)$, 
where $Y^r\equiv \{u+J v: u, v \in X^r\}$ and 
$Z^r\equiv \{u: u,Ju\in X^r\}$.
Then $(M,Y)$ and $(M,Z)$ are
subsystems of $(M,V)$,
where the complex Hilbert spaces 
$Y\equiv Y^r$ and $Z\equiv Z^r$ have the scalar 
multiplication such that $(a+ib)(y)=ay+bJy$, 
for $a,b\in\mathbb{R}$ and $y\in Y$ or $y\in Z$.
If $Y=\{0\}$, then $Z=X^r=\{0\}$ which is in contradiction with $X^r$
being non-trivial.
If $Z=V$, then $Y=V$ and $X^r=V^r$ which is in contradiction with
$X^r$ being non-trivial.
So $Z=\{0\}$ and $Y=V$, which implies that $V=(X^r)^c$.

Then there is a C-conjugation operator of $(M,V)$, 
$\theta$, defined by $\theta(u+iv)\equiv u-iv$, for $u,v\in X^r$. 
We have $X^r=V_\theta$. 
Suppose there is a R-imaginary operator of $(M,V_\theta)$, 
$J'$. Then $(M,V_{\pm})$, where
$V_{\pm}\equiv\{\frac{1\pm iJ'}{2}v: v\in V\}$, 
are proper non-trivial subsystems of $(M,V)$, 
in contradiction with $(M,V)$ being irreducible.

Therefore, if $(M,V)$ is C-real, then $(M,V_\theta)$ is
R-real irreducible. 
If $(M,V)$ is C-pseudoreal or C-complex, then 
 $(M,V_\theta^r)$ is R-pseudoreal or R-complex, irreducible. 
\end{proof}

\subsection{Schur Systems}

\begin{defn}[Schur System]
\label{defn:schur}
A system $(M,V)$, on a complex Hilbert space $V$, 
is a Schur system if the set of normal operators of $(M,V)$ is isomorphic to $\mathbb{C}$.\\
Consider an irreducible system $(M,W)$, on a real Hilbert space $W$ and 
let $(M,W^c)$ be its complexification:
1) $(M,W)$ is Schur R-real iff $(M,W^c)$ is Schur C-real;\\
2) $(M,W)$ is Schur R-pseudoreal iff $(M,V)$ is Schur C-pseudoreal,
with $W^c=V\oplus \bar V$;\\
3) $(M,W)$ is Schur R-complex iff $(M,V)$ is Schur C-complex, with
$W^c=V\oplus \bar V$.
\end{defn}

\begin{lem}
Consider a Schur system $(M,V)$ on a complex
Hilbert space. An anti-isometry of $(M,V)$, if it exists, is unique
up to a complex phase.
\end{lem}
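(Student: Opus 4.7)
The plan is to reduce uniqueness to the Schur condition by composing two hypothetical anti-isometries to produce a normal operator commuting with $M$.

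First I would suppose that $U_1, U_2 : V \to V$ are two anti-isometries of $(M,V)$. Since an anti-unitary operator is invertible and its inverse is again anti-unitary, the composition $T \equiv U_1 U_2^{-1}$ is well defined. The key observation is that the composition of two anti-linear maps is linear: for $a \in \mathbb{C}$ and $v \in V$,
\begin{equation*}
U_1(U_2^{-1}(a v)) = U_1(\bar a\, U_2^{-1}(v)) = a\, U_1(U_2^{-1}(v)),
\end{equation*}
so $T$ is $\mathbb{C}$-linear. Moreover, the composition of two anti-unitaries is unitary, since both conjugate the inner product once, giving two conjugations that cancel. Hence $T$ is a unitary operator on $V$, in particular a bounded normal endomorphism.

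Next I would check that $T$ commutes with every $m \in M$. From $U_2 m = m U_2$, applying $U_2^{-1}$ on both sides yields $m U_2^{-1} = U_2^{-1} m$. Combining with $U_1 m = m U_1$ gives $T m = U_1 U_2^{-1} m = U_1 m U_2^{-1} = m U_1 U_2^{-1} = m T$. So $T$ is a normal operator of $(M,V)$ in the sense of the definition of equivalence in the paper.

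By the Schur system hypothesis (Definition \ref{defn:schur}), the set of normal operators of $(M,V)$ is isomorphic to $\mathbb{C}$, i.e.\ consists precisely of scalar multiples of the identity. Therefore $T = \lambda\, \mathrm{Id}_V$ for some $\lambda \in \mathbb{C}$. Because $T$ is unitary, $|\lambda| = 1$, and the equality $U_1 = \lambda U_2$ exhibits $U_1$ as $U_2$ rescaled by a complex phase. The only minor point to be careful about is the anti-linearity bookkeeping when verifying that $T$ is linear and that $T$ commutes with $M$; everything else is essentially immediate from the Schur hypothesis, so I do not expect any serious obstacle.
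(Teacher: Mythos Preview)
Your proof is correct and follows essentially the same approach as the paper: compose the two anti-isometries to obtain a (linear) unitary commuting with $M$, then invoke the Schur hypothesis to conclude it is a unimodular scalar. The only difference is that you use $U_1 U_2^{-1}$ while the paper uses $\theta_2\theta_1$; your choice leads more directly to $U_1=\lambda U_2$ without the extra conjugation step the paper performs.
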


\begin{proof}
Let $\theta_1$,$\theta_2$ be two anti-isometries of $(M,V)$. The
product $(\theta_2\theta_1)$ is an isometry of $(M,V)$;
since $(M,V)$ is irreducible,
$(\theta_2\theta_1)=e^{i\phi}$; with $\phi\in \mathbb{R}$.

Therefore $\theta_2=\alpha \theta_1\alpha^{-1}$; where 
$\alpha\equiv e^{i\frac{\phi}{2}}$ is a complex phase.
\end{proof}

\begin{prop}
Two R-real Schur systems are
isometric iff their complexifications are isometric.
\end{prop}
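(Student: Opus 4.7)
The plan is to prove the two directions separately; the forward direction is essentially formal, while the backward direction requires a phase-adjustment argument based on the Schur property.

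For the forward direction, suppose $\alpha:W_1\to W_2$ is an isometry of the real systems $(M,W_1)$ and $(M,W_2)$. I would define $\alpha^c:W_1^c\to W_2^c$ by $\alpha^c(u+iv)\equiv \alpha(u)+i\alpha(v)$ for $u,v\in W_1$, and check routinely that $\alpha^c$ is complex-linear, unitary with respect to the complexified inner product, and intertwines the $M$-action (since $\alpha$ already does so and $M$ acts on $W_j^c$ by complex-linear extension). Thus $\alpha^c$ is an isometry of $(M,W_1^c)$ and $(M,W_2^c)$.

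For the backward direction, let $\beta:W_1^c\to W_2^c$ be an isometry of the complexified systems, and let $\theta_j$ be the canonical C-conjugation of $(M,W_j^c)$ whose fixed point set is $W_j$ (as constructed in the proof of Proposition on irreducibility of real systems). The map $\beta'\equiv \theta_2\beta\theta_1$ is complex-linear (composition of two antilinear maps and one linear map), unitary, and commutes with every $m\in M$ because each $\theta_j$ does. Therefore $\beta^{-1}\beta'$ is a normal operator of $(M,W_1^c)$; since $(M,W_1^c)$ is Schur, there exists $\phi\in\mathbb{R}$ with $\theta_2\beta\theta_1=e^{i\phi}\beta$. Setting $\tilde\beta\equiv e^{i\phi/2}\beta$ and using that $\theta_2$ is antilinear, one computes
\begin{align*}
\theta_2\tilde\beta\theta_1=e^{-i\phi/2}\theta_2\beta\theta_1=e^{-i\phi/2}e^{i\phi}\beta=e^{i\phi/2}\beta=\tilde\beta,
\end{align*}
so $\tilde\beta\theta_1=\theta_2\tilde\beta$. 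Hence $\tilde\beta$ sends $W_1=(W_1^c)_{\theta_1}$ into $W_2=(W_2^c)_{\theta_2}$, and its restriction is a real-linear bijection intertwining $M$. Unitarity of $\tilde\beta$ together with the definition of the real inner product as the real part of the complex one (see the definition of realification) shows that the restriction preserves the real inner product, giving the desired isometry of the real systems.

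The main (but mild) obstacle is keeping track of the antilinearity of $\theta_j$ when commuting phases past it; once that bookkeeping is done the Schur property of $(M,W_1^c)$ supplies the single scalar freedom that exactly suffices to align $\beta$ with the conjugations. No further structural input is needed.
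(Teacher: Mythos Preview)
Your argument is correct and follows essentially the same route as the paper. The paper only writes out the nontrivial (backward) direction: it transports one C-conjugation through the isometry, invokes the preceding lemma that an anti-isometry of a Schur system is unique up to a phase to obtain $\theta_N=e^{i\phi}\alpha\theta_M\alpha^{-1}$, and then observes that $e^{i\phi/2}\alpha$ restricts to an isometry of the fixed-point subspaces. Your version bypasses that lemma by applying the Schur property directly to $\beta^{-1}\theta_2\beta\theta_1$, but this is exactly the content of the lemma's proof, and the phase-adjustment step $\tilde\beta=e^{i\phi/2}\beta$ is identical.
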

\begin{proof}
Let $(M,V)$ and $(N,W)$ be C-real Schur systems,
with $\theta_M$ and $\theta_N$ the respective C-conjugation operators.
If there is an isometry $\alpha:V\to W$ such that 
$\alpha M=N\alpha$, then 
$\vartheta\equiv \alpha\theta_M\alpha^{-1}$ is an anti-isometry of
$(N,W)$. Since it is unique up to a phase, then
$\theta_N=e^{i\phi}\vartheta$. Therefore $e^{i\frac{\phi}{2}}\alpha$ is an
isometry between $(M,V_\theta)$ and $(N,W_\theta)$, where
$V_{\theta_M}\equiv\{(1+\theta_M)v: v\in V\}$.
\end{proof}

\begin{prop}
Two C-complex or C-pseudoreal Schur systems are isometric or anti-isometric 
iff their realifications are isometric.
\end{prop}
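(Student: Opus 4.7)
The forward direction is immediate from Proposition \ref{prop:unitary}: an isometry or anti-isometry $\alpha\colon V\to W$ intertwining $M$ and $N$ becomes, when viewed as a map $V^r\to W^r$, a real isometry that still intertwines the two systems. For the backward direction, the plan is to take a given real isometry $\beta\colon V^r\to W^r$ with $\beta M=N\beta$ and modify it by a suitable isometry of $(M,V^r)$ so that it becomes complex-linear (hence an isometry $V\to W$) or anti-complex-linear (hence an anti-isometry $V\to W$). The key object is $T\equiv\beta^{-1}J_W\beta$, where $J_V,J_W$ denote multiplication by $i$ on $V$ and $W$; one checks that $T$ is a real unitary on $V^r$ with $T^2=-I$ commuting with $M$, i.e., an R-imaginary operator of $(M,V^r)$.

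The next ingredient is to describe the real algebra of $M$-commuting bounded operators on $V^r$. Any such $S$ decomposes uniquely as $S=S_+ + S_-$ with $S_+$ complex-linear and $S_-$ anti-complex-linear, via $S_\pm=\tfrac{1}{2}(S\mp J_V S J_V)$, and both summands commute with $M$. The complex-linear summand is a complex scalar by Schur. For the anti-linear summand, $S_-^* S_-$ formed using the real-inner-product adjoint on $V^r$ is complex-linear (since $S_-$ and $S_-^*$ both anti-commute with $J_V$), positive, and $M$-commuting, hence a nonnegative real scalar by Schur.

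In the C-complex case, if $S_-^*S_-=\lambda I$ with $\lambda>0$, then $S_-/\sqrt\lambda$ is an anti-unitary of $(M,V)$, contradicting the C-complex hypothesis; hence $S_-=0$ for every $M$-commuting $S$. The real commutant of $M$ on $V^r$ therefore reduces to $\mathbb{R}[J_V]\cong\mathbb{C}$, and $T^2=-I$ forces $T=\pm J_V$. The $+$ sign gives $\beta J_V=J_W\beta$ (so $\beta$ is complex-linear), while $-$ gives $\beta J_V=-J_W\beta$ (so $\beta$ is anti-linear); Proposition \ref{prop:unitary} completes the argument.

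In the C-pseudoreal case, let $\theta_V$ be the anti-unitary of $(M,V)$ with $\theta_V^2=-I$. For any anti-linear $M$-commuting $S_-$, the composite $\theta_V^{-1}S_-$ is complex-linear and $M$-commuting, hence a complex scalar. Thus the real commutant of $M$ on $V^r$ has real basis $\{I,J_V,\theta_V,J_V\theta_V\}$ and is isomorphic to the quaternion algebra $\mathbb{H}$; its group of real-unitary elements is the group of unit quaternions, isomorphic to $SU(2)$. The operator $T$ is a pure-imaginary unit quaternion, and the conjugation action of $SU(2)$ on the two-sphere of such elements is transitive (the surjection $SU(2)\to SO(3)$), so there exists a real-unitary $u$ of $(M,V^r)$ with $u^{-1}J_V u=T$. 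Then $\beta u^{-1}$ is a real-unitary intertwiner satisfying $(\beta u^{-1})J_V=J_W(\beta u^{-1})$, hence complex-linear, and Proposition \ref{prop:unitary} yields the desired complex isometry $V\to W$. The main obstacle is this quaternionic step: identifying the real commutant as $\mathbb{H}$ and invoking transitivity of $SU(2)$ on the two-sphere of R-imaginary quaternions is precisely where the extra structure absent in the C-complex case must be carefully exploited.
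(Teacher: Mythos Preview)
Your proof is correct, but it follows a different route from the paper's.

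The paper gives a uniform argument that does not split into the C-complex and C-pseudoreal cases. Starting from a real isometry $\alpha$ between the realifications and setting $K\equiv\alpha J_M\alpha^{-1}$ (the analog of your $T$, transported to $W$ rather than $V$), the paper computes $(1-J_NK)(1-KJ_N)$, observes that this operator is $J_N$-linear, self-adjoint and $N$-commuting, hence a nonnegative real scalar $r$ on the complex space $W_{J_N}$. If $r=0$ one checks $K=-J_N$, so $\alpha$ is already anti-linear. If $r>0$, the single formula $(1-J_NK)\alpha\, r^{-1/2}$ gives a complex-linear isometry: the identity $(1-J_NK)K=K+J_N=J_N(1-J_NK)$ shows complex-linearity, and $(1-KJ_N)(1-J_NK)=r$ gives unitarity.

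Your approach instead classifies the real commutant of $M$ on $V^r$ ($\mathbb{C}$ in the C-complex case, $\mathbb{H}$ in the C-pseudoreal case) and then argues separately: in the C-complex case the only R-imaginary operators are $\pm J_V$, while in the C-pseudoreal case you invoke transitivity of the unit-quaternion group on the two-sphere of imaginary units to conjugate $T$ to $J_V$. This is more structural and in fact anticipates the paper's subsequent propositions identifying the commutants with $\mathbb{R}$, $\mathbb{C}$, $\mathbb{H}$; it also makes explicit that in the C-pseudoreal case one can always obtain a genuine isometry (not merely an anti-isometry). The paper's argument is shorter and avoids the case split, at the cost of being less transparent about why the construction works.
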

\begin{proof}
Let $(M,V)$ and $(N,W)$ be R-complex or R-pseudoreal Schur
systems, with $J_M$ and $J_N$ the respective R-imaginary
operators. If there is an isometry $\alpha:V\to W$ such that
$\alpha M=N\alpha$, then 
$K\equiv \alpha J_M\alpha^{-1}$ is a R-imaginary operator of
$(N,W)$. 
When considering $(N,W_{J_N})$ and $(M,V_{J_M})$, where 
$W_{J_N}\equiv \{(1-iJ_N) w: w\in W\}$,  we get that 
$(1-J_N K)(1-K J_N)=r$ as an operator of $W_{J_N}$, where $r$ is a
non-negative  null real scalar. If $c=0$ then $K=-J_N$ and $\alpha$
defines an anti-isometry between $(M,V_{J_M})$ and $(N,W_{J_N})$.
If $c\neq 0$ then $(1-J_N K)\alpha c^{-\frac{1}{2}}$ is an isometry
between $(M,V_{J_M})$ and $(N,W_{J_N})$.
\end{proof}

\begin{prop}
The space of normal operators of a R-real Schur system is
isomorphic to $\mathbb{R}$.
\end{prop}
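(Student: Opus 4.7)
The plan is to use the complexification to reduce the problem to the Schur C-real case, where the result follows from the hypothesis that normal operators form $\mathbb{C}$, together with commutation with the C-conjugation.

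First I would take any normal operator $S$ of the R-real Schur system $(M,W)$ and extend it complex-linearly to an operator $S^c$ on the complexification $W^c$ by $S^c(u+iv)\equiv S(u)+iS(v)$ for $u,v\in W$. It is straightforward to check that $S^c$ is bounded, that $(S^c)^\dagger$ is the complex-linear extension of $S^\dagger$, and that $S^c$ commutes with every $m\in M$ (viewed as acting on $W^c$) since $S$ commutes with every $m$ on $W$. Thus $S^c$ is a normal operator of $(M,W^c)$. By the definition of Schur R-real (Definition~\ref{defn:schur}), $(M,W^c)$ is Schur C-real, so its normal operators form a space isomorphic to $\mathbb{C}$, meaning $S^c=\lambda\, I$ for some $\lambda\in\mathbb{C}$.

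Next I would invoke the existence of the C-conjugation operator $\theta$ on $W^c$ (guaranteed because $(M,W^c)$ is C-real), recalling from the proof of Proposition~\ref{prop:Rirreducible} that $\theta$ can be realized as $\theta(u+iv)=u-iv$ with $W=(W^c)_\theta$. By construction $S^c$ commutes with $\theta$: for $u,v\in W$,
\begin{equation*}
S^c\theta(u+iv)=S(u)-iS(v)=\theta S^c(u+iv).
\end{equation*}
But $S^c=\lambda I$ commutes with the anti-linear $\theta$ only if $\lambda=\bar{\lambda}$, i.e.\ $\lambda\in\mathbb{R}$. Hence $S=\lambda I_W$ with $\lambda\in\mathbb{R}$, and conversely every real multiple of the identity is a normal operator of $(M,W)$.

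The main (minor) obstacle is keeping the real/complex-linear bookkeeping straight: verifying that the extension $S^c$ is genuinely complex-linear and normal on $W^c$, and that the adjoint defined via the complexified inner product $\langle\cdot,\cdot\rangle_c$ agrees with the complex-linear extension of the real adjoint. Once these routine checks are in place, the conclusion $\{\lambda I:\lambda\in\mathbb{R}\}\cong\mathbb{R}$ is immediate.
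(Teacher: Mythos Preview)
Your proposal is correct and follows essentially the same approach as the paper: complexify the normal operator, use the Schur property of $(M,W^c)$ to conclude it is a complex scalar, and then use commutation with the C-conjugation $\theta$ to force that scalar to be real. Your version is in fact more explicit than the paper's, which simply states ``we know that $\alpha=re^{i\varphi}$; then the endomorphism of $V_\theta$ is a real number'' without spelling out the extension step or the $\theta$-commutation argument that you carefully verify.
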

\begin{proof}
Let $(M,V)$ be a C-real Schur system,
with $\theta$ the C-conjugation operator.
If there is an endomorphism $\alpha:V\to V$ such that 
$\alpha M=M\alpha$, we know that $\alpha=re^{i\varphi}$. Then the endomorphism of $V_\theta$
is a real number.
\end{proof}

\begin{prop}
The space of normal operators of a R-complex Schur system is
isomorphic to $\mathbb{C}$.
\end{prop}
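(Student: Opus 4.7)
The plan is to exhibit an explicit $\mathbb{R}$-linear copy of $\mathbb{C}$ inside the algebra of normal operators of $(M,W)$ and then show every normal operator lies in it. The natural candidate is the algebra generated (over $\mathbb{R}$) by the identity and the R-imaginary operator $J$ obtained from the preceding propositions. First I would check that $J$ itself is a normal operator of $(M,W)$: by definition $J$ is an isometry of $(M,W)$, so it commutes with every $m\in M$; moreover $J^2=-1$ together with unitarity gives $J^\dagger=-J$, so $J$ commutes with $J^\dagger$. Thus $\{a+bJ:a,b\in\mathbb{R}\}$ sits inside the normal operators, and since $J^2=-1$ this is a faithful copy of $\mathbb{C}$.

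For the reverse inclusion I would complexify. Let $\alpha:W\to W$ be normal in $(M,W)$, and define $\alpha^c:W^c\to W^c$ by $\alpha^c(u+iv)=\alpha(u)+i\alpha(v)$. Then $\alpha^c$ commutes with every $m\in M$ acting on $W^c$, and $(\alpha^c)^\dagger=(\alpha^\dagger)^c$, so $\alpha^c$ is a normal operator of the system $(M,W^c)$. By Definition \ref{defn:Rsystem}, $W^c=V\oplus\overline{V}$ with $(M,V)$ a C-complex Schur system. Because $(M,V)$ is C-complex, there is no anti-unitary intertwiner $V\to V$, and consequently no isometry $V\to\overline{V}$; hence $V$ and $\overline{V}$ are inequivalent irreducible subsystems. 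Therefore $\alpha^c$ must preserve the decomposition, and the Schur property on each factor (together with the fact that the commutant of each is $\mathbb{C}$) gives $\alpha^c=\lambda_1 P_V+\lambda_2 P_{\overline{V}}$ with $\lambda_1,\lambda_2\in\mathbb{C}$.

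Finally I would impose reality: $\alpha^c$ must commute with the canonical C-conjugation $\theta$ on $W^c$ with fixed-point set $W$. Since $\theta$ exchanges $V$ with $\overline{V}$ and acts anti-linearly, the condition $\theta\alpha^c=\alpha^c\theta$ forces $\lambda_2=\overline{\lambda_1}$. Writing $\lambda_1=a+ib$ and unraveling the identification $W\simeq V_\theta\subset W^c$, one finds that $\alpha$ acts as $a\cdot\mathrm{id}+bJ$ on $W$. This identifies the space of normal operators of $(M,W)$ precisely with $\{a+bJ:a,b\in\mathbb{R}\}\cong\mathbb{C}$.

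The only place where care is needed is the step where I assert that $V$ and $\overline{V}$ are inequivalent as subsystems of $(M,W^c)$: this is exactly where the C-complex hypothesis enters (as opposed to the C-pseudoreal case, where an anti-unitary operator on $V$ would give an isometry $V\to\overline{V}$ and produce a larger commutant—this is the analogue of the quaternionic case and is treated in the next proposition). Once that inequivalence is in hand, the rest is a mechanical application of Schur's lemma to each irreducible summand combined with the reality constraint imposed by $\theta$.
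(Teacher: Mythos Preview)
Your proposal is correct and follows essentially the same route as the paper. The paper packages the argument slightly differently: instead of complexifying and decomposing $W^c=V\oplus\overline V$, it works directly with the operator $K\equiv\alpha+J\alpha J$ (which is precisely the off-diagonal part of your $\alpha^c$ in the decomposition $V_J\oplus\overline{V_J}$), observes that $KK^\dagger$ is a normal operator of the Schur system $(M,V_J)$ and hence a scalar $r$, and then argues that $r>0$ would make $K/\sqrt r$ a unitary intertwiner $V_J\to\overline{V_J}$, contradicting the C-complex hypothesis; so $K=0$, $\alpha$ commutes with $J$, and $\alpha=re^{J\theta}$. Your version makes the inequivalence of $V$ and $\overline V$ explicit first and then reads off the block structure, which is arguably cleaner, but the key step---using C-complexity to kill any intertwiner between $V$ and $\overline V$---is identical.
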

\begin{proof}
Let $(M,V)$ be a R-complex Schur
system, with $J$ the R-imaginary
operator. 
If there is a normal operator $\alpha$ of $(M,V)$, then
$KK^\dagger$ is a normal operator of the C-complex Schur system
$(M,V_{J})$, where $K\equiv
(\alpha+J\alpha J)$ and $V_{J}\equiv \{(1-iJ) v: v\in V\}$.
If $KK^\dagger=r>0$, then $\frac{K}{\sqrt{r}}$ is unitary and $V_J$ is
equivalent to $\overline{V}_{J}$ which would imply that $(M,V)$ is
C-pseudoreal. Therefore $K=0$ and hence $\alpha$ is a normal operator of 
$(M,V_{J})$, so $\alpha=r e^{J\theta}$.
\end{proof}

\begin{prop}
The space of normal operators of a R-pseudoreal Schur system is
isomorphic to $\mathbb{H}$ (quaternions).
\end{prop}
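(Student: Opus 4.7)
The plan is to mirror the preceding R-real and R-complex Schur proofs, but exploit the antiunitary operator guaranteed by C-pseudorealness to supply two extra commutant generators beyond the R-imaginary operator. By hypothesis there is a C-pseudoreal Schur system $(M,V)$ with $W^c=V\oplus\bar V$, so I identify $W$ with the realification $V^r$. Multiplication by $i$ on $V$ is an R-imaginary operator $J_1$ of $(M,W)$ satisfying $J_1^2=-1$, $J_1^\dagger=-J_1$, $[J_1,m]=0$. The pseudoreal hypothesis supplies an antiunitary $J_2$ on $V$ that commutes with $M$; since $J_2^2$ is $\mathbb{C}$-linear, unitary, commutes with $M$, and intertwines complex scalars with their conjugates, C-Schur forces $J_2^2\in\{+1,-1\}$, and the absence of a C-conjugation forces $J_2^2=-1$. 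Antilinearity yields $J_1J_2+J_2J_1=0$, and the antiunitary adjoint identity gives $J_2^\dagger=J_2^{-1}=-J_2$. Setting $J_3:=J_1J_2$ one checks $J_3^2=-1$, $J_3^\dagger=-J_3$, $[J_3,m]=0$, and the quaternion table $J_1J_2=J_3$, $J_2J_3=J_1$, $J_3J_1=J_2$. Hence the R-span $\mathrm{span}_\mathbb{R}\{I,J_1,J_2,J_3\}\cong\mathbb{H}$ embeds into the normal operators of $(M,W)$, because $q=aI+b_1J_1+b_2J_2+b_3J_3$ satisfies $qq^\dagger=q^\dagger q=(a^2+b_1^2+b_2^2+b_3^2)I$.

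For the reverse inclusion, let $\alpha$ be any normal operator of $(M,W)$ and decompose $\alpha=\alpha_++\alpha_-$ with $\alpha_\pm:=\tfrac12(\alpha\mp J_1\alpha J_1)$, so that $\alpha_+$ is $\mathbb{C}$-linear on $V$ while $\alpha_-$ is $\mathbb{C}$-antilinear, and both commute with $M$. The twist $\alpha_-J_2^{-1}$ is $\mathbb{C}$-linear on $V$ and commutes with $M$. Imitating the R-complex proof, I would argue that $\alpha_+\alpha_+^\dagger$ is a positive, hence normal, operator of the C-pseudoreal Schur system and so equals a nonnegative real scalar $r$ by C-Schur; if $r=0$ then $\alpha_+=0$, and if $r>0$ then $(\alpha_+^\dagger\alpha_+)^2=r\alpha_+^\dagger\alpha_+$ exhibits $\alpha_+^\dagger\alpha_+/r$ as a projection commuting with $M$, which must equal $I$ by C-Schur (the zero option being incompatible with $r>0$), so $\alpha_+^\dagger\alpha_+=rI=\alpha_+\alpha_+^\dagger$ and $\alpha_+$ is itself normal; C-Schur then yields $\alpha_+=aI+b_1J_1$. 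The identical argument applied to $\alpha_-J_2^{-1}$ gives $\alpha_-J_2^{-1}=cI+dJ_1$, and multiplying on the right by $J_2$ gives $\alpha_-=cJ_2+dJ_3$. Combining, $\alpha\in\mathrm{span}_\mathbb{R}\{I,J_1,J_2,J_3\}$, establishing that the normal commutant is exactly $\mathbb{H}$.

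The principal obstacle, already implicit in the earlier propositions, is the standing requirement that $M$ be closed under the complex adjoint so that the adjoints of operators in the commutant also commute with $M$; this closure holds automatically in the applications targeted by the paper, where $M$ is a group represented by unitaries, and can be recorded once and for all. The remaining effort is the sign bookkeeping for the antiunitary identities $J_1J_2=-J_2J_1$ and $J_2^\dagger=-J_2$, together with the fact that the $\pm$-decomposition behaves well under the twist by $J_2$; once these are in place, the argument runs in strict parallel with the R-real case (which produced $\mathbb{R}$) and the R-complex case (which produced $\mathbb{C}$), the pseudoreal novelty being that $\alpha_-$ need not vanish but is instead exhibited as a quaternion component along $J_2$ and $J_3$.
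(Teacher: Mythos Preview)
Your proof is correct and follows essentially the same strategy as the paper: decompose a normal commutant element into its $J$-commuting and $J$-anticommuting parts, use the antiunitary supplied by C-pseudorealness (your $J_2$, the paper's $K$) to twist the antilinear part back to a $\mathbb{C}$-linear operator, and invoke complex Schur on each piece. The paper's proof is terser and slightly less careful---it writes $S=(\alpha-J\alpha J)/2$, $T=(\alpha+J\alpha J)/2$, argues $SS^\dagger$ and $TT^\dagger$ are scalars, and concludes $S=a+Jb$, $T=Kc+KJd$---but the logical skeleton is identical. Your explicit verification of the quaternion multiplication table and of the embedding direction $\mathbb{H}\hookrightarrow$ commutant, together with your flagging of the implicit adjoint-closure hypothesis on $M$, are improvements in rigor over the paper's presentation rather than a different route.
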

\begin{proof}
Let $(M,V)$ be a R-pseudoreal Schur
system, with $J$ the R-imaginary
operator. If there is an endomorphism $\alpha$ of $(M,V)$, then
$SS^\dagger$ and $TT^\dagger$ are a self-adjoint endomorphisms of the C-complex Schur system
$(M,V_{J})$, where $S\equiv(\alpha-J\alpha J)/2$,
$T\equiv(\alpha+J\alpha J)/2$  and 
$V_{J}\equiv\{(1-iJ) v: v\in V\}$.
Let $K$ be an unitary operator of
$(M,V)$ and anti-commuting with $J$, then $K^2=e^{J\theta}$ and 
$Ke^{J\theta}=K(K^2)=(K^2)K=e^{J\theta}K$, therefore $K^2=-1$.
If $TT^\dagger=t>0$, then $\frac{T}{\sqrt{t}}$ is unitary and
anti-commutes with $J$, $TK$ is a normal endomorphism of  
$(M,V_{J})$ and therefore $T=Kc+KJd$; if $TT^\dagger=0$ then 
$c=d=0$.
If $SS^\dagger=s>0$, then $\frac{S}{\sqrt{s}}$ is unitary and
commutes with $J$, $S$ is a normal endomorphism of  
$(M,V_{J})$ and therefore $S=a+Jb$; if $SS^\dagger=0$ then 
$a=b=0$.

Therefore $\alpha=S+T=a+Jb+Kc+KJd$, 
which is isomorphic to the quaternions.
\end{proof}

\subsection{Finite-dimensional representations}
\label{section:Finite}

\begin{lem}[Schur's lemma for finite-dimensional representations\cite{schur}]
\label{lem:commuting}
Consider an irreducible finite-dimensional representation $(M_G,V)$ of
a Lie group $G$ on a complex Hilbert space $V$. If the representation
$(M_G,V)$ is irreducible then any endomorphism $S$ of $(M_G,V)$ is a
complex scalar.
\end{lem}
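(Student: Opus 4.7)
The plan is to give the standard argument based on the existence of an eigenvalue in the algebraically closed field $\mathbb{C}$. First I would unpack what it means for $S$ to be an endomorphism of the system $(M_G,V)$: by the earlier definition (the "normal" adjective aside, which is not invoked in the statement), the operative content is that $S:V\to V$ is a bounded linear map commuting with every $M_g$ for $g\in G$. No self-adjointness or normality of $S$ will be needed.

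Next, I would exploit the hypotheses \emph{complex} and \emph{finite-dimensional}. Since $V\neq\{0\}$ is a finite-dimensional vector space over $\mathbb{C}$, the characteristic polynomial of $S$ is a non-constant polynomial with complex coefficients, and by the fundamental theorem of algebra it has a root $\lambda\in\mathbb{C}$. Consequently the eigenspace
\begin{equation*}
W \;\equiv\; \ker(S-\lambda I_V)
\end{equation*}
is a nonzero linear subspace of $V$, and it is automatically closed because $V$ is finite-dimensional.

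The key step is then to observe that $W$ is invariant under the $G$-action. Indeed, for any $w\in W$ and any $g\in G$, the commutation $SM_g=M_gS$ gives
\begin{equation*}
(S-\lambda I_V)M_g w \;=\; M_g(S-\lambda I_V)w \;=\; 0,
\end{equation*}
so $M_gw\in W$. Thus $(M_G,W)$ is a (topological) subsystem of $(M_G,V)$. By irreducibility, $W$ must be either trivial or the whole of $V$; since $W$ contains a nonzero eigenvector, $W=V$. This forces $S=\lambda I_V$, proving that every such endomorphism is a complex scalar.

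The only place where anything could go wrong is the existence of the eigenvalue $\lambda$, which is precisely where the two hypotheses "complex" and "finite-dimensional" enter in an essential way; on a real Hilbert space the characteristic polynomial can fail to split, and in infinite dimensions one must replace this eigenvalue argument by spectral-theoretic input (which is exactly why the subsequent Schur-type statements for unitary representations in Definition~\ref{defn:schur} have to be formulated through the algebra of commutants rather than through the assertion that scalars exhaust the commutant). For the finite-dimensional complex case at hand, however, no further work is needed.
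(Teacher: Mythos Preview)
Your argument is correct and is the standard proof of Schur's lemma in the finite-dimensional complex setting. The paper does not actually supply its own proof of this statement: it is stated with a citation to the literature (\cite{schur}) and no proof follows it in the text, so there is nothing to compare against beyond noting that your eigenvalue-plus-invariant-subspace argument is exactly the classical one.
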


\begin{lem}
Consider an irreducible complex finite-dimensional representation $(M,V)$ on a
complex Hilbert space. Then there is internal product such that:
1) The system is C-real iff there is an anti-linear involution of $(M,V)$;\\
2) The system is C-pseudoreal iff 
there is not an anti-linear bounded involution of $(M,V)$, but there is an
anti-isomorphism of $(M,V)$;\\
3) The system is C-complex iff 
there is no anti-isomorphism of $(M,V)$.
\end{lem}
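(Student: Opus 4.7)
The plan is to reduce the lemma to a choice-free statement about anti-linear operators by applying Schur's lemma, and then to construct a suitable inner product in each case. Let $\alpha$ be any anti-linear bijection of $V$ commuting with every $m\in M$ (that is, any anti-isomorphism). Then $\alpha^2$ is a linear endomorphism commuting with $M$, so by Schur's lemma it equals $cI$ with $c\in\mathbb{C}^*$. Since $\alpha$ is anti-linear, the identity $\alpha\cdot\alpha^2=\alpha^2\cdot\alpha$ forces $\bar c=c$, so $c\in\mathbb{R}^*$; rescaling $\alpha\mapsto\alpha/\sqrt{|c|}$ yields an anti-linear bijection with $\alpha^2=\pm I$. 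If in addition an anti-linear involution $\theta$ exists, then for any other such $\alpha$ the linear map $\theta\alpha$ commutes with $M$ and so equals $\lambda I$, giving $\alpha=\lambda\theta$ and $\alpha^2=|\lambda|^2I>0$. Hence the existence of an involution precludes the existence of an anti-linear bijection with square $-I$.

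This yields three mutually exclusive and exhaustive cases: (A) no anti-isomorphism commuting with $M$ exists; (B) an anti-linear involution $\theta$ exists; (C) an anti-linear bijection $\alpha$ with $\alpha^2=-I$ exists but no anti-linear involution does. In case (A), any reference inner product $\langle\cdot,\cdot\rangle_0$ serves, since the absence of an anti-linear bijection precludes any anti-unitary operator, and the system is C-complex. In case (B), starting from arbitrary $\langle\cdot,\cdot\rangle_0$, I set
\begin{equation*}
\langle u,v\rangle:=\langle u,v\rangle_0+\langle\theta v,\theta u\rangle_0.
\end{equation*}
Routine sesquilinearity bookkeeping shows this is conjugate linear in the first slot, linear in the second, conjugate-symmetric, and positive definite (because $\theta$ is bijective), so it is a genuine inner product; using $\theta^2=I$ one obtains $\langle\theta u,\theta v\rangle=\langle v,u\rangle$, so $\theta$ becomes a C-conjugation and the system is C-real. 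Case (C) is handled by the same averaging formula with $\alpha$ replacing $\theta$: the minus signs from $\alpha^2=-I$ cancel in $\langle\alpha u,\alpha v\rangle=\langle\alpha u,\alpha v\rangle_0+\langle -v,-u\rangle_0=\langle v,u\rangle$, so $\alpha$ is anti-unitary; since there is no anti-linear involution, there is no C-conjugation, and the system is C-pseudoreal.

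In each of these cases the three biconditionals of the lemma line up, because both sides of each depend only on which case of the trichotomy we are in: in (A) both sides of (3) hold and the others fail, in (B) both sides of (1) hold and the others fail, in (C) both sides of (2) hold and the others fail. The principal obstacle is really just the mutual exclusivity argument of the first paragraph, where Schur's lemma is used in an essential way to collapse the algebra of anti-linear bijections to the two possibilities $\alpha^2=\pm I$; the remaining work is algebraic bookkeeping. I expect no further subtleties, because in finite dimensions every anti-linear bijection is automatically bounded, so the distinction between ``anti-linear bounded involution'' and ``anti-linear involution'' appearing in condition (2) is vacuous.
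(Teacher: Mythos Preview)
Your proof is correct and follows essentially the same approach as the paper: use Schur's lemma to show that any anti-isomorphism $S$ satisfies $S^2=\pm r$ for some real $r>0$, then construct an inner product making $S$ anti-unitary. The paper's proof is a terse sketch that stops at asserting such an inner product exists, whereas you carry out the averaging construction explicitly and spell out the trichotomy and mutual exclusivity; in that sense your version is more complete than the paper's own argument.
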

\begin{proof}
Let $S$ be an anti-isomorphism of an irreducible representation $(M,V)$.
Then $S^2=re^{i\varphi}$. But $S^2$ commutes with $S$ which is anti-linear, 
so $S^2=\pm r$. So, there is an internal product such that $S$ is anti-unitary.
\end{proof}

\begin{defn}
A finite-dimensional system is completely reducible iff it can
be expressed as a direct sum of irreducible systems.
\end{defn}

\begin{rmk}[Weyl theorem]
All finite-dimensional representations of a semi-simple Lie group 
(such as SL(2,C)) are completely reducible.
\end{rmk}

\subsection{Unitary representations and Systems of Imprimitivity}
\label{section:Unitary}
\begin{defn}[Normal System]
A System $(M,V)$ is normal iff $M$ is a set $M$ of normal operators on $V$
closed under Hermitian conjugation---for all $m\in M$ there is $n\in M$ such that
$n=m^\dagger$.
\end{defn}

A unitary representation or a System of Imprimitivity are examples of a normal System.

\begin{rmk}
$W^\bot$ is the orthogonal complement of the subspace $W$ of the
Hilbert space $V$ if:\\
1) $V=W \oplus W^\bot$, that is, 
all $v\in V$ can be expressed as $v=w+x$, where $w\in W$ and $x\in W^\bot$;\\
2) if $w\in W$ and $x\in W^\bot$, then $x^\dagger w=0$.
\end{rmk}

\begin{lem}
\label{lem:orthogonal}
Consider a normal system $(M,V)$. Then, for all subsystem
$(M,W)$ of $(M_G,V)$, $(M_G,W^\bot)$ is also a subsystem of
$(M,V)$, where $W^\bot$ is the orthogonal complement of the subspace $W$.
\end{lem}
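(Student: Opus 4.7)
The plan is to verify the two conditions required for $(M, W^\bot)$ to be a subsystem of $(M,V)$: namely, that $W^\bot$ is closed, and that $W^\bot$ is invariant under the action of every $m \in M$.

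First I would dispatch the closedness of $W^\bot$ as essentially a free fact: the orthogonal complement of any subset of a Hilbert space is automatically a closed subspace (it is the intersection of the kernels of the continuous linear functionals $v \mapsto \langle w, v\rangle$ for $w \in W$). So no work is required there beyond citing this.

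The substantive step is invariance. Let $x \in W^\bot$ and $m \in M$; I want to show $mx \in W^\bot$, i.e.\ $\langle w, mx\rangle = 0$ for every $w \in W$. The natural move is to shift $m$ to the other side of the inner product: $\langle w, mx\rangle = \langle m^\dagger w, x\rangle$. Now the hypothesis that $(M,V)$ is a normal system enters crucially: since $M$ is closed under Hermitian conjugation, $m^\dagger$ is itself an element of $M$. Since $(M,W)$ is a subsystem, $W$ is invariant under the action of $M$, hence $m^\dagger w \in W$. But $x \in W^\bot$, so $\langle m^\dagger w, x\rangle = 0$, as desired.

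There is no real obstacle here; the proof is essentially a one-line manipulation once the definitions are unfolded. The only point to be mildly careful about is that the statement does require the full normality hypothesis on $M$ (closure under $\dagger$); without it, invariance of $W$ under $M$ would not translate into invariance of $W^\bot$ under $M$. So the main thing to make explicit in the write-up is the appeal to $m^\dagger \in M$ at the correct moment.
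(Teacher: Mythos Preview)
Your proposal is correct and follows essentially the same approach as the paper: shift $m$ across the inner product to $m^\dagger$, invoke closure of $M$ under Hermitian conjugation to get $m^\dagger \in M$, and use invariance of $W$ to conclude $m^\dagger w \in W$. You additionally address closedness of $W^\bot$ explicitly, which the paper's proof leaves implicit.
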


\begin{proof} 
Let $(M,W)$ be a subsystem of $(M,V)$.
$W^\bot$ is the orthogonal complement of $W$.

For all $x\in W^\bot$, $w\in W$ and $m\in M$, 
$<m x, w>=<x,m^\dagger w>$. 

Since $W$ is invariant and there is $n\in M$, such
that $n=m^\dagger$, then $w'\equiv (m^\dagger w)\in W$.
 
Since $x\in W^\bot$ and $w'\in W$, then $<x, w'>=0$.

This implies that if $x\in W^\bot$), also $(m x)\in W^\bot$, for all $m\in M$.
\end{proof}

\begin{lem}
\label{lem:commuting}
Any Schur normal system on a complex Hilbert space is irreducible.
\end{lem}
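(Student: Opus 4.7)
The plan is to prove the contrapositive: assume $(M,V)$ is a normal system that is reducible, and then exhibit a normal endomorphism that is not a scalar multiple of the identity, contradicting the Schur condition.

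First I would take a non-trivial proper closed invariant subspace $W \subset V$, which exists by the assumption of reducibility. The crucial preliminary observation is that normality of the system lets us split $V$ orthogonally: by the immediately preceding Lemma \ref{lem:orthogonal}, $W^{\bot}$ is also invariant under $M$, so $V = W \oplus W^{\bot}$ is a decomposition into two closed $M$-invariant subspaces.

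Next I would consider the orthogonal projector $P_W : V \to V$ onto $W$. This is bounded and self-adjoint, hence in particular a normal operator in the usual sense ($P_W P_W^{\dagger} = P_W^{\dagger} P_W$). Because both $W$ and $W^{\bot}$ are invariant under every $m \in M$, a direct check on vectors in $W$ and $W^{\bot}$ separately shows that $P_W m = m P_W$ for all $m \in M$. Thus $P_W$ is a normal endomorphism of $(M,V)$ in the sense of Definition \ref{defn:schur}.

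Since $(M,V)$ is a Schur system, the algebra of such normal endomorphisms is isomorphic to $\mathbb{C}$, so $P_W = \lambda\, \mathrm{Id}$ for some $\lambda \in \mathbb{C}$. The projector relation $P_W^{2} = P_W$ then forces $\lambda^{2} = \lambda$, so $\lambda \in \{0,1\}$, which means $W = \{0\}$ or $W = V$, contradicting that $W$ was proper and non-trivial. Hence $(M,V)$ is irreducible. There is no serious obstacle here; the only subtlety is verifying that normality of the system (closure of $M$ under Hermitian conjugation) is exactly what is needed to produce the $M$-invariant orthogonal complement, and then recognising $P_W$ as a legitimate element of the commuting normal algebra.
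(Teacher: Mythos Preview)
Your proof is correct and follows essentially the same approach as the paper: both construct the orthogonal projector onto a closed invariant subspace, use invariance of the orthogonal complement (from the preceding lemma) to show the projector commutes with $M$, and then conclude from the Schur condition and $P^2=P$ that the projector is $0$ or $1$. The only difference is cosmetic framing (you phrase it as a contrapositive), not mathematical content.
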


\begin{proof}
Let $(M,W)$ and $(M,W^\bot)$ be sub-systems of the complex Schur system $(M,V)$,
where $W^\bot$ is the orthogonal complement of $W$.

There is a bounded endomorphism $P: V\to V$, such that, 
for $w,w'\in W$, $x,x'\in W^\bot$, $P(w+x)=w$. $P^2=P$ and $P$ is
hermitian:
\begin{align}
&<w'+x',P (w+x)>=<w',w>=
<P(w'+x'),w+x>
\end{align}
Let $w'\equiv m w\in W$ and $x'\equiv m x\in
W^\bot$:
\begin{align}
m P(w+x)&=m w=w'\\
P m(w+x)&=P(w'+x')=w'
\end{align}
Which implies that $P$ commutes with all $m\in M$, so  $P\in\{0,1\}$. 
If $P=1$, then $W=V$, if $P=0$, then $W$ is the null space.
\end{proof}

So a complex Schur normal system is irreducible, and hence, from Defns.\ref{defn:Rsystem},\ref{defn:schur} and Prop.\ref{prop:Rirreducible}, 
a real Schur normal system is also irreducible.

\begin{lem}[Schur's lemma for unitary representations\cite{schur}]
\label{lem:commuting}
Consider an irreducible unitary representation $(M,V)$ of a Lie
group $G$ on a complex Hilbert space $V$. If the representation
$(M,V)$ is irreducible then any normal operator $N$ of $(M,V)$ is a
scalar.
\end{lem}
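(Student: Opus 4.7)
The plan is to reduce the statement to the projection-valued case already handled by Lemma~\ref{lem:orthogonal} and the proof of the preceding lemma, using the spectral theorem for bounded normal operators on a complex Hilbert space. First I would observe that, because $(M,V)$ is a unitary representation of the group $G$, for every $m\in M$ the adjoint $m^\dagger=m^{-1}$ also lies in $M$; thus $M$ is closed under Hermitian conjugation and $(M,V)$ is a normal system in the sense of the preceding definitions, so the orthogonal-complement construction of Lemma~\ref{lem:orthogonal} applies.

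Next I would note that if the normal operator $N$ commutes with every $m\in M$, then so does $N^\dagger$: taking adjoints in $mN=Nm$ gives $N^\dagger m^\dagger = m^\dagger N^\dagger$, and $m^\dagger$ ranges over all of $M$. By the spectral theorem we may write
\begin{equation*}
N=\int_{\sigma(N)}\lambda\,dE(\lambda),
\end{equation*}
where $E$ is the projection-valued spectral measure supported on $\sigma(N)\subset\mathbb{C}$. For each Borel set $B\subseteq\mathbb{C}$, the spectral projection $E(B)$ is a bounded Borel function of $N$ and $N^\dagger$, and so commutes with every $m\in M$.

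Each $E(B)$ is then an orthogonal projection onto a closed $M$-invariant subspace. By the irreducibility of $(M,V)$, together with the commuting-projection argument in the proof of the preceding lemma (which showed that such a projection must be $0$ or $\mathbb{1}$), we conclude $E(B)\in\{0,\mathbb{1}\}$ for every Borel set $B$. A projection-valued measure that takes only the values $0$ and $\mathbb{1}$ must be concentrated at a single point $\lambda_0\in\sigma(N)$; integrating yields $N=\lambda_0\mathbb{1}$, as claimed.

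The main obstacle is the infinite-dimensional bookkeeping around the spectral theorem, in particular verifying that the bounded Borel functional calculus produces spectral projections that commute with $M$. This step requires both that $N$ and $N^\dagger$ commute with $M$, which is why closure of $M$ under adjoints (a consequence of unitarity) is essential; once this is in place, the rest of the argument is a direct transcription of the irreducibility-via-projections mechanism already established in the excerpt.
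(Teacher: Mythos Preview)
Your argument via the spectral theorem is correct and is the standard route to the infinite-dimensional Schur lemma: spectral projections of a commuting normal operator are themselves commuting projections, and irreducibility forces each of them to be $0$ or $1$, so the spectral measure is a point mass. One small quibble: the ``preceding lemma'' you invoke actually runs in the opposite direction (Schur $\Rightarrow$ irreducible), so the step you need here---that a projection commuting with $M$ has $M$-invariant range and hence equals $0$ or $1$ by irreducibility---should be stated on its own rather than borrowed from that proof; but the argument itself is immediate.

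As for comparison with the paper: there is nothing to compare. The paper does not supply a proof of this lemma; it is simply stated with a citation to the literature and used as input for the subsequent development.
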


\begin{defn}
A unitary system is completely reducible iff it can be expressed as a
direct integral of irreducible systems.
\end{defn}

\begin{rmk}
All unitary representations of a separable locally compact group 
(such as the Poincare group) are completely reducible.
\end{rmk}

\subsection{Systems of Imprimitivity}
\label{section:Imprimitivity}

\begin{defn}
Consider a measurable space $(X, M)$, where $M$ is a $\sigma$-algebra
of subsets of $X$. A projection-valued-measure, $\pi$, is a map from
$M$ to the set of self-adjoint projections on a Hilbert space $H$ such
that $\pi(X)$ is the identity operator on $H$ and the function
$<\psi,\pi(A)\psi>$, with $A\in M$ is a measure on $M$, for all
$\psi\in H$.
\end{defn}

\begin{defn}
Suppose now that $X$ is a representation of $G$. 
Then, a system of imprimitivity is a pair $(U,\pi)$, where $\pi$ is a
projection valued measure and $U$ an unitary representation of $G$ on
the Hilbert space $H$, such that $U(g)\pi(A) U^{-1}(g)=\pi(gA)$.
\end{defn}

\begin{rmk}[Imprimitivity Theorem (thrm 6.12 \cite{commutingring,*inducedreps,*mackey,*squareimprimitivity,vara})]
\label{lem:commuting} Let $G$ be a Lie group, $H$ its closed subgroup.
Let a pair $(V, E)$ be a system of imprimitivity for $G$ based on $G/H$ on a separable complex Hilbert space.
Then there exists a representation $L$ of $H$ such that $(V, E)$ is equivalent to
the canonical system of imprimitivity $(V_L,E_L)$. For any two representations $L$, $L'$ of the subgroup 
$H$ the corresponding canonical systems of imprimitivity are equivalent if and only
if $L$, $L'$ are equivalent. The sets of normal operators commuting with $(V_L, E_L )$ and $L$ are isomorphic.
\end{rmk}

\begin{lem}[Schur's lemma for systems of imprimitivity\cite{schur}]
\label{lem:commuting}
Let a pair $(V, E)$ be a system of imprimitivity for $G$ based on $G/H$ on a separable complex Hilbert space.
If $(V, E)$ is irreducible then then any normal operator $N$ commuting with $(V,E)$ is a
scalar.
\end{lem}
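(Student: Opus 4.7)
The plan is to reduce the statement to Schur's lemma for unitary representations (already stated in the excerpt) via Mackey's Imprimitivity Theorem. First I would invoke the Imprimitivity Theorem on the pair $(V,E)$: since $(V,E)$ is a system of imprimitivity for $G$ based on $G/H$ on a separable complex Hilbert space, it is unitarily equivalent to a canonical system $(V_L, E_L)$ induced from some representation $L$ of the closed subgroup $H$. The key structural fact supplied by that theorem is that the set $\mathcal{N}(V,E)$ of normal operators commuting with $(V,E)$ is isomorphic to the set $\mathcal{N}(L)$ of normal operators commuting with $L$, and the equivalence of systems is implemented by a unitary intertwiner which transports this isomorphism in the natural way (in particular scalars correspond to scalars).

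Next I would show that irreducibility of $(V,E)$ forces $L$ to be an irreducible unitary representation of $H$. Suppose, for contradiction, that $L$ has a closed proper invariant subspace $W_0 \subset H_L$. Then the subspace of the induced space consisting of sections valued in $W_0$ is closed, $G$-invariant under $V_L$, and invariant under the projection-valued measure $E_L$ (because $E_L$ acts by multiplication by characteristic functions and therefore preserves fibrewise subspaces). Transporting this subspace back through the equivalence yields a closed, non-trivial, proper subspace of $V$ invariant under both $V$ and $E$, contradicting irreducibility of $(V,E)$. Hence $L$ is irreducible.

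Now I apply Schur's lemma for unitary representations, which is stated immediately before the present statement in the excerpt: any normal operator commuting with an irreducible unitary representation on a complex Hilbert space is a scalar. Applied to $L$, this gives $\mathcal{N}(L) \cong \mathbb{C}$. Via the isomorphism $\mathcal{N}(V,E) \cong \mathcal{N}(L)$ furnished by the Imprimitivity Theorem, and using that the equivalence is implemented by a unitary operator (so that it carries the identity to the identity and therefore scalars to scalars), I conclude $\mathcal{N}(V,E) \cong \mathbb{C}$, which is exactly the claim.

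The step I expect to be the only real obstacle is verifying cleanly that the commuting-algebra isomorphism provided by the Imprimitivity Theorem is compatible with the scalar structure — that is, that a scalar multiple of the identity in $\mathcal{N}(L)$ corresponds to a scalar multiple of the identity in $\mathcal{N}(V,E)$, and conversely. This is essentially bookkeeping: since the isomorphism is induced by the unitary equivalence $(V,E) \cong (V_L,E_L)$, it is implemented by conjugation by a unitary $U$, hence $U(c\cdot\mathrm{id})U^{-1} = c\cdot\mathrm{id}$. Everything else — the reduction to the canonical system, the transfer of invariant subspaces, and the final application of Schur — is a direct citation of results already available in the excerpt.
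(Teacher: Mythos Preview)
Your proposal is correct and follows essentially the same route as the paper: invoke the Imprimitivity Theorem to pass to a canonical system $(V_L,E_L)$, argue that irreducibility of $(V,E)$ forces $L$ to be irreducible, apply Schur's lemma for unitary representations to $L$, and transport the conclusion back via the commutant isomorphism. The only cosmetic difference is that the paper phrases the irreducibility step in terms of a non-trivial projection in the commutant of $L$ (which would yield a non-trivial projection in the commutant of $(V,E)$), whereas you phrase it in terms of an invariant subspace of sections; these are equivalent formulations, and your extra remark about why scalars correspond to scalars under the unitary equivalence is a welcome bit of care that the paper leaves implicit.
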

\begin{proof}
Consider a representation $L$ of $H$ such that $(V, E)$ is equivalent to
the canonical system of imprimitivity $(V_L,E_L)$. 
If $L$ would be reducible then there would be a non-trivial normal projection operator commuting with $L$,
but then the imprimitivity theorem implies that there would also be a non-trivial normal projection operator commuting
with $(V,E)$ which is in contradiction with the irreducibility of $(V,E)$, therefore $L$ is irreducible.
The Schur's lemma for unitary representations then implies that any normal operator commuting with $L$ is a scalar, 
the imprimitivity theorem then implies the result.
\end{proof}

So we can define a map from the real to the complex systems of imprimitivity---analogous to the 
one for unitary representations. So we extended an existing map from the complex to the real linear
finite-dimensional irreducible representations of a real Lie
algebra\cite{realalgebras,*realirrep} to the infinite-dimensional (unitary) case.

\section{Finite-dimensional representations of the Lorentz group}
\label{section:Lorentz}
We could not find the following results 
explicitly in the literature but they are straightforward to derive and so 
probably known by some people, the results are derived here for completeness and explicitness.
\subsection{Majorana, Dirac and Pauli Matrices and Spinors}
\begin{defn}
$\mathbb{F}^{m\times n}$ is the vector space of $m\times n$ matrices whose
entries are elements of the field $\mathbb{F}$.
\end{defn}

In the next remark we state the Pauli's fundamental theorem of gamma
matrices. The proof can be found in the reference\cite{diracmatrices}.
\begin{rmk}[Pauli's fundamental theorem]
\label{rem:fundamental}
Let $A^\mu$, $B^\mu$, $\mu\in\{0,1,2,3\}$, be two sets of
$4\times 4$ complex matrices verifying:
\begin{align}
A^\mu A^\nu+A^\nu A^\mu&=-2\eta^{\mu\nu}\\
B^\mu B^\nu+B^\nu B^\mu&=-2\eta^{\mu\nu}
\end{align}
Where $\eta^{\mu\nu}\equiv diag(+1,-1,-1-1)$ is the Minkowski metric.

1) There is an invertible complex matrix $S$ such that
$B^\mu=S A^\mu S^{-1}$, for all $\mu\in\{0,1,2,3\}$. 
$S$ is unique up to a non-null scalar.

2) If $A^\mu$ and $B^\mu$ are all unitary, then $S$ is unitary.
\end{rmk}

\begin{prop}
\label{prop:realsimilar}
Let $\alpha^\mu$, $\beta^\mu$, $\mu\in\{0,1,2,3\}$, be two sets of
$4\times 4$ real matrices verifying:
\begin{align}
\alpha^\mu\alpha^\nu+\alpha^\nu\alpha^\mu&=-2\eta^{\mu\nu}\\
\beta^\mu\beta^\nu+\beta^\nu\beta^\mu&=-2\eta^{\mu\nu}
\end{align}
Then there is a real matrix $S$, with $|det S|=1$, such that
$\beta^\mu=S\alpha^\mu S^{-1}$, for all  $\mu\in\{0,1,2,3\}$. $S$ is unique up to a signal. 
\end{prop}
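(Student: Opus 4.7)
The plan is to bootstrap from Pauli's fundamental theorem (Remark~\ref{rem:fundamental}). Since the real matrices $\alpha^\mu,\beta^\mu$ in particular satisfy the Clifford relations over $\mathbb{C}$, the theorem gives an invertible complex $T\in\mathbb{C}^{4\times 4}$ with $\beta^\mu=T\alpha^\mu T^{-1}$ for all $\mu$, unique up to a non-zero complex scalar. The task then reduces to pinning down the scalar so that the resulting similarity matrix is real.

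First I would conjugate the intertwining relation. Because $\alpha^\mu$ and $\beta^\mu$ are real, taking complex conjugates yields $\beta^\mu=\overline{T}\alpha^\mu\overline{T}^{-1}$. Hence $\overline{T}^{-1}T$ commutes with every $\alpha^\mu$. The uniqueness clause in Pauli's theorem (applied to the two identical sets $A^\mu=B^\mu=\alpha^\mu$) forces any matrix commuting with all the $\alpha^\mu$ to be a scalar, so $\overline{T}=c\,T$ for some $c\in\mathbb{C}^\ast$. Conjugating this identity gives $T=\bar c\,\overline{T}=|c|^2 T$, whence $|c|=1$ and we can write $c=e^{i\theta}$. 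Setting $S_0\equiv e^{i\theta/2}T$ we obtain $\overline{S_0}=e^{-i\theta/2}\overline{T}=e^{-i\theta/2}e^{i\theta}T=e^{i\theta/2}T=S_0$, so $S_0$ is real, and it still satisfies $\beta^\mu=S_0\alpha^\mu S_0^{-1}$ since multiplying by a scalar does not affect conjugation.

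Next I would normalize. Since $S_0$ is real and invertible, $\det S_0$ is a non-zero real number; setting $S\equiv |\det S_0|^{-1/4}S_0$ keeps $S$ real, preserves the intertwining relation, and gives $|\det S|=1$.

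For uniqueness, suppose $S'$ is another real invertible matrix with $|\det S'|=1$ and $\beta^\mu=S'\alpha^\mu (S')^{-1}$. Then $(S')^{-1}S$ commutes with every $\alpha^\mu$, hence equals a scalar $\lambda$ by the same uniqueness argument; since both $S$ and $S'$ are real, $\lambda$ is real. Taking determinants gives $\lambda^4=\det S/\det S'=\pm 1$, so $\lambda\in\{\pm 1\}$, as claimed. The only conceptual obstacle is step two—recognising that the phase $c$ relating $T$ and $\overline{T}$ automatically has unit modulus and therefore admits a square root to absorb—everything else is a routine application of Pauli's theorem and linear algebra.
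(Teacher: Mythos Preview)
Your proof is correct and follows essentially the same route as the paper: invoke Pauli's fundamental theorem to get a complex intertwiner, conjugate and use the uniqueness clause to conclude that $\overline{T}$ and $T$ differ by a unit-modulus scalar, then absorb a square root of that phase to make the matrix real, and finally normalize the determinant. The only cosmetic difference is the order of the last two steps---the paper normalizes the determinant first (so that the relating scalar is automatically a phase) and then rotates, whereas you rotate first and normalize afterwards; both orderings work and the arguments are otherwise identical.
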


\begin{proof}
From remark \ref{rem:fundamental}, we know that there is an
invertible matrix $T'$, unique up to a non-null scalar, such that $\beta^\mu=T'\alpha^\mu
T^{'-1}$.
Then $T\equiv T'/|det(T')|$ has $|det T|=1$ and it is unique up to a
complex phase.

Conjugating the previous equation, we get $\beta^\mu=T^*\alpha^\mu
T^{*-1}$.
Then $T^*=e^{i 2 \theta} T$ for some real number $\theta$.
Therefore $S\equiv e^{i \theta}T$ is a real matrix,
with $|det S|=1$, unique up to a signal.
\end{proof}

\begin{defn}
The Majorana matrices, $i\gamma^\mu$, $\mu\in\{0,1,2,3\}$, are $4\times
4$ complex unitary matrices verifying:
\begin{align}
(i\gamma^\mu)(i\gamma^\nu)+(i\gamma^\nu)(i\gamma^\mu)&=-2\eta^{\mu\nu}
\end{align}
The Dirac matrices are $\gamma^\mu\equiv
-i(i\gamma^\mu)$.
\end{defn}

In the Majorana bases, the Majorana matrices are $4\times 4$ real
orthogonal matrices. An example of the Majorana matrices in a
particular Majorana basis is:
\begin{align}
\begin{array}{llllll}
\label{basis}
i\gamma^1=&\left[ \begin{smallmatrix}
+1 & 0 & 0 & 0 \\
0 & -1 & 0 & 0 \\
0 & 0 & -1 & 0 \\
0 & 0 & 0 & +1 \end{smallmatrix} \right]&
i\gamma^2=&\left[ \begin{smallmatrix}
0 & 0 & +1 & 0 \\
0 & 0 & 0 & +1 \\
+1 & 0 & 0 & 0 \\
0 & +1 & 0 & 0 \end{smallmatrix} \right]&
i\gamma^3=\left[ \begin{smallmatrix}
0 & +1 & 0 & 0 \\
+1 & 0 & 0 & 0 \\
0 & 0 & 0 & -1 \\
0 & 0 & -1 & 0 \end{smallmatrix} \right]\\
\\
i\gamma^0=&\left[ \begin{smallmatrix}
0 & 0 & +1 & 0 \\
0 & 0 & 0 & +1 \\
-1 & 0 & 0 & 0 \\
0 & -1 & 0 & 0 \end{smallmatrix} \right]&
i\gamma^5=&\left[ \begin{smallmatrix}
0 & -1 & 0 & 0 \\
+1 & 0 & 0 & 0 \\
0 & 0 & 0 & +1 \\
0 & 0 & -1 & 0 \end{smallmatrix} \right]&
=-\gamma^0\gamma^1\gamma^2\gamma^3
\end{array}
\end{align}

In reference \cite{realgamma} it is proved that the set of five
anti-commuting $4\times 4$ real matrices is unique up to
isomorphisms. So, for instance, with $4\times 4$ real matrices it is not possible
to obtain the euclidean signature for the metric.

\begin{defn}
The Dirac spinor is a $4\times 1$ complex column matrix, $\mathbb{C}^{4\times 1}$.
\end{defn}

The space of Dirac spinors is a 4 dimensional complex vector space.

\begin{lem}
The charge conjugation operator $\Theta$, is an anti-linear involution
commuting with the Majorana matrices $i\gamma^\mu$. 
It is unique up to a complex phase.
\end{lem}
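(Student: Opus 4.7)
The plan is to settle existence by a direct construction in a Majorana basis, and then reduce uniqueness to the rigidity statement already packaged into Pauli's fundamental theorem (Remark~\ref{rem:fundamental}).

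First I would fix a Majorana basis as in display~\eqref{basis}, so that the four matrices $i\gamma^\mu$ are all real. Define $\Theta:\mathbb{C}^{4\times 1}\to\mathbb{C}^{4\times 1}$ as entrywise complex conjugation in this basis. Then $\Theta$ is anti-linear and $\Theta^2=1$, and for every $v\in\mathbb{C}^{4\times 1}$,
\begin{equation*}
\Theta\,(i\gamma^\mu)\,v \;=\; \overline{(i\gamma^\mu)\,v} \;=\; \overline{(i\gamma^\mu)}\,\bar v \;=\; (i\gamma^\mu)\,\Theta v,
\end{equation*}
where the second equality uses that $i\gamma^\mu$ has real entries. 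Hence $\Theta$ is an anti-linear involution commuting with each $i\gamma^\mu$, which establishes existence.

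For uniqueness, suppose $\Theta'$ is another anti-linear involution commuting with every $i\gamma^\mu$. The composition $S\equiv\Theta'\Theta$ is then $\mathbb{C}$-linear (the product of two anti-linear maps), invertible, and satisfies $S(i\gamma^\mu)=(i\gamma^\mu)S$ for every $\mu$. Applying Remark~\ref{rem:fundamental} with $A^\mu=B^\mu=i\gamma^\mu$, the intertwiner between the two sets is unique up to a non-null scalar, so $S=c\,\mathbf{1}$ for some $c\in\mathbb{C}^\times$. Hence $\Theta'=c\,\Theta$. Imposing $(\Theta')^2=\mathbf{1}$ and using anti-linearity to move the scalar past $\Theta$, we get
\begin{equation*}
\mathbf{1}\;=\;(\Theta')^2\;=\;c\,\Theta\,c\,\Theta\;=\;c\,\bar c\,\Theta^2\;=\;|c|^2\,\mathbf{1},
\end{equation*}
so $|c|=1$ and $c$ is a complex phase, as claimed.

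I expect no serious obstacle: existence is a one-line check in a Majorana basis, and uniqueness is a direct application of Pauli's fundamental theorem plus the involution condition. The only subtle point is remembering that scalars move past anti-linear operators by complex conjugation, which is what forces $|c|=1$ rather than merely $c\in\mathbb{C}^\times$; had the statement said ``unique up to a non-null scalar'' instead of ``up to a complex phase'', the involution condition would not be needed.
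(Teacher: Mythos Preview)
Your proposal is correct and follows essentially the same approach as the paper: existence via complex conjugation in a Majorana basis, uniqueness by noting that the composite $\Theta'\Theta$ (the paper writes $\Theta\Theta'$, which is immaterial) is a complex linear intertwiner and hence a scalar by Pauli's fundamental theorem, then forcing $|c|=1$ from the involution condition. Your write-up is slightly more explicit than the paper's, but the argument is the same.
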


\begin{proof}
In the Majorana bases, the complex conjugation is a charge conjugation
operator. Let $\Theta$ and $\Theta'$ be two charge conjugation operators
operators. Then, $\Theta\Theta'$ is a complex invertible matrix commuting with
$i\gamma^\mu$, therefore, from Pauli's fundamental theorem,  
$\Theta\Theta'=c$, where $c$ is a non-null complex scalar.
Therefore $\Theta'=c^*\Theta$ and from $\Theta'\Theta'=1$, we get that
$c^* c=1$.
\end{proof}

\begin{defn}
Let $\Theta$ be a charge conjugation operator.

The set of Majorana spinors, denoted here by $Pinor$, is the set of Dirac spinors
verifying the Majorana condition (defined up to a complex phase):
\begin{align}
Pinor\equiv \{u\in \mathbb{C}^{4\times 1}: \Theta u= u\}
\end{align}
\end{defn}

The set of Majorana spinors is a 4 dimensional real vector space. 
Note that the linear combinations of
Majorana spinors with complex scalars do not verify the Majorana
condition.

There are 16 linear independent products of Majorana matrices. These
form a basis of the real vector space of endomorphisms of Majorana spinors,
$End(Pinor)$. In the Majorana bases, $End(Pinor)$ is the vector space of
$4\times 4$ real matrices.

\begin{defn}
The Pauli matrices $\sigma^k,\ k\in\{1,2,3\}$ are $2\times 2$
hermitian, unitary, anti-commuting, complex matrices.
The Pauli spinor is a  $2\times 1$ complex column matrix. The space of
Pauli spinors is denoted by $Pauli$.
\end{defn}

The space of Pauli spinors, denoted here by $Pauli$, is a 2 dimensional complex vector
space and a 4 dimensional real vector space. The realification of
the space of Pauli spinors is isomorphic to the space of Majorana
spinors.

\subsection{On the Lorentz, SL(2,C) and Pin(3,1) groups}

\begin{rmk} 
The Lorentz group, $O(1,3)\equiv\{\lambda \in \mathbb{R}^{4\times 4}: \lambda^T \eta \lambda=\eta \}$, is the set of
real matrices that leave the metric, $\eta=diag(1,-1,-1,-1)$,
invariant.

The proper orthochronous Lorentz subgroup is defined by
$SO^+(1,3)\equiv\{\lambda \in
O(1,3): det(\lambda)=1, \lambda^0_{\ 0}>0 \}$. 
It is a normal subgroup. 
The discrete Lorentz subgroup of parity and time-reversal is 
$\Delta \equiv \{1,\eta,-\eta,-1\}$.

The Lorentz group is the semi-direct product of the previous
subgroups, $O(1,3)=\Delta \ltimes SO^+(1,3)$.  
\end{rmk}

\begin{defn}
The set $Maj$ is the 4 dimensional real space of the linear
combinations of the Majorana matrices, $i\gamma^\mu$:
\begin{align}
Maj\equiv\{a_\mu i\gamma^\mu: a_\mu\in \mathbb{R},\ \mu\in\{0,1,2,3\}\}
\end{align}
\end{defn}

\begin{defn}
$Pin(3,1)$ \cite{pin} is the group of endomorphisms of Majorana
spinors that leave the space $Maj$ invariant, that is:
\begin{align}
Pin(3,1)\equiv 
\Big\{S\in End(Pinor):\ |det S|=1,\ S^{-1}(i\gamma^\mu)S\in Maj,\ \mu\in\{0,1,2,3\} \Big\}
\end{align}
\end{defn}

\begin{prop}
\label{prop:map}
The map $\Lambda:Pin(3,1)\to O(1,3)$ defined by:
\begin{align}
(\Lambda(S))^\mu_{\ \nu}i\gamma^\nu\equiv S^{-1}(i\gamma^\mu)S
\end{align}
is two-to-one and surjective. It defines a group homomorphism.
\end{prop}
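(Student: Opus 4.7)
The plan is to verify the four assertions in turn: that $\Lambda(S)$ lies in $O(1,3)$, that $\Lambda$ respects composition, that it is surjective, and that its kernel is $\{\pm 1\}$. The first two are essentially algebraic identities using the Clifford relations; the last two will use Pauli's fundamental theorem in its real version (Proposition \ref{prop:realsimilar}).

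First I would check that $\Lambda(S)\in O(1,3)$. Applying $S^{-1}(\cdot)S$ to the anticommutator $(i\gamma^\mu)(i\gamma^\nu)+(i\gamma^\nu)(i\gamma^\mu)=-2\eta^{\mu\nu}$ yields
\begin{equation*}
\Lambda(S)^\mu_{\ \rho}\Lambda(S)^\nu_{\ \sigma}\bigl[(i\gamma^\rho)(i\gamma^\sigma)+(i\gamma^\sigma)(i\gamma^\rho)\bigr]=-2\eta^{\mu\nu},
\end{equation*}
and substituting the same Clifford identity on the left gives $\Lambda(S)\,\eta\,\Lambda(S)^T=\eta$, so $\Lambda(S)\in O(1,3)$. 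Note that the coefficients $\Lambda(S)^\mu_{\ \nu}$ are well defined real numbers because the $i\gamma^\mu$ are linearly independent over $\mathbb{R}$ and $S^{-1}(i\gamma^\mu)S\in\mathit{Maj}$ by the definition of $Pin(3,1)$. The homomorphism property follows immediately: for $S,T\in Pin(3,1)$,
\begin{equation*}
\Lambda(ST)^\mu_{\ \nu}\,i\gamma^\nu=(ST)^{-1}(i\gamma^\mu)(ST)=T^{-1}\bigl(\Lambda(S)^\mu_{\ \rho}\,i\gamma^\rho\bigr)T=\Lambda(S)^\mu_{\ \rho}\Lambda(T)^\rho_{\ \nu}\,i\gamma^\nu,
\end{equation*}
so $\Lambda(ST)=\Lambda(S)\Lambda(T)$.

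For surjectivity, fix any $\lambda\in O(1,3)$ and define $\beta^\mu\equiv\lambda^\mu_{\ \nu}\,i\gamma^\nu$. Because $\lambda$ is real these are real matrices in a Majorana basis, and a direct computation using $\lambda^T\eta\lambda=\eta$ shows that $\beta^\mu\beta^\nu+\beta^\nu\beta^\mu=-2\eta^{\mu\nu}$. Proposition \ref{prop:realsimilar} then supplies a real matrix $T$ with $|\det T|=1$, unique up to a sign, such that $\beta^\mu=T(i\gamma^\mu)T^{-1}$. Setting $S\equiv T^{-1}$ gives $S^{-1}(i\gamma^\mu)S=\lambda^\mu_{\ \nu}\,i\gamma^\nu\in\mathit{Maj}$, so $S\in Pin(3,1)$ and $\Lambda(S)=\lambda$.

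Finally, to show that $\Lambda$ is two-to-one it suffices to compute its kernel. If $\Lambda(S)=1$ then $S^{-1}(i\gamma^\mu)S=i\gamma^\mu$ for every $\mu$, so $S$ commutes with each Majorana matrix and therefore with every element of the real algebra $\mathrm{End}(\mathit{Pinor})$ they generate. The main thing to justify carefully here is this generation claim: the sixteen ordered products of the $i\gamma^\mu$ form a basis of $\mathrm{End}(\mathit{Pinor})$ over $\mathbb{R}$, so commuting with all of them forces $S$ to be a real scalar multiple of the identity. The constraint $|\det S|=1$ with $S=c\cdot 1$ real gives $c^4=1$, hence $c=\pm 1$, and the kernel is $\{\pm 1\}$. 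Combined with surjectivity and the uniqueness-up-to-sign clause of Proposition \ref{prop:realsimilar}, this establishes that $\Lambda$ is exactly two-to-one. The only subtle step is the argument that the products of Majorana matrices span the full real endomorphism algebra; everything else is a straightforward bookkeeping exercise with the Clifford relations.
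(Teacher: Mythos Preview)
Your proof is correct and follows essentially the same route as the paper: both use the Clifford anticommutator to land $\Lambda(S)$ in $O(1,3)$, verify the homomorphism identity by direct substitution, and invoke Proposition~\ref{prop:realsimilar} (the real Pauli theorem) for surjectivity. The only minor difference is that the paper reads off the two-to-one property directly from the uniqueness-up-to-sign clause of Proposition~\ref{prop:realsimilar}, whereas you additionally compute the kernel by appealing to the fact that the sixteen products of Majorana matrices span $\mathrm{End}(\mathit{Pinor})$; both arguments are valid and the paper explicitly states that spanning fact just before this proposition, so your ``subtle step'' is already available.
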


\begin{proof}
1) Let $S\in Pin(3,1)$. Since the Majorana matrices are a basis of the
real vector space $Maj$, there is an unique real matrix $\Lambda(S)$ such that:
\begin{align}
(\Lambda(S))^\mu_{\ \nu}i\gamma^\nu=S^{-1}(i\gamma^\mu)S
\end{align}
Therefore, $\Lambda$ is a map with domain $Pin(3,1)$. Now we can check
that $\Lambda(S)\in O(1,3)$:
\begin{align}
&(\Lambda(S))^\mu_{\ \alpha}\eta^{\alpha\beta}(\Lambda(S))^\nu_{\
  \beta}=-\frac{1}{2}(\Lambda(S))^\mu_{\
  \alpha}\{i\gamma^\alpha,i\gamma^\beta\}(\Lambda(S))^\nu_{\
  \beta}=\\
&=-\frac{1}{2}S\{i\gamma^\mu,i\gamma^\nu\}S^{-1}=S\eta^{\mu\nu}S^{-1}=\eta^{\mu\nu}
\end{align}
We have proved that $\Lambda$ is a map from $Pin(3,1)$ to $O(1,3)$.

2) Since any $\lambda\in O(1,3)$ conserve the metric $\eta$, the matrices
$\alpha^\mu\equiv \lambda^\mu_{\ \nu} i\gamma^\nu$ verify:
\begin{align}
\{\alpha^\mu,\alpha^\nu\}=-2\lambda^\mu_{\ \alpha}\eta^{\alpha\beta}\lambda^\nu_{\ \beta}=-2\eta^{\mu\nu}
\end{align}
In a basis where the Majorana matrices are real, from Proposition
\ref{prop:realsimilar} there is a real invertible matrix $S_\lambda$,
with $|det S_\Lambda|=1$, such that $\lambda^\mu_{\ \nu} i\gamma^\nu=S^{-1}_\lambda
(i\gamma^\mu)S_\lambda$. 
The matrix $S_\Lambda$ is unique up to a sign. So, $\pm S_\lambda\in
Pin(3,1)$ and we proved that the map
 $\Lambda:Pin(3,1)\to O(1,3)$ is two-to-one and surjective.

3) The map defines a group homomorphism because:
\begin{align}
&\Lambda^\mu_{\ \nu}(S_1)\Lambda^\nu_{\
  \rho}(S_2)i\gamma^\rho=\Lambda^\mu_{\ \nu}S_2^{-1}i\gamma^\nu S_2\\
&=S_2^{-1}S_1^{-1}i\gamma^\mu S_1 S_2=\Lambda^\mu_{\ \rho}(S_1 S_2)i\gamma^\rho
\end{align}
\end{proof}

\begin{rmk}
\label{rem:SL(2,C)}
The group $SL(2,\mathbb{C})=\{e^{\theta^j
  i\sigma^j+b^j\sigma^j}: \theta^j,b^j\in
\mathbb{R},\ j\in\{1,2,3\}\}$ is simply connected. 
Its projective representations are equivalent to its ordinary representations\cite{weinberg}.

There is a two-to-one, surjective map $\Upsilon:SL(2,\mathbb{C})\to
SO^+(1,3)$, defined by:
\begin{align}
\Upsilon^{\mu}_{\ \nu}(T)\sigma^\nu\equiv T^\dagger \sigma^\mu T
\end{align}
Where $T\in SL(2,\mathbb{C})$, $\sigma^0=1$ and $\sigma^j$, $j\in\{1,2,3\}$ are the Pauli matrices.
\end{rmk}

\begin{lem}
Consider that $\{M_+,M_-,i\gamma^5 M_+,i\gamma^5 M_-\}$ and $\{P_+,P_-,iP_+,iP_-\}$ are orthonormal basis of
the 4 dimensional real vector spaces $Pinor$ and $Pauli$, respectively, verifying:
\begin{align}
\gamma^0\gamma^3 M_\pm=\pm M_\pm&,\ \sigma^3 P_\pm=\pm P_\pm
\end{align}
The isomorphism
$\Sigma:Pauli \to Pinor$ is defined by:
\begin{align}
\Sigma(P_+)=M_+,&\ \Sigma(iP_+)=i\gamma^5 M_+\\
\Sigma(P_-)=M_-,&\ \Sigma(iP_-)=i\gamma^5 M_-
\end{align}

The group $Spin^+(3,1)\equiv \{\Sigma\circ A \circ \Sigma^{-1}:A\in
SL(2,\mathbb{C})\}$ is a subgroup of $Pin(1,3)$. 
For all $S\in Spin^+(1,3)$, $\Lambda(S)=\Upsilon(\Sigma^{-1}\circ S \circ \Sigma)$.
\end{lem}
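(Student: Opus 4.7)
The plan is to verify both claims by reducing to the Lie algebra level and then exponentiating, using the fact that $SL(2,\mathbb{C})$ is connected and generated by exponentials of $\{\sigma^j,i\sigma^j\}_{j=1,2,3}$, as recorded in Remark~\ref{rem:SL(2,C)}. First I would note that $\Sigma$ is a well-defined $\mathbb{R}$-linear bijection because it sends a real basis to a real basis, and the defining assignments yield the crucial intertwining relation $\Sigma\circ(i\cdot)=(i\gamma^5)\circ\Sigma$. Consequently, for any $\mathbb{C}$-linear $A:Pauli\to Pauli$, the transferred operator $\tilde A\equiv\Sigma\circ A\circ\Sigma^{-1}$ is $\mathbb{R}$-linear on $Pinor$ and commutes with $i\gamma^5$; since the realification of a $\mathbb{C}$-linear map has real determinant $|\det_{\mathbb{C}} A|^2$, $A\in SL(2,\mathbb{C})$ immediately gives $|\det_{\mathbb{R}}\tilde A|=1$.

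Next I would establish that $\tilde A^{-1}(i\gamma^\mu)\tilde A\in Maj$ by checking it on Lie algebra generators. A direct computation in the stated bases $\{M_\pm,i\gamma^5 M_\pm\}$ and $\{P_\pm,iP_\pm\}$, exploiting the eigenrelations $\sigma^3 P_\pm=\pm P_\pm$ and $\gamma^0\gamma^3 M_\pm=\pm M_\pm$ together with the fact that $\gamma^5$ commutes with every $\gamma^\mu\gamma^\nu$, should identify $\Sigma\sigma^j\Sigma^{-1}$ with the boost generators $\gamma^0\gamma^j$ and $\Sigma(i\sigma^j)\Sigma^{-1}$ with rotation generators of the form $\tfrac{1}{2}\epsilon^{jkl}\gamma^k\gamma^l$ (up to an overall sign fixed by orientation). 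Since both families lie in the real span of $\{\gamma^\mu\gamma^\nu\}_{\mu<\nu}$, and the commutator $[\gamma^\mu\gamma^\nu,i\gamma^\rho]$ is a real linear combination of the Majorana matrices, $\tilde A$ satisfies the defining condition of $Pin(3,1)$ on an open neighbourhood of the identity, and hence everywhere by the group law and connectedness of $SL(2,\mathbb{C})$. This yields the subgroup inclusion $Spin^+(3,1)\subset Pin(3,1)$.

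Finally, the identity $\Lambda(S)=\Upsilon(\Sigma^{-1}\circ S\circ\Sigma)$ is verified on the same generators: for $T=e^{t\sigma^j}$ the Pauli conjugation $T^\dagger\sigma^\mu T$ is the standard rapidity-$2t$ Lorentz boost along axis $j$ acting on the 4-vector basis $\{\sigma^\mu\}$, while $S=\Sigma T\Sigma^{-1}=e^{t\gamma^0\gamma^j}$ produces exactly the same boost on $\{i\gamma^\mu\}$ through $S^{-1}(i\gamma^\mu)S$; the rotation generators $T=e^{it\sigma^j}$ are treated identically. Since $\Lambda\circ(\Sigma\cdot\Sigma^{-1})$ and $\Upsilon$ are continuous group homomorphisms $SL(2,\mathbb{C})\to SO^+(1,3)$ agreeing on a generating set, they agree globally. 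The main obstacle is the sign and axis bookkeeping required when computing $\Sigma\sigma^j\Sigma^{-1}$ explicitly inside the fixed Majorana basis of Eq.~\eqref{basis}; once those signs are handled consistently, both the subgroup inclusion and the intertwining identity follow from the standard exponential-map argument.
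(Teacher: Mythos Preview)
Your approach is correct but organized differently from the paper's. You work at the Lie algebra level: identify $\Sigma\sigma^j\Sigma^{-1}=\gamma^0\gamma^j$ and $\Sigma(i\sigma^j)\Sigma^{-1}=i\gamma^5\gamma^0\gamma^j$, verify that these generators normalize $Maj$ under commutator, and then push both conclusions to the whole of $SL(2,\mathbb{C})$ by connectedness and the homomorphism property. The paper also records the exponential description $Spin^+(3,1)=\{e^{\theta^j i\gamma^5\gamma^0\gamma^j+b^j\gamma^0\gamma^j}\}$ (equivalent to your generator identification), but then takes a shortcut you do not: it proves the single global identity
\[
-\,i\gamma^0\,(\Sigma\circ T^\dagger\circ\Sigma^{-1})\,i\gamma^0 \;=\; \Sigma\circ T^{-1}\circ\Sigma^{-1}
\]
for all $T\in SL(2,\mathbb{C})$. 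Since $\Sigma\sigma^\mu\Sigma^{-1}=\gamma^0\gamma^\mu$, this identity lets one rewrite the defining relation $\Upsilon^\mu{}_\nu(T)\sigma^\nu=T^\dagger\sigma^\mu T$ directly as $\Upsilon^\mu{}_\nu(T)\,i\gamma^\nu=(\Sigma T\Sigma^{-1})^{-1}\,i\gamma^\mu\,(\Sigma T\Sigma^{-1})$, which is literally the definition of $\Lambda(\Sigma T\Sigma^{-1})$. So the paper obtains both $Spin^+(3,1)\subset Pin(3,1)$ and $\Lambda=\Upsilon\circ(\Sigma^{-1}\cdot\Sigma)$ in one algebraic stroke, without the separate ``agree on generators, extend by continuity'' step. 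Your route is the standard Lie-theoretic one and is perhaps safer with the sign bookkeeping you flag; the paper's route is shorter but hinges on spotting the $T^\dagger\leftrightarrow T^{-1}$ intertwiner through $i\gamma^0$.
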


\begin{proof}
From remark \ref{rem:SL(2,C)},  $Spin^+(3,1)=\{e^{\theta^j
  i\gamma^5\gamma^0\gamma^j+b^j\gamma^0\gamma^j}: \theta^j,b^j\in
\mathbb{R},\ j\in\{1,2,3\}\}$.
Then, for all $T\in SL(2,C)$:
\begin{align}
-i\gamma^0 \Sigma \circ T^\dagger
\circ \Sigma^{-1} i\gamma^0&=\Sigma \circ T^{-1}
\circ \Sigma^{-1}
\end{align}
Now, the map $\Upsilon:SL(2,\mathbb{C})\to
SO^+(1,3)$ is given by:
\begin{align}
\Upsilon^{\mu}_{\ \nu}(T)i\gamma^\nu = (\Sigma \circ T^{-1}
\circ \Sigma^{-1}) i\gamma^\mu (\Sigma\circ T \circ \Sigma^{-1})
\end{align}
Then, all $S\in Spin^+(3,1)$ leaves the space $Maj$ invariant:
\begin{align}
S^{-1} i\gamma^\mu S=
\Upsilon^{\mu}_{\ \nu}(\Sigma^{-1}\circ S \circ \Sigma)i\gamma^\nu
\in Maj
\end{align}
Since all the products of Majorana matrices, except the identity, are
traceless, then $det(S)=1$. So,
$Spin^+(3,1)$ is a subgroup of $Pin(1,3)$ and $\Lambda(S)=\Upsilon(\Sigma^{-1}\circ S \circ \Sigma)$.
\end{proof}

\begin{defn}
The discrete Pin subgroup $\Omega\subset Pin(3,1)$ is:
\begin{align}
\Omega \equiv \{\pm 1,\pm i\gamma^0,\pm \gamma^0\gamma^5,\pm
i\gamma^5\}
\end{align}
\end{defn}

The previous lemma and the fact that $\Lambda$ is continuous, 
implies that $Spin^+(1,3)$ is a double cover of $SO^+(3,1)$.
We can check that for all $\omega\in \Omega$, $\Lambda(\pm \omega)\in
\Delta$. 
That is, the discrete Pin subgroup is the double cover of the
discrete Lorentz subgroup. Therefore, $Pin(3,1)=\Omega \ltimes Spin^+(1,3)$

Since there is a two-to-one continuous surjective group homomorphism,
$Pin(3,1)$ is a double cover of $O(1,3)$, $Spin^+(3,1)$ 
is a double cover of $SO^+(1,3)$ and $Spin^+(1,3)\cap SU(4)$ is a
double cover of $SO(3)$. We can check that $Spin^+(1,3)\cap SU(4)$ is
equivalent to $SU(2)$.

\subsection{Finite-dimensional representations of SL(2,C)}

\begin{rmk}
Since SL(2,C) is a semisimple Lie group, all its finite-dimensional 
(real or complex) representations are direct sums of irreducible
representations.
\end{rmk}

\begin{rmk} 
The finite-dimensional complex irreducible representations of SL(2,C) 
are labeled by $(m,n)$, where $2m,2n$ are natural numbers. 
Up to equivalence, the representation space $V_{(m,n)}$ is the tensor
product of the complex vector spaces $V_m^+$ and $V_n^-$, where $V_m^\pm$ is a
symmetric tensor with $2m$ Dirac spinor indexes, such that 
$\gamma^5_{\  k}v=\pm v$, where $v\in V_m^\pm$ and $\gamma^5_{\  k}$
is the Dirac matrix $\gamma^5$ acting on the $k$-th index of $v$.

The group homomorphism consists in applying the same matrix of
$Spin^+(1,3)$, correspondent to the $SL(2,C)$ group element we are
representing, to each index of $v$. 
$V_{(0,0)}$ is equivalent to $\mathbb{C}$ and the image of the group
homomorphism is the identity.

These are also projective representations of the time reversal transformation,
but, for $m\neq n$, not of the parity transformation, that is, 
under the parity transformation, $(V^+_m\otimes V^-_n)\to (V^-_m\otimes V^+_n)$ and under the time
reversal transformation $(V^+_m\otimes V^-_n)\to (V^+_m\otimes V^-_n)$.
\end{rmk}

\begin{lem}
The finite-dimensional real irreducible representations of SL(2,C) 
are labeled by $(m,n)$, where $2m,2n$ are natural numbers and $m\geq
n$.
Up to equivalence, the representation space $W_{(m,n)}$ is defined
for $m\neq n$ as:
\begin{align*}
W_{(m,n)}&\equiv \{\frac{1+(i\gamma^5)_1\otimes
   (i\gamma^5)_1}{2}w: w\in W_m\otimes W_n\}\\
W_{(m,m)}&\equiv 
\{\frac{1+(i\gamma^5)_1\otimes(i\gamma^5)_1}{2}w: w\in (W_m)^2\}
\end{align*} 
where $W_m$ is a
symmetric tensor with $m$ Majorana spinor indexes, such that 
$(i\gamma^5)_{1}(i\gamma^5)_{k}w=-w$, where $w\in W_m$; $(i\gamma^5)_{k}$
is the Majorana matrix $i\gamma^5$ acting on the $k$-th index of $w$;
$(W_m)^2$ is the space of the linear combinations of the
symmetrized tensor products $(u\otimes v+v\otimes u)$, for $u,v\in W_m$.  

The group homomorphism consists in applying the same matrix of
$Spin^+(1,3)$, correspondent to the $SL(2,C)$ group element we are
representing, to each index of the tensor. In the $(0,0)$ case,
$W_{(0,0)}$ is equivalent to $\mathbb{R}$ and the
image of the group homomorphism is the identity.

These are also projective representations of the full Lorentz group, 
that is, under the parity or time reversal transformations,
$(W_{m,n}\to W_{m,n})$.
\end{lem}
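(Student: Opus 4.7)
The plan is to deduce the real classification from the complex one by applying the complex-to-real correspondence established in Section~\ref{section:Systems}, then to verify that the abstract realifications coincide with the concrete Majorana-spinor construction in the statement.

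First I would recall from the preceding remark that the complex irreducible finite-dimensional representations of $SL(2,\mathbb{C})$ are indexed by pairs $(m,n)$ and realized as symmetric tensors on $V_m^+\otimes V_n^-$, with complex conjugation interchanging the two chiralities, so $\overline{V_{(m,n)}}\cong V_{(n,m)}$. Consequently $V_{(m,n)}$ admits a nonzero antilinear intertwiner (so is C-real or C-pseudoreal) exactly when $m=n$, and is C-complex otherwise. I would then single out which of the two possibilities occurs for $V_{(m,m)}$. The natural candidate involution is obtained by applying the charge conjugation $\Theta$ of the previous subsection to each of the $2m+2m$ Dirac-spinor indices; since $\Theta^2=+1$ and $\Theta$ intertwines the $Spin^+(1,3)$ action with its conjugate on each factor (and the symmetrization commutes with $\Theta^{\otimes}$), the tensor involution squares to $+1$, so every $V_{(m,m)}$ is C-real.

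Second, I would invoke the classification of Section~\ref{section:Systems} (Propositions on R-real, R-pseudoreal and R-complex irreducibles). For $m=n$ the C-real $V_{(m,m)}$ yields an R-real irreducible real representation $W_{(m,m)}$ as the fixed subspace of the involution. For $m\neq n$ the C-complex $V_{(m,n)}$ yields an R-complex irreducible real representation $W_{(m,n)}$, obtained as realification; by the uniqueness clause in that classification, $W_{(m,n)}\cong W_{(n,m)}$ as real representations, which is why the index condition $m\geq n$ merely removes a redundant label. The $(0,0)$ case is the trivial representation.

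Third, I would translate the abstract realification to the concrete Majorana-spinor picture of the statement, using the isomorphism $\Sigma:Pauli\to Pinor$ of the preceding lemma. The operator $i\gamma^5$ on $Pinor$ squares to $-1$ and commutes with the $Spin^+(1,3)$ action, so it furnishes the complex structure under which $Pinor\cong Pauli$ as complex vector spaces. On $Pinor\otimes Pinor$ the projector $\tfrac{1+(i\gamma^5)_1\otimes(i\gamma^5)_1}{2}$ singles out those real tensors on which the two imaginary structures agree, which is exactly the realification of the $\mathbb{C}$-linear tensor product $Pauli\otimes Pauli$. The constraint $(i\gamma^5)_1(i\gamma^5)_k w=-w$ on $W_m$ forces every index beyond the first to carry the opposite chirality, so that $W_m$, tensored symmetrically, reproduces $V_m^+$ (or $V_n^-$) under the identification $\Sigma$. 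Taking the corresponding symmetric tensor of factors from $W_m$ and $W_n$ (respectively the symmetrized square $(W_m)^2$ when $m=n$) gives precisely the spaces displayed in the lemma, and the group action is as described because $Spin^+(1,3)$ acts on each Majorana index.

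Finally, I would establish invariance under the full Lorentz group. In the complex classification parity intertwines $(m,n)\leftrightarrow(n,m)$, so $V_{(m,n)}$ alone is a projective representation of the full Lorentz group only when $m=n$; for $m\neq n$ one needs $V_{(m,n)}\oplus V_{(n,m)}$. Because realification of a C-complex representation $V$ gives a real space whose complexification is $V\oplus\overline{V}$, the real space $W_{(m,n)}$ already carries both chirality sectors, and the parity element in $Pin(3,1)$ (which acts by an $\Omega$-element swapping them) is a well-defined real-linear automorphism of $W_{(m,n)}$; the same holds for time reversal, and for the $(m,m)$ case invariance is immediate from C-reality. The main obstacle I foresee is the sign bookkeeping in the second step: verifying that the tensorial charge conjugation on $V_{(m,m)}$ squares to $+1$ (so C-real, not C-pseudoreal) and commutes with the chirality projectors defining $V_m^\pm$; a sign error there would misclassify the representation and corrupt the whole scheme.
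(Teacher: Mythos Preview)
Your approach is essentially the paper's: classify the complex irreducibles as C-complex ($m\neq n$) or C-real ($m=n$), apply the complex-to-real map of Section~\ref{section:Systems}, and then observe that realification automatically contains both chirality sectors so parity acts. The paper's proof is terser but follows exactly this line.

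One technical slip to fix in your second step: index-wise charge conjugation $\Theta^{\otimes}$ alone does \emph{not} preserve $V_{(m,m)}=V_m^+\otimes V_m^-$, because $\Theta$ swaps chirality and hence sends $V_m^+\otimes V_m^-$ to $V_m^-\otimes V_m^+$, a different subspace of the ambient tensor space. The paper's anti-involution is $\theta(u\otimes v)\equiv v^*\otimes u^*$, i.e.\ conjugation composed with the swap of the two blocks; only then is $\theta$ an anti-endomorphism of $V_{(m,m)}$, and $\theta^2=+1$ follows immediately. Your worry about ``sign bookkeeping'' is misplaced---the real obstacle is remembering the swap, after which the square is trivially the identity. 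With that correction your outline matches the paper's proof.
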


\begin{proof}
For $m\neq n$ the complex irreducible representations of SL(2,C) are
C-complex. The complexification of $W_{(m,n)}$ verifies
$W_{(m,n)}^c=(V^+_m\otimes V^-_n)\oplus (V^-_m\otimes V^+_n)$.

For $m=n$ the complex irreducible representations of SL(2,C) are
C-real. In a Majorana basis, the C-conjugation operator of
$V_{(m,m)}$, $\theta$, is defined as 
$\theta(u\otimes v)\equiv v^*\otimes u^*$, where $u\in V^+_m$ and $v\in V^-_m$. 
We can check that there is a bijection $\alpha:W_{(m,m)}\to
(V_{(m,m)})_\theta$, 
defined by $\alpha(w)\equiv \frac{1-i(i\gamma^5)_1\otimes 1}{2}w$; 
$\alpha^{-1}(v)\equiv v+v^*$, for $w\in W_{(m,m)}$, $v\in (V_{(m,m)})_\theta$.

Using the map from Section 2,
we can check that the representations 
$W_{(m,n)}$, with $m\geq n$, are the unique 
finite-dimensional real irreducible representations of SL(2,C), up to
isomorphisms.

We can check that $W_{(m,n)}^c$ is equivalent to $W_{(n,m)}^c$,
therefore, invariant under the parity or time reversal transformations.
\end{proof}

As examples of real irreducible representations of $SL(2,C)$ we have
for $(1/2,0)$ the Majorana spinor, for $(1/2,1/2)$ the linear
combinations of the matrices $\{1,\gamma^0\vec{\gamma}\}$, for $(1,0)$ the linear 
combinations of the matrices $\{i\vec{\gamma},\vec{\gamma}\gamma^5\}$. The group 
homomorphism is defined as $M(S)(u)\equiv Su$ and $M(S)(A)\equiv S A S^\dagger$, 
for $S\in Spin^+(1,3)$,
$u\in Pinor$, $A\in \{1,\vec{\gamma}\gamma^0\}$ or $A\in
\{i\vec{\gamma},\vec{\gamma}\gamma^5\}$. 

We can check that the domain
of $M$ can be extended to $Pin(1,3)$, leaving the considered vector
spaces invariant. For $m=n$, we can define the ``pseudo-representation'' 
$W_{(m,m)}'\equiv \{((i\gamma^5)_1\otimes 1) w: w\in W_{(m,m)}\}$
which is equivalent to $W_{(m,m)}$ as an $SL(2,C)$ representation, but
under parity transforms with the opposite sign.
As an example, the ``pseudo-representation'' $(1/2,1/2)$ is defined as
the linear combinations of the matrices $\{i\gamma^5,i\gamma^5\vec{\gamma}\gamma^0\}$.

\section{Unitary representations of the Poincare group}
\label{section:Poincare}

\subsection{Bargmann-Wigner fields}
\begin{defn}
\label{defn:Theta}
Consider that $\{M_+,M_-,i\gamma^0M_+,i\gamma^0M_-\}$ and $\{P_+,P_-,iP_+,iP_-\}$ are orthonormal basis 
of the 4 dimensional real vector spaces $Pinor$ and $Pauli$, respectively, verifying:
\begin{align*}
\gamma^3\gamma^5 M_\pm=\pm M_\pm&,\ \sigma^3 P_\pm=\pm P_\pm
\end{align*}
Let $H$ be a real Hilbert space. 
For all $h\in H$, the bijective linear map
$\Theta_H:Pauli\otimes_{\mathbb{R}} H\to Pinor\otimes_{\mathbb{R}}H$ is defined by:
\begin{align*}
\Theta_H(h\otimes_{\mathbb{R}} P_+)=h\otimes_{\mathbb{R}} M_+,&\ \Theta_H(h \otimes_{\mathbb{R}}
iP_+)=h\otimes_{\mathbb{R}} i\gamma^0 M_+\\
\Theta_H(h\otimes_{\mathbb{R}} P_-)=h\otimes_{\mathbb{R}} M_-,&\ \Theta_H(h\otimes_{\mathbb{R}} iP_-)=h\otimes_{\mathbb{R}}i\gamma^0 M_-
\end{align*}
\end{defn}

\begin{defn}
Let $H_n$, with $n\in\{1,2\}$, be two real Hilbert spaces 
and $U:Pauli\otimes_{\mathbb{R}} H_1\to
Pauli\otimes_{\mathbb{R}} H_2$ be an operator.
The operator $U^\Theta:Pinor\otimes_{\mathbb{R}} H_1\to
Pinor\otimes_{\mathbb{R}} H_2$ is defined as
$U^\Theta\equiv
\Theta_{H_2}\circ U\circ \Theta^{-1}_{H_1}$.
\end{defn}

The space of Majorana spinors is isomorphic to
the realification of the space of Pauli spinors.

\begin{defn}
The real Hilbert space 
$Pinor(\mathbb{X})\equiv Pinor\otimes L^2(\mathbb{X})$ 
is the space of square integrable
functions with domain $\mathbb{X}$ and image in $Pinor$.
\end{defn}

\begin{defn}
The complex Hilbert space 
$Pauli(\mathbb{X})\equiv Pauli\otimes L^2(\mathbb{X})$ 
is the space of square integrable functions with domain $\mathbb{X}$
and image in $Pauli$.
\end{defn}

\begin{defn}
The real vector space $Pinor_j$, with $2j$ a positive integer, is the
space of linear combinations of the tensor products of $2j$
Majorana spinors, symmetric on the spinor indexes. The real vector
space $Pinor_0$ is the space of linear combinations of the tensor
products of $2$ Majorana spinors, anti-symmetric on the spinor
indexes.
\end{defn}

\begin{defn}
The real Hilbert space 
$Pinor_j(\mathbb{X})\equiv Pinor_j\otimes L^2(\mathbb{X})$ is the
space of square integrable functions with domain $\mathbb{X}$ and
image in $Pinor_j$.
\end{defn}

\begin{defn}
The space of (real) Bargmann-Wigner fields $BW_{j}(\mathbb{R}^3)$ is defined as:
\begin{align*}
BW_j\equiv 
\{\Psi\in Pinor_{j}(\mathbb{R}^3):
\Big(e^{iH(\vec{x}) t}\Big)_k\Psi=\Big(e^{iH(\vec{x})t}\Big)_1 \Psi; 1\leq k\leq
2j; t\in \mathbb{R}\}
\end{align*}
\end{defn}

Note that if the equality $e^{-iH_1 t}\Psi=e^{-iH_2 t}\Psi$ holds for all differentiable
$\Psi\in H$ then for the continuous linear extension
the equality holds for all $\Psi\in H$, by the bounded linear transform theorem.

\begin{defn}
The complex Hilbert space 
$Dirac_j(\mathbb{X})\equiv Pinor_j(\mathbb{X})\otimes \mathbb{C}$
is the complexification of $Pinor_j(\mathbb{X})$.
The space of complex Bargmann-Wigner fields is the complexification of
the space of real Bargmann-Wigner fields.
\end{defn}

\begin{prop}
\label{prop:unitary}
Consider a unitary operator $U:Pinor_j(\mathbb{R}^3)\to Pinor_j(\mathbb{X})$ such that
$U \circ H^2=E^2 \circ U$, where 
\begin{align*}
iH\{\Psi\}(\vec{x})\equiv
\Big(\gamma^0\vec{\slashed \partial}+i\gamma^0m\Big)_k\Psi(\vec{x})
\end{align*}
the Majorana matrices act on some Majorana index $k$; 
 $E^2\{\Phi\}(X)\equiv E^2(X)\Phi(X)$ with  $E(X)\geq m\geq 0$ a real
number. 

Then the operator $U':Pinor(\mathbb{R}^3)\to Pinor(\mathbb{X})$ is unitary,
where $U'$ is defined by:
\begin{align*}
U'\equiv \frac{E+U H\gamma^0 U^\dagger}{\sqrt{E+m}\sqrt{2E}}
\end{align*}
\end{prop}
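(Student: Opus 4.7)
My plan is to read the formula for $U'$ as the composition $U' = B \circ U$, with $B \equiv (E + A)/\sqrt{2E(E+m)}$ an operator on $Pinor_j(\mathbb{X})$ where $A \equiv U H \gamma^{0} U^{\dagger}$; the proposition's formula only makes typing sense with a trailing $U$, since the numerator $E + U H \gamma^{0} U^{\dagger}$ acts on the target space $Pinor_j(\mathbb{X})$. Unitarity of $U' = BU$ then reduces, by the hypothesised unitarity of $U$, to unitarity of $B$.

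The algebraic heart of the argument is two identities for $K \equiv H\gamma^{0}$ on $Pinor_j(\mathbb{R}^{3})$. Writing $H = \vec{\alpha}\cdot\vec{p} + \beta m$ with $\beta = \gamma^{0}$ and $\alpha^{k}=\gamma^{0}\gamma^{k}$, the anticommutation $\{\alpha^{k},\beta\}=0$ yields $\{\beta,H\}=2m$, hence
\begin{align*}
K + K^{\dagger} = H\beta + \beta H = 2m.
\end{align*}
Since $\beta^{2}=1$ and $H^{2} = |\vec{p}|^{2} + m^{2}$ commutes with $\beta$, one also gets $KK^{\dagger} = H\beta^{2}H = H^{2}$ and $K^{\dagger}K = \beta H^{2}\beta = H^{2}$. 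Transporting through $U$ using the intertwining hypothesis $UH^{2} = E^{2}U$ together with $U^{\dagger}U = 1$, $UU^{\dagger}=1$, one finds that $A = UKU^{\dagger}$ is normal with
\begin{align*}
A + A^{\dagger} = 2m, \qquad AA^{\dagger} = A^{\dagger}A = E^{2}.
\end{align*}

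The next step is to observe that a normal operator $A$ with $|A|^{2} = A^{\dagger}A = E^{2}$ and $E\geq m > 0$ has polar decomposition $A = VE = EV$ with $V$ unitary commuting with $E$, so $[E,A]=0$. Using this together with $A^{\dagger} = 2m - A$, a short direct computation gives
\begin{align*}
(E + A^{\dagger})(E + A) = E^{2} + E(A + A^{\dagger}) + A^{\dagger}A + [E,A] = E^{2} + 2mE + E^{2} = 2E(E+m),
\end{align*}
and symmetrically $(E + A)(E + A^{\dagger}) = 2E(E+m)$. Therefore $B^{\dagger}B = BB^{\dagger} = 1$, so $B$ is unitary, and $U' = BU$ is unitary as a map $Pinor_j(\mathbb{R}^{3})\to Pinor_j(\mathbb{X})$.

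\textbf{Main obstacle.} The essential non-algebraic point is the commutator $[E,A]=0$: it is not manifest from the formula, and without it the cross terms $EA + A^{\dagger}E$ would not collapse to $E(A+A^{\dagger})=2mE$. The proper way to see it is that normality together with $AA^{\dagger}=E^{2}$ forces $|A|=E$ and hence, via polar decomposition (or the continuous functional calculus applied to $AA^{\dagger}$), $A$ commutes with $E$. A secondary nuisance is the type mismatch in the proposition's formula, which one must resolve by reading the statement as $U' = (\text{numerator}/\text{denominator})\,U$; the $m=0$ case also deserves a brief separate check since the denominator contains $\sqrt{E+m}$ and the polar decomposition argument relies on $E \geq m \geq 0$.
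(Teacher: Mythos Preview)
Your proof is correct and follows essentially the same computation as the paper: verify $(E+A^\dagger)(E+A)=2E(E+m)$ using $A+A^\dagger=2m$, $A^\dagger A=E^2$, and $[E,A]=0$. The only notable difference is that the paper obtains $[E,A]=0$ more directly by observing that $E^2=UH^2U^\dagger$ and $H^2=-\vec\partial^{\,2}+m^2$ is a scalar on the spinor index, hence commutes with $H\gamma^0$; your polar-decomposition route is valid but unnecessary once you note $H^2$ is spinor-scalar. Your attention to the type mismatch (reading $U'=BU$) is more careful than the paper, which simply computes $B^\dagger B=1$.
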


\begin{proof}
Note that since $E^2=U^\dagger H^2 U$, $E=\sqrt{E^2}$ commutes with $U H\gamma^0
U^\dagger$. We have that
\begin{align*}
&(U')^\dagger (U')=\frac{E+U\gamma^0 H U^\dagger}{\sqrt{E+m}\sqrt{2E}}
\frac{E+U H\gamma^0 U^\dagger}{\sqrt{E+m}\sqrt{2E}}=1
\end{align*}
We also have that $(U')(U')^\dagger=1$. Therefore, $U'$ is unitary.
\end{proof}

\subsection{Fourier-Majorana Transform}

\begin{rmk}
The Fourier Transform 
$\mathcal{F}_P: Pauli(\mathbb{R}^3)\to Pauli(\mathbb{R}^3)$ is an unitary operator defined by:
\begin{align*}
\mathcal{F}_P\{\psi\}(\vec{p})\equiv\int d^n\vec{x} \frac{e^{-i\vec{p}\cdot
  \vec{x}}}{\sqrt{(2\pi)^n}}\psi(\vec{x}),\ \psi\in Pauli(\mathbb{R}^3)
\end{align*}
Where the domain of the integral is $\mathbb{R}^3$.
\end{rmk}

\begin{rmk}
The inverse Fourier transform verifies:
\begin{align*}
-\vec{\partial}^2\
\mathcal{F}_P^{-1}\{\psi\}(\vec{x})&=
(\mathcal{F}_P^{-1}\circ R)\{\psi\}(\vec{x})\\
i\vec{\partial}_k
\ \mathcal{F}_P^{-1}\{\psi\}(\vec{x})&=
(\mathcal{F}_P^{-1}\circ R_k')\{\psi\}(\vec{x})
\end{align*}
Where $\psi\in Pauli(\mathbb{R}^3)$ and
$R,R_k':Pauli(\mathbb{R}^3)\to Pauli(\mathbb{R}^3)$, with
$k\in\{1,2,3\}$, are linear maps defined by:
\begin{align*}
R\{\psi\}(\vec{p})&\equiv (\vec{p})^2 \psi(\vec{p})\\
R_k'\{\psi\}(\vec{p})&\equiv \vec{p}_k\ \psi(\vec{p})
\end{align*}
\end{rmk}

\begin{defn}
The Fourier-Majorana transform 
$\mathcal{F}_M: Pinor_j(\mathbb{R}^3)\to Pinor_j(\mathbb{R}^3)$
is an unitary operator defined by:

\begin{align*}
\mathcal{F}_M\{\Psi\}(\vec{p})\equiv\int d^3\vec{x}
\Big(\frac{e^{-i\gamma^0\vec{p}\cdot\vec{x}}}{\sqrt{(2\pi)^3}}\Big)_1
\prod_{k=1}^{2j}\Big(\frac{E_p+H(\vec{x})\gamma^0}{\sqrt{E_p+m}\sqrt{2E_p}}\Big)_k
\Psi(\vec{x}),\ \Psi\in Pinor_j(\mathbb{R}^3)
\end{align*}
The matrices with the index $k$ apply on
the corresponding spinor index of $\Psi$.
\end{defn}

The inverse Fourier-Majorana transform verifies:
\begin{align*}
(iH(\vec{x}))_k\
\mathcal{F}_M^{-1}\{\psi\}(\vec{x})&=
(\mathcal{F}_M^{-1}\circ R)\{\psi\}(\vec{x})\\
\vec{\partial}_l\ \mathcal{F}_M^{-1}\{\psi\}(\vec{x})&=
(\mathcal{F}_M^{-1}\circ R')\{\psi\}(\vec{x})
\end{align*}
Where $\psi\in Pinor_j(\mathbb{R}^3)$ and
$R,R':Pinor_j(\mathbb{R}^3)\to Pinor_j(\mathbb{R}^3)$ are linear maps
defined by:
\begin{align*}
R\{\psi\}(\vec{p})&\equiv (i\gamma^0)_k E_p \psi(\vec{p})\\
R'\{\psi\}(\vec{p})&\equiv (i\gamma^0)_1\vec{p}_l\ \psi(\vec{p})
\end{align*}

\subsection{Hankel-Majorana Transform}

\begin{defn}
Let $\vec{x}\in \mathbb{R}^3$. The spherical coordinates
parametrization is:
\begin{align*}
\vec{x}=r(\sin(\theta)\sin(\varphi)\vec{e_1}+\sin(\theta)\sin(\varphi)\vec{e_2}+\cos(\theta)\vec{e}_3)
\end{align*}
where $\{\vec{e}_1,\vec{e}_2,\vec{e}_3\}$ is a fixed orthonormal basis of
$\mathbb{R}^3$ and $r\in [0,+\infty[$, $\theta \in [0,\pi]$, $\varphi
\in [-\pi,\pi]$.
\end{defn}

\begin{defn}
Let
\begin{align*}
\mathbb{S}^3\equiv \{(p,l,\mu):p\in \mathbb{R}_{\geq 0}; 
l,\mu \in \mathbb{Z}; l\geq 0; -l \leq \mu\leq l\}
\end{align*}
The Hilbert space $L^2(\mathbb{S}^3)$ is the real Hilbert space of real
Lebesgue square integrable functions of $\mathbb{S}^3$. The internal product is:
\begin{align*}
<f,g>=\sum_{l=0}^{+\infty}\sum_{\mu=-l}^{l-1}\int_0^{+\infty} dp f(p,l,\mu)
g(p,l,\mu),\ f,g\in L^2(\mathbb{S}^3)
\end{align*}
\end{defn}

\begin{defn}
The Spherical transform $\mathcal{H}_{P}: Pauli(\mathbb{R}^3)\to Pauli(\mathbb{S}^3)$ 
is an operator defined by:
\begin{align*}
\mathcal{H}_{P}\{\psi\}(p,l,\mu)\equiv\int r^2 dr d(\cos\theta)d\varphi
\frac{2 p}{\sqrt{2\pi}}j_l(pr)Y_{l\mu}(\theta,\varphi)\psi(r,\theta,\varphi),\ \psi\in Pauli(\mathbb{R}^3)
\end{align*}
The domain of the integral is $\mathbb{R}^3$. The spherical
Bessel function of the first kind $j_l$ \cite{bessel},
the spherical harmonics $Y_{l\mu}$\cite{harmonics} and the associated Legendre
functions of the first kind $P_{l\mu}$ are:
\begin{align*}
j_l(r)\equiv& r^l\Big(-\frac{1}{r}\frac{d}{dr}\Big)^l \frac{\sin
  r}{r}\\
Y_{l\mu}(\theta,\varphi)\equiv&\sqrt{\frac{2l+1}{4\pi}\frac{(l-m)!}{(l+m)!}}
P_{l}^\mu(\cos\theta)e^{i\mu \varphi}\\
P_{l}^\mu(\xi)\equiv&\frac{(-1)^{\mu}}{2^{l}l!}(1-\xi^{2})^{\mu/2}
\frac{\mathrm{d}^{l+\mu}}{\mathrm{d}\xi^{l+\mu}}(\xi^{2}-1)^{l}
\end{align*}
\end{defn}

\begin{rmk}
Due to the properties of spherical harmonics and Bessel functions, the
Spherical transform  is an unitary operator. The inverse Spherical
transform verifies:
\begin{align*}
-\vec{\partial}^2\
\mathcal{H}_P^{-1}\{\psi\}(\vec{x})&=
(\mathcal{H}_P^{-1}\circ R)\{\psi\}(\vec{x})\\
(-x^1i\partial_2+x^2i\partial_1)
\ \mathcal{H}_P^{-1}\{\psi\}(\vec{x})&=
(\mathcal{H}_P^{-1}\circ R')\{\psi\}(\vec{x})
\end{align*}
Where $\psi\in Pauli(\mathbb{S}^3)$ and
$R,R':Pauli(\mathbb{S}^3)\to Pauli(\mathbb{S}^3)$ 
are linear maps defined by:
\begin{align*}
R\{\psi\}(p,l,\mu)&\equiv p^2 \psi(p,l,\mu)\\
R'\{\psi\}(p,l,\mu)&\equiv \mu\ \psi(p,l,\mu)
\end{align*}
\end{rmk}

\begin{defn}
The Hilbert space $Pinor_{j,n}$, with $(j-\nu)$ an
integer and $-j\leq n \leq j$ is defined as:
\begin{align*}
Pinor_{j,n}\equiv \{\Psi\in Pinor_{j}:
\sum_{k=1}^{k=2j}(\gamma^0)_1\Big(\gamma^0\gamma^3\gamma^5\Big)_k\Psi
=2n \Psi\}
\end{align*}
Where $\Big(\gamma^3\gamma^5\Big)_k$ is the matrix 
$\gamma^3\gamma^5$ acting on the Majorana index $k$. 
\end{defn}

\begin{defn}
The Spherical transform 
$\mathcal{H}_{P}': Pinor_j(\mathbb{R}^3)\to Pinor_j(\mathbb{S}^3)$ 
is an operator defined by:
\begin{align*}
\mathcal{H}_{P}'\{\psi\}(p,l,J,\nu)\equiv
\sum_{\mu=-l}^l\sum_{n=-j}^j
<l\mu jn|J\nu>\Big(\mathcal{H}_{P}^{\Theta}\Big)_1\{\psi\}(p,l,\mu,n),
\ \psi\in Pinor_j(\mathbb{R}^3)
\end{align*}
$<l\mu jn|J\nu>$ are the Clebsh-Gordon coefficients and
$\psi(p,l,\mu,n)\in Pinor_{j,n}$ such that
$\psi(p,l,\mu)=\sum_{n=-j}^j \psi(p,l,\mu,n)$. $(j-n)$, $(J-\nu)$ and
$(J-j)$ are integers, with $-J\leq\nu\leq J$ and $|j-l|\leq J \leq
j+l$.
$\Big(\mathcal{H}_{P}^{\Theta}\Big)_1$ is the realification of the
transform $\mathcal{H}_{P}$, with the imaginary number replaced by the
matrix $i\gamma^0$ acting on the first Majorana index of $\psi$. 
\end{defn}

\begin{defn}
The Hankel-Majorana transform 
$\mathcal{H}_M: Pinor_j(\mathbb{R}^3)\to Pinor_j(\mathbb{S}^3)$
is a unitary operator defined by:
\begin{align*}
&\mathcal{H}_M\{\Psi\}(p,l,J,\nu)\equiv
\sum_{\mu=-l}^l\sum_{n=-j}^j <l\mu jn|J\nu>\int d^3\vec{x}\\
&\Big(\frac{2 p}{\sqrt{2\pi}}j_l(pr)Y_{l\mu}(\theta,\varphi)\Big)_1
\prod_{k=1}^{2j}\Big(\frac{E_p+H(\vec{x})\gamma^0}{\sqrt{E_p+m}\sqrt{2E_p}}\Big)_k
\Psi(\vec{x},n)
\end{align*}

The matrices with the index $k$ apply on
the corresponding spinor index of 
$\Psi\in Pinor_j(\mathbb{R}^3)$. 
$<l\mu jn|J\nu>$ are the Clebsh-Gordon coefficients and
$\Psi(\vec{x},n)\in Pinor_{j,n}$ such that
$\Psi(\vec{x})=\sum_{n=-j}^j \Psi(\vec{x},n)$.
\end{defn}

The inverse Hankel-Majorana transform verifies:
\begin{align*}
(iH(\vec{x}))_k\
\mathcal{H}_M^{-1}\{\psi\}(\vec{x})&=
(\mathcal{H}_M^{-1}\circ R)\{\psi\}(\vec{x})\\
(-x^1\partial_2+x^2\partial_1+\sum_{k=1}^{2j}(i\gamma^0\gamma^3\gamma^5)_k)
\ \mathcal{H}_M^{-1}\{\psi\}(\vec{x})&=
(\mathcal{H}_M^{-1}\circ R')\{\psi\}(\vec{x})
\end{align*}
Where $\psi\in Pinor_j(\mathbb{S}^3)$ and $R,R':Pinor_j(\mathbb{S}^3)\to Pinor_j(\mathbb{S}^3)$ are
linear maps defined by:
\begin{align*}
R\{\psi\}(p,l,J,\nu)&\equiv (i\gamma^0)_k E_p \psi(p,l,J,\nu)\\
R'\{\psi\}(p,l,J,\nu)&\equiv (i\gamma^0)_1\nu\ \psi(p,l,J,\nu)
\end{align*}

\subsection{Application to the momentum of Majorana spinor fields}

\begin{defn}
The Majorana-Fourier Transform $\mathcal{F}_M: Pinor(\mathbb{R}^3)\to
Pinor(\mathbb{R}^3)$ is an operator defined by:
\begin{align*}
\mathcal{F}_M\{\Psi\}(\vec{p})&\equiv \int d^3\vec{x}\ \frac{e^{-i\gamma^0\vec{p}\cdot
  \vec{x}}}{\sqrt{(2\pi)^3}}
\frac{\slashed p \gamma^0+m}{\sqrt{E_p+m}\sqrt{2E_p}}\Psi(\vec{x}),
\ \Psi\in Pinor(\mathbb{R}^3)
\end{align*}
Where the domain of the integral is $\mathbb{R}^3$, $m\geq 0$,
$E_p\equiv\sqrt{\vec{p}^2+m^2}$ and $\slashed p=
E_p\gamma^0-\vec{p}\cdot \vec{\gamma}$.
\end{defn}

\begin{prop}
The Majorana-Fourier Transform is a unitary operator.
\end{prop}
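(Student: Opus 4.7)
The plan is to recognize $\mathcal{F}_M$ as an orthogonal rotation, at each pair of momenta $\vec p,\,-\vec p$, in the target of the underlying realified Fourier transform. First I would verify that the operator $\mathcal{F}\{\Psi\}(\vec p)\equiv\int e^{-i\gamma^0\vec p\cdot\vec x}\Psi(\vec x)\,d^3\vec x/\sqrt{(2\pi)^3}$ is unitary on $Pinor(\mathbb{R}^3)$: working in a Majorana basis in which $i\gamma^0$ is real with $(i\gamma^0)^2=-1$, complexifying and diagonalizing $\gamma^0$ decomposes $\mathcal{F}$ into the standard complex Fourier transform on the eigenspace where $\gamma^0=+1$ and its inverse on the eigenspace where $\gamma^0=-1$; each is unitary by Plancherel, and unitarity descends to the real Hilbert space.

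The next step is the algebraic decomposition of the integrand. Write
\begin{equation*}
\frac{\slashed p\,\gamma^0+m}{\sqrt{(E_p+m)\,2E_p}} \;=\; N_c(\vec p)+N_a(\vec p),\qquad N_c\equiv\frac{E_p+m}{\sqrt{(E_p+m)\,2E_p}},\quad N_a\equiv\frac{\gamma^0\vec p\cdot\vec\gamma}{\sqrt{(E_p+m)\,2E_p}}.
\end{equation*}
Since $\{\gamma^0,\gamma^j\}=0$, the scalar piece $N_c$ commutes with the kernel while $e^{-i\gamma^0\vec p\cdot\vec x}N_a=N_a\,e^{+i\gamma^0\vec p\cdot\vec x}$. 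Substituting this inside the defining integral produces the clean identity
\begin{equation*}
\mathcal{F}_M\{\Psi\}(\vec p)=N_c(\vec p)\,\phi(\vec p)+N_a(\vec p)\,\phi(-\vec p),\qquad \phi\equiv\mathcal{F}\{\Psi\}.
\end{equation*}

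The heart of the proof is the pair of identities $N_c^2+N_a^2=1$ (equivalent to $(E_p+m)^2+\vec p^2=2E_p(E_p+m)$, together with $(\gamma^0\vec p\cdot\vec\gamma)^2=\vec p^{\,2}$) and the parities $N_c(-\vec p)=N_c(\vec p)$, $N_a(-\vec p)=-N_a(\vec p)$. They show that, pointwise in $\vec p$, $\mathcal{F}_M$ acts as the block-orthogonal matrix
\begin{equation*}
\begin{pmatrix}\mathcal{F}_M\Psi(\vec p)\\ \mathcal{F}_M\Psi(-\vec p)\end{pmatrix}=\begin{pmatrix}N_c & N_a\\ -N_a & N_c\end{pmatrix}\begin{pmatrix}\phi(\vec p)\\ \phi(-\vec p)\end{pmatrix}.
\end{equation*}
Upon integration over $\vec p$, the diagonal terms reassemble (via the algebraic identity above and a change of variables $\vec p\to-\vec p$) to $\|\phi\|^2=\|\Psi\|^2$, while the cross terms cancel against their $\vec p\to-\vec p$ image thanks to the symmetry of the real inner product and the oddness of $N_a$; hence $\mathcal{F}_M$ is an isometry.

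Surjectivity, and so unitarity, follows by displaying the explicit two-sided inverse
\begin{equation*}
\mathcal{F}_M^{-1}\{\Phi\}(\vec x)=\int d^3\vec p\,\frac{(\slashed p\,\gamma^0+m)\,e^{+i\gamma^0\vec p\cdot\vec x}}{\sqrt{(2\pi)^3}\sqrt{E_p+m}\sqrt{2E_p}}\,\Phi(\vec p),
\end{equation*}
which corresponds to the block-transpose $\bigl(\begin{smallmatrix}N_c & -N_a\\ N_a & N_c\end{smallmatrix}\bigr)$ of the previous matrix and composes with $\mathcal{F}_M$ to the identity by the same algebra. The main obstacle is that the naive factorization $\mathcal{F}_M=N\circ\mathcal{F}$ fails because $N_a$ anti-commutes with $\gamma^0$; the key observation that rescues the argument is precisely this anti-commutation, which converts $e^{-i\gamma^0\vec p\cdot\vec x}$ into $e^{+i\gamma^0\vec p\cdot\vec x}$ and so couples the momenta $\vec p$ and $-\vec p$ into the orthogonal block above.
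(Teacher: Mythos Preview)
Your proof is correct and follows essentially the same route as the paper: both factor $\mathcal{F}_M$ as the realified Fourier transform with kernel $e^{-i\gamma^0\vec p\cdot\vec x}$ (the paper calls this $\mathcal{F}_P^\Theta$, arriving at its unitarity via the $\Theta$ isomorphism to the Pauli-Fourier transform rather than by diagonalizing $\gamma^0$, but this is the same content) followed by the $2\times2$ orthogonal block $\bigl(\begin{smallmatrix}N_c & N_a\\ -N_a & N_c\end{smallmatrix}\bigr)$ acting on the pair $(\phi(\vec p),\phi(-\vec p))$. Your explicit inverse is a small addition; the paper simply notes that the block is orthogonal and concludes.
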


\begin{proof}
The proof is immediate using Prop.~\ref{prop:unitary}, but we will do it
in an independent more explicit way.
The Majorana-Fourier Transform can be written as:
\begin{align*}
\mathcal{F}_M\{\Psi\}(\vec{p})\equiv& \sqrt{\frac{E_p+m}{2E_p}}\Big(\int d^3\vec{x}\ \frac{e^{-i\gamma^0\vec{p}\cdot
  \vec{x}}}{\sqrt{(2\pi)^3}}\Psi(\vec{x})\Big)\\
-&\sqrt{\frac{E_p-m}{2E_p}}\frac{\vec{p}\cdot\vec{\gamma}\gamma^0}{|\vec{p}|}\Big(\int d^3\vec{x}\ \frac{e^{+i\gamma^0\vec{p}\cdot
  \vec{x}}}{\sqrt{(2\pi)^3}}\Psi(\vec{x})\Big)
\end{align*}
So, one gets:
\begin{align*}
\mathcal{F}_{M}\{\Psi\}=S \circ \mathcal{F}^\Theta_{P}\{\Psi\}
\end{align*}
Where $S:Pinor(\mathbb{R}^3)\to Pinor(\mathbb{R}^3)$ is a bijective linear map defined by:
\begin{align*}
\left[ \begin{array}{l}
S\{\Psi\}(+\vec{p})\\
S\{\Psi\}(-\vec{p})
\end{array}
\right]&\equiv
\left[ \begin{array}{cc}
\sqrt{\frac{E_p+m}{2E_p}} 
& -\sqrt{\frac{E_p-m}{2E_p}}\frac{\vec{p}\cdot\vec{\gamma}\gamma^0}{|\vec{p}|}\\
\sqrt{\frac{E_p-m}{2E_p}}\frac{\vec{p}\cdot\vec{\gamma}\gamma^0}{|\vec{p}|}
&\sqrt{\frac{E_p+m}{2E_p}}
\end{array}
\right]\ 
\left[ \begin{array}{l}
\Psi(+\vec{p})\\
\Psi(-\vec{p})
\end{array}
\right]
\end{align*}
We can check that the $2\times 2$ matrix appearing in the equation
above is orthogonal. Therefore $S$ is an unitary operator.
Since $\mathcal{F}^\Theta_{P}$ is also unitary,
$\mathcal{F}_{M}$ is unitary.
\end{proof}

\begin{prop}
The inverse Majorana-Fourier Transform verifies:
\begin{align*}
(\gamma^0\vec{\gamma}\cdot \vec{\partial}+i\gamma^0
m)\mathcal{F}_M^{-1}\{\Psi\}(\vec{x})&=(\mathcal{F}_M^{-1}\circ
R)\{\Psi\}(\vec{x})\\
\vec{\partial}_j\mathcal{F}_M^{-1}\{\Psi\}(\vec{x})&=(\mathcal{F}_M^{-1}\circ
R_j)\{\Psi\}(\vec{x})
\end{align*}
Where $\Psi\in Pinor(\mathbb{R}^3)$ and $R,R_j:Pinor(\mathbb{R}^3)\to Pinor(\mathbb{R}^3)$ are
 linear maps defined by $R\{\Psi\}(\vec{p})=i\gamma^0
 E_p\Psi(\vec{p})$ and 
$R_j\{\Psi\}(\vec{p})=i\gamma^0 \vec{p}_j\Psi(\vec{p})$ .
\end{prop}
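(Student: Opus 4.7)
The plan is to reduce the two identities to equivalent ``forward'' identities for $\mathcal{F}_M$, exploiting that $\mathcal{F}_M$ is unitary by the previous proposition. Composing both sides of each claim with $\mathcal{F}_M$ from the left, it suffices to show
\[
\mathcal{F}_M\circ\vec{\partial}_j = R_j\circ\mathcal{F}_M,\qquad \mathcal{F}_M\circ(\gamma^0\vec{\gamma}\cdot\vec{\partial}+i\gamma^0m) = R\circ\mathcal{F}_M.
\]
Both will be verified by plugging into the explicit definition of $\mathcal{F}_M$, integrating by parts on the $\vec{x}$-derivative (assuming $\Psi$ is Schwartz so boundary terms vanish), and simplifying.

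For the momentum identity, integration by parts transfers $\partial_j$ onto the exponential $e^{-i\gamma^0\vec{p}\cdot\vec{x}}$, producing a factor $i\gamma^0 p_j$ after the sign flip. Because $i\gamma^0$ commutes with the exponential and $p_j$ is a scalar, this factor can be pulled outside the $\vec{x}$-integral on the left, yielding exactly $R_j\,\mathcal{F}_M\{\Psi\}(\vec{p})$.

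For the energy identity, the same manipulation gives
\[
\mathcal{F}_M\{(\gamma^0\vec{\gamma}\cdot\vec{\partial}+i\gamma^0m)\Psi\}(\vec{p}) = \int\frac{d^3\vec{x}}{\sqrt{(2\pi)^3}}\,e^{-i\gamma^0\vec{p}\cdot\vec{x}}\bigl[i\gamma^0 p_j\,N(p)\gamma^0\gamma^j + N(p)\,i\gamma^0 m\bigr]\Psi(\vec{x}),
\]
where $N(p) \equiv (\slashed p\gamma^0+m)/\sqrt{2E_p(E_p+m)}$, while the right-hand side equals $\int\frac{d^3\vec{x}}{\sqrt{(2\pi)^3}}\,e^{-i\gamma^0\vec{p}\cdot\vec{x}}\,i\gamma^0 E_p\,N(p)\,\Psi(\vec{x})$. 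The claim thus reduces to the matrix identity
\[
i\gamma^0 p_j\,N(p)\gamma^0\gamma^j + N(p)\,i\gamma^0 m = i\gamma^0 E_p\,N(p).
\]

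To verify this last identity I would expand $N(p) = (E_p+m-\vec{p}\cdot\vec{\gamma}\gamma^0)/\sqrt{2E_p(E_p+m)}$ and use the Clifford-algebra relations $\{\gamma^0,\gamma^j\}=0$, $(\gamma^0)^2=1$, and $(\vec{p}\cdot\vec{\gamma})^2 = -\vec{p}^2$, together with the on-shell relation $\vec{p}^2+m^2 = E_p^2$, to reorganize the cross terms; the coefficients of $\gamma^0$, $\vec{\gamma}\cdot\vec{p}$, and $\vec{p}\cdot\vec{\gamma}\gamma^0$ on both sides then match. This algebraic verification is the main obstacle. It works because $N(p)$ is essentially a momentum-dependent Foldy--Wouthuysen rotation, chosen precisely so that $\mathcal{F}_M$ diagonalizes the free Dirac Hamiltonian $iH$ as the multiplication operator $R=i\gamma^0 E_p$; at heart the identity is the projector property $(\slashed p-m)(\slashed p+m)=0$ on the mass shell, rewritten so that the factor $N(p)$ intertwines the spatial Dirac operator with the energy on the momentum side.
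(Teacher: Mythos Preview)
Your approach is correct but differs from the paper's. The paper does not work with the integral kernel directly; instead it reuses the factorization $\mathcal{F}_M=S\circ\mathcal{F}_P^\Theta$ established in the unitarity proof, where $\mathcal{F}_P^\Theta$ is the realified ordinary Fourier transform and $S$ is the $2\times 2$ block-orthogonal matrix in the variables $(\Psi(+\vec p),\Psi(-\vec p))$. It first computes how the Dirac operator and $\partial_j$ act through $(\mathcal{F}_P^\Theta)^{-1}$, obtaining block matrices $Q$ and $R_j$, and then verifies the intertwining relations $Q\circ S^{-1}=S^{-1}\circ R$ and $R_j\circ S^{-1}=S^{-1}\circ R_j$ by explicit $2\times 2$ matrix multiplication. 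The anti-commutation of $i\gamma^0$ with $\vec p\cdot\vec\gamma\gamma^0$ is absorbed by the sign structure of the $\pm\vec p$ blocks rather than by a Clifford identity on $N(p)$.

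Your route is more self-contained: you bypass the $S$-factorization entirely and reduce everything to the single matrix identity $i\gamma^0 p_j\,N(p)\gamma^0\gamma^j + N(p)\,i\gamma^0 m = i\gamma^0 E_p\,N(p)$. That identity does hold: writing $\tilde N=E_p+m-\vec p\cdot\vec\gamma\gamma^0$ and using $(\gamma^0)^2=1$, $\gamma^0\gamma^j=-\gamma^j\gamma^0$, $p_jp_k\gamma^k\gamma^j=-\vec p^{\,2}$, the left side collapses to $iE_p\,\vec p\cdot\vec\gamma+i\gamma^0 E_p(E_p+m)$, which is exactly $i\gamma^0 E_p\tilde N$ after using $\gamma^0\vec\gamma\gamma^0=-\vec\gamma$. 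So the ``main obstacle'' you flagged is a two-line computation. Your argument for the $R_j$ part is also sound: the factor $i\gamma^0 p_j$ arising from differentiating the exponential already sits to the left of both $e^{-i\gamma^0\vec p\cdot\vec x}$ and $N(p)$, so no commutation past $N(p)$ is needed. The paper's approach has the advantage of recycling structure from the previous proposition; yours has the advantage of making the Foldy--Wouthuysen intertwining property of $N(p)$ explicit in one algebraic step.
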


\begin{proof}
We have $\mathcal{F}^{-1}_{M}=(\mathcal{F}^{\Theta}_{P})^{-1}\circ S^{-1}$. Then:
\begin{align*}
(\gamma^0\vec{\gamma}\cdot \vec{\partial}+i\gamma^0
m)(\mathcal{F}^\Theta_P)^{-1}\{\Psi\}(\vec{x})=((\mathcal{F}^\Theta_P)^{-1}\circ Q) \{\Psi\}(\vec{x})
\end{align*}
Where $Q:Pinor(\mathbb{R}^3)\to Pinor(\mathbb{R}^3)$ is a linear map defined by:
\begin{align*}
\left[ \begin{array}{l}
Q\{\Psi\}(+\vec{p})\\
Q\{\Psi\}(-\vec{p})
\end{array}
\right]&\equiv
\left[ \begin{array}{cc}
i\gamma^0 m 
& i\vec{p}\cdot \vec{\gamma}\\
-i\vec{p}\cdot \vec{\gamma}
& i\gamma^0 m
\end{array}
\right]\ 
\left[ \begin{array}{l}
\Psi(+\vec{p})\\
\Psi(-\vec{p})
\end{array}
\right]
\end{align*}
Now we show that $Q\circ S^{-1}=S^{-1}\circ R$:
\begin{align*}
&\left[ \begin{array}{cc}
i\gamma^0 m 
& i\vec{p}\cdot \vec{\gamma}\\
-i\vec{p}\cdot \vec{\gamma}
& i\gamma^0 m
\end{array}
\right]\ 
\left[ \begin{array}{cc}
\sqrt{\frac{E_p+m}{2E_p}} 
& \sqrt{\frac{E_p-m}{2E_p}}\frac{\vec{p}\cdot\vec{\gamma}\gamma^0}{|\vec{p}|}\\
-\sqrt{\frac{E_p-m}{2E_p}}\frac{\vec{p}\cdot\vec{\gamma}\gamma^0}{|\vec{p}|}
&\sqrt{\frac{E_p+m}{2E_p}}
\end{array}
\right]=\\
&=\left[ \begin{array}{cc}
\sqrt{\frac{E_p+m}{2E_p}} 
& \sqrt{\frac{E_p-m}{2E_p}}\frac{\vec{p}\cdot\vec{\gamma}\gamma^0}{|\vec{p}|}\\
-\sqrt{\frac{E_p-m}{2E_p}}\frac{\vec{p}\cdot\vec{\gamma}\gamma^0}{|\vec{p}|}
&\sqrt{\frac{E_p+m}{2E_p}}
\end{array}\right]\ 
 \left[ \begin{array}{cc}
 i\gamma^0 E_p & 0\\
 0 & i\gamma^0 E_p
 \end{array}
 \right]
\end{align*}

We also have that:
\begin{align*}
\vec{\partial}_j(\mathcal{F}^\Theta_P)^{-1}\{\Psi\}(\vec{x})=((\mathcal{F}^\Theta_P)^{-1}\circ R_j) \{\Psi\}(\vec{x})
\end{align*}
Where $R_j:Pinor(\mathbb{R}^3)\to Pinor(\mathbb{R}^3)$ is the linear map defined by:
\begin{align*}
\left[ \begin{array}{l}
R_j\{\Psi\}(+\vec{p})\\
R_j\{\Psi\}(-\vec{p})
\end{array}
\right]&\equiv
\left[ \begin{array}{cc}
i\gamma^0 \vec{p}_j 
& 0\\
0
& -i\gamma^0 \vec{p}_j 
\end{array}
\right]\ 
\left[ \begin{array}{l}
\Psi(+\vec{p})\\
\Psi(-\vec{p})
\end{array}
\right]
\end{align*}
It verifies $R_j\circ S^{-1}=S^{-1}\circ R_j$.
\end{proof}

\begin{defn}
The Energy Transform $\mathcal{E}: Pinor(\mathbb{R})\to
Pinor(\mathbb{R})$ is an operator defined by:
\begin{align*}
\mathcal{E}\{\Psi\}(p^0)&\equiv \int dx^0\ \frac{e^{i\gamma^0p^0x^0}}{\sqrt{2\pi}}\Psi(x^0),\ \Psi\in Pinor(\mathbb{R})
\end{align*}
Where the domain of the integral is $\mathbb{R}$, $m\geq 0$.
\end{defn}

\begin{prop}
The Energy transform is an unitary operator.
\end{prop}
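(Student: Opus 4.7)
The strategy is to reduce the claim to the unitarity of the ordinary one-dimensional complex Fourier transform by transporting the problem through the isomorphism $\Theta$ of Definition~\ref{defn:Theta}, exactly as was done in Proposition~\ref{prop:unitary} for the Majorana-Fourier transform. The starting observation is that the Clifford relations $(i\gamma^\mu)(i\gamma^\nu)+(i\gamma^\nu)(i\gamma^\mu)=-2\eta^{\mu\nu}$ imply $(i\gamma^0)^2=-1$, and in any Majorana basis $i\gamma^0$ is a real orthogonal matrix. Hence $i\gamma^0$ is an R-imaginary operator on $Pinor$, so the formal expression $e^{i\gamma^0 p^0 x^0}=\cos(p^0 x^0)+i\gamma^0\sin(p^0 x^0)$ is a well-defined orthogonal endomorphism of $Pinor$ depending smoothly on $p^0 x^0$.

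First I would introduce the auxiliary \emph{Pauli} energy transform $\mathcal{E}_P:Pauli(\mathbb{R})\to Pauli(\mathbb{R})$ given by
\begin{align*}
\mathcal{E}_P\{\psi\}(p^0)\equiv \int dx^0\ \frac{e^{ip^0 x^0}}{\sqrt{2\pi}}\psi(x^0),
\end{align*}
which is the standard one-dimensional Fourier transform acting componentwise on Pauli-valued functions. Being a tensor product of the identity on the finite-dimensional complex Hilbert space $Pauli$ with the classical Plancherel Fourier transform on $L^2(\mathbb{R})$, $\mathcal{E}_P$ is unitary on the complex Hilbert space $Pauli(\mathbb{R})$.

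Next I would check, directly from Definition~\ref{defn:Theta}, that $\Theta$ intertwines the scalar multiplication by $i$ on $Pauli$ with the action of $i\gamma^0$ on $Pinor$: by inspection of the basis mapping, $\Theta(iP_\pm)=i\gamma^0 M_\pm$ and therefore $\Theta\circ(i\cdot)=(i\gamma^0)\circ \Theta$. Promoting $\Theta$ to a bijective linear isometry $\Theta_{L^2(\mathbb{R})}:Pauli(\mathbb{R})\to Pinor(\mathbb{R})$ of real Hilbert spaces (by Proposition~\ref{prop:unitary}), this intertwining gives $e^{i\gamma^0 p^0 x^0}\circ \Theta = \Theta\circ e^{ip^0 x^0}$ pointwise. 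Integrating against $dx^0/\sqrt{2\pi}$ then yields exactly the identity $\mathcal{E}=\mathcal{E}_P^{\Theta}=\Theta_{L^2(\mathbb{R})}\circ \mathcal{E}_P\circ \Theta_{L^2(\mathbb{R})}^{-1}$.

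Finally, since $\mathcal{E}_P$ is unitary on the complex Hilbert space $Pauli(\mathbb{R})$, Proposition~\ref{prop:unitary} implies that its realification $\mathcal{E}_P^{\Theta}=\mathcal{E}$ is unitary on the real Hilbert space $Pinor(\mathbb{R})$, which is the claim. No real obstacle is anticipated: the only point demanding care is the interchange of the matrix exponential with $\Theta$, which is settled by the explicit basis identification of Definition~\ref{defn:Theta}; everything else is a direct application of results already established earlier in this chapter.
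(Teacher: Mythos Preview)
Your proposal is correct and follows exactly the paper's approach: the paper writes $\mathcal{E}\{\Psi\}(p^0)=\Theta_{L^2}\circ\mathcal{F}_P(-p^0)\circ\Theta^{-1}_{L^2}\{\Psi\}$, observing that $\mathcal{F}_P(-p^0)$ is the one-dimensional Pauli--Fourier transform (your $\mathcal{E}_P$), and concludes unitarity of $\mathcal{E}$ from that of $\mathcal{F}_P$. Your write-up simply makes explicit the intertwining $\Theta\circ(i\cdot)=(i\gamma^0)\circ\Theta$ that the paper leaves implicit.
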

\begin{proof}
The Energy transform can be written as:
\begin{align*}
\mathcal{E}\{\Psi\}(p^0)=\Theta_{L^2}\circ
\mathcal{F}_P(-p^0)\circ\Theta^{-1}_{L^2}\{\Psi\}
\end{align*}
Where $\mathcal{F}_P(-p^0)$ is a Pauli-Fourier transform over
$\mathbb{R}$ and $\Theta$ was defined in Definition \ref{defn:Theta}.
Since the Pauli-Fourier transform is unitary, so is the
Energy transform.
\end{proof}

The energy transform can be applied in the time coordinate of a
Majorana spinor field, $x^0$, after a (linear or spherical)
momentum transform on the space coordinates, $\vec{x}$, to define an
unitary energy-momentum transform:\\
- for the linear case $\mathcal{E}\circ
\mathcal{F}_M:Pinor(\mathbb{R}^4)\to Pinor(\mathbb{R}^4)$;\\
- for the spherical case $\mathcal{E}\circ
\mathcal{H}_M:Pinor(\mathbb{R}^4)\to Pinor(\mathbb{R}\times
\mathbb{S}^3)$.

\subsection{Real unitary representations of the Poincare group}

\begin{defn}
The $IPin(3,1)$ group is defined as the semi-direct product
$Pin(3,1)\ltimes \mathbb{R}^4$, with the group's product defined as
$(A,a)(B,b)=(AB,a+\Lambda(A)b)$, for $A,B\in Pin(3,1)$ and 
$a,b\in \mathbb{R}^4$ and $\Lambda(A)$ is the Lorentz transformation
corresponding to $A$.

The $ISL(2,C)$ group is isomorphic to the subgroup of $IPin(3,1)$, 
obtained when $Pin(3,1)$ is restricted to $Spin^+(1,3)$. The full/restricted
Poincare group is the representation of the $IPin(3,1)/ISL(2,C)$ group on
Lorentz vectors, defined as 
$\{(\Lambda(A),a): A\in Pin(3,1), a\in \mathbb{R}^4\}$.
\end{defn}

\begin{defn}
Given a Lorentz vector $l$, the little group $G_l$ is the subgroup
of $SL(2,C)$ such that for all $g\in G_l$, $g\slashed l=\slashed l g$. 
\end{defn}

\begin{prop}
Given a Lorentz vector $l$, consider a set of matrices $\alpha_k\in
SL(2,C)$ verifying 
$\alpha_k \slashed l=\slashed k \alpha_k$. Let $H_k \equiv \{\alpha_{\Lambda_S(k)}^{-1} S\alpha_k: S\in SL(2,C)\}$. Then $H_k=G_l$. 
\end{prop}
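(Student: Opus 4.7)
The plan is a direct two-way containment $H_k\subseteq G_l$ and $G_l\subseteq H_k$, relying only on the defining intertwining relation $\alpha_k\slashed l=\slashed k\alpha_k$ and the standard covariance property of the double cover, $S\slashed k S^{-1}=\slashed{\Lambda_S(k)}$, which is built into the definition of $\Lambda$ from Section~\ref{section:Finite} (just adapted from Majorana to Pauli/Dirac form, i.e.\ the same identity used to define $\Upsilon$ in Remark~\ref{rem:SL(2,C)}).

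First I would show $H_k\subseteq G_l$. Take an arbitrary element $T=\alpha_{\Lambda_S(k)}^{-1}S\alpha_k$ of $H_k$ and compute $T\slashed l$. Using $\alpha_k\slashed l=\slashed k\alpha_k$ one gets $T\slashed l=\alpha_{\Lambda_S(k)}^{-1}S\slashed k\alpha_k$. Inserting $S\slashed k=\slashed{\Lambda_S(k)}S$ and then the intertwining relation for $\alpha_{\Lambda_S(k)}$ in the form $\alpha_{\Lambda_S(k)}^{-1}\slashed{\Lambda_S(k)}=\slashed l\,\alpha_{\Lambda_S(k)}^{-1}$ gives $T\slashed l=\slashed l T$, i.e.\ $T\in G_l$. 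This step is mechanical once one keeps track of which side the $\alpha$'s sit on.

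Next I would show $G_l\subseteq H_k$ by explicitly constructing, for each $T\in G_l$, an $S\in SL(2,\mathbb C)$ that produces $T$. The natural candidate is $S\equiv\alpha_k T\alpha_k^{-1}$. The key observation I would check is that this $S$ stabilises $k$: computing
\[
S\slashed kS^{-1}=\alpha_k T(\alpha_k^{-1}\slashed k\alpha_k)T^{-1}\alpha_k^{-1}=\alpha_k T\slashed l T^{-1}\alpha_k^{-1}=\alpha_k\slashed l\alpha_k^{-1}=\slashed k,
\]
where the third equality uses $T\in G_l$. Hence $\Lambda_S(k)=k$, so $\alpha_{\Lambda_S(k)}$ can be chosen equal to $\alpha_k$, and then $\alpha_{\Lambda_S(k)}^{-1}S\alpha_k=\alpha_k^{-1}(\alpha_k T\alpha_k^{-1})\alpha_k=T$, exhibiting $T\in H_k$.

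I do not expect a real obstacle here; the only thing that needs a moment of care is the well-definedness of the map $k\mapsto\alpha_k$, since the intertwiner $\alpha_k$ is only determined up to right-multiplication by an element of $G_l$. One should verify that the set $H_k$ is independent of this ambiguity: replacing $\alpha_k$ by $\alpha_k g$ with $g\in G_l$ simultaneously replaces $\alpha_{\Lambda_S(k)}$ by $\alpha_{\Lambda_S(k)}g'$ for some $g'\in G_l$, and the composition $\alpha_{\Lambda_S(k)}^{-1}S\alpha_k$ changes by conjugation/left-multiplication that keeps it inside $G_l$; alternatively one just notes that both inclusions above hold for every choice of the $\alpha$'s, so $H_k$ is unambiguously $G_l$. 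Once that check is in place, the equality $H_k=G_l$ follows.
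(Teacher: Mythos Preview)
Your proof is correct and follows exactly the same two-inclusion strategy as the paper's proof, just far more explicitly: the paper merely writes ``We can check that $H_k\subset G_l$'' for the first inclusion and, for the reverse, gives the self-referential formula $S=\alpha_{\Lambda_S(k)}s\alpha_k^{-1}$ without justifying why $\Lambda_S(k)=k$, which is precisely the computation you carry out. Your additional remark on the ambiguity of $\alpha_k$ up to right-multiplication by $G_l$ is a point the paper does not address.
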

\begin{proof}
We can check that $H_k\subset G_l$. For any $s\in G_l$, there is
$S=\alpha_{\Lambda_S(k)} s \alpha_k^{-1}$ such that $s\in H_k$. 
\end{proof}

For $i\slashed l=i\gamma^0$, we can set 
$\alpha_p=\frac{\slashed p\gamma^0+m}{\sqrt{E_p+m}\sqrt{2m}}$ and $G_l=SU(2)$.
For $i\slashed l=(i\gamma^0+i\gamma^3)$, we can set $\alpha_p=B_vR_{p}$,
where the boost velocity is $v=\frac{E_p^2-1}{E_p^2+1}$ along $\vec{p}$ and 
$R_p=e^{-\gamma^2\gamma^1 \theta/2}e^{-\gamma^1\gamma^3 \phi/2}$ is a rotation from the $z$
axis to the axis 
$\frac{\vec{\slashed p}}{E_p}=(\sin\phi \cos\theta \gamma_1+\sin\phi \sin\theta
\gamma_2+\cos\phi \gamma_3)$; $G_l=SE(2)$
\begin{align}
SE(2)=\{(1+i\gamma^5(\gamma^1a+\gamma^2b)(\gamma^0+\gamma^3))e^{i\gamma^0\gamma^3\gamma^5\theta}:
a,b,\theta\in \mathbb{R}\}.
\end{align}

\begin{rmk}
The complex irreducible projective representations of the Poincare
group with finite mass split into positive and negative energy
representations, 
which are complex conjugate of each other. They are labeled by one number $j$, with $2j$
being a natural number.
The positive energy representation spaces $V_j$ are, up to isomorphisms, written
as a symmetric tensor product of Dirac spinor
fields defined on the 3-momentum space, verifying
$(\gamma^0)_k\Psi_j(\vec{p})=\Psi_j(\vec{p})$. The matrices with the index $k$ apply in
the corresponding spinor index of $\Psi_j$.

The representation space $V_0$ is, up to isomorphisms, written in a
Majorana basis as a complex scalar defined on the 3-momentum space.

The representation map is given by:
\begin{align*}
L_S\{\Psi\}(\vec{p})&=\sqrt{\frac{(\Lambda^{-1})^0(p)}{E_p}}\prod_{k=1}^{2j}(\alpha^{-1}_{\Lambda(p)}S\alpha_p)_k\Psi(\vec{\Lambda}^{-1}(p))\\
T_a\{\Psi\}(\vec{p})&=e^{-i p\cdot a}\Psi(\vec{p})
\end{align*}
Where $\alpha_p=\frac{\slashed p\gamma^0+m}{\sqrt{E_p+m}\sqrt{2m}}$.
\end{rmk}

\begin{prop}
The real irreducible projective representations of the Poincare
group with finite mass are labeled by one number $j$, with $2j$
being a natural number.
The representation spaces $W_j$ are, up to isomorphisms, written
as a symmetric tensor product of Majorana spinor fields defined
on the 3-momentum space, verifying
$(i\gamma^0)_k\Psi_j(\vec{p})=(i\gamma^0)_1\Psi_j(\vec{p})$. 
The matrices with the index $k$ apply in the corresponding spinor index of
$\Psi_j$.

The representation space $V_0$ is, up to isomorphisms, written in a
Majorana basis as a real scalar defined on the 3-momentum space, times
the identity matrix of a Majorana spinor space.

The representation map is given by:
\begin{align*}
L_S\{\Psi\}(\vec{p})&=\sqrt{\frac{(\Lambda^{-1})^0(p)}{E_p}}\prod_{k=1}^{2j}(\alpha^{-1}_{\Lambda(p)}S\alpha_p)_k\Psi(\vec{\Lambda}^{-1}(p))\\
T_a\{\Psi\}(\vec{p})&=e^{-i\gamma^0 p\cdot a}\Psi(\vec{p})
\end{align*}
\end{prop}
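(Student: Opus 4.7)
The strategy is to apply the complex-to-real correspondence for irreducible systems established in Section \ref{section:Systems}, together with Mackey's imprimitivity construction implicit in the preceding remark on complex irreducibles. First I would observe that for finite mass the complex irreducible unitary projective representations of $ISL(2,\mathbb{C})$ split, as recalled, into positive-energy $V_j^+$ and negative-energy $V_j^-$ subrepresentations which are complex conjugates of one another. Because the sign of $P^0$ is a Poincare invariant, $V_j^+$ and $V_j^-$ are inequivalent as complex representations, so no anti-unitary intertwiner of $V_j^+$ with itself exists. By the definitions of Section \ref{section:Systems}, each $V_j^+$ is therefore C-complex, and by Definition \ref{defn:Rsystem} the pair $V_j^+\oplus V_j^-$ is the complexification of a unique (up to equivalence) R-complex irreducible real representation $W_j$.

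Next I would construct $W_j$ concretely as a space of symmetric tensor products of Majorana spinor fields on the 3-momentum space. The natural candidate for the R-imaginary operator $J$ on such a space is $i\gamma^0$ acting on any single spinor index: it squares to $-1$, is real-linear in a Majorana basis, and commutes with the free Dirac operator appearing in the Wigner rotation factor $\alpha_p=(\slashed p\gamma^0+m)/\sqrt{(E_p+m)2m}$. Imposing the condition $(i\gamma^0)_k\Psi_j(\vec p)=(i\gamma^0)_1\Psi_j(\vec p)$ for all $k$ forces $i\gamma^0$ to act as a single imaginary unit uniformly across the tensor factors, so that $W_j^c$ decomposes into $\pm i$ eigenspaces of this common $i\gamma^0$ — and these eigenspaces will be shown to coincide with $V_j^+$ and $V_j^-$ via the standard Wigner boost $\alpha_p$. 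The spin-zero case $W_0$ is handled separately because no spinor indices are available; the real scalar tensored with the identity on a single Majorana spinor space plays the role of carrier, with $i\gamma^0$ once again supplying the R-imaginary structure.

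Finally I would verify that the stated formulas for $L_S$ and $T_a$ really define a representation on $W_j$. Translations are immediate since $e^{-i\gamma^0 p\cdot a}$ preserves the constraint $(i\gamma^0)_k=(i\gamma^0)_1$ and is real-linear on $W_j$ thanks to the use of $i\gamma^0$ rather than a naked $i$. For the Lorentz part, the key check is that $\alpha^{-1}_{\Lambda(p)}S\alpha_p$ lies in the little group $SU(2)\subset Spin^+(1,3)$ and therefore preserves the symmetry and constraint on the tensor indices; this is the standard Wigner argument transcribed to Majorana spinors. Irreducibility and uniqueness then follow from Proposition \ref{prop:Rirreducible} and the bijection, established in Section \ref{section:Systems}, between real irreducibles and complex irreducibles (modulo the C-real/C-pseudoreal/C-complex trichotomy). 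The main obstacle I anticipate is showing that the constraint $(i\gamma^0)_k=(i\gamma^0)_1$ is precisely the right one to single out the realification of $V_j^+\oplus V_j^-$ inside the full symmetric tensor product of Majorana spinor fields, and that it is simultaneously preserved by the little-group action, by the Wigner boosts $\alpha_p$, and by translations — in other words, that no stronger or weaker condition yields an irreducible real Poincare representation, which fixes the labelling by a single half-integer $j$.
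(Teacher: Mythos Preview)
Your approach is correct and matches what the paper intends: in the paper this proposition is stated without an explicit proof, relying implicitly on the complex-to-real map of Section~\ref{section:Systems} applied to the complex massive representations recalled in the preceding remark. Your reconstruction---identifying $V_j^\pm$ as C-complex (since the conjugate pair is inequivalent), realizing $W_j$ as the R-complex realification with $i\gamma^0$ playing the role of the R-imaginary operator, and checking that the constraint $(i\gamma^0)_k=(i\gamma^0)_1$ together with the Wigner boosts $\alpha_p$ give a well-defined real representation---is precisely the argument the paper's framework is set up to deliver.
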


\begin{rmk}
The complex irreducible projective representations of the Poincare
group with null mass and discrete helicity split into positive and negative energy
representations, 
which are complex conjugate of each other. They are labeled by one number $j$, with $2j$
being an integer number.
The positive energy representation spaces $V_j$ are, up to isomorphisms, written
as a symmetric tensor product of Dirac spinor
fields defined on the 3-momentum space, verifying
$(\gamma^0)_k\Psi_j(\vec{p})=\Psi_j(\vec{p})$ and $(\gamma^3\gamma^5)_k\Psi_j(\vec{p})=\pm \Psi_j(\vec{p})$,
with the plus sign if $j$ is positive and the minus sign if $j$ is
negative.

The representation space $V_0$ is, up to isomorphisms, written in a
Majorana basis as a scalar defined on the 3-momentum space.

The representation map is given by:
\begin{align*}
L_S\{\Psi\}(\vec{p})&=\sqrt{\frac{(\Lambda^{-1})^0(p)}{E_p}}\prod_{k=1}^{2j}(e^{i\gamma^0\gamma^3\gamma^5\theta})_k\Psi(\vec{\Lambda}^{-1}(p))\\
T_a\{\Psi\}(\vec{p})&=e^{-i p\cdot a}\Psi(\vec{p})
\end{align*}
Where $\theta$ is the angle of the rotation of the little group $SE(2)$.
\end{rmk}

\begin{rmk}
The real irreducible projective representations of the Poincare
group with null mass and discrete helicity are labeled by one number $j$, with $2j$
being an integer number.
The positive energy representation spaces $V_j$ are, up to isomorphisms, written
as a symmetric tensor product of Majorana spinor
fields defined on the 3-momentum space, verifying
$(i\gamma^0)_k\Psi_j(\vec{p})=(i\gamma^0)_1\Psi_j(\vec{p})$ and $(\gamma^3\gamma^5)_k\Psi_j(\vec{p})=\pm \Psi_j(\vec{p})$,
with the plus sign if $j$ is positive and the minus sign if $j$ is
negative.

The representation space $V_0$ is, up to isomorphisms, written in a
Majorana basis as the realification of the complex functions defined
on the 3-momentum space, with the operator correspondent to the
imaginary unit given by the matrix $i\gamma^0$ of a Majorana spinor space.

The representation map is given by:
\begin{align*}
L_S\{\Psi\}(\vec{p})&=\sqrt{\frac{(\Lambda^{-1})^0(p)}{E_p}}\prod_{k=1}^{2j}(e^{i\gamma^0\gamma^3\gamma^5\theta})_k\Psi(\vec{\Lambda}^{-1}(p))\\
T_a\{\Psi\}(\vec{p})&=e^{-i\gamma^0 p\cdot a}\Psi(\vec{p})
\end{align*}
Where $\theta$ is the angle of the rotation of the little group $SE(2)$.
\end{rmk}

\subsection{Localization}

\begin{rmk}[Theorem 6.12 of \cite{vara}] 
There is a one-to-one correspondence
between the complex system of imprimitivity (U,P), based on $\mathbb{R}^3$,
and the representations of $SU(2)$.
The system (U,P) is equivalent to the system
induced by the representation of $SU(2)$.
\end{rmk}

\begin{defn}
A covariant system of imprimitivity is a system of imprimitivity (U,P),
where $U$ is a representation of the Poincare group and $P$ is a projection-valued
measure based on $\mathbb{R}^3$, such that for the Euclidean group  
$U(g)\pi(A) U^{-1}(g)=\pi(gA)$ and for the Lorentz group, for a state at time null at 
point  $\vec{x}=0$, $L\{\Psi\}(0)=S\Psi(0)$.
\end{defn}

\begin{defn}
A localizable real unitary representation of the Poincare group, 
compatible with Poincare covariance, consists of a system of imprimitivity on $R^3$ for which
at time null and $\vec{x}=0$, the Lorentz transformations do not act on the space coordinates.
\end{defn}

So, the localization of a state in $x=0$ is a property 
invariant under relativistic transformations.

\begin{prop}
Any localizable unitary representation of the Poincare group,
compatible with Poincare covariance, is a direct sum of irreducible representations which are massive or massless with discrete helicity.
\end{prop}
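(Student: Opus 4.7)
The plan is to reduce to irreducibles and then to use the imprimitivity theorem combined with the Lorentz covariance at the spatial origin to identify the fiber. First, since $ISL(2,\mathbb{C})$ (and $IPin(3,1)$) is separable and locally compact, the underlying unitary representation is a direct integral of irreducibles. I would begin by showing that this decomposition lifts to a decomposition of the system of imprimitivity: the projection-valued measure $P$ commutes with every central projection that appears in the disintegration of $U$, because both the translation operators $U(\vec{a})$ and the spectral projections for position generate an abelian algebra through the imprimitivity relation $U(\vec{a})P(A)U(-\vec{a})=P(A+\vec{a})$. Hence, up to isomorphism, each direct summand is itself an irreducible localizable representation compatible with Poincare covariance, and it suffices to classify these.

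Next, restrict the representation to the Euclidean subgroup $E(3)\subset SL(2,\mathbb{C})\ltimes\mathbb{R}^{4}$ acting on the $t=0$ slice, and regard $\mathbb{R}^{3}$ as $E(3)/SO(3)$. Mackey's imprimitivity theorem (used as in Section~\ref{section:Imprimitivity}) then furnishes a unitary representation $L$ of $SO(3)$ on the fiber above $\vec{x}=0$ such that $(U|_{E(3)},P)$ is equivalent to the canonical induced system $(V_{L},E_{L})$. The Schur lemma for systems of imprimitivity (Lemma of Section~\ref{section:Imprimitivity}) implies that irreducibility of $(U,P)$ forces $L$ to be irreducible, and since $SO(3)$ is compact, $L$ is finite-dimensional. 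The key input from the definition of localizability now enters: the Lorentz transformations, which stabilize the spacetime origin, act on the fiber above $\vec{x}=0$ without mixing spatial coordinates. Therefore the action of $SL(2,\mathbb{C})$ (or $Pin(3,1)$ in the real case) on that fiber extends $L$ to a finite-dimensional representation $D$ of the Lorentz group, of the type classified in Section~\ref{section:Lorentz}.

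The third step is to compare $D$ with the little-group data of the induced Poincare representation. Write the representation as induced from a little group $G_{l}$ at a reference momentum $l$ on a Lorentz orbit $\mathcal{O}_{l}$. The Fourier--Majorana transform of Section~\ref{section:Unitary} identifies the coordinate-space fiber above $\vec{x}=0$ with an $L^{2}$-average of the momentum-space fibers, each of which carries an irreducible unitary representation of $G_{l}$. Finite-dimensionality of the position fiber then forces each momentum fiber to be finite-dimensional, which immediately restricts the admissible little groups and their representations.

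The final step is the exclusion of the unwanted orbits. If $\mathcal{O}_{l}=\{0\}$, the imprimitivity relation forces $P$ to be translation-invariant, and countable additivity with $P(\mathbb{R}^{3})=1$ gives a contradiction, so $p=0$ cannot occur. For tachyonic orbits $l^{2}<0$, the little group is $SU(1,1)\cong SL(2,\mathbb{R})$, which is non-compact and has no nontrivial finite-dimensional unitary representation; the only possibility is the scalar one, but then the orbit is $2+1$-dimensional with two sheets and cannot be parametrized covariantly by $\vec{p}\in\mathbb{R}^{3}$ in a way compatible with the Euclidean system of imprimitivity, ruling it out. For massless orbits, the little group is $SE(2)$, whose irreducible unitary representations split into two families: the discrete helicity ones (trivial on the $E(2)$-translations), which are one-dimensional, and the continuous-spin ones, which are infinite-dimensional because the $E(2)$-translations act through characters labelled by a continuous parameter; only the discrete helicity case is compatible with finite-dimensional $D$. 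What remains are exactly the massive orbits (little group $SU(2)$, compact, so all irreducibles finite-dimensional) and the massless discrete-helicity orbits, as claimed. The main technical obstacle I anticipate is step three: making precise the identification of the position-space fiber above the origin with an integral of momentum-space little-group fibers, in enough generality to cover the massless and tachyonic cases and to handle the real-linear setting via the map of Section~\ref{section:Systems}.
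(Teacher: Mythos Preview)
Your route is genuinely different from the paper's, and more structural in spirit. The paper does not pass through finite-dimensionality of the imprimitivity fibre at all. Instead, after decomposing into Poincare irreducibles, it Fourier-transforms the imprimitivity translations to obtain $(Ue^{J\vec P\cdot\vec a}U^{-1})\Psi(\vec p)=e^{i\gamma^0\vec p\cdot\vec a}\Psi(\vec p)$ valid for every $\vec p\in\mathbb R^3$. Tachyons and the $p=0$ orbit are then excluded by a pure spectral argument: the imprimitivity forces the spatial momentum to have full spectrum $\mathbb R^3$, whereas those orbits cover only $|\vec p|^{2}\ge|m^{2}|$ or $\{\vec 0\}$. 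Infinite spin is excluded by a scaling argument: the covariance condition forces $S\equiv L\Lambda^{-1}$ to be $\vec p$-independent, but under a boost along $\hat z$ the modulus of the $SE(2)$-translation part of the little-group element scales with $E_p$, which is a contradiction. Both of these are short computations once the Fourier picture is set up; neither needs Mackey's theorem or finite-dimensionality of a fibre.

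Your approach via Mackey and compactness of $SO(3)$ is conceptually attractive for the infinite-spin exclusion, but the step you flag---transporting finite-dimensionality from the position fibre to the momentum fibre---is indeed the hard part, and you would likely end up reproducing something close to the paper's Fourier argument to make it precise. Two smaller corrections: the tachyonic orbit $p^2<0$ is a \emph{one}-sheeted hyperboloid (the two-sheeted case is the massive one), and it is three-dimensional like every mass shell, so your ``$2+1$-dimensional with two sheets'' description is off. Also, in your first paragraph, the spatial translations and the position projections do \emph{not} generate an abelian algebra---they satisfy the Weyl relations---so the argument that $P$ commutes with the central projections of $U$ needs more care than what you have written; the paper in effect sidesteps this by working globally with the Fourier transform rather than reducing the system of imprimitivity first. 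If you adopt the paper's spectral argument for tachyons and $p=0$, you can drop your step three entirely for those cases and reserve the fibre-dimension idea solely for infinite spin, where it is cleanest.
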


\begin{proof}
Since the system is a unitary Poincare representation, it is a direct sum of irreducible unitary Poincare representations 
and so there must be an unitary transformation $U$, such that:
\begin{align}
&\Psi(x+a)=(U e^{-J P\cdot a}U^{-1})\{\Psi\}(x)\\
&S\Psi(\Lambda(x))=(U L U^{-1})\{\Psi\}(x)
\end{align}
Where $J$ is the operator corresponding to the imaginary unit after the 
realification of the Poincare representation and $L$ is the representation of the Lorentz group, 
so $L$ commutes with $J$. $P$ is the energy-momentum operator and $S$ acts only on the index of $\Psi$.

The system of imprimitivity is a representation of $SU(2)$, hence the operator $i\gamma^0$ is well defined. 
If we make a Fourier transformation, then we get that:
\begin{align}
(U e^{J \vec{P}\cdot \vec{a}}U^{-1})\{\Psi\}(\vec{p})=e^{i\gamma^0 \vec{p}\cdot \vec{a}}\Psi(\vec{p})
\end{align}
Note that this equation is valid for all $\vec{p}$.
The system is a direct sum of irreducible unitary Poincare representations.
Then, for  $m^2<0$ only the subspace $\vec{p}^2\geq |m^2|$ is valid.
For $p=0$ only the subspace  $\vec{p}=0$ is valid. 
Since the other types of irreducible representations verify $p\neq 0$ and $m^2\geq 0$, 
the complementary subspaces $\vec{p}^2 < |m^2|$ or $\vec{p}\neq 0$ cannot be 
representation spaces and hence the representations with 
$m^2<0$ and $p=0$ cannot be subspaces of a localizable representation.

So we are left with $p\neq 0$ and $m^2\geq 0$. Now we can define a subspace for each $m^2$, such that the square of 
the generator of translations in time is given by $\vec{\partial}^2+m^2$. In each subspace there is a localizable representation.

Given a subspace with $p\neq 0$ and $m^2\geq 0$, 
$M$, we consider the subspace $N$ of the representation $M\oplus M_0$ verifying $e^{iH t}\Psi=e^{iH_0 t}\Psi$,
where $M_0$ is a spin-0 representation and $e^{iH_0 t}$ is the translation in time acting on $M_0$.
Then, $e^{iH_0(\vec{\partial}) t}U=Ue^{iH_0(J\vec{P})}$. Multiplying $U$ by 
$\alpha_p \sqrt{m/E_p}$ we can check that $J\Psi=i\gamma^0\Psi$ and so $N$ is equivalent to $M$.

Now we define the unitary transformation 
$\Lambda \{\Psi\}(p)=\sqrt{\frac{E_p}{\Lambda^0(p)}}\Psi(\Lambda^{-1}(p))$.
Then, we can check that $S\equiv L\Lambda^{-1}$ and it does not depend on $\vec{p}$.
If we redefine $U\{\Psi\}(\vec{p})= \alpha_p\sqrt{\frac{1}{\Lambda^0(p)}} U'\{ \Psi\}(\vec{p})$,
then we get that $\Lambda S \alpha_pU'\{\Psi\}(\vec{p})=\alpha_p\Lambda Q_pU'\{\Psi\}(\vec{p})$ 
and so $U'$ commutes with the Poincare representation.

If we look for subspaces where $m^2=0$ and the representation of $Q_p$ has infinite spin,
then the boost in the $z$ direction for a momenta in the $z$ direction 
multiplies the modulus of the translations of $SE(2)$ by $E_p$, 
which is in contradiction with the fact that
$S\equiv L\Lambda^{-1}$ does not depend on $\vec{p}$.

So, we are left with a direct sum of massive representations and massless with discrete helicity.
\end{proof}

\begin{prop}
For any complex localizable unitary representation of the Poincare group,
compatible with Poincare covariance, it if contains as a subspace a positive energy representation 
then it also contains the corresponding negative energy representation.
\end{prop}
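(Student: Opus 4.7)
The plan is to use the identification $J=i\gamma^0$ established in the proof of the preceding proposition, where $J$ is the complex-structure operator arising from realification, together with the explicit form of the Lorentz action on the momentum-space realization, to force inclusion of the negative-energy partner.

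First I would decompose $\mathcal{H}$ by the preceding proposition into a direct sum of irreducibles which are massive or massless with discrete helicity, and isolate a positive-energy complex irreducible $V_j^+ \subset \mathcal{H}$. Applying the unitary $U$ constructed in the preceding proof brings the spatial translations into the canonical momentum-space form $e^{-i\gamma^0 \vec{p}\cdot\vec{a}}$, and identifies the global complex structure of $\mathcal{H}$ (multiplication by the scalar $i$) with the spinor operator $i\gamma^0$ on the image. On $U(V_j^+)$ this consistency amounts to $i\gamma^0$ having eigenvalue $+i$, corresponding to $\gamma^0=+1$ and positive energy; this is individually consistent with the translation sector alone and explains why $V_j^+$ can sit as a subrepresentation.

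Next I would impose the Lorentz covariance of the system of imprimitivity. In the massive case the Lorentz action is $\alpha_{\Lambda(p)}^{-1}S\alpha_p$ on each spinor index, with $\alpha_p=(\slashed{p}\gamma^0+m)/\sqrt{(E_p+m)\,2m}$; since $\slashed{p}=E_p\gamma^0-\vec{p}\cdot\vec{\gamma}$ contains the spatial matrices $\vec{\gamma}$, which anticommute with $\gamma^0$, the operator $\alpha_{\Lambda(p)}^{-1}S\alpha_p$ does not preserve the $\gamma^0$-eigenspaces when $S$ is a genuine boost outside the rotational little group. Consequently, the orbit of $U(V_j^+)$ under the full Poincare action necessarily populates the $\gamma^0=-1$ subspace, which under $U^{-1}$ corresponds precisely to the negative-energy complex irreducible $V_j^-$ as characterized in the description of complex irreducible Poincare representations with finite mass. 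The massless discrete-helicity case follows analogously with $\alpha_p=B_v R_p$ and little group $SE(2)$, the boost $B_v$ along $\vec{p}$ again involving the spatial gamma matrices and hence mixing $\gamma^0$-eigenspaces.

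The main obstacle will be to make precise the intertwining of the global complex structure with $i\gamma^0$ under $U$ at the level of the Lorentz action, not merely at the level of the translations as in the preceding proof, and to verify that $\alpha_{\Lambda(p)}^{-1}S\alpha_p$ is genuinely off-diagonal in the $\gamma^0$-grading for generic $S$. This is essentially the statement that no Foldy--Wouthuysen-type decomposition exists that is simultaneously globally defined on the momentum space and Lorentz covariant, which is the root of the Newton--Wigner non-covariance phenomenon; the preceding proposition's construction bypasses this by working in the real category, whereas here the globally defined complex structure imposed by $\mathcal{H}$ being complex provides the tension that can only be resolved by including $V_j^-$.
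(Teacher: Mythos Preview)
Your argument has a genuine error in the key step. You claim that $\alpha_{\Lambda(p)}^{-1}S\alpha_p$ fails to preserve the $\gamma^0$-eigenspaces when $S$ is a boost, but this is precisely what the Wigner construction prevents: by the proposition immediately preceding the description of the complex Poincare representations, $\alpha_{\Lambda(p)}^{-1}S\alpha_p \in G_l$ for \emph{every} $S\in SL(2,\mathbb{C})$, boosts included. In the massive case $G_l=SU(2)$ is generated by the rotation generators $i\gamma^5\gamma^0\gamma^j$, which commute with $\gamma^0$; in the massless case the relevant $SE(2)$ elements likewise preserve the $\gamma^0$-eigenspaces on which they act. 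Hence the Lorentz action never mixes the $\gamma^0=\pm 1$ sectors in momentum space --- this is exactly why the positive-energy complex irreducible \emph{is} a bona fide Poincare representation in the first place. Were your mechanism correct, there would be no positive-energy irreducibles at all.

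The paper's argument locates the obstruction elsewhere: not in the Lorentz action but in the \emph{position projectors} of the system of imprimitivity. The unitary passing from momentum space to coordinate space (the Majorana--Fourier transform) contains the factor $\frac{E_p+H(\vec{x})\gamma^0}{\sqrt{E_p+m}\sqrt{2E_p}}$, and $H(\vec{x})\gamma^0$ involves $\vec{\gamma}\gamma^0$, which anticommutes with $\gamma^0$. Consequently the $\gamma^0=+1$ projector in momentum space does not correspond to any projector commuting with the position PVM on $\mathbb{R}^3$; equivalently, the positive-energy condition becomes a constraint on time translations which lies outside the commutant of the $SU(2)$ little-group representation classifying the imprimitivity system. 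It is localizability, not Lorentz covariance, that forces the negative-energy partner to appear. Your closing paragraph actually gestures at the right phenomenon (Newton--Wigner non-covariance, Foldy--Wouthuysen), but your proposed mechanism via the Lorentz action does not realize it.
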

\begin{proof}
The subspaces defined by the projectors involving the $i\gamma^0$s in the $Q_p$ representation are not conserved 
by the system of imprimitivity because $\gamma^0$ does not commute with the matrices $\vec{\gamma}\gamma^0$
present in the transformation from momenta to coordinate space. 
When we go back to coordinate space, the projector on the $i\gamma^0$s can be written as an equality of the time translations which is 
not part of the commuting ring of the SU(2) representation and hence it does not commute with the system of imprimitivity on $R^3$. 
\end{proof}

\begin{cor}
A localizable Poincare representation is an irreducible representation of the Poincare group (including parity) if and only if it is:
a)real and b)massive with spin 1/2 or massless with helicity 1/2.
\end{cor}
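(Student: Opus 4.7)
The plan is to combine the three preceding propositions with the explicit Bargmann--Wigner realization developed in Section~\ref{section:Unitary}. For the \emph{if}-direction, I would take the real massive spin-$1/2$ representation, realized via the Majorana--Fourier transform as Majorana spinor fields on $\mathbb{R}^3$ satisfying the free Dirac equation. The system of imprimitivity is multiplication by characteristic functions of Borel subsets of $\mathbb{R}^3$, and at $\vec{x}=0$, $t=0$ the Lorentz action reduces to the Pin$(3,1)$ action on the single Majorana spinor index, which is irreducible by the results of Section~\ref{section:Lorentz}. Parity, acting as $i\gamma^0$, lies in this Pin$(3,1)$ action and produces no further invariant subspace, so the extended representation is irreducible under the full Poincare group; by construction the position operator coincides with the coordinates of the Dirac equation. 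The massless helicity-$1/2$ case is analogous, using the $\gamma^3\gamma^5$ projectors to select the helicity component.

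For the \emph{only if}-direction, I would first invoke the earlier proposition to reduce to a direct sum of massive and discrete-helicity massless irreducibles, and then eliminate all cases except real spin/helicity $1/2$. To exclude complex representations, I use the preceding proposition: a complex localizable irreducible must contain both positive- and negative-energy subrepresentations; since parity commutes with time translations, the sign of the energy is preserved, so this splitting survives the passage to the full Poincare group and contradicts irreducibility. Hence any candidate must be real.

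Next, to exclude real representations of spin or helicity $j\neq 1/2$, I would argue as follows. By the imprimitivity theorem (Theorem~6.12 of \cite{vara}) the system of imprimitivity on $\mathbb{R}^3$ is equivalent to an $SU(2)$ representation on the spinor target, and the action of parity at the origin must extend this $SU(2)$ rep to a Pin$(3,1)$ rep that is itself irreducible on the target. From the classification of real finite-dimensional Pin$(3,1)$ representations in Section~\ref{section:Lorentz}, only $W_{(1/2,0)}$, the Majorana spinor, simultaneously carries a single irreducible $SU(2)$ spin multiplet and is Pin$(3,1)$-irreducible. For $j=0$ the real target carries the trivial $SU(2)$ action together with a $\pm 1$ parity, producing a nontrivial parity decomposition. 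For $j>1/2$ the Bargmann--Wigner symmetric tensor product with the constraint $(i\gamma^0)_k\Psi=(i\gamma^0)_1\Psi$ forces the target to split into Pin$(3,1)$-invariant sub-tensors distinguished by how parity acts on the individual indices relative to the common $i\gamma^0$. The massless case is treated along the same lines, with the little group $SE(2)$ and the helicity projectors selecting $\pm 1/2$.

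The hardest step will be the $j>1/2$ exclusion, where the interplay between the symmetric-tensor structure, the Bargmann--Wigner constraint, and the parity action must be unravelled concretely enough to exhibit the Pin$(3,1)$-invariant subspace that breaks irreducibility; the complex and $j=0$ cases are comparatively routine consequences of preceding results.
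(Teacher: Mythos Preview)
Your handling of the complex case is correct and matches the paper: a localizable complex representation must contain both energy signs, parity preserves the sign of the energy, so the representation splits and cannot be irreducible under the full Poincare group. That part is fine.

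The gap is in your treatment of the real case with $j\neq 1/2$. You try to argue by assuming the representation is localizable, passing to a target space $T$, equipping $T$ with a Pin$(3,1)$ action via the covariance condition, and then exhibiting a Pin$(3,1)$-invariant decomposition of $T$. There are two problems. First, you never establish that Pin$(3,1)$-reducibility of $T$ forces Poincare-reducibility of the full representation; the covariance condition only controls the Lorentz action at the single point $\vec{x}=0$, and you would need to check that your proposed sub-targets are preserved by the full action including boosts and time translations at all points. Second, and more seriously, your proposed splitting of $T$ ``by how parity acts on the individual indices relative to the common $i\gamma^0$'' is not a decomposition of the Bargmann--Wigner target: the constraint $(i\gamma^0)_k\Psi=(i\gamma^0)_1\Psi$ is a \emph{momentum-space} condition, and after the Majorana--Fourier transform it becomes, in coordinate space, the differential constraint $(e^{iH(\vec x)t})_k\Psi=(e^{iH(\vec x)t})_1\Psi$. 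This is not a pointwise condition on a fibre, so there is no target $T$ in the imprimitivity sense to decompose in the first place.

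The paper's argument is much shorter and avoids this entirely. It does not try to decompose anything; it simply observes that the defining constraints of the irreducible real Poincare representations $W_j$ for $j\neq 1/2$ are exactly the $(i\gamma^0)_k$-type projectors discussed in the immediately preceding proposition, and those projectors were already shown not to commute with the system of imprimitivity (because in coordinate space they become equalities of time translations, which lie outside the commuting ring of the $SU(2)$ representation). Hence the only way to have an irreducible Poincare representation whose defining conditions are compatible with the position projections is to have \emph{no} such condition at all, i.e.\ a single Majorana spinor index. That is the entire proof. Your elaborate Pin$(3,1)$ classification of targets is unnecessary: the obstruction is not representation-theoretic reducibility of a fibre, but the incompatibility of the Bargmann--Wigner constraint with multiplication operators on $\mathbb{R}^3$.
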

\begin{proof}
Since the subspaces defined by the projectors involving the $i\gamma^0$s in the $Q_p$ representation are not conserved 
by the system of imprimitivity, then the condition for irreducibility cannot involve such projectors, which only happens for real representations with
one spinor index. 
\end{proof}

Notice that the condition of irreducibility of the representations admits localized solutions---
the derivative of a bump function is a bump function, so we can find bump functions in the representation space---but it does not admit a position operator---the subspace of bump functions 
is not closed. Hence, we can say that a particular spin 1 state is in an arbitrarily small region of space, but the measurement of the position of an arbitrary spin 1 state might make it no longer a spin 1 state.

Going to complex systems, we can check that in the massive case, 
the condition of irreducibility does not  admit localized solutions---given a localized solution $\Psi$ in a region of space, then the result of the 
application of the projection operator to $\Psi$ is not localized in a region of space.
As for the massless representation, the condition of positive energy
does not admit localized solutions either---for the same region as above---, but the condition for a chiral irreducible representation 
does admit localized solutions. The parity operator for such a chiral irreducible representation is anti-linear.

The localizable Poincare representation is Poincare
covariant because for time $x^0=0$ at point $\vec{x}=0$, we have for the Lorentz
group $L\{\Psi\}(0)=S\Psi(0)$. The localizable Poincare representation is
compatible with causality because the propagator $\Delta(x)=0$ for $x^2<0$
(space-like $x$), where the propagator is defined for spin or helicity $1/2$ as:

\begin{align}
\Delta(x)\equiv \int \frac{d^3\vec{p}}{(2\pi)^32E_p}
\frac{\slashed p\gamma^0+m}{\sqrt{E_p+m}}e^{-i\gamma^0 p\cdot x}
\frac{\slashed p\gamma^0+m}{\sqrt{E_p+m}}
\end{align}
 And verifies:
\begin{align}
\Psi(x)=\int d^3\vec{y} \Delta(x-y)\Psi(y)
\end{align}
To show it we just need to do a Lorentz transformation such
that $x^0=0$ and then show that  $\Delta((0,\vec{x}))=0$ for
$\vec{x}\neq 0$.

\section{Energy Positivity}
\label{section:Energy}

\subsection{Density matrix and real Hilbert space}

As a consequence of Schur's lemma---related with the Frobenious theorem---, the set of normal operators commuting
with an irreducible real unitary 
representation of a Lie group is isomorphic to the reals, to the complex numbers or to the quaternions---the 
irreduciblity of a group representation on a Hilbert space is intuitively the minimization of the degrees of
freedom of the Hilbert space.
This fact turns the 
study of the Hilbert spaces over the reals, the complex or the quaternions interesting for Quantum Theory.
However, once we consider the density matrix in Quantum Mechanics, it is a simple exercise to show that the 
complex and quaternion Hilbert spaces are special cases of the real Hilbert space. 

In short, the complex Hilbert space case is achieved once we postulate that there is a 
unitary operator $J$, with $J^2=-1$, which commutes with the density matrix and all the observables.
The quaternionic Hilbert space corresponds to the case where both the unitary operators $J$ and $K$ commute with
the density matrix and all the observables, with $J^2=K^2=-1$ and $JK=-KJ$. Note that a complex 
Hilbert space is an Hilbert space over a division algebra over the real numbers, hence it has an extra layer of 
mathematical structure, which is dispensable because of the already existing density matrix in Quantum Mechanics. 

Of course, if the postulate corresponding to the complex Hilbert space is correct, 
there are practical advantages in using the complex notation. However, we should be aware that using the complex 
notation is a practical choice, not one of fundamental nature in the formalism of Quantum Mechanics. 
We cannot claim that the fact that the operator 
$J$ exists is a deductible consequence of the formalism of Quantum Mechanics with a complex Hilbert space.
It would be the same as claiming that we can derive from Newton's formalism that the
space is 3 dimensional, instead of assuming that we use 3 dimensional vectors in Newton 
mechanics because we postulate that the space has 3 dimensions.

Choosing real representations is, in practice, choosing real Majorana spinors instead of
complex scalars as the basic elements of relativistic Quantum Theory.
For instance, the state of a spin-0 elementary system is a tensor 
field of real Majorana spinors, which only in momenta space (not in coordinate space) can
be considered a complex scalar field. Note that we are assuming the position operator given by systems of imprimitivity which is suitable for unitary representations; the results are not valid for symplectic representations,
usually associated with the bosons.
 
\subsection{Many particles}
In classical mechanics, the energy of a free body of mass $m$ is $E_p=\frac{\vec{p}^2}{2m}$. 
Since it is proportional to the square of the momentum, it does not make sense to talk about a negative energy. 
However, if we consider a box in which we can insert and remove free bodies such that in both the initial and final 
states the box is empty, the insertion of a body with momentum $\vec{p}$ and negative energy 
$E_p=-\frac{\vec{p}^2}{2m}$ to the system is equivalent to the removal of a body with momentum $-\vec{p}$ positive energy 
$E_p=\frac{\vec{p}^2}{2m}$, because the equations of motion are invariant under time reversal.
But time reversal transforms the act of adding a body on the act of removing a body. 
  
So, how can we say that a body was added to the system and not that the movie of the removal of a body is playing backwards?
The solution is to identify a feature on the system that is also affected by time reversal and we use it as a reference. 
For instance, if there is one body that---we know, or we define it as if---it was added to the system, 
then the addition of that body will appear a removal if we are watching the movie backwards. 
The product of the energies of two bodies is invariant under the Galilean transformations.
Note that we can only remove a body which was previously added to the box, 
as well as only add a body which will later be removed, to keep the box empty in both the initial and final states. 

Hence, the value of any quantity  which is non-invariant under the space-time symmetries---including the sign of the Energy---
by itself does not mean much without something to compare to, such that we can compute an invariant quantity.

In non-relativistic Quantum Mechanics, 
the translations in time
are given by the operator $e^{i\frac{\vec{\partial}^2}{2m}t}$---where $t$ is time---acting on a Hilbert space of \emph{positive} energy solutions 
because there is the imaginary unit---which is invariant under Lorentz transformations and anti-commutes with the time reversal transformations---
that we use as our reference.

In relativistic Quantum Mechanics, the translations in time are given by the operator
 $e^{(\gamma^0\vec{\gamma}\cdot \vec{\partial}+i\gamma^0 m)t}$, 
which is real---in the Majorana basis---and the position operator does \emph{not} leave invariant a Hilbert space of \emph{positive} 
Energy solutions. 
In other words, if we want a coordinate space which is relativistic covariant, the imaginary unit cannot be used as our reference for 
the sign of the energy. We cannot say that by considering real Hilbert spaces we are creating a new problem about Energy positivity.
as if we insist on a covariant coordinate space, the problem about the Energy positivity does not vanish in complex Hilbert spaces.
Remember that ever since the Dirac sea (which led to the prediction of the positron) the problem about Energy positivity 
was always solved in a many particle description.

In a system of particles, 
we can compare the energy of one particle with the energy of another particle we know it is positive,
like we would do in classical mechanics. 
If our reference particle is massive and has momentum $q$, then the Poincare
invariant condition $p\cdot q>0$ will be respected by a massive or massless particle 
with momentum $p$ if and only if $p^0$ has the same sign as $q^0$. 
Instead of the momenta we can use the translations generators to define the condition for energy positivity.

\clearpage{}
\cleartooddpage
\clearpage{}
\chapter{Localization and Gauge symmetries in Quantum Field Theory}
\begin{epigraphs}
\qitem{
In our opinion, a careful analysis of the non-locality of the charged states and of
the general properties of the different quantizations is crucial for a mathematical and
non-perturbative understanding of the important physical phenomena predicted by
gauge quantum field theories.\emph{[...]}

We start by discussing the abelian case, where the Gauss law reduces to the
Maxwell equation $j_0(x) = div \mathbf{E}(x)$, and therefore, by the Gauss theorem, establishes
a tight link between the local properties of the solutions and their behavior at infinity.
In fact, at the classical level the charge of a solution of the electrodynamics equations
can be computed either by integrating the charge density, i.e., a local function of the
charge carrying fields, or by computing the flux of the electric field at space infinity.
In the quantum case, this implies that the charge carrying fields cannot be local with
respect to the (local) electric field.}{
--- \textup{F. Strocchi (2013)\cite{nonperturbativefoundations}}}

\qitem{Today, all the components of the ``standard model'' of particle
physics that so accurately describes our observations are gauge
theories.  Weyl's ``gauge principle'', that global symmetries should be
promoted to local ones, applied to the standard-model symmetry group
$SU(3)\times SU(2)\times U(1)$, is enough to yield the strong, weak and
electromagnetic interactions.

Only gravity is missing from this model. But it too shows many of the
same features. Going from special to general relativity involves
replacing the rigid symmetries of the Poincar\'e group---translations
and Lorentz transformations---by freer, spacetime dependent
symmetries. So it was natural to ask whether gravity too could not be
described as a gauge theory.  Is it possible that starting from a
theory with rigid symmetries and applying the gauge principle, we can
recover the gravitational field?  The answer turned out to be yes,
though in a subtly different way and with an intriguing twist.
Starting from special relativity and applying the gauge principle to
its Poincar\'e-group symmetries leads most directly not precisely to
Einstein's general relativity, but to a variant, originally proposed
by \'Elie Cartan, which instead of a pure Riemannian spacetime uses a
spacetime with torsion.  In general relativity, curvature is sourced
by energy and momentum.  In the Poincar\'e gauge theory, in its
basic version, there is also torsion, sourced by spin.
}{--- \textup{Tom Kibble(2012)\cite{kibble}}}

\qitem{When a single photon strikes
a photomultiplier tube and generates a
pulse of photocurrent, the photon is lost
forever. Or is it? The photocurrent may
interact with a macroscopic system of a
bulk conductor, and we may measure
the resulting voltage across the conductor. 
Sure, the photon has disappeared,
but if our detector indicates that we had
one photon, we can always create another and get the same answer again
and again, exactly like a QND\emph{[quantum non-demolition]} measurement.}{\textup{--- Christopher Monroe (2011) \cite{demolition}}}
\end{epigraphs}

Quantum Mechanics is a theoretical framework, useful to build theories of physical phenomena.
It is not by itself a theory of physical phenomena.
For instance, the Bohr radius is given by $a_0\equiv \frac{1}{m_e \alpha}$,
where $m_e$ is the electron mass and $\alpha$ is the fine structure constant, that is the electromagnetic coupling constant.
Therefore, it is possible to build a Quantum model where macroscopic Hydrogen-like atoms exist, 
we would just need to make the coupling constant sufficiently small.

Another well-known example is the Schrodinger's cat, where the assumption that there is a unitary interaction between an unstable nucleus and a 
macroscopic measuring device capable of creating a macroscopic superposition state from the nucleus superposition state, leads to the conclusion that Quantum Mechanics allows for macroscopic superposition states\cite{consistent, nobelcat}. However, our physics models do not predict such macroscopic superposition states because the concrete physical interactions---once decoherence is taken into account\cite{decoherence}--- do not allow to reproduce with the present technology the assumed unitary interaction\cite{superpositiontest}

To study gauge theories at the perturbative quantum level we need to drop basic assumptions such as positivity of the inner product computing the vacuum expectation values of the local operators. Only after all calculations, the physical observables verify our basic assumptions, but the framework itself does not and often only the particular properties of the Lagrangian will prevent inconsistent predictions---e.g. the quantum anomalies.

In other words, assertions about Quantum Mechanics or Quantum Field Theory are not necessarily assertions with physical content.

The assumption of locality at classical field theory level is crucial in the formulation of gauge quantum field theory and of the gauge principle itself.
Yet the non-localizability of free states in relativistic quantum mechanics\cite{modular0}, the non-localizability of the free states created by applying local operators to the vacuum in quantum field theory\cite{localizedparticles}
or the non-localizability of charged states in gauge quantum field theory\cite{nonperturbativefoundations},
adding to the never ending controversies surrounding quantum measurement\cite{qmnostates,*automaton}, contributed for the 
replacement of the study of the notion of position at the quantum level with mystification. 

As an example, in the most ambitious modern mathematical treatments of quantization, the free Dirac equation---which specifies the coordinate space of a spin 1/2 Poincare representation---is a postulated classical equation that is introduced in the quantum world after quantization\cite{mathematicsquantization}, while it can be derived from the requirement of a (covariant position related) projection-valued measure---an intrinsic quantum mechanical operator---as we have seen in the last chapter.

This mystification can be set apart once we consider quantum field theory as a framework, a set of mathematical and conceptual tools which we can use to define and make calculations from physics models.
In this chapter we will study some of these tools, mostly related with the notion of position and gauge symmetries.

\section{Localization in Quantum Field Theory}
\subsection{Vacuum density matrix}

The link between one-particle states and many particle states is 
not unique as there are uncountably many inequivalent representations of the 
Canonical Commutation Relations and Canonical Anti-commutation Relations. 
A complete classification of the
representations of the CCR and CAR relations is not
expected in the near future. Different
dynamics require inequivalent representations of the Canonical Relations,
this is related with renormalizability, entropy, phase transitions
\cite{mathematicsquantization,notequantization}.

Still, we can define a map between complex and real representations of C* algebras\cite{realoperatoralgebras}.

\begin{lem}[Schur's lemma for C* algebras]
\label{lem:commuting}
Consider an * representation $(M,V)$ of a C* algebra $A$ 
on a complex Hilbert space $V$. If the representation
$(M,V)$ is irreducible then any normal operator $N$ of $(M,V)$ is a
scalar.
\end{lem}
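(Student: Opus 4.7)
The plan is to mimic the proof of Schur's lemma for unitary representations given earlier, replacing the one-parameter groups generated by self-adjoint operators with the full spectral calculus of a bounded normal operator. First I would verify that a $*$-representation $(M,V)$ of a C$^*$-algebra is a normal system in the sense of the earlier definition: because $A$ is a $*$-algebra and the representation is a $*$-homomorphism, for every $m \in M$ the adjoint $m^\dagger$ also lies in $M$, so the lemma on orthogonal complements applies, namely every closed $M$-invariant subspace $W \subseteq V$ has an $M$-invariant orthogonal complement $W^\bot$.

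Next I would unpack the hypothesis that $N$ is a normal operator of $(M,V)$: by the definition given in Section~\ref{section:Systems}, $N$ is bounded, commutes with $N^\dagger$, and commutes with every $m \in M$. Taking adjoints of $Nm = mN$ and using $m^\dagger \in M$ shows that $N^\dagger$ also commutes with every element of $M$. Then I would invoke the spectral theorem for bounded normal operators on a complex Hilbert space to write $N = \int_{\sigma(N)} \lambda \, dE(\lambda)$, where $E$ is a projection-valued measure on the Borel sets of the compact subset $\sigma(N) \subset \mathbb{C}$. Since each spectral projection $E(B)$ lies in the von Neumann algebra generated by $N$ and $N^\dagger$, and since every $m \in M$ commutes with both $N$ and $N^\dagger$, the double commutant theorem (or, more concretely, approximation of bounded Borel functions of $N$ by polynomials in $N$ and $N^\dagger$ in the strong operator topology) gives $E(B)\,m = m\,E(B)$ for every $m \in M$ and every Borel $B \subseteq \sigma(N)$.

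The image of each such $E(B)$ is therefore a closed $M$-invariant subspace of $V$. By the orthogonal complement lemma for normal systems together with the assumption that $(M,V)$ is irreducible, $E(B) \in \{0, \mathrm{I}_V\}$ for every Borel set $B$. A projection-valued measure taking only the values $0$ and $\mathrm{I}_V$ must be concentrated at a single point $\lambda_0 \in \sigma(N)$, and hence $N = \lambda_0 \,\mathrm{I}_V$, a scalar.

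The step I expect to require the most care is the one asserting that the spectral projections of $N$ commute with every $m \in M$: the definition of ``normal operator of $(M,V)$'' supplies commutation with $N$ and $N^\dagger$ only as algebraic operators, so passing to bounded Borel functions of $N$ genuinely uses the C$^*$/von Neumann double commutant machinery rather than a bare polynomial computation. Everything else is essentially the same argument used in Section~\ref{section:Unitary} to rule out non-trivial invariant subspaces in the unitary setting.
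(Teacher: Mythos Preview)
Your proof is correct and is the standard spectral-theoretic argument for this version of Schur's lemma. The paper itself does not supply a proof for this statement: it is recorded as a lemma and treated as a known result (as are the analogous Schur lemmas for unitary representations earlier in the paper, which carry only a citation). Your approach---verifying the system is normal, passing from $N,N^\dagger$ to their spectral projections via the double commutant theorem, and using irreducibility to force each projection to be $0$ or $\mathrm{I}_V$---is exactly the argument one would give, and your flagging of the spectral-projection commutation step as the place requiring genuine functional-analytic input is accurate.
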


So we can define a map from the real to the complex representations of C* algebras---analogous to the 
one for unitary representations.
Of course, such map is not very interesting as the representations are necessarily R-complex and C-complex 
due to the fact that the C* algebra is complex.

Therefore, the interesting case is to study the representations of a real C*algebra \cite{realoperatoralgebras}, defined as a
real Banach algebra whose complexification is a C*algebra.
Using Prop. 5.3.7 of \cite{realoperatoralgebras}:

\begin{lem}[Schur's lemma for real C* algebras]
\label{lem:commuting}
Consider a * representation $(M,V)$ of a real C* algebra $A$ 
on a real Hilbert space $V$. 
Then $(M,V)$ is irreducible iff the commutant is isomorphic
to the reals, complex numbers or quaternions.
\end{lem}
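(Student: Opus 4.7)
The plan is to exploit, as in the complex case of Schur's lemma already invoked earlier in the chapter, the structure of the commutant $M'$ of the representation. The commutant of a $*$-representation on a real Hilbert space is a real von Neumann algebra (a real $*$-subalgebra of $B(V)$ closed in the weak operator topology), and irreducibility of $(M,V)$ translates exactly into the absence of non-trivial projections in $M'$: a projection $P\in M'$ has range $P(V)$ invariant under $M$, and conversely the orthogonal projection onto a closed invariant subspace lies in $M'$ by Lemma on orthogonal invariant subspaces of normal systems established earlier in the chapter.

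For the direction ``irreducible $\Rightarrow$ commutant is $\mathbb R$, $\mathbb C$, or $\mathbb H$'', I would proceed in three steps. First, I would show that every self-adjoint element $h\in M'$ is a real scalar: by the real spectral theorem the spectral projections of $h$ lie in $M'$, so the no-non-trivial-projections property forces $h=\lambda\cdot 1$ with $\lambda\in\mathbb R$. Second, I would analyse skew-adjoint elements $k\in M'$: since $-k^2$ is self-adjoint and positive, $-k^2=c\cdot 1$ for some $c\ge 0$, so either $k=0$ or $J=k/\sqrt c$ satisfies $J^2=-1$, $J^*=-J$. Third, I would study the subalgebra generated by such imaginary units. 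If there is none, $M'\cong\mathbb R$. If there is exactly one, up to sign, $M'\cong\mathbb R\oplus\mathbb R J\cong\mathbb C$. If there are two independent ones $J_1, J_2$, then $J_3:=J_1J_2$ is again skew-adjoint with $J_3^2=-1$ (because $J_1J_2+J_2J_1$ is self-adjoint and commutes with everything, hence a scalar, and evaluating on $J_1$ forces anticommutation), and one obtains the quaternion relations; a further generator would produce a central self-adjoint element that is not a scalar, contradicting the first step.

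For the converse direction, if $M'\cong\mathbb R,\mathbb C$, or $\mathbb H$, then $M'$ is a division algebra, so it contains no non-trivial idempotents and a fortiori no non-trivial projections; thus there is no closed invariant subspace and $(M,V)$ is irreducible.

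The main obstacle I anticipate is the rigorous passage from ``no non-trivial projections'' to the classification of the commutant as $\mathbb R$, $\mathbb C$, or $\mathbb H$ in the infinite-dimensional setting: in particular, justifying the spectral decomposition of self-adjoint elements of a real von Neumann algebra and ruling out, for skew-adjoint elements, a continuous spectrum of $-k^2$ without appealing to projections already excluded. This is precisely the content of Prop.~5.3.7 of \cite{realoperatoralgebras}, which packages the Frobenius-type dichotomy in the operator-algebraic setting, so I would invoke it to close this step rather than reprove it.
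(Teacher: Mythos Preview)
The paper does not actually supply a proof of this lemma: it states the result immediately after the phrase ``Using Prop.~5.3.7 of \cite{realoperatoralgebras}'' and leaves it at that. So your proposal is strictly more detailed than what the paper does---you sketch the Frobenius-type mechanism (no non-trivial projections $\Rightarrow$ self-adjoint elements of the commutant are real scalars $\Rightarrow$ skew-adjoint elements are imaginary units $\Rightarrow$ the commutant is a real division algebra) and then, like the paper, defer the operator-algebraic technicalities to the same Prop.~5.3.7. In that sense your approach agrees with, and expands on, the paper's.

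One small imprecision in your outline: the step ``evaluating on $J_1$ forces anticommutation'' does not quite work as stated. From $J_1J_2+J_2J_1=\lambda\cdot 1$ you cannot conclude $\lambda=0$ just by multiplying by $J_1$. The cleaner route is the one you implicitly gesture at: once every self-adjoint element of $M'$ is a real scalar, every nonzero $x\in M'$ satisfies $x^*x=c\cdot 1$ with $c>0$, hence is invertible, so $M'$ is a real normed associative division algebra, and then Mazur's theorem (or Frobenius) gives $\mathbb R$, $\mathbb C$, or $\mathbb H$ directly---no need to manufacture anticommuting units by hand. Since you close by invoking Prop.~5.3.7 anyway, this is a cosmetic point rather than a gap.
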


Therefore, there is a similar 
map from the real to the complex representations of a real C* algebra.
Note that we can always embed a complex C* algebra in a real C* algebra.

Then for the real C* algebras, the \ac{GNS} theorem\cite{formalisms} is also valid, that is given a positive functional, there is always
a real representation with a distinguished cyclic state (usually associated with the vacuum in the complex case).
Then Prop. 5.3.7\cite{realoperatoralgebras}, the representations induced by a functional are irreducible iff 
the functional is a pure state.

Given a real Hilbert space $V$ with inner product $<,>$ we can always construct an associated real Clifford C* algebra \cite{spinorsrealhilbert}. Let $C(V)$ be the associated complex Clifford algebra, i.e. $C(V)$ is a unital
associative complex algebra such that there is an injective linear map $a:V\to C(V)$, verifying 
$a^2(v)=<v,v>1$ and $a^*(v)=a(v)$, $C(V)$ admits a unique involution $*$ and it is generated by the operators $a(v)$, for all $v\in V$.
The algebra has a natural norm given by $||a(v_1)...a(v_n)||\equiv \sqrt{||a(v_n)...a(v_1)a(v_1)...a(v_n)||}=||v_1||...||v_n||\cdot||1||$. The C* algebra $C[V]$ is the completion $C(V)$ with respect to its natural norm.

The subspace $R[V]$ of self-conjugate elements of $C[V]$, is a real Banach algebra whose complexification is a complex C* algebra,
hence $R[V]$ is a real C* algebra.

There is a natural functional of the Clifford C* algebra sending $1$ to $1$ and the remaining operators to $0$.
So there is a real representation with cyclic state, say $\xi$.

This cyclic state does not have the properties of a vacuum state because it is not Poincare invariant.

The vacuum energy is sometimes associated with the Casimir effect, however 
casimir forces can be calculated without reference to the vacuum and vanish as the coupling constant tends to zero\cite{casimir}.
The vacuum density matrix should be gauge invariant. These are the necessary properties of the vacuum,
because there is no way in which we can change it or interact with it---we are assuming no gravity for now.

Then to build the vacuum density matrix, we start with the projector $\xi\xi^\dagger$.
Suppose that the real Hilbert space only has two states
and the corresponding Clifford operators are $a$ and $b$ with $a^2=b^2=1$ and $ab=-ba$. Suppose that 
the $U(1)_Y$ gauge transformation is given by $e^{ab \theta}$. Then the density matrix
 $\frac{1}{2}(\xi\xi^\dagger+ab\xi\xi^\dagger ab)$ will be invariant under the $U(1)_Y$ gauge transformation. 
By an iterative process we can build in this way a vacuum density matrix which is gauge invariant.

Note that the fact that we are working with a real Clifford algebra is essential. The usual complex vacuum for the same Hilbert space would read $(1-iab)\xi$, it is this projector (in momentum space) that causes all the localization troubles (in coordinate space), namely when we act on the vacuum constructed in this way with a local operator we do not get a local state. As we showed in the last chapter such projectors are the root of the localization troubles.

Of course that we can remove such projectors and still work with complex Clifford algebras at the possible cost of irreducibility, but it will be the same as working with real Clifford algebras (self-conjugate representations are isomorphic to real representations).

But we can also introduce these projectors latter in the development of the theory in the case we need them for physical reasons.
The message  is that the most basic physical requirements for a vacuum density matrix can be fulfilled without spoiling localization.
Note that we can convert any density matrix to a pure state at the cost of irreducibility. With the density matrix we can have irreducible representations of the real Clifford C* algebra.

\subsection{Symplectic representations}

The complex representations give us two products: the real part is the inner 
product, the imaginary part is a symplectic product.
In the last chapter, to have good localization properties we dropped the imaginary part,
which will lead us sooner or later to fermions.
However, we can also have good localization properties by dropping the real part, 
which will lead us to bosons---the canonical commutation relations are conserved by
symplectic transformations.
That is, to have good localization properties we cannot keep both the real and imaginary parts,
we need to choose one of them. The easiest to study is the real part because it is an
inner product. But the theory of symplectic representations is also well developed\cite{infinitegroups,infinitesympletic,mathematicsquantization}.

In the following we will study the transformation from the momentum space to the coordinate
space which conserves a symplectic product.

The Klein-Gordon equation for a scalar field is:
\begin{align*}
(\partial^2-m^2)\Phi=0
\end{align*}

Due to the fact that the equation for the field $\Phi$ is second order, 
the first derivative in time of the field is a variable. Let the operator $\star$ define a symplectic product:
\begin{align}
\int d^3\vec{x} (f\star \varphi)(x^0_i,\vec{x})&\equiv\int d^3\vec{x}(f(\partial_0\varphi)-(\partial_0
f)\varphi)(x^0_i,\vec{x})
\end{align}

We define the 2D vector $f'(x)=\left[ \begin{smallmatrix}
f(x)\\
\partial_0f(x)\end{smallmatrix} \right]$. Then the symplectic
transform is given by:
\begin{align}
f'(x^0,\vec{p})=\int d^3\vec{x} U(\vec{p},x)f'(x)
\end{align}
Where $U$ is the 2D matrix:
\begin{align}
U(\vec{p},x)&=\left[ \begin{smallmatrix}
cos(p \cdot x)& \frac{sin(p \cdot x)}{E_p}\\
-E_p sin(p \cdot x)& cos(p \cdot x)\end{smallmatrix} \right]
\end{align}
Where $p^0=E_p$. The inverse symplectic transform is given by:
\begin{align}
f'(x)=\int \frac{d^3\vec{p}}{(2\pi)^3} U^\star(\vec{p},x)f'(x^0,\vec{p})
\end{align}
Where $U^\star$ is the matrix:
\begin{align}
U^\star(\vec{p},x)&=\left[ \begin{smallmatrix}
cos(p \cdot x)& -\frac{sin(p \cdot x)}{E_p}\\
E_p sin(p \cdot x)& cos(p \cdot x)\end{smallmatrix} \right]
\end{align}
The proof follows. We consider $y^0=x^0$:
\begin{align}
\int \frac{d^3\vec{p}}{(2\pi)^3} U^\star(\vec{p},x)U(\vec{p},y)&=\int \frac{d^3\vec{p}}{(2\pi)^3}\left[ \begin{smallmatrix}
cos(\vec{p} \cdot (\vec{x}-\vec{y}))& -\frac{sin(\vec{p} \cdot (\vec{x}-\vec{y}))}{E_p}\\
E_p sin(\vec{p} \cdot (\vec{x}-\vec{y})) & cos(\vec{p} \cdot
(\vec{x}-\vec{y}))\end{smallmatrix} \right]\\
&=\left[ \begin{smallmatrix}
\delta^3(\vec{x}-\vec{y})& 0\\
0 & \delta^3(\vec{x}-\vec{y})\end{smallmatrix} \right]
\end{align}
\begin{align}
&\int d^3\vec{x} U(\vec{p},x)U^\star(\vec{q},x)=\\
&=\int d^3\vec{x}\left[ \begin{smallmatrix}
cos(p \cdot x)cos(q \cdot x)+\frac{E_q}{E_p}sin(p \cdot x)sin(q \cdot
x)& -\frac{cos(p\cdot x)sin(q \cdot x)}{E_q}+\frac{sin(p\cdot x)cos(q \cdot x)}{E_p}\\
-E_psin(p\cdot x)cos(q \cdot x)+E_qcos(p\cdot x)sin(q \cdot x) & cos(p \cdot x)cos(q \cdot x)+\frac{E_p}{E_q}sin(p \cdot x)sin(q \cdot
x)\end{smallmatrix} \right]\\
&=\int d^3\vec{x}\left[ \begin{smallmatrix}
cos((p-q)\cdot x)\frac{E_p+E_q}{2E_p}+cos((p+q)\cdot
x)\frac{E_p-E_q}{2E_p}& sin((p-q)\cdot
x)\frac{E_p+E_q}{2E_pE_q}+sin((p+q)\cdot x)\frac{E_p-E_q}{2E_pE_q}\\
-sin((p-q)\cdot
x)\frac{E_p+E_q}{2}-sin((p+q)\cdot x)\frac{E_p-E_q}{2}
& cos((p-q)\cdot x)\frac{E_q+E_p}{2E_q}+cos((p+q)\cdot
x)\frac{E_q-E_p}{2E_q}\end{smallmatrix} \right]\\
&=\left[ \begin{smallmatrix}
(2\pi)^3\delta^3(\vec{p}-\vec{q})& 0\\
0 & (2\pi)^3\delta^3(\vec{p}-\vec{q})\end{smallmatrix} \right]
\end{align}
Where the fact that $E_p=E_{-p}$ was used.

\section{Poincare gauge theory}

It is well known that to introduce spinors in General Relativity we need to introduce tetrads,
which verify a gauge symmetry associated with the homogeneous Lorentz group.

The crucial contribution of Poincare (translations and Lorentz transformations) gauge theory 
is that the (Cartan's) tetrads  are also gauge fields, which leads to a conceptually better equipped theory
\cite{gravitypoincare,etg,gaugegravityintro}.

Within this framework, a lot can be done. 
\subsection{Unitary representations of the Poincare group in classical field theory}
One idea is to use the De Donder-Weyl 
polymomentum\cite{polymomentum,covariantphase}, combined with fields with non trivial representations of the translations\cite{translations}. Instead of the Dirac Lagrangian:

\begin{align*}
\Psi^\dagger (\gamma^0\gamma^\mu(\partial_\mu-ieA_\mu)-i\gamma^0m)\Psi
\end{align*}

We can consider instead the more involved Lagrangian:

\begin{align*}
\mathcal{L}\equiv\Phi^{\dagger\mu}(\partial_\mu-ieA_\mu-B_\mu)\Psi
\end{align*}
Where the local operator $B_\mu$ does not contain derivatives in space-time, but acts on the infinite-dimensional space of the components of $\Psi$.
We can show using the De Donder-Weyl formalism that:
\begin{align*}p^\mu&\equiv \frac{\delta\mathcal{L}}{\delta \partial_\mu\Psi}=\Phi^{\dagger\mu}\\
\mathcal{H}&\equiv p^{\mu}\partial_\mu\Psi-\mathcal{L}=p^{\mu}(ieA_\mu+B_\mu)\Psi
\end{align*}

We get the equations:
\begin{align*}
\partial_\mu\Psi&=\frac{\delta\mathcal{H}}{\delta p^\mu}=(ieA_\mu+B_\mu)\Psi\\
\partial_\mu p^\mu&=-\frac{\delta\mathcal{H}}{\delta \Psi}=-p^{\mu}(ieA_\mu+B_\mu)
\end{align*}

Note that the second equation is redundant, because if the first equation is verified then
for $p^\mu=\Psi^\dagger\gamma^0\gamma^\mu$ the second equation is also verified ($B_j$ commutes with $\gamma^0\gamma^j$, for $j=1,2,3$).

We recover the Dirac equation for $B_0=\gamma^0\gamma^jB_j-i\gamma^0m$. The generators of translations of a free
Poincare representation with spin one-half verify such equation. All this has the advantage that the equation:
\begin{align*}
(\partial_0-ieA_0-B_0)\Psi=0
\end{align*}
does not depend on the derivatives in space and so we can work with unitary representations of the Poincare group
in the equations of the gauge theories (of the Standard Model for instance),
instead of working with non-unitary representations of the Lorentz group.
The advantage of all this is that the unitary representations of the Poincare group 
are already used in the non-perturbative regime, before we assume the perturbative expansion
which may help in the non-perturbative definition of a gauge quantum field theory.

\subsection{Exploring the spin connection of the Majorana spinor}

In a Majorana basis, the Dirac equation for a free spin one-half
particle is a 4x4 real matrix differential equation.
When including the effects of the electromagnetic interaction, the
Dirac equation is a complex equation due to the presence of 
an imaginary connection in the covariant derivative, related with the
phase of the spinor.

In this subsection we study the solutions of the Dirac equation
with the null and Coulomb potentials and notice that there is a real
matrix that squares to -1, relating the imaginary and real components
of these solutions. We show that these solutions can be obtained from the
solutions of two non-linear 4x4 real matrix differential equations
with a real matrix as the connection of the covariant derivative. 

\paragraph{Real Connection}
The equations for the classical Majorana spinor fields $\psi$ and
$\chi$ and for the electromagnetic potential $A_\mu$  in
Quantum Electrodynamics, can be written as:
\begin{align}
(i\slashed \partial-m)\psi&=e i\slashed A \chi\\
(i\slashed \partial -m)\chi&=-e i\slashed A \psi\\
\partial^2 A_\mu-\partial_\mu \partial\cdot A&=
e\eta_{\mu\nu}(\psi^\dagger \gamma^0\gamma^\nu\psi+\chi^\dagger \gamma^0\gamma^\nu\chi)
\end{align}
These equations are invariant under the global Lorentz
transformations $S\in Pin(1,3)$:
\begin{align}
x &\to \Lambda(S)x\\
\psi(x)&\to S\psi(\Lambda(S)x)\\
\chi(x)&\to \gamma^0 S^{-1 \dagger}\gamma^0 \chi(\Lambda(S) x)\\
A_\mu(x)&\to e \Lambda_{\mu}^{\ \nu}(S) A_\nu(x)(\gamma^0S^{-1
  \dagger}\gamma^0 S^{-1})
\end{align}
Usually the Dirac field $\Psi\equiv \psi+i\chi$ is defined and the
equations are written as:
\begin{align}
(i\slashed \partial-\slashed A-m)\Psi&=0\\
\partial^2 A_\mu-\partial_\mu \partial\cdot A&=
e\eta_{\mu\nu} \Psi^\dagger \gamma^0\gamma^\nu\Psi
\end{align}
Now we can easily see that these equations are also invariant under
the local transformation:
\begin{align}
\Psi&\to e^{i\theta}\Psi\\
eA_\mu&\to eA_\mu-\partial_\mu \theta
\end{align}
The electromagnetic potential is then identified with an imaginary
connection, that is, the covariant derivative is written as:
\begin{align}
\partial_\mu+iA_\mu
\end{align}
Now we make the question: is there another way of obtaining the same
solutions but using a real (that is, real in a Majorana basis)
connection? If we drop the linearity requirement, then the answer is
yes. 
We need to assume that there is a real, space-time dependent, matrix
$J$ verifying $((i\gamma^0)J)^2=-1$. Note that these conditions are
invariant under the transform 
$J\to
S^\dagger J S$ for $S\in Pin(1,3)$, that is:
\begin{align}
&((i\gamma^0)J)^2\to (i\gamma^0S^\dagger J S)^2=(\pm
S^{-1}i\gamma^0 J S)^2=S^{-1}(\pm i\gamma^0 J)^2 S=-1
\end{align}
Now we have the following equations:
\begin{align}
(i\gamma^\mu (\partial_\mu -e A_\mu(x) i\gamma^0 J(x)-m)\psi(x)&=0\\
(i\gamma^\mu (\partial_\mu -e A_\mu(x) i\gamma^0 J(x))-m)i\gamma^0 J(x)\psi(x)&=0
\end{align}
The equation for $A_\mu$ can be written as:
\begin{align}
\partial^2 A^\mu-\partial^\mu\partial_\nu A^\nu&=
e\psi^\dagger\gamma^0\gamma^\mu\psi+e\psi^\dagger J^\dagger
\gamma^\mu\gamma^0 J\psi
\end{align}
We can see that for a global $S\in Pin(1,3)$ we have:
\begin{align}
x &\to \Lambda(S)x\\
\psi(x)&\to S\psi(\Lambda(S)x)\\
J(x)&\to S^{-1 \dagger} J(\Lambda x) S^{-1}\\
e A_\mu(x) i\gamma^0 J(x)&\to e \Lambda_{\mu}^{\ \nu}(S) A_\nu(x)
S i\gamma^0 J(\Lambda x)S^{-1}
\end{align}
We can write the previous two real equations as one complex equation
as:
\begin{align}
(i\gamma^\mu (\partial_\mu -e A_\mu(x) i-m)(1+\gamma^0 J(x))\psi(x)&=0
\end{align}
Now we can see that there is another transform that leaves the
equations invariant:
\begin{align}
\psi&\to e^{i\gamma^0 J\theta}\psi\\
(1+\gamma^0J)\psi&\to e^{i\theta}(1+\gamma^0J)\psi\\
eA_\mu&\to eA_\mu+\partial_\mu \theta
\end{align}
Where $\theta$ is a real function of the space-time.
Although we get a very similar equation with QED, there is a fundamental
difference: the connection is real, the equations are non-linear and
as a consequence we get, 
from the start a projector in the complex equation. In QED, this projector appears only
in the final solutions, not in the equations.

\paragraph{Free particle}
When the electromagnetic potential is null, we have:
\begin{align}
\psi_p(x)&=e^{-i\frac{\slashed
    p}{m} p\cdot x} \psi_p(0)\\
J_p(x)&=\frac{\slashed
    p \gamma^0}{m}
\end{align}
We can check that $J_p(x)$ is hermitian and that
$\psi_p(x)\to S\psi_p(\Lambda(S)x)$, $J_p(x)\to
S^{-1\dagger}J_p(\Lambda(S)x)S^{-1\dagger}$

\paragraph{Hydrogen Atom}
The Dirac equation for the Hydrogen atom is:
\begin{align}
i\gamma^0(i\slashed \partial-e\slashed A -m)\Psi=0
\end{align}
With $A_i=0$, $A_0=-\frac{e}{r}$. The term with the potential is
imaginary, therefore, the equation is complex. 

We define the matrix:
\begin{align}
\Lambda_{n l m \epsilon}&=\Big(\frac{f_{n l \epsilon}(r)}{r}+\frac{g_{n
    l \epsilon}(r)}{r}i\gamma^r\Big)\Omega_{l m}\frac{1+\epsilon \sigma^3}{2}
\end{align}
Where $\epsilon=\pm 1$. If $f$ and $g$ are such that the following equations hold:
\begin{align}
(E_{n l}+\frac{e^2}{r}-m)\frac{f_{n
    l \epsilon}(r)}{r}+(\partial_r+\frac{1-\epsilon l}{r})\frac{g_{n l
  \epsilon}(r)}{r}=0\\
(-E_{n l}-\frac{e^2}{r}-m)\frac{g_{n
    l \epsilon}(r)}{r}+(\partial_r+\frac{1+\epsilon l}{r})\frac{f_{n l
  \epsilon}(r)}{r}=0
\end{align}
We will not solve these equations here, the solution can be
seen in \cite{qm}.

Then $\Lambda$ verifies:
\begin{align}
i\gamma^0(i\vec{\slashed \partial}-m)\Lambda_{n l
  m\epsilon}\frac{1+\gamma^0}{2}=i(E_{n
l}+\frac{e^2}{r})\Lambda_{n l
  m\epsilon}\frac{1+\gamma^0}{2}
\end{align}
The solution to Dirac equation is:
\begin{align}
\Psi=\Lambda_{n l
  m\epsilon}e^{-i\gamma^0 E_{n l}x^0}\frac{1+\gamma^0}{2}\psi
\end{align}
Where $\psi$ is a fixed Majorana spinor.
We can now check that

\begin{align}
\Psi=\frac{1+\gamma^0J(x)}{2}\Lambda_{n l
  m\epsilon}e^{-i\gamma^0 E_{n l}x^0}\psi
\end{align}
Where 
\begin{align}
J(x)=\frac{\Big(f_{n l \epsilon}(r)-g_{n
    l \epsilon}(r) i\gamma^r\Big)^2}{f^2_{n l \epsilon}(r)-g^2_{n l \epsilon}(r)}
\end{align}
And we can check that $(i\gamma^0J)^2=-1$.

\section{Quantum measurement with Quantum Field Theory}

When some important contributors to the Standard Model discuss the Quantum measurement\cite{qmnostates,*automaton},
we should ask why not equip the discussion with Quantum Field Theory?

The problem with quantum measurement are non-commuting projections. However, in quantum field theory there is only one fundamental projection: the projection to the vacuum and so no problems arising from non-commutation. All other projections are built from the 
vacuum projection acting with operators of creation and destruction. These operators have a well accepted physical interpretation:
they create and destroy particles. Hence, a measurement can be interpreted as the superposition for many states of a destruction of some particle state followed by a projection to the vacuum followed by the creation of the same particle state\cite{demolition}.

If we accept the interpretation of the creation and destruction of particles, then we just need to explain the projection to the vacuum.
This can be a classical interpretation since there is just this projection hence no problems with non-commuting projections.

Another point to study is the propagators in perturbation theory as there are explicitly causal formulations\cite{inin,*inin2}
and another that may reduce drastically the number of divergences before regularization in perturbation theory\cite{krein}.
These may be useful in studies of the foundations of quantum theory.

\clearpage{}

\cleartooddpage
\clearpage{}\chapter{Conclusion}
\label{chap:Conclusion}
\epigraph{It is hard to prove general theorems, specially when they are false.}
{\textup{--- G. C. Branco, about the idea that led to BGL models, talk at Planck 2013 Bonn}}

Following the discovery of a Higgs boson consistent with the Standard Model,
there are founded claims that all experimental results in Particle Physics and Cosmology can be accounted by an effective theory based on General Relativity and the Standard Model extended with three right handed neutrinos and one inflaton field, and that this effective theory may be valid up to the Planck energy scale where a quantum theory of gravity plays a role.
The challenge we face is not so much to account for unexpected experimental results, but mostly to understand and solve the many theoretical problems of this indeed effective theory.

\subsection*{Higgs mediated Flavour Violation}

We may find solutions to the problems of the Standard Model by extending its scalar sector, 
e.g. in Grand Unified Theories or Supersymmetry. Simple extensions of the Higgs sector are also a step towards a
general understanding of the Higgs mechanism in gauge theories.
We study the two-Higgs-doublet model using Clifford matrices in a gauge invariant approach.  The conclusion is that it is possible to combine
studies based on perturbative and non-perturbative methods to study the phenomenology of extended Higgs sectors,
as we implement the correspondence between the standard gauge-dependent elementary states of the pertubative formalism
and the composite (non-abelian) gauge invariant final states of the non-perturbative formalism. 
Besides the theoretical interest, the results will be used in lattice studies of the
non-perturbative phenomenology of two-Higgs-doublet models.

In extensions of the Standard Model we are many times confronted with the problem of the suppression
of the Flavour Changing Neutral Currents, which in the Standard Model are accidentally suppressed through
the \ac{GIM} mechanism.
This motivates the question, how much does the experimental data constrain the 
Flavour Changing Neutral Currents which would signal New Physics?
Correlations between observables are important to obtain conclusive experimental results.
We discuss two approaches to this problem: renormalizable models and effective field theory.

The flavour data indicates that the  Flavour and \acs{CP} violation in Particle Physics 
follows a hierarchical pattern, well accounted  by the Standard Model's 
mixing matrices \ac{CKM} and \ac{PMNS} of the fermions and inconsistent in general with
generic extensions to the Standard Model.  
We define the Minimal Flavour Violation condition with six spurions in effective field theories,
which allows for Flavour and CP violation entirely dependent on the Standard Model's 
mixing matrices \ac{CKM} and \ac{PMNS} of the fermions,
but independent from the hierarchy of the fermion masses. Note that we can guess that the hierarchies
of fermion masses and mixings have a common origin but we do not know what is their precise relation.
We show that the Minimal Flavour Violation condition with six spurions is 
one-loop renormalization-group invariant in the two-Higgs-doublet model; we argue that the
condition must be renormalization-group invariant in general, unless there are quantum anomalies---a question left open
that needs to be addressed with more involved mathematical tools.

To do extensive phenomenological studies we need computational tools,
we describe the architecture that we find best suited for the task. The architecture is based on libraries made by 
different people, with several functions (not necessarily simple) with input/output easy to handle and test, such as formulas for
Wilson coefficients. This can be achieved by the use of libraries with an interface for the C++ language backed by another
C++ library providing the ability to manipulate symbolic expressions, such as the libraries GiNaC and Giac and 
a possible improvement using LLVM. We partly use this architecture in the implementation of a program to do an extensive phenomenological study of two-Higgs-doublet models.

We then analyse the constraints and some of the phenomenological 
implications of a class of renormalizable two-Higgs-doublet models which verify the
Minimal Flavour Violation condition with six spurions, as a result of a continuous $U(1)$ symmetry of the Lagrangian
which constrains the Yukawa couplings to have a special form.
The symmetry is softly broken in the Higgs potential and so there are no massless Goldstone bosons in the spectrum.
The models predict Higgs mediated Flavour Changing Neutral Currents at tree level, naturally suppressed by the \ac{CKM}
matrix elements, with no other flavour and \acs{CP}-violating parameters than the \ac{CKM} and \ac{PMNS} matrix elements.
The symmetry can be implemented in the quark
sector in six different ways, and the same applies to the leptonic sector, leading altogether
to thirty six different realizations of the BGL models. Due to the symmetry the models have few additional free parameters with respect to the Standard Model, but due to the hierarchies of the fermion masses and mixings the phenomenology of different models is diverse;
thus excellent to guide us in the search for New Physics in Flavour Changing Neutral Currents as they predict meaningful and diverse correlations between different observables.

We analyse a large number of processes mostly on flavour physics, 
including decays mediated by charged Higgs at tree level, 
processes involving Flavour Changing Neutral Currents at tree level, as well as loop
induced processes. We study the allowed regions in the parameter space $\tan\beta$, 
and the Higgs masses $m_{H^+}$, $m_R$,  $m_I$ and then we project, for each BGL model,
these regions into subspaces relating pairs of the above parameters.  
Our results clearly show that this class of models
allow for new physical scalars with masses as light as the standard Higgs boson, and so
reachable, for example, at the next round of experiments at the LHC.

For a long time, there was the belief that the only experimentally viable two-Higgs-doublet 
extensions of the Standard Model were those verifying the Natural Flavour Conservation condition.
The condition of Minimal Flavour Violation with six spurions provides an interesting alternative to 
both Natural Flavour Conservation.
We showed that this class of models is an example
that there are renormalizable models extending the Standard Model
with Higgs mediated Flavour Changing Neutral Currents at tree level, without introducing more
hierarchical coefficients than the ones already present in the Standard Model, for which the most
constraining experimental data on flavour physics allows all the Higgs masses to be around the Electroweak scale.
Then this proof of concept can be applied in more elaborated extensions of the 
Standard Model addressing its problems\cite{abelian,burastalk,*burascorrelations,axion} or in \acs{LHC} phenomenology\cite{hidden}.

\subsection*{On the real representations of the Poincare group}

We then towards the relation between real and complex representations in Quantum theories
(in mathematics the complex or real numbers are also called scalar fields).

The complex irreducible representations are not a generalization of the real
irreducible representations, in the same way that the complex numbers are a
generalization of the real numbers. There is a map, one-to-one or
two-to-one and surjective up to equivalence, from the complex to
the real irreducible representations of a Lie group on a Hilbert
space. 

We show that all the finite-dimensional real
representations of the identity component of the Lorentz group are also
representations of the parity, in contrast with many complex representations.

We obtained all the real
unitary irreducible projective representations of the Poincare
group, with discrete spin, as real Bargmann-Wigner fields.
For each pair of complex representations with positive/negative
energy, there is one real representation.
The Majorana-Fourier and Majorana-Hankel unitary
transforms of the Bargmann-Wigner fields relate the
coordinate space with the linear and angular momenta spaces.

The localizable (real or complex) unitary representations of the Poincare group 
(compatible with Poincare covariance and causality) are direct sums of 
irreducible representations with discrete spin and helicity, this result establishes a fundamental
difference between the representations associated to existing elementary systems and the other 
representations for which no existing elementary systems are known to be associated---
it was known that point localized local quantum fields cannot be a massless infinite spin representation
\cite{zeromass}.
Moreover, an irreducible representation of the Poincare group (including parity) is localizable if and only if it is: 
a)real and b)massive with spin 1/2 or massless with helicity 1/2. If a) and b) are verified the position operator 
matches the coordinates of the Dirac equation.

The current literature
\cite{modular0,noncommutative,equations1,povm1,energydensity0,2D,axial,spacetime,pseudohermitian}, in one way or another, 
departure from using projection operators to implement the position operator for a unitary representation of the Poincare group.
The results presented in this thesis are a motivation to not  
departure from the projection operators to describe the position of relativistic systems.

In the last chapter we addressed some questions related with Localization and gauge symmetries in Quantum Field Theory,
which we hope may soon reach the maturity of the remaining problems studied in this thesis.

\clearpage{}\cleartooddpage
\backmatter
\phantomsection
\addcontentsline{toc}{chapter}{\bibname}

\bibliographystyle{utphysMM}
\label{app:bibliography} 
\small
\singlespacing 
\bibliography{Poincare}
\normalsize
\onehalfspacing 
\end{document}